\definecolor{BeauBlue}{rgb}{0, 0.2, .9}
\definecolor{BeauOrange}{rgb}{.8, .1, 0}
\numberwithin{equation}{section}
\newtheorem{theorem}{Theorem}[section] 
\newtheorem{proposition}[theorem]{Proposition} 
\newtheorem{lemma}[theorem]{Lemma}
\newtheorem{definition}[theorem]{Definition}  
\newtheorem{example}[theorem]{Example}
\newtheorem{remark}[theorem]{Remark}
\DeclareMathOperator{\Tr}{Tr} 
\tikzset{snake it/.style={decorate, decoration=snake}}
\tikzset{
->-/.style={postaction={decorate,
   decoration={markings,mark=at position .5 with {\arrow{stealth};}}}
   },
   ->>>-/.style={postaction={decorate,
   decoration={markings,mark=at position .75 with {\arrow{stealth};}}}
   },
    ->>>>-/.style={postaction={decorate,
   decoration={markings,mark=at position .55 with {\arrow{stealth};}}}
   },
    ->>>>>-/.style={postaction={decorate,
   decoration={markings,mark=at position .85 with {\arrow{stealth};}}}
   },
->>-/.style={postaction={decorate,
   decoration={markings,mark=at position .50 with {\arrow{doubled};}}}
   },   
   ->>>>>-/.style={postaction={decorate,
   decoration={markings,mark=at position 1 with {\arrow{stealth};}}}
   },   
}
\tikzset{cross/.style={cross out, draw=black, minimum size=2*(#1-\pgflinewidth), inner sep=0pt, outer sep=0pt},
cross/.default={4pt}}
\pgfplotsset{compat=newest}
\pgfplotsset{colormap/violet}
\newcommand{\ep}[1]{\mathrm{e}^{#1}}
\newcommand{\iu}{\mathrm{i}}
\newcommand{\caH}{\mathcal{H}}
\newcommand{\caC}{\mathcal{C}}
\newcommand{\caK}{\mathcal{K}}
\title{Anyons in the $\pi$-flux phase of \\
fermionic matter coupled to a $\mathbb{Z}_2$-gauge field}
\author[1]{Sven Bachmann}
\author[2]{Leonardo Goller}
\author[2]{Marcello Porta}
\affil[1]{Department of Mathematics, The University of British Columbia, Vancouver, BC V6T 1Z2, Canada}
\affil[2]{Mathematics Area, SISSA, Via Bonomea 265, 34136 Trieste, Italy}
\date{\today}
\begin{document}

\maketitle

\begin{abstract}
We consider a system of weakly interacting spinful lattice fermions coupled to a dynamical $\mathbb{Z}_2$ gauge field. Using reflection positivity, we prove that the ground state lies in the sector of a uniform $\pi$-flux per plaquette and that the monopoles are massive. In the presence of a staggered mass for the fermions, this yields a fully gapped, four-dimensional ground state space on large tori. It is topologically ordered. By considering adiabatic $\pi$-flux insertion, we construct dressed monopole excitations, show that their self-braiding is proportional to the Hall conductance and hence vanishes, and prove that their braiding with the fermionic excitations is that of the toric code.
\end{abstract}


\tableofcontents

\section{Introduction}

The theoretical possibility of anyons, namely quantum particles in two-dimensional space that exhibit general phases under braiding beyond the boson-fermion dichotomy was pointed out in~\cite{Leinaas:1977fm} in a differential geometric framework, in~\cite{goldin1980particle} in an algebraic one and in~\cite{wilczek1982magnetic,PhysRevLett.51.2250} using a magnetic picture, while it was already implicit in~\cite{frohlich1976new}. A Feynman path integral approach discussing the relation with the braid group appeared in~\cite{wu1984general}. The experimental observation of the fractional quantum Hall effect and its theoretical explanation~\cite{PhysRevLett.50.1395} using an anyonic wavefunction, see also ~\cite{PhysRevLett.52.1583,PhysRevLett.53.722}, validated the theoretical intuition. These exotic ground states that support anyonic excitations in two space dimensions are the prototypical examples of topological order. A first general relativistic quantum field theory of anyons was proposed in~\cite{frohlich2021statistics,frohlich1988quantum}, see also the review~\cite{Froehlich}, and we refer to~\cite{fredenhagen1989superselection} for an operator algebraic approach. When the Hamiltonian is defined on a finite torus, the anyonic properties are realized as ground state degeneracy that is not accompanied with any local order parameter~\cite{WenNiuTQO}. 

The possibility to use topologically ordered ground state spaces as stable quantum memories and braiding as fault tolerant quantum computational method~\cite{freedman2003topological} reignited the interest in the topic. The exactly solvable models of~\cite{kitaev2003,kitaev2006}, as well as the string-net models~\cite{levin2005string}, exhibit both Abelian and non-Abelian anyons, but are arguably very artifical quantum spin systems whose potential large scale realization remains elusive, but see~\cite{ExpAnyons} for recent progress.

In this paper, we analyse a very natural lattice model of complex fermions coupled with a dynamical $\mathbb{Z}_2$-gauge field. It belongs to a class of lattice gauge theories that have been widely studied numerically in the last few years, starting from \cite{AGgauge, GRV, Gazit3}, characterized by a remarkably rich phase diagram, discussed later on in some more detail. Without any uncontrolled assumption, we prove that this model exhibits all the properties of a good anyon theory. On a large but finite torus, the model has a four-dimensional topologically ordered ground state space, with a basis being labelled by the cohomology classes of the torus, or more precisely by spin structures on the torus. Crucially, this is one of the rare lattice models that is not explicitly solvable but where topological quantum order, namely ground state indistinguishability, can be proved, see also~\cite{lucia2024stability}. The almost degenerate ground state energy is separated from the rest of the spectrum by a gap, which can be interpreted as a (lower bound for the) mass of fermions and dressed monopoles. Quasi-particles excitations exhibit non-trivial, although Abelian, braiding properties.

Specifically, we consider the following Hamiltonian on the regular square lattice $\Gamma_L$ wrapped on the torus:
\begin{align}\label{hami}
    H & = -t \sum_{\substack{i,j\in \Gamma_L:\\(i,j) \in \text{E}(\Gamma_L)}} \sum_{\eta = \uparrow,\downarrow} \left(a^+_{i,\eta}\hat{\sigma}^z_{ij} a^-_{j,\eta} + \text{h.c.}\right) + m \sum_{i \in \Gamma_L} \sum_{\eta=\uparrow, \downarrow} (-1)^{i_1+i_2}\, n_{i, \eta}   \\
    &\quad + U \sum_{i\in \Gamma_{L}}\Big(n_{i,\uparrow} - \frac{1}{2}\Big)\Big(n_{i,\downarrow} - \frac{1}{2}\Big) ,
\end{align}
where $\text{E}(\Gamma_L)$ denotes the set of edges, acting on $\mathcal{H}_L =  \mathcal{F}_L \otimes \bigg(\bigotimes_{(i,j)\in \text{E}(\Gamma_L)} \mathbb{C}^2\bigg)$. Here $a^{\pm}_{i,\eta}$ are the fermionic creation/annihilation operators, $n_{i,\eta} =a^+_{i,\eta} a^-_{i,\eta}$, 
$\hat{\sigma}^z_{kl}$ is the third Pauli matrix representing the $\mathbb{Z}_2$-vector potential attached to edges of the lattice, and $t,m>0$. We could also add a plaquette term to the Hamiltonian, involving the sum of products of $\hat \sigma^{z}$ operators associated with the edges of the elementary plaquettes, and all our results would apply to that model as well, for $t$ large enough. The first term in the Hamiltonian describes nearest-neighbour hopping on the square lattice, twisted by the presence of a $\mathbb{Z}_{2}$ gauge field; the second term is a staggered mass term, that introduces a chemical potential imbalance betweem the two sublattices forming $\Gamma_{L}$; while the last term is a Hubbard interaction, that describes the on-site attraction $(U<0)$ or repulsion $(U>0)$ of the fermions. There are two natural local gauge transformations in this system: the `fermionic' $U(1)$-transformation implemented by the unitary $\ep{-\iu\phi n_{i, \eta}}$ and the `pure $\mathbb{Z}_2$' gauge transformation implemented by $\hat{\sigma}^x_{i, i+e_x}  \hat{\sigma}^x_{i, i+e_y}  \hat{\sigma}^x_{i-e_x, i}  \hat{\sigma}^x_{i-e_y, i}$. It is immediate to check that none of them separately is a symmetry of the system, but their combination
\begin{equation}
    Q_i = \hat{\sigma}^x_{i, i+e_x}  \hat{\sigma}^x_{i, i+e_y}  \hat{\sigma}^x_{i-e_x, i}  \hat{\sigma}^x_{i-e_y, i} \ep{-\iu\pi n_{i, \uparrow}-i \pi n_{i, \downarrow}}
\end{equation}
is such that $Q_i H Q_i = H$ for all $i$. Physical states $\ket{\Psi}\in\mathcal{H}_L$ must satisfy $Q_i\ket\Psi = \ket\Psi$, which is a discrete version of Gauss' law. Starting from a product state of the form $\ket\psi\otimes \ket{\bm{\sigma}}\in\mathcal{H}_L$, where $\ket{\bm{\sigma}}$ is a common eigenstate of all $\hat\sigma^z_i$, the physical vector
\begin{equation}\label{eq:IntroPhysical}
\prod_{i \in \Gamma_L} \bigg(\frac{1+Q_i}{2} \bigg)\ket\psi\otimes \ket{\bm{\sigma}}
\end{equation}
describes a loop gas state, exactly as in~\cite{kitaev2006}, which exhibits exotic entanglement properties that are typical of topological order~\cite{PhysRevLett.96.110404,Levin_2006}.

The Hamiltonian \eqref{hami} with $m=0$ and $U=0$ has already been considered in \cite{GP}, where it was shown, using reflection positivity introduced in this context in~\cite{Lieb}, that the ground states of the system have an expression of the form~\eqref{eq:IntroPhysical}, where $\bm{\sigma}$ is a $\pi$-flux configuration, namely the product of $\sigma^z_{ij}$ around any plaquette of the lattice equals $-1$. In fact, \cite{GP} shows that monopoles (which necessarily come in pairs), namely plaquettes where the product equals $+1$, have a finite energy cost so that the $\pi$-flux sector is protected by a gap from the other sectors. However, the fermionic Hamiltonian in this uniform background is gapless as it exhibits Dirac cones~\cite{PhysRevB.50.7526}. In this paper, we gap out these excitations by introducing a staggered mass term which preserves reflection positivity. As a result, we prove that the Hamiltonian with $m>0$ and for $|U|$ small enough has a four-dimensional ground state space with a splitting that is exponentially small in the system size, and the patch is well separated from the rest of the spectrum as displayed in \hyperref[fig:gapf]{Figure 1b}.

\begin{figure}
    \centering   
    \subfloat[Fermion dispersion relation]{ \includegraphics[width=0.45\textwidth]{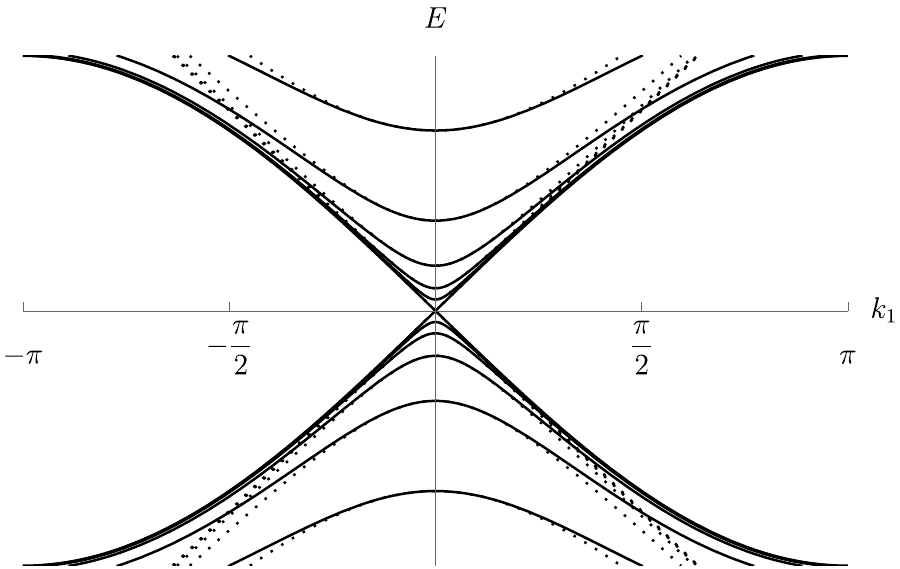}%
        \label{fig:charge gap}%
        }%
    \hspace{1cm}%
    \subfloat[Spectrum of the full Hamiltonian]{%
        \includegraphics[width=0.375\textwidth]{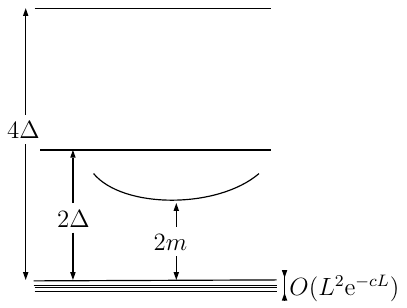}%
        \label{fig:gapf}%
        }%
    \caption{(a) Effect of mass term on $\pi$-flux dispersion relations: the gap opened by the pertubation is proportional to $\frac{m}{t}$. Solid and dotted lines represent respectively lattice and continuum dispersion relations. (b) The spectral structure of the full Hamiltonian: $\Delta$ is the energy necessary to create a monopole, while $m$ is the minimal energy needed to create a fermion (for $U=0$). The ground state splitting is exponentially small. The picture is stable for $|U|$ small enough.}
\end{figure}

The ground state space can be characterized using Wilson loop operators. Different ground states can be distinguished by measuring the $\mathbb{Z}_2$-holonomy along non-contractible cycles: This is given by the operator $\hat Z_\caC$ which is a product of $\hat\sigma^z$ along the edges of each of a non-contractible cycle. In order to connect these orthogonal ground states, we will construct $\pi$-flux threading operators $W_{\caC^*}$ associated with non-trivial cocycles, namely loops in the dual lattice. For this, we consider twisted Hamiltonians, where the hopping terms of \eqref{hami} ({\it i.e.} the terms proportional to $t$) are given an additional $\ep{\pm\iu\phi}$  whenever they cross the non-trivial cycle, see \autoref{FIP}. Since this family of Hamiltonians is gapped, the ground state spaces $P_{\caC^*}(\phi)$ are mapped onto each other by the spectral flow~\cite{Hastings_2005, Bachmann_2011}, which acts non-trivially only in the neighbourhood of the twisting line. At $\phi = \pi$, this non-trivial large gauge transformation can be offset by a $\mathbb{Z}_2$-gauge transformation in the sense that their product leaves the ground state space invariant. In order to determine the effect of this parallel transport on the ground state space, we will show that $\hat Z_{\caC}$ and $W_{\caC^*}$ anticommute for two geometrically orthogonal loops, which shows that $W_{\caC^*}$ permute the topologically ordered ground states~\cite{Wen1990TopologicalOI, PhysRevLett.84.3370, PhysRevB.70.245118}.

\begin{figure}
    \centering
    \begin{tikzpicture}[scale = 0.5]
      \draw[] (0,0) -- (0,8);
      \draw[] (4,0) -- (4,8);
      \draw[] (0.5,-0.24) -- (0.5,7.76);
      \draw[dashed] (0.5,8.24) -- (0.5,7.76);
      \draw[] (1,-0.30) -- (1,7.7);
      \draw[dashed] (1,8.3) -- (1,7.7);
      \draw[] (3.5,-0.24) -- (3.5,7.76);
      \draw[dashed] (3.5, 8.24) -- (3.5, 7.76);
      \draw[] (3,-0.30) -- (3,7.7);
      \draw[dashed] (3,8.3) -- (3,7.7);
      \draw[] (2,-0.35) -- (2,7.65);
      \draw[dashed] (2,8.35) -- (2,7.65);
      \draw[] (1.5,-0.32) -- (1.5,7.68);
      \draw[dashed] (1.5,8.32) -- (1.5,7.68);
      \draw[] (2.5,-0.32) -- (2.5,7.68);
       \draw[dashed] (2.5,8.32) -- (2.5,7.68);
      \draw (0,0) arc(180:360:2cm and 0.35cm);
      \draw[dashed] (4,0) arc(0:180:2cm and 0.35cm);
       \draw (0,1) arc(180:360:2cm and 0.35cm);
      \draw[dashed] (4,1) arc(0:180:2cm and 0.35cm);
       \draw (0,2) arc(180:360:2cm and 0.35cm);
      \draw[dashed] (4,2) arc(0:180:2cm and 0.35cm);
       \draw (0,3) arc(180:360:2cm and 0.35cm);
      \draw[dashed] (4,3) arc(0:180:2cm and 0.35cm);
       \draw (0,4) arc(180:360:2cm and 0.35cm);
      \draw[dashed] (4,4) arc(0:180:2cm and 0.35cm);
       \draw (0,5) arc(180:360:2cm and 0.35cm);
      \draw[dashed] (4,5) arc(0:180:2cm and 0.35cm);
       \draw (0,6) arc(180:360:2cm and 0.35cm);
      \draw[dashed] (4,6) arc(0:180:2cm and 0.35cm);
      \draw (0,7) arc(180:360:2cm and 0.35cm);
      \draw[dashed] (4,7) arc(0:180:2cm and 0.35cm);
      \draw (0,8) arc(180:360:2cm and 0.35cm);
      \draw[] (4,8) arc(0:180:2cm and 0.35cm);
       \draw[dashed] (0,10) arc(180:0:2cm and 0.35cm);
          \draw[dashed] (0,0) arc(180:0:2cm and 0.35cm);
    
      \node[right] (b) at (2.2,9) {{$\Phi(t)$}};
        \draw[dashed] (0,0) -- (0,-2);
         \draw[dashed] (4,0) -- (4,-2);
         \draw[dashed] (0,8) -- (0,10);
         \draw[dashed] (4,8) -- (4,10);
         \draw[dashed]
         (0,10) arc(180:360:2cm and 0.35cm);
          \draw[dashed]
         (0,-2) arc(180:360:2cm and 0.35cm);
         \draw[thick, -Stealth,line width = 0.05cm] (2,7.7) -- (2,10);
         \draw[thick,line width = 0.05cm ] (2,-0.4) -- (2,-2);
         \draw[red, thick] (0.25,-0.12) -- (0.25,7.88);
         \draw[thick] (0,8) arc(180:222:2cm and 0.325cm);
          \draw[thick] (0,7) arc(180:222:2cm and 0.325cm);
          \draw[thick] (0,6) arc(180:222:2cm and 0.325cm);
          \draw[thick] (0,5) arc(180:222:2cm and 0.325cm);
          \draw[thick] (0,4) arc(180:222:2cm and 0.325cm);
          \draw[thick] (0,3) arc(180:222:2cm and 0.325cm);
          \draw[thick] (0,2) arc(180:222:2cm and 0.325cm);
          \draw[thick] (0,1) arc(180:222:2cm and 0.325cm);
          \draw[thick] (0,0) arc(180:222:2cm and 0.325cm);
     \end{tikzpicture}
    \caption{Picture of the flux threading procedure in a portion of the torus: the hoppings on the edges cut by the blue line acquire an extra $e^{\pm i \phi}$ (depending on the orientation).}
    \label{FIP}
\end{figure}
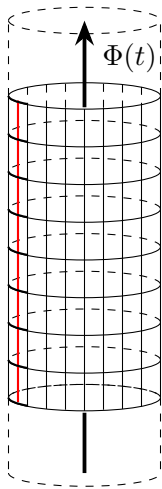

  Anyonic excitations are obtained by opening such lines: open Wilson lines $a^+_{i,\eta} \hat Z_{\caC_{i,j}}a^+_{j,\eta'}$ create a pair of fermionic excitations at sites $i,j$ bound together by a gauge field line along $\caC_{i,j}$ while open $W_{\caC^*}$ operators create a pair of monopoles on the background at the endpoints of $\caC^*$. Braiding of monopoles around fermions is represented schematically in \autoref{braiding1}. The phase acquired by the wave function of two fermion state contains a universal term that is precisely $-1$ as one would expect by Aharonov-Bohm phase of a electric charge $1$ particle and magnetic charge  $\pi$ particle. We shall also show that the braiding of dressed monopoles around each other is trivial; in fact, we will relate the self-braiding phase to the Hall conductance, which vanishes in the $\pi$-flux sector of the model.

  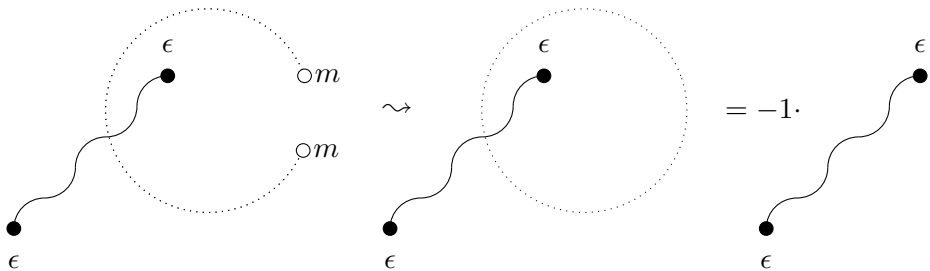
\begin{figure}
      \centering
      \begin{tikzpicture}[scale=0.9]
\node[above] (b) at (0,0.2) {\scalebox{1}{$\epsilon$}};
\node[below] (b) at (-2.25,-2.5) {\scalebox{1}{$\epsilon$}};
\draw[black, ->, dotted] (2,0) arc (20:340:1.5cm);
\node[right] (b) at (2,0) {\scalebox{1}{$m$}};
\node[right] (b) at (1.98,-1.1) {\scalebox{1}{$m$}};
\draw[black, ->, dotted] (2,0) arc (20:340:1.5cm);
\draw[] (0,0) arc (90:180:0.45cm);
\draw[] (-0.45,-0.45) arc (360:270:0.45cm);
\draw[] (-0.9,-0.9) arc (90:180:0.45cm);
\draw[] (-1.35,-1.35) arc (360:270:0.45cm);
\draw[] (-1.80,-1.80) arc (90:180:0.45cm);
\draw[black,fill=black] (-2.25,-2.25) circle (.1 cm);
\draw[black,fill=white] (2,0) circle (.1 cm);
\draw[black,fill=white] (1.98,-1.1) circle (.1 cm);
\draw[black,fill=black] (0,0) circle (.1 cm);
\begin{scope}[xshift=5.5cm]
    \node[right] (b) at (-2.5,-0.5) {\scalebox{1}{$\leadsto$}};
    \node[above] (b) at (0,0.2) {\scalebox{1}{$\epsilon$}};
\node[below] (b) at (-2.25,-2.5) {\scalebox{1}{$\epsilon$}};
\draw[black, dotted] (2,0) arc (20:380:1.5cm);
\draw[] (0,0) arc (90:180:0.45cm);
\draw[] (-0.45,-0.45) arc (360:270:0.45cm);
\draw[] (-0.9,-0.9) arc (90:180:0.45cm);
\draw[] (-1.35,-1.35) arc (360:270:0.45cm);
\draw[] (-1.80,-1.80) arc (90:180:0.45cm);
\draw[black,fill=black] (-2.25,-2.25) circle (.1 cm);
\draw[black,fill=black] (0,0) circle (.1 cm);
\end{scope}
\begin{scope}[xshift=11cm]
    \node[right] (b) at (-3,-0.5) {\scalebox{1}{$=-1 \cdot$}};
     \node[above] (b) at (0,0.2) {\scalebox{1}{$\epsilon$}};
\node[below] (b) at (-2.25,-2.5) {\scalebox{1}{$\epsilon$}};
\draw[] (0,0) arc (90:180:0.45cm);
\draw[] (-0.45,-0.45) arc (360:270:0.45cm);
\draw[] (-0.9,-0.9) arc (90:180:0.45cm);
\draw[] (-1.35,-1.35) arc (360:270:0.45cm);
\draw[] (-1.80,-1.80) arc (90:180:0.45cm);
\draw[black,fill=black] (-2.25,-2.25) circle (.1 cm);
\draw[black,fill=black] (0,0) circle (.1 cm);
\end{scope}
      \end{tikzpicture}
      \caption{Braiding of a monopole around a fermion, its annihilation into the vacuum, and the resulting anyonic phase $-1$.}
      \label{braiding1}
  \end{figure}

As mentioned above, the model we study is closely related to the $\mathbb{Z}_{2}$ gauge theory coupled to fermionic matter whose numerical study has been initiated in \cite{AGgauge, GRV} for spinful fermions, see also \cite{Borla} for the spinless case. In these works it has been observed that, in the absence of a mass term, the model has a semimetallic phase at half-filling (studied rigorously in \cite{GP}), while away from half-filling, and depending on the model parameters, it shows a BCS-BEC crossover and a superconducting phase. See also \cite{Gazit3} for further numerical study of the semimetallic phase. Among the recent developments, we mention the proposed experimental realization of such model using cold atoms \cite{Barbieroexp, Homeierexp}, the observation of a confinement transition of the gauge theory in $1+1$ dimensions via quantum computing \cite{Mildenberger}, and in higher dimensions using quantum simulators \cite{kebric}.

To conclude this introduction, we discuss the relation of our model with Kitaev's honeycomb model introduced in~\cite{kitaev2006}, in the gapped phase. Although the emergent anyon theories (superselection sectors) are the same as that of the toric code, they are microscopically different. The present model is made up of complex fermions so there is a full $U(1)$ symmetry. The presence of a continuous symmetry is crucial for our construction of local flux insertion unitaries and loop operators. Following~\cite{Lieb}, reflection positivity allows us to rigorously identify the $\pi$-flux sector as that of the ground state (incidentally, in Kitaev's honeycomb model, the ground state lies in the uniform $0$-flux sector). What is more, proceeding as in \cite{GP}, the chessboard estimate yields a proof of the existence of a mass gap for the monopoles and its extensivity in the number of monopoles: This is conjectured to be true in Kitaev's model and analysed numerically in Appendix~A of \cite{kitaev2006}. Interestingly, the chessboard configuration appears in this numerical study  as that which minimizes the energy per volume. Here, the monopole excitations are a dressed version of `pure $\mathbb{Z}_2$' vortices, which calls upon the $U(1)$-symmetry and quasi-adiabatic technology. The fermionic excitations are simpler, being only related by a bare string operator. Because of the presence of the staggered mass term, they are gapped, thus yielding a fully gapped ground state for the model. Unlike in the toric code, our quasi-particles are dynamical: They are not eigenstates of the Hamiltonian but rather localized wavepackets. From that point of view, there are similarities with~\cite{bachmann2023dynamical} but we consider here the full model, while the analysis there was restricted to the Hilbert space spanned by two quasi-particle states. Finally, we explicitly include a small  Hubbard interaction, thereby exhibiting the robustness of topological order, even when explicit diagonalization is not possible.

The paper is organized as follows. We define the model in \hyperref[sec1]{Section \ref*{sec1}} and introduce its symmetries. In Section \ref{sec:res} we state our main theorems: \autoref{thm1} on the stability of the $\pi$-flux sector and the mass of the monopoles, \autoref{thm2} on properties of the ground state, in particular topological order, and \autoref{thm3} exhibiting the braiding properties. \autoref{thm1} is proved in \hyperref[sec2]{Section \ref*{sec2}}. We adapt the argument of \cite{GP} based on reflection positivity and the chessboard estimate, to show the optimality of the $\pi$-flux phase and to estimate the mass of the monopoles. We determine the four dimensional ground state manifold of the system, parametrized by the values of the magnetic fluxes across the two non-contractible loops of the torus, and we prove the exponential closeness of the energies. The effect of small Hubbard interaction is taken into account using fermionic cluster expansion. In \hyperref[sec:topord]{Section \ref*{sec:topord}} we use the information about the ground state space derived in \hyperref[sec2]{Section \ref*{sec2}}, to prove the topological order of the ground state space. Finally, in \hyperref[sec:TQO]{Section \ref*{sec:TQO}} we prove \autoref{thm3}. We develop the flux threading procedure, compute commutators of loops operators on the ground state space and analyse the braiding.

\paragraph{Notations.}The lattice $\Gamma_L$ is bipartite, meaning that it consists of two sublattices $\Gamma^{\text{A}}_L$ and $\Gamma^{\text{B}}_L$ which are represented with white and black dots respectively. 

\begin{figure}[h]
    \centering
   \begin{tikzpicture}[scale=0.6]
\draw[black, ->>>>-] (1,1) -- (2,1);
\draw[black, ->>>>-] (0,0) -- (0,1);
\draw[black, ->>>>-] (0,0) -- (1,0);
\draw[black, ->>>>-] (1,0) -- (1,1);
\draw[black, ->>>>-] (1,0) -- (2,0);
\draw[black, ->>>>-] (2,0) -- (2,1);
\draw[black, ->>>>-] (2,0) -- (2,1);
\draw[black, ->>>>-] (3,0) -- (4,0);
\draw[black, ->>>>-] (3,0) -- (3,1);
\draw[black, ->>>>-] (4,0) -- (5,0);
\draw[black, ->>>>-] (2,0) -- (3,0); 
\draw[black, ->>>>-] (4,0) -- (4,1);
\draw[black, ->>>>-] (5,0) -- (6,0);
\draw[black, ->>>>-] (5,0) -- (5,1);
\draw[black, ->>>>-] (6,0) -- (7,0);
\draw[black, ->>>>-] (6,0) -- (6,1);
\draw[black, ->>>>-] (7,0) -- (7,1);

\draw[black, ->>>>-] (0,1) -- (1,1);
\draw[black, ->>>>-] (0,1) -- (0,2);
\draw[black, ->>>>-] (1,1) -- (1,2);
\draw[black, ->>>>-] (2,1) -- (2,2);

\draw[black, ->>>>-] (2,1) -- (3,1);
\draw[black, ->>>>-] (3,1) -- (3,2);
\draw[black, ->>>>-] (3,1) -- (4,1);
\draw[black, ->>>>-] (4,1) -- (4,2);

\draw[black, ->>>>-] (4,1) -- (5,1);
\draw[black, ->>>>-] (5,1) -- (5,2);
\draw[black, ->>>>-] (5,1) -- (6,1);
\draw[black, ->>>>-] (6,1) -- (6,2);

\draw[black, ->>>>-] (6,1) -- (7,1);
\draw[black, ->>>>-] (7,1) -- (7,2);
\draw[black, ->>>>-] (1,2) -- (2,2);
\draw[black, ->>>>-] (1,2) -- (1,3);

\draw[black, ->>>>-] (0,2) -- (1,2);
\draw[black, ->>>>-] (0,2) -- (0,3);

\draw[black, ->>>>-] (2,2) -- (3,2);
\draw[black, ->>>>-] (2,2) -- (2,3);
\draw[black, ->>>>-] (3,2) -- (4,2);
\draw[black, ->>>>-] (3,2) -- (3,3);
\draw[black, ->>>>-] (4,2) -- (5,2);
\draw[black, ->>>>-] (4,2) -- (4,3);
\draw[black, ->>>>-] (5,2) -- (6,2);
\draw[black, ->>>>-] (5,2) -- (5,3);
\draw[black, ->>>>-] (5,2) -- (6,2);
\draw[black, ->>>>-] (5,2) -- (5,3);
\draw[black, ->>>>-] (6,2) -- (7,2);
\draw[black, ->>>>-] (6,2) -- (6,3);
\draw[black, ->>>>-] (4,2) -- (5,2);
\draw[black, ->>>>-] (5,3) -- (6,3);
\draw[black, ->>>>-] (7,2) -- (7,3);
\draw[black, ->>>>-] (0,3) -- (1,3);
\draw[black, ->>>>-] (0,3) -- (0,4);
\draw[black, ->>>>-] (1,3) -- (2,3);
\draw[black, ->>>>-] (2,3) -- (3,3);
\draw[black, ->>>>-] (3,3) -- (4,3);
\draw[black, ->>>>-] (4,3) -- (5,3);
\draw[black, ->>>>-] (0,4) -- (0,5);
\draw[black, ->>>>-] (5,3) -- (6,3);
\draw[black, ->>>>-] (6,3) -- (7,3);
\draw[black, ->>>>-] (0,4) -- (0,5);
\draw[black, ->>>>-] (0,4) -- (1,4);
\draw[black, ->>>>-] (1,4) -- (2,4);
\draw[black, ->>>>-] (2,4) -- (3,4);
\draw[black, ->>>>-] (3,4) -- (4,4);
\draw[black, ->>>>-] (4,4) -- (5,4);
\draw[black, ->>>>-] (0,5) -- (0,6);
\draw[black, ->>>>-] (5,4) -- (6,4);
\draw[black, ->>>>-] (6,4) -- (7,4);
\draw[black, ->>>>-] (0,5) -- (1,5);
\draw[black, ->>>>-] (1,5) -- (2,5);
\draw[black, ->>>>-] (2,5) -- (3,5);
\draw[black, ->>>>-] (3,5) -- (4,5);
\draw[black, ->>>>-] (4,5) -- (5,5);
\draw[black, ->>>>-] (0,6) -- (0,7);
\draw[black, ->>>>-] (5,5) -- (6,5);
\draw[black, ->>>>-] (6,5) -- (7,5);
\draw[black, ->>>>-] (0,6) -- (1,6);
\draw[black, ->>>>-] (1,6) -- (2,6);
\draw[black, ->>>>-] (2,6) -- (3,6);
\draw[black, ->>>>-] (3,6) -- (4,6);
\draw[black, ->>>>-] (4,6) -- (5,6);
\draw[black, ->>>>-] (7,6) -- (7,7);
\draw[black, ->>>>-] (5,6) -- (6,6);
\draw[black, ->>>>-] (6,6) -- (7,6);
\draw[black, ->>>>-] (0,7) -- (1,7);
\draw[black, ->>>>-] (1,7) -- (2,7);
\draw[black, ->>>>-] (2,7) -- (3,7);
\draw[black, ->>>>-] (3,7) -- (4,7);
\draw[black, ->>>>-] (4,7) -- (5,7);
\draw[black, ->>>>-] (7,5) -- (7,6);
\draw[black, ->>>>-] (5,7) -- (6,7);
\draw[black, ->>>>-] (6,7) -- (7,7);
\draw[black, ->>>>-] (7,3) -- (7,4);
\draw[black, ->>>>-] (7,4) -- (7,5);
\draw[black, ->>>>-] (1,4) -- (1,5);
\draw[black, ->>>>-] (1,3) -- (1,4);
\draw[black, ->>>>-] (1,5) -- (1,6);
\draw[black, ->>>>-] (1,6) -- (1,7);
\draw[black, ->>>>-] (2,4) -- (2,5);
\draw[black, ->>>>-] (2,3) -- (2,4);
\draw[black, ->>>>-] (2,5) -- (2,6);
\draw[black, ->>>>-] (2,6) -- (2,7);
\draw[black, ->>>>-] (3,4) -- (3,5);
\draw[black, ->>>>-] (3,3) -- (3,4);
\draw[black, ->>>>-] (3,5) -- (3,6);
\draw[black, ->>>>-] (3,6) -- (3,7);
\draw[black, ->>>>-] (4,4) -- (4,5);
\draw[black, ->>>>-] (4,3) -- (4,4);
\draw[black, ->>>>-] (4,5) -- (4,6);
\draw[black, ->>>>-] (4,6) -- (4,7);
\draw[black, ->>>>-] (5,4) -- (5,5);
\draw[black, ->>>>-] (5,3) -- (5,4);
\draw[black, ->>>>-] (5,5) -- (5,6);
\draw[black, ->>>>-] (5,6) -- (5,7);
\draw[black, ->>>>-] (6,4) -- (6,5);
\draw[black, ->>>>-] (6,3) -- (6,4);
\draw[black, ->>>>-] (6,5) -- (6,6);
\draw[black, ->>>>-] (6,6) -- (6,7);
\draw[black, dotted, ->-] (-0.5,-0.5) -- (7.5,-0.5);
\draw[black, dotted, ->-] (-0.5,7.5) -- (7.5,7.5);
\draw[black, dotted, ->>-] (-0.5,-0.5) -- (-0.5,7.5);
\draw[black, dotted, ->>-] (7.5,-0.5) -- (7.5,7.5);
\draw[black, ->>>>-] (0, -0.5) --(0,0);
\draw[black, ->>>>-] (-0.5,0) -- (0,0);
\draw[black, ->>>>-] (1, -0.5) --(1,0);
\draw[black, ->>>>-] (-0.5,1) -- (0,1);
\draw[black, ->>>>-] (2, -0.5) --(2,0);
\draw[black, ->>>>-] (-0.5,2) -- (0,2);
\draw[black, ->>>>-] (3, -0.5) --(3,0);
\draw[black, ->>>>-] (-0.5,3) -- (0,3);
\draw[black, ->>>>-] (4, -0.5) --(4,0);
\draw[black, ->>>>-] (-0.5,4) -- (0,4);
\draw[black, ->>>>-] (5, -0.5) --(5,0);
\draw[black, ->>>>-] (-0.5,6) -- (0,6);
\draw[black, ->>>>-] (6,-0.5) -- (6,0);
\draw[black, ->>>>-] (-0.5,5) -- (0,5);
\draw[black, ->>>>-] (7, -0.5) --(7,0);
\draw[black, ->>>>-] (-0.5,7) -- (0,7);

\draw[black, ->>>-] (0,7) -- (0,7.5);
\draw[black, ->>>-] (7,7) -- (7.5,7);
\draw[black, ->>>-] (1,7) -- (1,7.5);
\draw[black, ->>>-] (7,6) -- (7.5,6);
\draw[black, ->>>-] (2,7) -- (2,7.5);
\draw[black, ->>>-] (7,5) -- (7.5,5);
\draw[black, ->>>-] (3,7) -- (3,7.5);
\draw[black, ->>>-] (7,4) -- (7.5,4);
\draw[black, ->>>-] (4,7) -- (4,7.5);
\draw[black, ->>>-] (7,0) -- (7.5,0);
\draw[black, ->>>-] (5,7) -- (5,7.5);
\draw[black, ->>>-] (7,3) -- (7.5,3);
\draw[black, ->>>-] (6,7) -- (6,7.5);
\draw[black, ->>>-] (7,2) -- (7.5,2);
\draw[black, ->>>-] (7,7) -- (7,7.5);
\draw[black, ->>>-] (7,1) -- (7.5,1);

\draw[black,fill=white] (0,0) circle (.1 cm);
\draw[black,fill=white] (1,1) circle (.1 cm);
\draw[black,fill=white] (3,3) circle (.1 cm);
\draw[black,fill=white] (2,2) circle (.1 cm);
\draw[black,fill=white] (4,4) circle (.1 cm);
\draw[black,fill=white] (5,5) circle (.1 cm);
\draw[black,fill=white] (6,6) circle (.1 cm);
\draw[black,fill=white] (7,7) circle (.1 cm);
\draw[black,fill=white] (3,5) circle (.1 cm);
\draw[black,fill=white](2,0) circle (.1 cm);
\draw[black,fill=white] (0,2) circle (.1 cm);
\draw[black,fill=white] (4,0) circle (.1 cm);
\draw[black,fill=white] (0,4) circle (.1 cm);
\draw[black,fill=white] (6,0) circle (.1 cm);
\draw[black,fill=white] (0,6) circle (.1 cm);
\draw[black,fill=white] (1,3) circle (.1 cm);
\draw[black,fill=white] (1,5) circle (.1 cm);
\draw[black,fill=white] (1,7) circle (.1 cm);
\draw[black,fill=white] (7,1) circle (.1 cm);
\draw[black,fill=white] (5,1) circle (.1 cm);
\draw[black,fill=white] (3,1) circle (.1 cm);

\draw[black,fill=white] (4,2) circle (.1 cm);
\draw[black,fill=white] (6,2) circle (.1 cm);
\draw[black,fill=white] (2,6) circle (.1 cm);
\draw[black,fill=white] (2,4) circle (.1 cm);
\draw[black,fill=white] (6,4) circle (.1 cm);
\draw[black,fill=white] (5,3) circle (.1 cm);
\draw[black,fill=white] (7,3) circle (.1 cm);
\draw[black,fill=white] (3,5) circle (.1 cm);
\draw[black,fill=white] (3,7) circle (.1 cm);
\draw[black,fill=white] (7,5) circle (.1 cm);
\draw[black,fill=white] (4,6) circle (.1 cm);
\draw[black,fill=white] (5,7) circle (.1 cm);

\draw[black,fill=black] (1,0) circle (.1 cm);
\draw[black,fill=black] (0,1) circle (.1 cm);
\draw[black,fill=black] (3,0) circle (.1 cm);
\draw[black,fill=black] (0,3) circle (.1 cm);
\draw[black,fill=blue] (5,0) circle (.1 cm);
\draw[blue,fill=black] (0,5) circle (.1 cm);
\draw[black,fill=black] (7,0) circle (.1 cm);
\draw[black,fill=black] (0,7) circle (.1 cm);
\draw[black,fill=black] (1,0) circle (.1 cm);
\draw[black,fill=black] (0,1) circle (.1 cm);
\draw[black,fill=black] (3,0) circle (.1 cm);
\draw[black,fill=black] (0,3) circle (.1 cm);
\draw[black,fill=black] (5,0) circle (.1 cm);
\draw[black,fill=black] (0,5) circle (.1 cm);
\draw[black,fill=black] (7,0) circle (.1 cm);
\draw[black,fill=black] (0,7) circle (.1 cm);

\draw[black,fill=black] (1,2) circle (.1 cm);
\draw[black,fill=black] (2,1) circle (.1 cm);
\draw[black,fill=black] (3,2) circle (.1 cm);
\draw[black,fill=black] (2,3) circle (.1 cm);
\draw[black,fill=black] (5,2) circle (.1 cm);
\draw[black,fill=black] (2,5) circle (.1 cm);
\draw[black,fill=black] (7,2) circle (.1 cm);
\draw[black,fill=black] (2,7) circle (.1 cm);
\draw[black,fill=black] (1,4) circle (.1 cm);
\draw[black,fill=black] (4,1) circle (.1 cm);
\draw[black,fill=black] (3,4) circle (.1 cm);
\draw[black,fill=black] (4,3) circle (.1 cm);
\draw[black,fill=black] (5,4) circle (.1 cm);
\draw[black,fill=black] (4,5) circle (.1 cm);
\draw[black,fill=black] (7,4) circle (.1 cm);
\draw[black,fill=black] (4,7) circle (.1 cm);

\draw[black,fill=black] (1,6) circle (.1 cm);
\draw[black,fill=black] (6,1) circle (.1 cm);
\draw[black,fill=black] (3,6) circle (.1 cm);
\draw[black,fill=black] (6,3) circle (.1 cm);
\draw[black,fill=black] (5,6) circle (.1 cm);
\draw[black,fill=black] (6,5) circle (.1 cm);
\draw[black,fill=black] (7,6) circle (.1 cm);
\draw[black,fill=black] (6,7) circle (.1 cm);

\draw[black, dotted] (1.5,-0.5) -- (1.5,7.5);
\draw[black, dotted] (2.5,-0.5) -- (2.5,7.5);
\draw[black, dotted] (3.5,-0.5) -- (3.5,7.5);
\draw[black, dotted] (4.5,-0.5) -- (4.5,7.5);
\draw[black, dotted] (5.5,-0.5) -- (5.5,7.5);
\draw[black, dotted] (6.5,-0.5) -- (6.5,7.5);
\draw[black, dotted] (7.5,-0.5) -- (7.5,7.5);
\draw[black, dotted] (-0.5,0.5) -- (7.5,0.5);
\draw[black, dotted] (-0.5,1.5) -- (7.5,1.5);
\draw[black, dotted] (-0.5,2.5) -- (7.5,2.5);
\draw[black, dotted] (-0.5,3.5) -- (7.5,3.5);
\draw[black, dotted] (-0.5,4.5) -- (7.5,4.5);
\draw[black, dotted] (0,5.5) -- (7.5,5.5);
\draw[black, dotted] (-0.5,6.5) -- (7.5,6.5);
\draw[black, dotted]
(0.5,-0.5) -- (0.5, 7.5);
   \end{tikzpicture}
   \caption{Oriented lattice $\Gamma_L$ and its dual $\Gamma^*_L$}
   \label{lattice}
\end{figure}
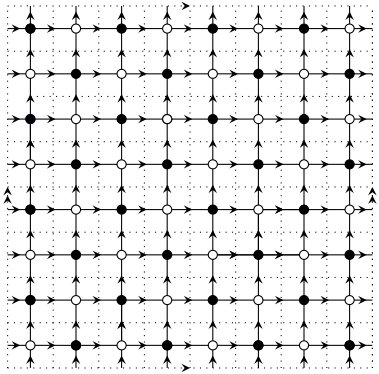

We will denote with $\text{V}(\Gamma_L)$, $\text{E}(\Gamma_L)$ and $\text{F}(\Gamma_L)$ respectively the set of vertices, edges and faces of the lattice $\Gamma_L$. Their elements will be written as $i \in \text{V}(\Gamma_L)$, $(i,j) \in \text{E}(\Gamma_L)$ (with the convention that $j$ and $i$ are respectively the starting and arrival point according to the orientation shown in the picture) and $p = (i,j,k,l) \in \text{F}(\Gamma_L)$. Often, with a slight abuse of notation, we will identify the set of vertices $\text{V}(\Gamma_L)$ with the lattice $\Gamma_L$ itself. The set of vertices, edges and faces of the dual lattice $\Gamma^*_L$ of $\Gamma_L$ (sketched with dotted lines) will be likewise denoted with $\text{V}(\Gamma^*_L)$, $\text{E}(\Gamma_L^*)$ and $\text{F}(\Gamma_L^*)$. Of course, there is a bijective correspondence between $\text{F}(\Gamma_L) \simeq \text{V}(\Gamma_L^*)$ and $\text{V}(\Gamma_L) \simeq \text{F}(\Gamma_L^*)$, which we shall sometimes use without further mention. In particular, the four edges sharing one vertex can be thought of as the dual edges bounding the corresponding dual face.

\paragraph{Acknowledgments.} The authors would like to thank the anonymous referees for their very thorough proofreading of this manuscript. SB \& LG would like to thank GSSI in L'Aquila, where this work originated, for their hospitality. LG would like to thank Fabrizio Caragiulo, Simone Fabbri, Tommaso Pedroni, Bruno Renzi and Harman Preet Singh for support and useful discussions throughout the realization of the paper. SB acknowledges financial support of NSERC of Canada. LG \& MP acknowledge financial support of the European Research Council through the
grant ERCStG MaMBoQ, n. 802901. MP acknowledges support from the MUR, PRIN
2022 project MaIQuFi cod. 20223J85K3. This work has been carried
out under the auspices of the GNFM of INdAM.

\section{The model}\label{sec1}

We consider a system of spinful fermions hopping on the vertices of a two-dimensional square lattice $\Gamma_L = \mathbb{Z}^2_L$ with periodic boundary conditions (and length $L \in 4 \mathbb{N}$) in the background of (deconfined) $\mathbb{Z}_2$-valued gauge fields living on the edges of the lattice.

The fermionic matter is described by the usual fermionic Fock space:
\begin{equation}
\mathcal{F}_L= \mathbb{C} \oplus \bigoplus_{n\geq 1} \ell^{2}(\Gamma_{L} \times \{\uparrow,\downarrow\})^{\wedge n}.
\end{equation}
Such space can be conveniently thought as the Hilbert space that arises by successive applications of fermionic creation and annihilation operators $a^{+}_{i,\eta}$ and $a^{-}_{i,\eta}$ (for any $i,\eta\in \Gamma_{L}\times \{\uparrow,\downarrow\}$) on the fermionic vacuum state $\ket{0}$. Antisymmetry of wavefunctions is naturally enforced by the canonical anticommutation relations:
\begin{equation}\label{CAR}
\{a^{+}_{i,\eta}, a^{-}_{j,\eta'} \} = \delta_{ij}\delta_{\eta,\eta'}\;,\qquad  \{ a^{+}_{i,\eta}, a^{+}_{j,\eta'} \} = \{ a^{-}_{i,\eta}, a^{-}_{j,\eta'} \}= 0\;,
\end{equation}
with the understanding that $a^{+}_{i,\eta} = (a^{-}_{i,\eta})^{*}$. The algebra $\mathcal{A}_{\text{Fer}}$ of fermionic observables is given by the (self-adjoint) polynomials in the creation and annihilation operators. A simple example is the number operator,
\begin{equation}
N = \sum_{i\in \Gamma_{L}}\sum_{\eta = \uparrow\downarrow} a^{+}_{i,\eta} a^{-}_{i,\eta}\;.
\end{equation}
The parity automorphism $\mathcal{P}$ of $\mathcal{A}_{\text{Fer}}$ is defined by
\begin{equation}
\mathcal{P}(\mathcal{O}) = (-1)^{N} \mathcal{O} (-1)^{N}\;.
\end{equation}  
Being an involution, the eigenvalues of $\mathcal{P}$ are $\pm 1$. We denote by $\mathcal{A}^{+}_{\text{Fer}}$ the algebra of polynomials which are even under $\mathcal{P}$ (eigenvalue $+1$) and by $\mathcal{A}^{-}_{\text{Fer}}$ the set of the polynomials that are odd under $\mathcal{P}$ (eigenvalue $-1$). In the following, we shall always consider physical observables that belong to the even subalgebra $\mathcal{A}^{+}_{\text{Fer}}$. In other words, all physical fermionic observables we shall consider in the following satisfy the global $\mathbb{Z}_2$-symmetry:
\begin{equation}\label{eq:thetaO}
\mathcal{O} = \mathcal{P} (\mathcal{O}).
\end{equation}

Even fermionic observables display the following locality property. Given two non-intersecting subsets $X, Y \subset \text{V}(\Gamma_L)$, then any pair of fermion-even observables $\mathcal{O}_X$ and $\mathcal{O}_Y$ localized in $X$ and $Y$, satisfy:
\begin{equation}
    [\mathcal{O}_X, \mathcal{O}_Y] =0.
\end{equation}
A simple example of fermionic Hamiltonian acting on $\mathcal{F}_L$ is the tight binding model:
\begin{equation}\label{hamif}
\begin{split}
    H^{\text{Fer}} &= -t \sum_{\substack{i,j\in \Gamma_L:\\(i,j) \in \text{E}(\Gamma_L)}} \sum_{\eta=\uparrow,\downarrow} (a^+_{i,\eta} a^-_{j,\eta} + \text{h.c.}) + m \sum_{i \in \Gamma_L} \sum_{\eta = \uparrow,\downarrow}(-1)^{i_1+i_2}\, n_{i,\eta} \\ &\quad + U \sum_{i\in \Gamma_{L}} \Big( n_{i,\uparrow} - \frac{1}{2} \Big)\Big( n_{i,\downarrow} - \frac{1}{2} \Big),
    \end{split}
\end{equation}
where $t,m>0$. The first term is the kinetic energy where the hopping is diagonal in spin space, the second describes a (staggered) mass, while the last term describes a Hubbard interaction. 

\begin{remark}
\begin{enumerate}
    \item The name mass will be understood when we diagonalize the Hamiltonian in the $\pi$-flux background: the corresponding term opens up a gap between the energy bands (see \emph{\hyperref[fig:charge gap]{Figure 1a}}).
    \item In general, a staggered mass term is only well-defined if the lattice is bipartite. For the current square lattice on the torus, this requires the length $L$ to be even.
    \item The Hubbard term is a standard, strictly on-site interaction for lattice fermions. Its effect will be studied via a convergent cluster expansion, which will allow us to prove the stability of all the properties exhibited by the non-interacting system. One of the key features of the Hubbard interaction is that it is compatible with reflection positivity \cite{Lieb}. As discussed later, see Remark \ref{rem:26}, our results can be extended in a straightforward way to a larger class of interactions, that preserve reflection positivity.
\end{enumerate}
\end{remark}

\subsection{Gauging fermion parity}
We now gauge the fermion parity to produce a topologically ordered state of matter. We introduce new degrees of freedom on each edge $(i,j)\in \text{E}(\Gamma_L)$ of the lattice, $\mathcal{H}^{\text{gauge}}_{ij} = \mathbb{C}[\mathbb{Z}_2] \simeq \mathbb{C}^2$, and the total Hilbert space of the gauge sector is 
\begin{equation}
\mathcal{H}^{\text{gauge}}_{\Gamma_L} = \bigotimes_{(i,j) \in \text{E}(\Gamma_L)} \mathcal{H}^{\text{gauge}}_{ij}.
\end{equation}
Note that we will slighlty abuse notations and often write $ij$ for an edge --- in the present case of a $\mathbb{Z}_2$ gauge field, the orientation plays no role. The algebra of pure gauge observables is generated by Pauli matrices $\{\hat\sigma^z_{ij}, \hat\sigma^x_{ij}:(i,j)\in \text{E}(\Gamma_L)\}$ satisfying the usual algebraic relations. They shall be respectively interpreted as a magnetic vector potential and an electric field.

The total Hilbert space for the gauged matter is given by:
\begin{equation}\label{eq:total H}
\mathcal{H}_{L} = \mathcal{F}_L \otimes \mathcal{H}^{\text{gauge}}_{L}\;.
\end{equation} 
where the tensor product is understood to be symmetric, and, accordingly, the fermionic and gauge observables act on each factor independently and they commute. We denote the corresponding observable algebra by $\mathcal{A}$, namely, it is the algebra of polynomials in creation and annihilation operators and magnetic vector potentials and electric fields. The global $\mathbb{Z}_2$ parity transformation is now promoted to a local $\mathbb{Z}_2$-transformation acting on each vertex $i \in \text{V}(\Gamma_L)$ and edges touching the vertex:

\begin{definition}[$\mathbb{Z}_2$-charges]
    For each vertex $i \in \text{V}(\Gamma_L)$, we define the $\mathbb{Z}_2$ charge operators as:
    \begin{equation}\label{eq:Def of charge}
        {Q}_i = {A}_i (-1)^{n_{i,\uparrow} +n_{i,\downarrow}},\qquad 
    \end{equation}
    where $A_i = \prod_{j:(i,j)\in \text{E}(\Gamma_L)}\hat\sigma^x_{ij}$.
    \begin{figure}[H]
\centering
\begin{tikzpicture}[scale=0.8]
\draw (1,1) grid (3,3);
\draw[black] (1,1) -- (1,0.5);
\draw[black] (2,1) -- (2,0.5);
\draw[black] (3,1) -- (3,0.5);
\draw[black] (3,1) -- (3.5,1);
\draw[black] (3,2) -- (3.5,2);
\draw[black] (3,3) -- (3.5,3);
\draw[black] (3,3) -- (3,3.5);
\draw[black] (2,3) -- (2,3.5);
\draw[black] (1,3) -- (1,3.5);
\draw[black] (1,3) -- (0.5,3);
\draw[black] (1,2) -- (0.5,2);
\draw[black] (1,1) -- (0.5,1);
\draw[black, dotted] (0.5,0.5)--(3.5,0.5) --(3.5, 3.5) -- (0.5,3.5)--(0.5,0.5);
\draw[red,fill=red] (1.5,2) circle (.09 cm);
\draw[red,fill=red] (2.5,2) circle (.09 cm);
\draw[red,fill=red] (2,1.5) circle (.09 cm);
\draw[red,fill=red] (2,2.5) circle (.09 cm);
\draw[red, thick] (2,1) -- (2,3);
\draw[red, thick] (1,2) -- (3,2);
\draw[red, thick] (1.5,1.5) -- (2.5,1.5) -- (2.5, 2.5) -- (1.5, 2.5) -- (1.5,1.5);
\node[above right] (a) at (2,2) {\scalebox{0.8}{$i$}};
\draw[black,fill=white] (1,1) circle (.1 cm);
\draw[black,fill=white] (3,1) circle (.1 cm);
\draw[black,fill=white] (3,3) circle (.1 cm);
\draw[black,fill=white] (2,2) circle (.1 cm);
\draw[black,fill=white] (1,3) circle (.1 cm);
\draw[black,fill=black] (1,2) circle (.1 cm);
\draw[black,fill=black] (2,1) circle (.1 cm);
\draw[black,fill=black] (3,2) circle (.1 cm);
\draw[black,fill=black] (2,3) circle (.1 cm);
\end{tikzpicture}
\caption{The operator $A_i$ acts on all edges corresponding to the lattice site $i\in\Gamma_L$.}\label{fig:star}
\end{figure}
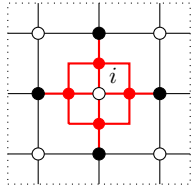
\end{definition}

The charge operators are unitary and self-adjoint. They mutually commute. Moreover,
\begin{equation}
    \prod_{i \in \Gamma_{L}} {Q}_{i} = (-1)^{N}.
\end{equation}
It follows in particular that $\text{spec}({Q}_{i}) =\{\pm 1\}$. The operator $Q_{i}$ implements the following local gauge transformation:
\begin{align}
Q_{i} \hat{\sigma}^z_{kl} Q_{i} &= 
\begin{cases}
     -\hat{\sigma}^z_{kl}& \text{if either $k$ or $l=i$}\\
     \hat{\sigma}^z_{kl} & \text{otherwise.}
\end{cases} \\
Q_{i} \hat{\sigma}^x_{kl} Q_{i} &= \hat{\sigma}^x_{kl} \\
Q_{i} a^{\pm}_k Q_{i} &=
\begin{cases}
     -a^{\pm}_k & \text{if $k=i$ }\\
     a^{\pm}_k  & \text{otherwise.}
\end{cases}
\end{align}

If we denote by $\mathcal{Q} = \{ Q_i:i\in\Gamma_L \}$, we have the following:

\begin{definition}[Physical observables]\label{def:gi} The algebra of physical observables is the commutant $\cal C_{\cal A}(\cal Q)$ of $\cal Q$.
\end{definition}

In other words, the physical observables (such as the Hamiltonian) are those that commute with any $\mathbb{Z}_{2}$-charge operator. We now define `string observables' associated with chains and cochains (see Appendix~\ref{homo} for notations and basic definitions of cellular homology and cohomology):
\begin{align}
     \hat Z: C_1(\Gamma_L) &\to\cal A\\
        \mathcal{C} &\mapsto \hat Z_{\mathcal{C}}=\prod_{(i,j)\in\mathcal{C}} \hat{\sigma}^z_{ij}\\
      \hat X:  C_1(\Gamma_L^*) &\to\cal A \\
        \mathcal{C}^* &\mapsto \hat X_{\mathcal{C}^*}=\prod_{(i,j)\cap\mathcal{C}^*\neq \emptyset} \hat{\sigma}^x_{ij}
\end{align}
These maps are group homomorphisms
\begin{equation}\label{eq:boundaries and Wilson loops}
\hat Z_{\mathcal{C}_1+\mathcal{C}_2} = \hat Z_{\mathcal{C}_1} \hat Z_{\mathcal{C}_2},\qquad
\hat X_{\mathcal{C}^*_1+\mathcal{C}^*_2} = \hat X_{\mathcal{C}^*_1} \hat X_{\mathcal{C}^*_2}
\end{equation}
where the addition in the chain group is understood to be$\mod 2$. This implies that, if $\mathcal{C}\in B_1(\Gamma_L)$, namely $\mathcal{C} = \partial \sum_i p_i$ for some $p_i \in \text{F}(\Gamma_L)$, and likewise if  $\mathcal{C}^*\in B_1(\Gamma_L^*)$, namely $\mathcal{C}^*=  \partial \sum_i p_i^*$ for some $p_i^* \in \text{F}(\Gamma_L^*)$, then:

\begin{equation}
     \hat Z_{\mathcal{C}} = \prod_{\substack{p_i\in \text{F}(\Gamma_L)\\\mathcal{C} = \partial \sum_i p_i}} \hat Z_{\partial p_i},\qquad      \hat X_{\mathcal{C}^*} = \prod_{\substack{i\in \text{V}(\Gamma_L)\\\mathcal{C}^* = \partial \sum_i p_i^*}} A_i
\end{equation}
where we identified a face $p^*$ of the dual lattice with a vertex $i$ of the primal one and noted that $\hat X_{p^*} = A_i$, see \autoref{fig:star}. If we denote
\begin{equation}
    B_{p} = \prod_{(j,l) \in \partial p} \hat{\sigma}^z_{jl}
\end{equation}
the magnetic field operator of the plaquette $p$, we have that $\hat Z_{\partial p} = B_p$:
\\

\begin{figure}[H]
\centering
\begin{tikzpicture}[scale=0.75]
\scalebox{1.2}{
\draw[black] (1,1) -- (1,0.5);
\draw[black] (2,1) -- (2,0.5);
\draw[black] (1,2) -- (0.5,2);
\draw[black] (1,1) -- (0.5,1);
\draw[black, dotted] (0.5,0.5)--(2.5,0.5) --(2.5, 2.5) -- (0.5,2.5)--(0.5,0.5);
\draw[black] (1,2) -- (1,2.5);
\draw[black] (2,2) -- (2,2.5);
\draw[black] (2,2) -- (2.5,2);
\draw[black] (2,1) -- (2.5,1);
\draw[blue,fill=blue] (1.5,2) circle (.09 cm);
\draw[blue,fill=blue] (1.5,1) circle (.09 cm);
\draw[blue,fill=blue] (2,1.5) circle (.09 cm);
\draw[blue,fill=blue] (1,1.5) circle (.09 cm);
\draw[blue, thick] (2,1) -- (2,2);
\draw[blue, thick] (2,1) -- (1,1);
\draw[blue, thick] (1,1) -- (1,2);
\draw[blue, thick] (1,2) -- (2,2);
\node[] (h) at (1.5,1.4) {$p$};
\draw[black,fill=white] (1,1) circle (.1 cm);

\draw[black,fill=white] (2,2) circle (.1 cm);
 
\draw[black,fill=black] (1,2) circle (.1 cm);
\draw[black,fill=black] (2,1) circle (.1 cm);}
 
\end{tikzpicture}
\caption{The magnetic field operator $B_p$ associated with a plaquette $p$.}\label{plaquette}
\end{figure}
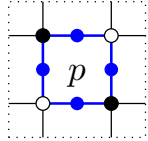

The `electric' string operators $\hat X_{\caC^*}\in \cal C_{\cal A}(\cal Q)$ are physical observables for any $\caC^*\in C_1(\Gamma_L^*)$. For the `magnetic' ones, this is the case only for cycles, namely `closed loop':
\begin{align}
    \mathcal{C}^*\in C_1(\Gamma_L^*)\quad\Longrightarrow\quad & \hat X_{\mathcal{C}^*}\in \cal C_{\cal A}(\cal Q). \label{eq:X cycle} \\
    \mathcal{C}\in Z_1(\Gamma_L)\quad\Longrightarrow\quad & \hat Z_{\mathcal{C}}\in \cal C_{\cal A}(\cal Q), \label{eq:Z cycle}
\end{align}
These observables will play an important role in the following.

Since all the charges mutually commute and their spectrum is $\{\pm 1\}$, the total Hilbert space $\mathcal{H}_{L}$ can be written as a direct sum of supersection sectors, labeled by the simultaneous eigenvalues $\{q_i\}$ of all ${Q}_i$ operators:
\begin{equation}\label{eq:totH}
    \mathcal{H}_L = \bigoplus_{\bar q\in \{\pm 1\}^{\Gamma_L}} \mathcal{H}_{L}^{\bar q}.
\end{equation}
\begin{remark}
\begin{enumerate}
    \item Any supersection sector $\mathcal{H}_{L}^{\bar q}$ can be obtained from the total Hilbert space $\mathcal{H}_L$ by acting with a projector:
    \begin{equation}
        \mathcal{H}_{L}^{\bar q} = \bigotimes_{i \in \Gamma_L} \bigg(\frac{\mathbbm{1} + q_i Q_i}{2}\bigg) \mathcal{H}_L.
    \end{equation}
    Indeed, $\ket{\Psi} = \frac{1}{2}(\mathbbm{1} + q Q)\ket{\Psi}$ implies $Q\ket{\Psi} = q\ket{\Psi}$ since $q^2=1$.
    \item By definition, physical observables leave every superselection sector invariant. If $\mathcal{O}\in\cal C_{\cal A}(\cal Q)$, then for any $\ket{\Psi} \in \mathcal{H}_{L}^{\bar q}$ and any $i \in \Gamma_L$,
\begin{equation}
    Q_i \mathcal{O} \ket{\Psi} = \mathcal{O} Q_i\ket{\Psi} = q_i \mathcal{O}\ket{\Psi}.
\end{equation}
In other words, the algebra $\cal C_{\cal A}(\cal Q)$ decomposes into irreducible blocks labeled by $\bar q$. 
\end{enumerate}    
\end{remark}

We now define the sector that can be interpreted as that of states without background charges.
\begin{definition}[Gauss' law]\label{def:gauss} A state $\ket{\Psi}\in\caH_L$ satisfies Gauss' law if $\ket{\Psi}\in\mathcal{H}_L^{\bar{1}}$, namely $Q_i\ket{\Psi} = \ket{\Psi}$ for all $i\in\Gamma_L$. We shall denote it $\caH^{\rm phys}$ and refer to it as the physical Hilbert space.
\end{definition}

We note that if $\vert\Psi\rangle$ satisfies Gauss' law, then
\begin{equation}\label{eq:Gauss for stars}
    A_i\ket{\Psi} = (-1)^{n_{i, \uparrow}+n_{i,\downarrow}} \ket{\Psi}
\end{equation}
for all $i\in\Gamma_L$.

Finally, the gauging procedure is completed by prescribing a gauge-invariant Hamiltonian obtained by \eqref{hamif} by a lattice analogue of the minimal-coupling procedure:
\begin{equation}
\begin{split}
    H &= -t \sum_{\substack{i,j\in \Gamma_L:\\(i,j) \in \text{E}(\Gamma_L)}}\sum_{\eta = \uparrow,\downarrow} (a^+_{i,\eta} \hat{\sigma}^z_{ij} a^-_{j,\eta} + \text{h.c.}) + m \sum_{i \in \Gamma_L} \sum_{\eta=\uparrow, \downarrow}(-1)^{ i_1+i_2 }\, n_{i,\eta} \\&\quad + U \sum_{i\in \Gamma_{L}} \Big( n_{i,\uparrow} - \frac{1}{2} \Big)\Big( n_{i,\downarrow} - \frac{1}{2} \Big) 
    \end{split}
\end{equation}
\begin{remark}\label{rem:26}
\begin{enumerate}
    \item This Hamiltonian is a spinful-fermion analogue of the one studied in~\emph{\cite{GP}} with the addition of a mass term for the fermions and of a Hubbard-like interaction term. One could have as well added an additional pure gauge term:
\begin{equation}
   H^{\text{gauge}} = \lambda \sum_{p \in \text{F}(\Gamma_L)} B_p\;;
\end{equation}
as we will see, the kinetic term of the model is minimized by a $\pi$-flux background, while the pure gauge term is minimized by a $0$-flux background, introducing an energetic competition between the two terms. However, using the methods described in \emph{\cite{GP}}, one can prove that, for $\lambda \ll t$, such term would not affect the low-energy physics and for that reason it will not be considered it in this work. 
\item Due to the Gauss's Law constraint, the Hubbard term can be thought as a star operator. Indeed:
    \begin{equation}
       A_i = (-1)^{n_{i, \uparrow}} (-1)^{n_{i, \downarrow}} = (1-2 n_{i, \downarrow})(1-2 n_{i, \uparrow}) = 4  \Big( n_{i,\uparrow} - \frac{1}{2} \Big)\Big( n_{i,\downarrow} - \frac{1}{2} \Big)  \end{equation}
       Thus, we can interpret our model as the coupling of the Toric Code with a system of lattice fermions, in the absence of background charges.
\item Our results can be extended to a larger class of interactions that are compatible with reflection positivity \cite{Lieb, NM}, see Section \ref{sec:RP}. For example, repulsive nearest-neighbor interactions: 
\begin{equation}
w \sum_{\substack{i,j \in \Gamma_L:\\
(i,j) \in \text{E}(\Gamma_L)}} \Big(n_i - \frac{1}{2}\Big)\Big(n_j - \frac{1}{2}\Big)\;.
\end{equation} 
For $w>0$, this interaction term is compatible with reflection positivity \cite{Lieb}. More generally, one can allow for two-body interactions of the form \cite{Lieb}:
\begin{equation*}
    \sum_{\substack{i,j \in \Gamma_L\\i \neq j}} V_{ij} \Big(n_i - \frac{1}{2}\Big)\Big(n_j - \frac{1}{2}\Big)
\end{equation*}
where the matrix $V \in M_{|\Gamma_L|}(\mathbb{C})$ is (semi)-positive definite. In particular, if we restrict to the case of translationally invariant interactions, meaning that there exists some function $v: \Gamma_L \to \mathbb{R}$ (with $v(0)=0$) such that $V_{ij} = v(|i-j|)$, the positivity condition is implied by pointwise positivity of $\hat{v}(k)$. Physically relevant cases compatible with reflection positivity are the Coulomb potential and the Yukawa potential. In order to ensure the convergence of the cluster expansion, used in particular to prove the stability of the spectral gap, we need to assume summability of the interaction potential, see {\it e.g.} \cite{GMPhall, DRS}.
\end{enumerate}   
\end{remark}
Finally, if we let $Q_\Lambda = \prod_{i\in\Lambda}Q_i$ for any $\Lambda\subset\Gamma_L$, then
\begin{equation}\label{eq:H gauge invariance}
    Q_\Lambda H Q_\Lambda = H.
\end{equation}
Indeed, $Q_\Lambda = (-1)^{N_\Lambda} \hat X_{\partial \Lambda}$ so that $Q_\Lambda$ commutes with all terms that are supported either completely inside or completely outside of $\Lambda$. For those hopping terms on the boundary both the gauge term and the fermionic term yield a negative sign, so they commute as well. For later purposes, we note that this gauge invariance can also be written as
\begin{equation}\label{eq:H gauge invariance 1}
    A_\Lambda H A_\Lambda = (-1)^{N_\Lambda} H (-1)^{N_\Lambda}
\end{equation}
\begin{example}\label{ex:op}
We conclude this section with a brief discussion of physical pure gauge observables. See \emph{\autoref{fig:stringOps}}. 
\begin{enumerate}
\item Any polynomial in the electric field operators is gauge invariant. This includes the operators $\hat\sigma_{ij}^x$ themselves, but also the string operators $\hat X_{\caC^*}$ for any co-chain $\caC^*\in C_1(\Gamma_L^*)$. If $\partial \caC^*=\{p_1,p_2\}$ is made up of just two plaquettes $p_1,p_2$, we shall refer to $\hat X_{\caC^*}$ as the (bare) monopole pair creation operator. If $\partial\mathcal{C}^* = \emptyset$, namely $\caC^*$ is a cycle in the dual lattice, the operator $\hat X_{\caC^*}$ is called the ’t Hooft magnetic loop operator.
\item The observables $\hat Z_{\caC}$ are gauge-invariant only if $\partial \caC = \emptyset$. In this case, $\hat Z_{\caC}$ measures the $\mathbb{Z}_2$ magnetic flux piercing the path $\mathcal{C}$ and thus, for $\caC\in B_1(\Gamma_L)$, namely $\caC$ is a contractible path, the number (mod $2$) of monopoles in the interior of the path.
\item We have already introduced the elementary star operators $A_i$ and plaquette operators $B_p$ and pointed out their relation to the loop operators. We further note that, on the torus,
\begin{equation}\label{stokesb}
\prod_{p\in \text{F}(\Gamma_L)} B_{p} = \mathbbm{1} = \prod_{i\in \text{V}(\Gamma_L)} A_i.
\end{equation}
\end{enumerate}
\end{example}

\begin{figure}
\centering
\begin{tikzpicture}[scale=0.7]
\draw (0,0) grid (4,4);
\draw[->-] (0,4) -- (4,4);
\draw[->-] (0,0) -- (4,0);
\draw[->>-] (0,0) -- (0,4);
\draw[->>-] (4,0) -- (4,4);
\draw[red,fill=red] (1.5,2) circle (.05 cm);
\draw[thick] (1.5,1.5) node[cross=0.5ex,red] {};
\draw[thick] (1.5,2.5) node[cross=0.5ex,red] {};
\draw[red, thick] (1.5,1.5) -- (1.5,2.5);
\node[below] (a) at (2,0) {$\hat{\sigma}^x_{ij}$};
\begin{scope}[xshift=5.5 cm]
\draw (0,0) grid (4,4);
\draw[->-] (0,4) -- (4,4);
\draw[->-] (0,0) -- (4,0);
\draw[->>-] (0,0) -- (0,4);
\draw[->>-] (4,0) -- (4,4);
\draw[red,fill=red] (1.5,0) circle (.05 cm);
\draw[red,fill=red] (1.5,1) circle (.05 cm);
\draw[red,fill=red] (1.5,2) circle (.05 cm);
\draw[red,fill=red] (1.5,3) circle (.05 cm);
\draw[red,fill=red] (1.5,4) circle (.05 cm);
\draw[red, thick] (1.5,0) -- (1.5,4);
\node[right] (a) at (1.45,2.5) {{$\mathcal{C}^*$}};
\node[below] (a) at (2,0) {$\hat X_{\mathcal{C}^*}$};
\end{scope}
\begin{scope}[xshift=11cm]
\draw (0,0) grid (4,4);
\draw[->-] (0,4) -- (4,4);
\draw[->-] (0,0) -- (4,0);
\draw[->>-] (0,0) -- (0,4);
\draw[->>-] (4,0) -- (4,4);
\draw[red,fill=red] (0.5,3) circle (.05 cm);
\draw[red,fill=red] (1,2.5) circle (.05 cm);
\draw[red,fill=red] (2,2.5) circle (.05 cm);
\draw[red,fill=red] (3,2.5) circle (.05 cm);
\draw[red,fill=red] (3.5,2) circle (.05 cm);
\draw[red, thick] (0.5, 3.5) -- (0.5,2.5) -- (3.5,2.5) -- (3.5,1.5);
\node[above] (b) at (0.5,3.55) {{$p_1$}};
\node[below] (b) at (3.5,1.47) {{$p_2$}};
\draw[thick] (0.5,3.5) node[cross=0.5ex,red] {};
\draw[thick] (3.5,1.5) node[cross=0.5ex,red] {};
\node[below] (a) at (2,0) {$\hat X_{\mathcal{C}^*}$};
\end{scope}
\begin{scope}[yshift=-5.5cm]
\draw (0,0) grid (4,4);
\draw[->-] (0,4) -- (4,4);
\draw[->-] (0,0) -- (4,0);
\draw[->>-] (0,0) -- (0,4);
\draw[->>-] (4,0) -- (4,4);
\draw[blue,fill=blue] (1.5,2) circle (.05 cm);
\draw[blue,fill=blue] (1.5,3) circle (.05 cm);
\draw[blue,fill=blue] (2,2.5) circle (.05 cm);
\draw[blue,fill=blue] (1,2.5) circle (.05 cm);
\draw[thick, blue] (1,2)--(2,2)--(2,3)--(1,3)--(1,2);
\node (a) at (1.5,2.5)  {{$p$}};
\node[below] (a) at (2,0) {{$B_{p}$}};
\end{scope}
\begin{scope}[yshift=-5.5cm, xshift = 5.5cm]
\draw (0,0) grid (4,4);
\draw[->-] (0,4) -- (4,4);
\draw[->-] (0,0) -- (4,0);
\draw[->>-] (0,0) -- (0,4);
\draw[->>-] (4,0) -- (4,4);
\draw[blue,fill=blue] (1.5,1) circle (.05 cm);
\draw[blue,fill=blue] (2.5,1) circle (.05 cm);
\draw[blue,fill=blue] (3,1.5) circle (.05 cm);
\draw[blue,fill=blue] (2.5,2) circle (.05 cm);
\draw[blue,fill=blue] (2,2.5) circle (.05 cm);
\draw[blue,fill=blue] (1.5,3) circle (.05 cm);
\draw[blue,fill=blue] (1,2.5) circle (.05 cm);
\draw[blue,fill=blue] (1,1.5) circle (.05 cm);
\draw[thick, blue] (1,1)--(3,1)--(3,2)--(2,2)--(2,3)--(1,3)--(1,1);
\node (a) at (1.5,1.5)  {{$\Omega$}};
\node[below] (b) at (1.8,1) {{$\partial\Omega = \mathcal{C}$}};
\node[below] (a) at (2,0) {{$\hat Z_{\mathcal{C}}$}};
\end{scope}
\begin{scope}[yshift=-5.5cm, xshift = 11cm]
\draw (0,0) grid (4,4);
\draw[->-] (0,4) -- (4,4);
\draw[->-] (0,0) -- (4,0);
\draw[->>-] (0,0) -- (0,4);
\draw[->>-] (4,0) -- (4,4);
\draw[blue,fill=blue] (1,0.5) circle (.05 cm);
\draw[blue,fill=blue] (1,1.5) circle (.05 cm);
\draw[blue,fill=blue] (1,2.5) circle (.05 cm);
\draw[blue,fill=blue] (1,3.5) circle (.05 cm);
\draw[thick, blue] (1,0)--(1,4);
\node[right] (a) at (1,2.5)  {{$\mathcal{C}$}};
\node[below] (a) at (2,0) {$\hat Z_{\mathcal{C}}$};
\end{scope}
\end{tikzpicture}
\caption{Graphical representation of operators introduced in \hyperref[ex:op]{Example \ref*{ex:op}}.}\label{fig:stringOps}
\end{figure}
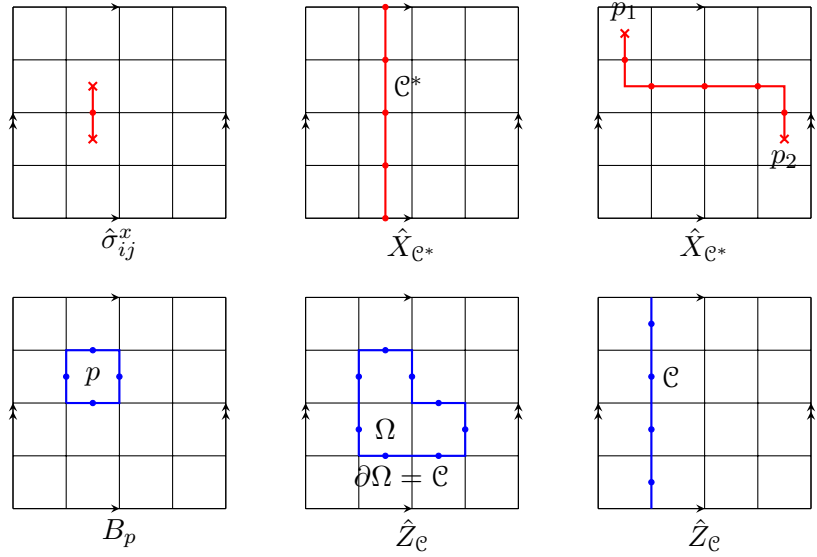

\subsection{Properties of the Hamiltonian}

Let us review some useful properties of the Hamiltonian \eqref{hami} which have been extensively discussed and proved in \cite{GP} (and references therein). The Hamiltonian $H$ commutes with any $\hat{\sigma}^z_{ij}$ operator, namely the $\mathbb{Z}_2$-gauge field background is frozen, so the total Hilbert space $\mathcal{H}_{L}$, see \eqref{eq:total H}, splits into a direct sum over common eigenvalues of $\{\hat{\sigma}^z_{ij}\}$:
\begin{equation}
    \mathcal{H}_{L} = \bigoplus_{\bm{\sigma}} \mathcal{H}_{\bm{\sigma}}
\end{equation}
where $\mathcal{H}_{\bm{\sigma}} = \mathcal{F}_L \otimes \ket{\bm{\sigma}}\bra{\bm{\sigma}}$. The Hamiltonian $H$ fibers over the background sectors $\mathcal{H}_{\bm{\sigma}}$ as $H = \bigoplus_{\bm{\sigma}} H(\bm{\sigma})$, where
\begin{equation}
\begin{split}
    H(\bm{\sigma})  &=    -t \sum_{\substack{i,j\in \Gamma_L:\\(i,j) \in \text{E}(\Gamma_L)}} \sum_{\eta = \uparrow,\downarrow}a^+_{i,\eta} \sigma^z_{ij} a^-_{j,\eta} + \text{h.c.} + m \sum_{i \in \Gamma_L} (-1)^{i_1+i_2}\, n_i \\&\quad + U \sum_{i\in \Gamma_{L}} \Big( n_{i,\uparrow} - \frac{1}{2} \Big)\Big( n_{i,\downarrow} - \frac{1}{2} \Big)
    \end{split}
\end{equation}
acts on a copy of the fermionic Fock space. Note that we have dropped the hat on $\hat{\sigma}^z$, meaning that we are substituting the operator with its eigenvalues: we can think of a background in each fixed sector as a classical $\mathbb{Z}_2$ gauge field.

Fixing the background $\bm\sigma$ determines the value of the pure gauge part $A_i$ of the charges $Q_i$. There is however a large redundency in doing so since the action of any $A_i$ will trivially  not change the charge but flips the value of all four $\sigma$'s around the site $i$. In other words, two backgrounds are gauge equivalent if there exists a subset $\Lambda \subset \Gamma$ such that:
\begin{equation}
    A_\Lambda\ket{\bm{\sigma}}= \prod_{i \in \Lambda} A_i \ket{\bm{\sigma}} = \ket{\bm{\sigma}'} 
\end{equation}
Two gauge equivalent backgrounds yields unitary equivalent Hamiltonians since
\begin{equation}
H(\bm{\sigma}') = \langle \bm{\sigma'}\vert H \bm{\sigma'}\rangle 
= \langle \bm{\sigma} \vert A_{\Lambda}H  A_{\Lambda}\bm{\sigma}\rangle
 = (-1)^{N_\Lambda} H(\bm{\sigma}) (-1)^{N_\Lambda},
\end{equation}
see \eqref{eq:H gauge invariance 1}.

Spectral properties depend thus only on the gauge equivalence class $[\bm{\sigma}]$ of a given background $\bm{\sigma}$. Since the group generated by $\{A_i:i\in\Gamma_L\}$ is isomorphic to the group of $1$-boundaries, see \eqref{eq:boundaries and Wilson loops}, we have that each equivalence class has $2^{|\Gamma_L|-1}$ elements (the $-1$ arising from~(\ref{stokesb})). Since the action of any $A_i$ does not change the flux on any plaquette, gauge equivalence classes are completely characterized by assigning a flux $\pm1$ to each plaquette, yielding $2^{|\Gamma_L|-1}$ choices again by~(\ref{stokesb}), and to two representatives of non-contractible cycles of the torus. There are therefore $2^{|\Gamma_L|+1}$ classes. One checks that this counting yields a total of $2^{2|\Gamma_L|}=2^{|\text{E}(\Gamma_L)|}$ background configurations, as one should expect.

In the following, we refer to a plaquette $p$ such that $\prod_{(i,j)\in\partial p}\sigma^z_{ij} = -1$ as carrying a $\pi$-flux, while a plaquette such that $\prod_{(i,j)\in\partial p}\sigma^z_{ij} = 1$ will be said to carry no flux, or a $0$-flux. One may ask which sector $[\bm{\sigma}]$ corresponds to the lowest energy. It was proved by Lieb~\cite{Lieb} and subsequently refined in~\cite{NM} that the energy-minizing sector is the $\pi$-flux sector. Moreover, the magnetic monopoles (namely, $0$-flux plaquettes) are massive excitations \cite{GP}. More precisely, if $\bm{\sigma}$ is a background with $2k$ $0$-fluxes, then the ground state energy $E_{0,L}(\bm{\sigma})$ of $H(\bm{\sigma})$ (at $m=0$) satisfies the following bound:
\begin{equation}\label{bound}
       E_{0,L}(\bm{\sigma}) \geq  2 k \Delta + E_{0,L}(\bm{-1})
    \end{equation}
    where $E_{0,L}(\bm{-1})$ is the ground state energy of the Hamiltonian with all holonomies set to $-1$ (namely the $\pi$-flux Hamiltonian with antiperiodic boundary conditions in both directions) and $\Delta>0$ is an explicit constant. 

Within the $\pi$-flux phase, explicit diagonalization shows a gapless spectrum with two Dirac cones. Other choices of holonomies correspond to ground state energies that differ from $E_{0,L}(\bm{-1})$ by corrections that vanish as $L\to \infty$, as an inverse power law.

We will show below that the mass term considered in the present paper opens a gap in the fermionic excitation spectrum within the $\pi$-flux phase, see \hyperref[fig:charge gap]{Figure \ref*{fig:charge gap}}, thus yielding a fully gapped theory. That the analysis remains possible hangs on the fact that a staggered mass does not spoil reflection positivity.

\section{Results}\label{sec:res}
\subsection{Spectral structure of the model and local topological order}

From now on, we shall refer to a \emph{flux} for plaquettes and more generally (contractible) boundaries and a \emph{holonomy} for (non-contractible) cycles that are not boundaries. For any configuration $\bm{\sigma}$, we can add additonal holonomies $(e^{i \theta}, e^{i \phi}) \in U(1) \times U(1)$ with the replacement $\sigma^z_{ij}\to \sigma^z_{ij}e^{i \theta}$, respectively $\sigma^z_{ij}\to \sigma^z_{ij}e^{i \phi}$, along any two non-homologous cocycles that are not coboundaries. We denote the corresponding Hamiltonian $H(\bm{\sigma},e^{i \theta}, e^{i \phi})$, and we shall denote by $E_{0,L}(\bm{\sigma},e^{i\theta},e^{i\phi})$ its ground state energy which can be expressed as
\begin{equation}\label{eq:Elim}
E_{0,L}(\bm{\sigma}, e^{i\theta}, e^{i\phi}) = -\lim_{\beta \to \infty} \frac{1}{\beta} \log \tr_{\mathcal{F}_{L}} e^{-\beta H(\bm{\sigma}, e^{i\theta}, e^{i\phi})}.
\end{equation}

\begin{theorem}[Spectral structure of the model]\label{thm1} Let $L\in 4\mathbb{N}$. For $|U|$ small enough uniformly in $L$ the following is true.
\begin{enumerate}
\item \label{thm1.1} For any holonomies $(e^{i \theta}, e^{i \phi}) \in U(1) \times U(1)$, the ground state of $H(\bm{-1}, e^{i \theta}, e^{i \phi})$ on $\mathcal{F}_{L}$ is unique and it is at half-filling. 
\item \label{thm1.2} There exist constants $C,c>0$ depending on $t,m,$ but not on the size of the system, for which:
\begin{equation}\label{eq:expdeg}
    \Big|E_{0,L}(\bm{-1},e^{i\theta}, e^{i\phi})-E_{0,L}(\bm{-1},e^{i\theta'}, e^{i\phi'})\Big| \leq C L^2 e^{-c L}.
\end{equation}

\item \label{thm1.3} Let $\bm{\sigma}$ be a background with $2k$ $0$-fluxes and holonomies $(e^{i \theta}, e^{i \phi}) \in U(1) \times U(1)$. Then, there is a constant $\Delta_{\beta,L}>0$, depending on $t,m,U$, such that:
\begin{equation}\label{eq:chess}
        -\frac{1}{\beta}  \log(\tr_{\mathcal{F}_L}(e^{-\beta H(\bm{\sigma},e^{i \theta}, e^{i \phi})})) \geq 2 k \Delta_{\beta,L}  -  \frac{1}{\beta}   \log(\tr_{\mathcal{F}_L}(e^{-\beta H(\bm{-1},-1,-1)})).
    \end{equation}
The constant $\Delta_{\beta,L}$ is given by:
    \begin{equation}\label{eq:Delta}
        \Delta_{\beta,L} = -\frac{1}{\beta L^2} \max_{(a,b)\in \mathbb{Z}_2 \times \mathbb{Z}_2 } \log\bigg(\frac{\tr_{\mathcal{F}_L} e^{-\beta H(\bm{\sigma^*},a,b)}}{\tr_{\mathcal{F}_L} e^{-\beta H(\bm{-1},-1,-1)}}\bigg) = \Delta_\infty + o(1)
    \end{equation}
  as $\beta,L\to\infty$, where $\Delta_\infty>0$. Here, $\bm{\sigma^*}$ is a particular the chessboard flux configuration (see \emph{\autoref{fig:chessboard}}) with holonomies $(a,b) \in \mathbb{Z}_2 \times \mathbb{Z}_2$. 
    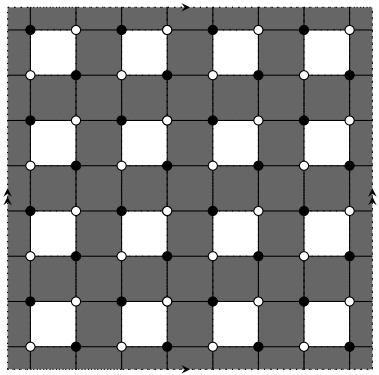
\begin{figure}
\centering
\begin{tikzpicture}[scale=0.6]
\draw[dotted, fill = black, fill opacity = 0.6] (1,-0.5)--(1,7.5) -- (2,7.5) -- (2,-0.5) -- (1,-0.5); 
\draw[dotted, fill = black, fill opacity = 0.6] (3,-0.5)--(3,7.5) -- (4,7.5) -- (4,-0.5) -- (3,-0.5); 
\draw[dotted, fill = black, fill opacity = 0.6] (5,-0.5)--(5,7.5) -- (6,7.5) -- (6,-0.5) -- (5,-0.5);
\draw[dotted, fill = black, fill opacity = 0.6] (7,-0.5)--(7,7.5) -- (7.5,7.5) -- (7.5,-0.5) -- (7,-0.5);
\draw[dotted, fill = black, fill opacity = 0.6] (-0.5,-0.5)--(-0.5,7.5) -- (0,7.5) -- (0,-0.5) -- (-0.5,-0.5);
\draw[dotted, fill = black, fill opacity = 0.6] (0,1)--(1,1) -- (1,2) -- (0,2) -- (0,1);
\draw[dotted, fill = black, fill opacity = 0.6] (2,1)--(3,1) -- (3,2) -- (2,2) -- (2,1);
\draw[dotted, fill = black, fill opacity = 0.6] (4,1)--(5,1) -- (5,2) -- (4,2) -- (4,1);
\draw[dotted, fill = black, fill opacity = 0.6] (6,1)--(7,1) -- (7,2) -- (6,2) -- (6,1);
\draw[dotted, fill = black, fill opacity = 0.6] (0,3)--(1,3) -- (1,4) -- (0,4) -- (0,3);
\draw[dotted, fill = black, fill opacity = 0.6] (2,3)--(3,3) -- (3,4) -- (2,4) -- (2,3);
\draw[dotted, fill = black, fill opacity = 0.6] (4,3)--(5,3) -- (5,4) -- (4,4) -- (4,3);
\draw[dotted, fill = black, fill opacity = 0.6] (6,3)--(7,3) -- (7,4) -- (6,4) -- (6,3);

\draw[dotted, fill = black, fill opacity = 0.6] (0,5)--(1,5) -- (1,6) -- (0,6) -- (0,5);
\draw[dotted, fill = black, fill opacity = 0.6] (2,5)--(3,5) -- (3,6) -- (2,6) -- (2,5);
\draw[dotted, fill = black, fill opacity = 0.6] (4,5)--(5,5) -- (5,6) -- (4,6) -- (4,5);
\draw[dotted, fill = black, fill opacity = 0.6] (6,5)--(7,5) -- (7,6) -- (6,6) -- (6,5);

\draw[dotted, fill = black, fill opacity = 0.6] (0,-0.5)--(1,-0.5) -- (1,0) -- (0,0) -- (0,-0.5);
\draw[dotted, fill = black, fill opacity = 0.6] (2,-0.5)--(3,-0.5) -- (3,0) -- (2,0) -- (2,-0.5);
\draw[dotted, fill = black, fill opacity = 0.6] (4,-0.5)--(5,-0.5) -- (5,0) -- (4,0) -- (4,-0.5);
\draw[dotted, fill = black, fill opacity = 0.6] (6,-0.5)--(7,-0.5) -- (7,0) -- (6,0) -- (6,-0.5);

\draw[dotted, fill = black, fill opacity = 0.6] (0,7.5)--(1,7.5) -- (1,7) -- (0,7) -- (0,7.5);
\draw[dotted, fill = black, fill opacity = 0.6] (2,7.5)--(3,7.5) -- (3,7) -- (2,7) -- (2,7.5);
\draw[dotted, fill = black, fill opacity = 0.6] (4,7.5)--(5,7.5) -- (5,7) -- (4,7) -- (4,7.5);
\draw[dotted, fill = black, fill opacity = 0.6] (6,7.5)--(7,7.5) -- (7,7) -- (6,7) -- (6,7.5);

\draw[dotted, ->-] (-0.5,-0.5) -- (7.5,-0.5);
\draw[dotted, ->-] (-0.5,7.5) -- (7.5,7.5);
\draw[dotted, ->>-] (-0.5,-0.5) -- (-0.5,7.5);
\draw[dotted, ->>-] (7.5,-0.5) -- (7.5,7.5);
\draw (-0.5,0) -- (7.5,0);
\draw (-0.5,1) -- (7.5,1);
\draw (-0.5,2) -- (7.5,2);
\draw (-0.5,3) -- (7.5,3);
\draw (-0.5,4) -- (7.5,4);
\draw (-0.5,5) -- (7.5,5);
\draw (-0.5,6) -- (7.5,6);
\draw (-0.5,7) -- (7.5,7);

\draw (0,-0.5) -- (0,7.5);
\draw (1,-0.5) -- (1,7.5);
\draw (2,-0.5) -- (2,7.5);
\draw (3,-0.5) -- (3,7.5);
\draw (4,-0.5) -- (4,7.5);
\draw (5,-0.5) -- (5,7.5);
\draw (6,-0.5) -- (6,7.5);
\draw (7,-0.5) -- (7,7.5);

\draw[black,fill=black] (0,1) circle (.1 cm);
\draw[black,fill=black] (0,3) circle (.1 cm);
\draw[black,fill=black] (0,5) circle (.1 cm);
\draw[black,fill=black] (0,7) circle (.1 cm);
\draw[black,fill=black] (1,0) circle (.1 cm);
\draw[black,fill=black] (1,2) circle (.1 cm);
\draw[black,fill=black] (1,4) circle (.1 cm);
\draw[black,fill=black] (1,6) circle (.1 cm);
\draw[black,fill=black] (2,1) circle (.1 cm);
\draw[black,fill=black] (2,3) circle (.1 cm);
\draw[black,fill=black] (2,5) circle (.1 cm);
\draw[black,fill=black] (2,7) circle (.1 cm);
\draw[black,fill=black] (4,1) circle (.1 cm);
\draw[black,fill=black] (4,3) circle (.1 cm);
\draw[black,fill=black] (4,5) circle (.1 cm);
\draw[black,fill=black] (4,7) circle (.1 cm);
\draw[black,fill=black] (6,1) circle (.1 cm);
\draw[black,fill=black] (6,3) circle (.1 cm);
\draw[black,fill=black] (6,5) circle (.1 cm);
\draw[black,fill=black] (6,7) circle (.1 cm);
\draw[black,fill=black] (3,0) circle (.1 cm);
\draw[black,fill=black] (3,2) circle (.1 cm);
\draw[black,fill=black] (3,4) circle (.1 cm);
\draw[black,fill=black] (3,6) circle (.1 cm);
\draw[black,fill=black] (5,0) circle (.1 cm);
\draw[black,fill=black] (5,2) circle (.1 cm);
\draw[black,fill=black] (5,4) circle (.1 cm);
\draw[black,fill=black] (5,6) circle (.1 cm);
\draw[black,fill=black] (7,0) circle (.1 cm);
\draw[black,fill=black] (7,2) circle (.1 cm);
\draw[black,fill=black] (7,4) circle (.1 cm);
\draw[black,fill=black] (7,6) circle (.1 cm);

\draw[black,fill=white] (0,0) circle (.1 cm);
\draw[black,fill=white] (0,2) circle (.1 cm);
\draw[black, fill=white] (0,4) circle (.1 cm);
\draw[black, fill=white] (0,6) circle (.1 cm);
\draw[black, fill=white] (2,0) circle (.1 cm);
\draw[black, fill=white] (2,2) circle (.1 cm);
\draw[black, fill=white] (2,4) circle (.1 cm);
\draw[black, fill=white] (2,6) circle (.1 cm);
\draw[black, fill=white] (1,1) circle (.1 cm);
\draw[black, fill=white] (1,3) circle (.1 cm);
\draw[black, fill=white] (1,5) circle (.1 cm);
\draw[black, fill=white] (1,7) circle (.1 cm);
\draw[black, fill=white] (3,1) circle (.1 cm);
\draw[black, fill=white] (3,3) circle (.1 cm);
\draw[black, fill=white] (3,5) circle (.1 cm);
\draw[black, fill=white] (3,7) circle (.1 cm);
\draw[black, fill=white] (5,1) circle (.1 cm);
\draw[black, fill=white] (5,3) circle (.1 cm);
\draw[black, fill=white] (5,5) circle (.1 cm);
\draw[black, fill=white] (5,7) circle (.1 cm);
\draw[black, fill=white] (3,1) circle (.1 cm);
\draw[black, fill=white] (3,3) circle (.1 cm);
\draw[black, fill=white] (3,5) circle (.1 cm);
\draw[black, fill=white] (3,7) circle (.1 cm);
\draw[black, fill=white] (4,0) circle (.1 cm);
\draw[black, fill=white] (4,2) circle (.1 cm);
\draw[black, fill=white] (4,4) circle (.1 cm);
\draw[black, fill=white] (4,6) circle (.1 cm);
\draw[black, fill=white] (6,0) circle (.1 cm);
\draw[black, fill=white] (6,2) circle (.1 cm);
\draw[black, fill=white] (6,4) circle (.1 cm);
\draw[black, fill=white] (6,6) circle (.1 cm);
\draw[black, fill=white] (7,1) circle (.1 cm);
\draw[black, fill=white] (7,3) circle (.1 cm);
\draw[black, fill=white] (7,5) circle (.1 cm);
\draw[black, fill=white] (7,7) circle (.1 cm);
\end{tikzpicture}
\caption{The flux configuration corresponding to $\bm{\sigma}^*$ is that of a lattice of monopoles, represented here in white, carved on the uniform background of $\pi$-fluxes, the dark plaquettes.}
\label{fig:chessboard}
\end{figure}
In particular, as $\beta\to \infty$:
\begin{equation}\label{bound2}
       E_{0,L}(\bm{\sigma},e^{i\theta}, e^{i\phi}) \geq  2 k \Delta_L + E_{0,L}(\bm{-1},-1,-1)
\end{equation}
where $\Delta_L = \Delta_{\infty} + o_{L}(1)$. 
\end{enumerate}
\end{theorem}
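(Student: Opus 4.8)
The plan is to fix throughout a background $\bm{\sigma}$ with prescribed holonomies, so that $H(\bm{\sigma},e^{i\theta},e^{i\phi})$ acts on a single copy of $\mathcal{F}_L$ as a quadratic fermionic Hamiltonian perturbed by the Hubbard term $U\sum_i(n_{i,\uparrow}-\tfrac12)(n_{i,\downarrow}-\tfrac12)$, and to treat the three claims in increasing order of difficulty. For part~\ref{thm1.1}, at $U=0$ I would diagonalize $H(\bm{-1},e^{i\theta},e^{i\phi})$ in the magnetic Brillouin zone: the uniform $\pi$-flux doubles the unit cell and the staggered mass produces a two-band single-particle spectrum of the form $\pm\sqrt{\varepsilon_k^2+m^2}$, symmetric about zero and gapped by a quantity of order $m$ uniformly in $L$ and in the twists. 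The unique energy-minimizing Slater determinant fills exactly the lower band, which comprises precisely half of the modes, so the ground state is unique (by the gap) and sits at half filling. To restore $U\neq0$ I would run a fermionic cluster expansion: the gap gives a single-particle propagator with uniform exponential decay, the expansion converges for $|U|$ small uniformly in $L$, and it yields analyticity of the finite-volume free energy together with persistence of the uniqueness and of the gap.

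For part~\ref{thm1.2}, the two holonomies enter as twisted boundary conditions along the two non-contractible cuts. At $U=0$ the ground-state energy is the sum of the filled Bloch energies, whose dependence on $(\theta,\phi)$ enters only through the twist-shifted momentum quantization; analyticity of $k\mapsto\sqrt{\varepsilon_k^2+m^2}$ in a strip whose width is set by the gap, combined with Poisson summation, converts the finite difference in \eqref{eq:expdeg} into a sum of exponentially small tails, the prefactor $L^2$ accounting for the number of modes. For $U\neq0$ I would instead invoke a quasi-adiabatic/Lieb--Robinson argument: since the family is gapped uniformly along the path interpolating the twists, the response of the unique ground state is exponentially localized near the cuts, whence $|E_{0,L}(\bm{-1},e^{i\theta},e^{i\phi})-E_{0,L}(\bm{-1},e^{i\theta'},e^{i\phi'})|\le CL^2e^{-cL}$, with the cluster expansion ensuring that all estimates are uniform in the twists.

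The core is part~\ref{thm1.3}, which I would establish by adapting the reflection-positivity and chessboard machinery of \cite{Lieb,GP,NM}. First I would check that $\tr_{\mathcal{F}_L}e^{-\beta H(\bm{\sigma},e^{i\theta},e^{i\phi})}$ is reflection positive with respect to reflections through lattice lines: the hopping and Hubbard terms are of standard RP form, and the crucial point is that the \emph{staggered} mass is compatible with reflections exchanging the two sublattices, which is exactly why $L\in4\mathbb{N}$ and bipartiteness are imposed. The chessboard estimate then bounds the partition function of any configuration carrying $2k$ monopoles by a periodized one; reflecting each $0$-flux plaquette across the lattice produces the monopole-lattice configuration $\bm{\sigma^*}$ of \autoref{fig:chessboard}, and the estimate naturally picks out the worst case among the four holonomy sectors $(a,b)\in\mathbb{Z}_2\times\mathbb{Z}_2$, explaining the maximum in \eqref{eq:Delta}. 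Collecting the factors yields \eqref{eq:chess}; convergence of the finite-volume, finite-temperature free-energy densities as $\beta,L\to\infty$ gives $\Delta_{\beta,L}=\Delta_\infty+o(1)$, and sending $\beta\to\infty$ produces the ground-state bound \eqref{bound2}, identifying $\Delta_L$ as the monopole mass.

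The main obstacle is twofold and lives entirely in part~\ref{thm1.3}. First, reflection positivity for \emph{interacting} fermions must be handled through the Grassmann/partition-function formulation rather than at the level of ground states, which is precisely why the statement is phrased in terms of traces at finite $\beta$; one must verify that neither the staggered mass nor the Hubbard term spoils the positivity of the reflected form. Second, and most delicate, is proving the strict positivity $\Delta_\infty>0$: this requires an explicit comparison showing that the free-energy density of the monopole lattice $\bm{\sigma^*}$ strictly exceeds that of the uniform $\pi$-flux background $\bm{-1}$ at $U=0$ --- consistent with the Lieb optimality of the $\pi$-flux sector --- and then transferring this strict inequality to small $|U|$ via the continuity furnished by the cluster expansion. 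Obtaining a gap that is strictly positive, rather than merely nonnegative, uniformly in $\beta$ and $L$, is where the real work lies.
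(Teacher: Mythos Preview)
Your plan for Parts~\ref{thm1.1} and~\ref{thm1.3} is essentially what the paper does: explicit Bloch diagonalization of the $\pi$-flux Hamiltonian (the paper finds $e_\pm(k)=\pm 2t\sqrt{(m/2t)^2+1+\tfrac12\cos 2k_1+\tfrac12\cos 2k_2}$, gap $2m$, half-filling), cluster expansion via the BBF formula for $U\neq 0$, and for Part~\ref{thm1.3} the Lieb reflection-positivity argument adapted to include the staggered mass (the paper checks $\Theta(M_{\text l})=M_{\text r}$ as a separate lemma, which is your compatibility observation). Two minor technical differences: the paper establishes $\Delta_\infty>0$ at $U=0$ by explicitly diagonalizing the $8\times 8$ chessboard Bloch Hamiltonian and evaluating the resulting integral numerically rather than by a soft comparison; and it transfers positivity to $U\neq 0$ not by cluster expansion but by the crude bound $e^{-\beta|U|\|V\|}Z^0\le Z\le e^{\beta|U|\|V\|}Z^0$, giving $\Delta_{\beta,L}\ge\Delta_{\beta,L}^0-C|U|$ directly.

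The one genuine gap is in your treatment of Part~\ref{thm1.2} for $U\neq 0$. The quasi-adiabatic/Lieb--Robinson route you propose only yields decay faster than any polynomial, not the true exponential $CL^2 e^{-cL}$ claimed in \eqref{eq:expdeg}: the weight function in Hastings' generator must have Fourier transform equal to $1/(i\omega)$ outside the gap, and by Paley--Wiener such a function cannot decay exponentially. The paper instead stays entirely within the cluster expansion. It writes the finite-volume free propagator via Poisson summation as $g=g_\infty+r$ with $\|r\|\le Ce^{-cL}$ (from Combes--Thomas on $g_\infty$), interpolates $g_\lambda=\lambda g+(1-\lambda)g_\infty$ inside the BBF determinant, and differentiates in $\lambda$ using Jacobi's formula together with the Gram--Hadamard bound on the minors. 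This extracts a genuine factor $e^{-cL}$ from each order of the convergent series, and the holonomy-independent $g_\infty$ contribution cancels in the difference. If you want the exponential rate stated in the theorem, you need this propagator-level argument (or something equivalent) rather than the spectral-flow machinery.
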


In other words, the global ground state lies in $\pi$-flux sector, and monopoles are again massive, as in the case $m=0$, provided $|U|$ is small enough uniformly in the system's size (but nonuniformly in $m$). These results are closely related to~\cite[Theorem~2.12]{GP}. While the monopoles were already shown to be massive there, the presence of the fermionic mass term in the Hamiltonian (i) yields a global spectral gap above the ground state energy, and (ii) causes the splitting~(\ref{eq:expdeg}) of the ground state energies to be exponentially small in the system size (rather than power law, as in \cite{GP}), and indeed the proof will show that $c\to0$ as $m\to0$, see~(\ref{eq:CT C and c}).

We now turn to the ground states and the spectral gap within the $\pi$-flux sector. 

\begin{theorem}[Ground state topological order]\label{thm2} 
Let $L\in 4 \mathbb{N}$, and let $\vert U\vert$ be small enough so that $\Delta_\infty>0$. Let $P$ be the projection onto the span of the four states
\begin{equation}\label{eq:GSs}
    \ket{\Omega_{ab}} = \frac{1}{\mathcal{N}} 
    \prod_{i \in \text{V}(\Gamma_L)} \bigg(\frac{1+Q_i}{2}\bigg) \ket{\psi_{\bm{-1},a,b}}\otimes \ket{\bm{-1},a,b},\qquad (a,b)\in\mathbb{Z}_2\times\mathbb{Z}_2,
\end{equation}
where $\ket{\psi_{\bm{-1},a,b}}$ is the ground state of the fermionic Hamiltonian~$H(\bm{-1},a,b)$ on $\mathcal{F}_{L}$, and $\mathcal{N}$ is a normalisation factor. Then:

\begin{enumerate}
    \item \label{thm2.1}  Let $g_L = \inf\mathrm{Spec}(P^\perp H P^\perp ) - \sup\mathrm{Spec}(P H P)$. Then \begin{equation}\label{eq:gap}
    g_L\geq\min\{2\delta_{L},2\Delta_{L}\} - CL^{2}e^{-cL}
\end{equation}
where $\delta_{L} \geq m - C|U|^{1/3}$.
\item \label{thm2.2} For any fixed gauge-invariant observable $\mathcal{O}$, there are constants $C,c>0$ such that
\begin{equation}\label{eq:Exp LTQO}
    \left \vert \langle \Omega_{ab}\vert \mathcal{O} \Omega_{a'b'}\rangle  - \delta_{aa'} \delta_{bb'} \frac{\Tr(P\mathcal{O})}{\Tr(P)}\right\vert\leq C_{\mathcal{O}}\mathrm{e}^{-cL}.
\end{equation} 
\end{enumerate}
\end{theorem}

In \emph{2.}~above, a `fixed' observable is meant in the following sense. The tori $\Gamma_L$ are identified with squares $[-L/2,L/2]\times[-L/2,L/2]$ (recall that $L\in 2\mathbb{N}$) with periodic boundary conditions. There is $L_0$ such that $\mathcal{O}\in\mathcal{A}_{L_0}$ and therefore $\mathcal{O}\in\mathcal{A}_{L}$ for all $L\geq L_0$ using the natural embedding $\mathcal{A}_{L_0}\hookrightarrow\mathcal{A}_{L}$ given by tensoring with the identity. One may also call such an observable a `local' observable, since its distance to the boundary of the cut torus is of order $L$.

\noindent We note that by \autoref{thm1} - \autoref{thm1.2}, the vectors $\ket{\Omega_{ab}}$ correspond to an almost degenerate eigenvalue. From now on, we shall refer to $P$ as the \emph{ground state projection} and the theorem shows that $P$ is \emph{gapped}, this is equation~(\ref{eq:gap}),  and \emph{topologically ordered}, which refers to the property~(\ref{eq:Exp LTQO}). The ground state energy is (almost) fourfold degenerate and isolated from the rest the spectrum by a gap, which remains open in the infinite-volume limit. This degeneracy is not associated with a local order parameter since all local observables are multiples of the identity when restricted to the ground state space. In order to have a concrete picture in mind, we draw in \autoref{fig:pifluxes} below one representative of $([\bm{-1}],a,b)$ for all choices $(a,b)$. Here, $[\bm{-1}]$ is the equivalence class of $\pi$-flux backgrounds.
\begin{figure}[h]
\centering
\begin{tikzpicture}[scale=0.7]
\draw[dotted, ->-] (-0.5,-0.5) -- (3.5,-0.5);
\draw[dotted, ->>-] (3.5,-0.5) -- (3.5,3.5);
\draw[dotted, ->-] (-0.5,3.5) -- (3.5,3.5);
\draw[dotted, ->>-] (-0.5,-0.5) -- (-0.5,3.5);
\draw[line width = 0.05 cm] (-0.5, 0) -- (3.5,0);
\draw[line width = 0.05 cm] (-0.5, 2) -- (3.5,2);
\draw (-0.5,1) -- (3.5,1);
\draw (-0.5,3) -- (3.5,3);
\draw (0,-0.5) -- (0,3.5);
\draw (1,-0.5) -- (1,3.5);
\draw (2,-0.5) -- (2,3.5);
\draw (3,-0.5) -- (3,3.5);

\draw[black,fill=white] (0,0) circle (.1 cm);
\draw[black,fill=white] (1,1) circle (.1 cm);
\draw[black,fill=white] (3,3) circle (.1 cm);
\draw[black,fill=white] (2,2) circle (.1 cm);
\draw[black,fill=white] (0,2) circle (.1 cm);
\draw[black,fill=white] (1,3) circle (.1 cm);
\draw[black,fill=white] (3,1) circle (.1 cm);
\draw[black,fill=white] (2,0) circle (.1 cm);

\draw[black,fill=black] (0,1) circle (.1 cm);
\draw[black,fill=black] (0,3) circle (.1 cm);
\draw[black,fill=black] (1,0) circle (.1 cm);
\draw[black,fill=black] (3,0) circle (.1 cm);
\draw[black,fill=black] (2,1) circle (.1 cm);

\draw[black,fill=black] (1,2) circle (.1 cm);
\draw[black,fill=black] (3,2) circle (.1 cm);
\draw[black,fill=black] (2,3) circle (.1 cm);

\node[below] (a) at (1.5,-1) {{$(1,1)$}};
\begin{scope}[xshift = 5.5cm]
\draw[red, thick] (3.5,3.5)-- (3.5,-0.5);
\draw[red, thick] (-0.5,3.5)-- (-0.5,-0.5);
\draw[dotted, ->-] (-0.5,-0.5) -- (3.5,-0.5);
\draw[dotted, ->>-] (3.5,-0.5) -- (3.5,3.5);
\draw[dotted, ->-] (-0.5,3.5) -- (3.5,3.5);
\draw[dotted, ->>-] (-0.5,-0.5) -- (-0.5,3.5);
\draw[line width = 0.05 cm] (0, 0) -- (3,0);
\draw[line width = 0.05 cm] (0, 2) -- (3,2);
\draw[line width = 0.05 cm] (3,1) -- (3.5,1);
\draw[line width = 0.05 cm] (-0.5,1) -- (0,1);
\draw[line width = 0.05 cm] (-0.5,3) -- (0,3);
\draw[line width = 0.05 cm] (3,3) -- (3.5,3);
\draw (0,3) -- (3,3);
\draw (0,1) -- (3,1);
\draw (0,-0.5) -- (0,3.5);
\draw (1,-0.5) -- (1,3.5);
\draw (2,-0.5) -- (2,3.5);
\draw (-0.5,2) -- (0,2);
\draw (3,2) -- (3.5,2);
\draw (-0.5,0) -- (0,0);
\draw (3,0) -- (3.5,0);
\draw (3,-0.5) -- (3,3.5);

\draw[black,fill=white] (0,0) circle (.1 cm);
\draw[black,fill=white] (1,1) circle (.1 cm);
\draw[black,fill=white] (3,3) circle (.1 cm);
\draw[black,fill=white] (2,2) circle (.1 cm);
\draw[black,fill=white] (0,2) circle (.1 cm);
\draw[black,fill=white] (1,3) circle (.1 cm);
\draw[black,fill=white] (3,1) circle (.1 cm);
\draw[black,fill=white] (2,0) circle (.1 cm);

\draw[black,fill=black] (0,1) circle (.1 cm);
\draw[black,fill=black] (0,3) circle (.1 cm);
\draw[black,fill=black] (1,0) circle (.1 cm);
\draw[black,fill=black] (3,0) circle (.1 cm);
\draw[black,fill=black] (2,1) circle (.1 cm);

\draw[black,fill=black] (1,2) circle (.1 cm);
\draw[black,fill=black] (3,2) circle (.1 cm);
\draw[black,fill=black] (2,3) circle (.1 cm);

\node[below] (a) at (1.5,-1) {{$(-1,1)$}};

\node[above] (b) at (3.5,3.5) {{$\caC_2^*$}};

\end{scope}
\begin{scope}[xshift=11 cm]
\draw[red, thick] (3.5,3.5)-- (3.5,-0.5);
\draw[red, thick] (-0.5,3.5)-- (-0.5,-0.5);

\draw[red, thick] (-0.5,-0.5) -- (3.5,-0.5);
\draw[red, thick] (-0.5,3.5) -- (3.5,3.5);
\draw[dotted, ->-] (-0.5,-0.5) -- (3.5,-0.5);
\draw[dotted, ->>-] (3.5,-0.5) -- (3.5,3.5);
\draw[dotted, ->-] (-0.5,3.5) -- (3.5,3.5);
\draw[dotted, ->>-] (-0.5,-0.5) -- (-0.5,3.5);
\draw[line width = 0.05 cm] (0, 0) -- (3,0);
\draw[line width = 0.05 cm] (0, 2) -- (3,2);
\draw[line width = 0.05 cm] (3,1) -- (3.5,1);
\draw[line width = 0.05 cm] (-0.5,1) -- (0,1);
\draw[line width = 0.05 cm] (-0.5,3) -- (0,3);
\draw[line width = 0.05 cm] (3,3) -- (3.5,3);
\draw (0,3) -- (3,3);
\draw (0,1) -- (3,1);
\draw (0,-0.5) -- (0,3.5);
\draw (1,-0.5) -- (1,3.5);
\draw (2,-0.5) -- (2,3.5);
\draw (-0.5,2) -- (0,2);
\draw (3,2) -- (3.5,2);
\draw (-0.5,0) -- (0,0);
\draw (3,0) -- (3.5,0);
\draw (3,-0.5) -- (3,3.5);

\draw[line width = 0.05 cm] (0,-0.5) -- (0,0);
\draw[line width = 0.05 cm] (1,-0.5) -- (1,0);
\draw[line width = 0.05 cm] (2,-0.5) -- (2,0);
\draw[line width = 0.05 cm] (3,-0.5) -- (3,0);

\draw[line width = 0.05 cm] (0,3.5) -- (0,3);
\draw[line width = 0.05 cm] (1,3.5) -- (1,3);
\draw[line width = 0.05 cm] (2,3.5) -- (2,3);
\draw[line width = 0.05 cm] (3,3.5) -- (3,3);

\draw[black,fill=white] (0,0) circle (.1 cm);
\draw[black,fill=white] (1,1) circle (.1 cm);
\draw[black,fill=white] (3,3) circle (.1 cm);
\draw[black,fill=white] (2,2) circle (.1 cm);
\draw[black,fill=white] (0,2) circle (.1 cm);
\draw[black,fill=white] (1,3) circle (.1 cm);
\draw[black,fill=white] (3,1) circle (.1 cm);
\draw[black,fill=white] (2,0) circle (.1 cm);

\draw[black,fill=black] (0,1) circle (.1 cm);
\draw[black,fill=black] (0,3) circle (.1 cm);
\draw[black,fill=black] (1,0) circle (.1 cm);
\draw[black,fill=black] (3,0) circle (.1 cm);
\draw[black,fill=black] (2,1) circle (.1 cm);

\draw[black,fill=black] (1,2) circle (.1 cm);
\draw[black,fill=black] (3,2) circle (.1 cm);
\draw[black,fill=black] (2,3) circle (.1 cm);

\node[below] (a) at (1.5,-1) {{$(-1,-1)$}};

\node[above] (b) at (3.5,3.5) {{$\caC_2^*$}};
\node[left] (b) at (-0.5,-0.5) {{$\caC_1^*$}};

\end{scope}
\begin{scope}[xshift=16.5cm]

\draw[red, thick] (-0.5,-0.5) -- (3.5,-0.5);
\draw[red, thick] (-0.5,3.5) -- (3.5,3.5);
\draw[dotted, ->-] (-0.5,-0.5) -- (3.5,-0.5);
\draw[dotted, ->>-] (3.5,-0.5) -- (3.5,3.5);
\draw[dotted, ->-] (-0.5,3.5) -- (3.5,3.5);
\draw[dotted, ->>-] (-0.5,-0.5) -- (-0.5,3.5);

\draw[line width = 0.05 cm] (-0.5, 0) -- (3.5,0);
\draw[line width = 0.05 cm] (-0.5, 2) -- (3.5,2);
\draw (-0.5,1) -- (3.5,1);
\draw (-0.5,3) -- (3.5,3);
\draw (0,-0.5) -- (0,3.5);
\draw (1,-0.5) -- (1,3.5);
\draw (2,-0.5) -- (2,3.5);
\draw (3,-0.5) -- (3,3.5);

\draw[line width = 0.05 cm] (0,-0.5) -- (0,0);
\draw[line width = 0.05 cm] (1,-0.5) -- (1,0);
\draw[line width = 0.05 cm] (2,-0.5) -- (2,0);
\draw[line width = 0.05 cm] (3,-0.5) -- (3,0);

\draw[line width = 0.05 cm] (0,3.5) -- (0,3);
\draw[line width = 0.05 cm] (1,3.5) -- (1,3);
\draw[line width = 0.05 cm] (2,3.5) -- (2,3);
\draw[line width = 0.05 cm] (3,3.5) -- (3,3);

\draw[black,fill=white] (0,0) circle (.1 cm);
\draw[black,fill=white] (1,1) circle (.1 cm);
\draw[black,fill=white] (3,3) circle (.1 cm);
\draw[black,fill=white] (2,2) circle (.1 cm);
\draw[black,fill=white] (0,2) circle (.1 cm);
\draw[black,fill=white] (1,3) circle (.1 cm);
\draw[black,fill=white] (3,1) circle (.1 cm);
\draw[black,fill=white] (2,0) circle (.1 cm);

\draw[black,fill=black] (0,1) circle (.1 cm);
\draw[black,fill=black] (0,3) circle (.1 cm);
\draw[black,fill=black] (1,0) circle (.1 cm);
\draw[black,fill=black] (3,0) circle (.1 cm);
\draw[black,fill=black] (2,1) circle (.1 cm);

\draw[black,fill=black] (1,2) circle (.1 cm);
\draw[black,fill=black] (3,2) circle (.1 cm);
\draw[black,fill=black] (2,3) circle (.1 cm);

\node[below] (a) at (1.5,-1) {{$(1,-1)$}};

\node[left] (b) at (-0.5,-0.5) {{$\caC_1^*$}};
\end{scope}
\end{tikzpicture}
\caption{A representative $[\bm{-1}]$ for all possible $\pi$-flux backgrounds, for $L=4$. A thick line represents an edge with $\sigma_{ij}^z=-1$, while thin lines correspond to $\sigma_{ij}^z=1$.}
\label{fig:pifluxes}
\end{figure}
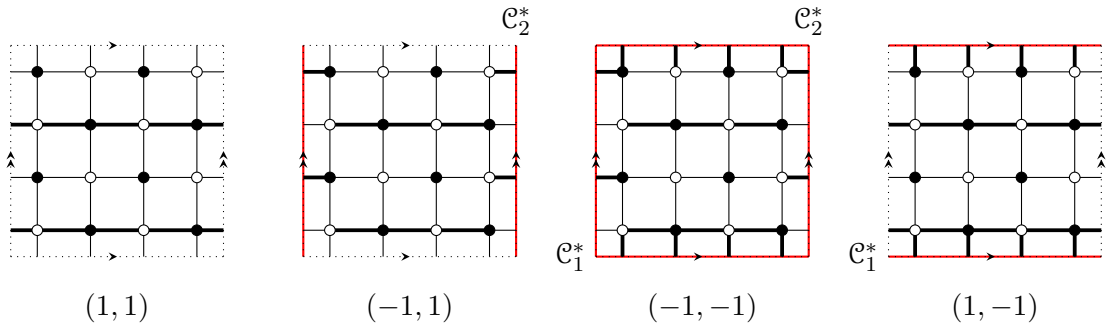

\begin{remark}\label{rem: after thm2}
\begin{enumerate}
\item The spectral gap $g_L$ above the ground state energy is twice the minimum of the (renormalized) monopole mass $\Delta_L$ and of a quantity $\delta_L$, which plays the role of renormalized mass for the fermions. For $|U|$ small enough, it is strictly positive for all $L$ large enough. While the former quasi-particles are neutral, the latter are charged. Following\emph{~\cite{ChargedNeutral}} (and the vast literature therein on gaps in the quantum Hall and anomalous Hall effect) one may ask for the relationship between the two gaps. The exact expression \eqref{boundint} in the non-interacting limit $U=0$ indicates a crossover in our model (see \emph{\autoref{crossover}}): while the monopole mass remains finite as $m\to 0$, see\emph{~\cite{GP}}, $\Delta_\infty\to 0$ as $m\to \infty$ since the three terms of \eqref{boundint} cancel out in the limit. This is not in contradiction with~\emph{\cite{ChargedNeutral}} since our model does not have an analog of the dipole symmetry.
\item Unlike in the toric code, the map $\caC\mapsto \hat Z_\caC$ does not descend to $H_1(\Gamma)$. Indeed, the condition $B_p=-1$ for any plaquette $p$ means that the $\mathbb{Z}_2$ background is not flat: ground states have different eigenvalues with respect to $\hat Z_\caC$ and $\hat Z_{\caC+\partial\Lambda}$ since
    \begin{equation}\label{eq:notflat}
    \hat Z_{\caC+\partial\Lambda}\ket{\Omega_{ab}} =(-1)^{|\Lambda|} \hat Z_\caC \ket{\Omega_{ab}}.
    \end{equation}
As pointed out in~\emph{\cite{Cimasoni_2007}}, any $\pi$-flux background defines in a canonical way  a spin structure on the torus. The spin structure space on a given Riemann surface $\Sigma_g$ is an affine space whose translation space is given by $H_1(\Sigma_g, \mathbb{Z}_2)$: in other words, any two spin structures differ by a $\mathbb{Z}_2$ flat connection. Similarly, here two ground state backgrounds differ by the insertion of a $\mathbb{Z}_2$ holonomy around a non-contractible cycle, but it cannot be defined as the background with a given parallel transport independently on the choice of the representative of the homology cycle.
\end{enumerate}

    \begin{figure}[ht]
        \centering
\includegraphics[width=0.5\linewidth]{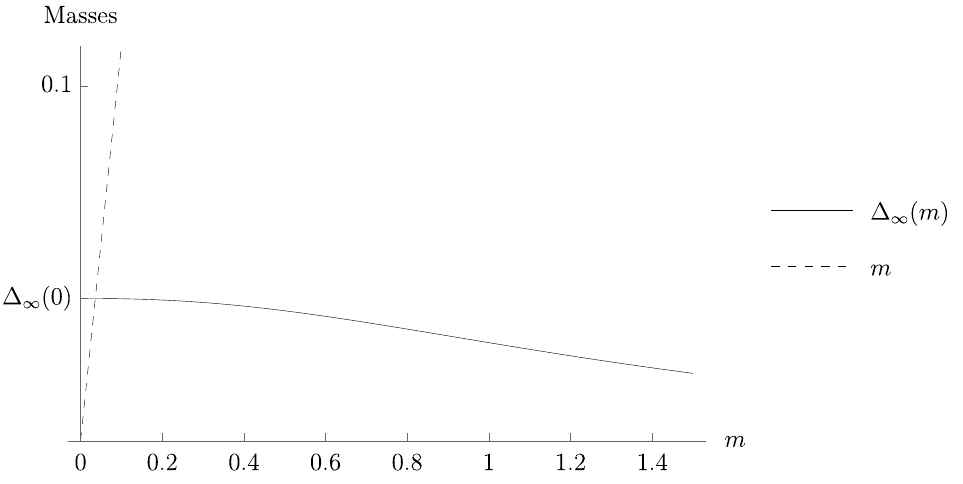}
        \caption{Crossover between the monopole and the fermion mass gap (in the thermodynamic limit) as a function of $m$ for $0\leq m \leq 1.5$ and $t=1$, $U=0$. See \autoref{delta} for the behavior of $\Delta_{\infty}(m)$ for a larger range of $m$.}
        \label{crossover}
    \end{figure}   
\end{remark}

\subsection{Loop operators and braiding properties}\label{qaf}

We have established that the Hamiltonian has a gapped, topologically ordered ground state space. The elementary excitations come in two types: magnetic monopoles obtained by carving out a $0$-flux in the uniform $\pi$-flux background on the one hand, and fermionic excitations upon the half-filled fermionic ground state. 

We now turn to adiabatic insertion of a flux or a holonomy through the system. We rely on Hastings' quasi-adiabatic evolution~\cite{Bachmann_2011, Hastings_2005} used in a general context of flux threading in~\cite{Bachmann_2019,Bachmann_2021}. This will allow us to describe the mapping of different ground states onto each other. As is typical in the presence of anyons, this is closely related to the process of creating a pair of monopoles, moving one of them along a non-contractible loop and fusing them back again cycles through the ground states.

The intersection number $\mathcal{I}$ between two edges $e  \in \text{E}(\Gamma_L)$ and $e^* \in \text{E}(\Gamma_L^*)$ is the map:
\begin{equation}
    \mathcal{I}: \text{E}(\Gamma_L) \times \text{E}(\Gamma_L^{*}) \to \{-1, 0, 1\}
\end{equation}
defined by the convention in Figure \ref{Intersectionform}.
\begin{figure}[h]
    \centering
   \begin{tikzpicture}[scale=0.8]
   \draw[black, ->>>>-] (0,0) -- (0,1);
       \draw[black, ->>>>-] (0,0) -- (1,0);
       \draw[black, ->>>>-] (1,0) -- (2,0);
       \draw[black, ->>>>-] (1,0) -- (1,1);
       \draw[black, ->>>>-] (2,0) -- (2,1);
       \draw[blue, ->>>>-, thick, line width = 0.04 cm] (0,1) -- (1,1);
       \draw[black, ->>>>-] (0,1) -- (0,2);
       \draw[black, ->>>>-] (1,1) -- (2,1);
       \draw[black, ->>>>-] (1,1) -- (1,2);
       \draw[black, ->>>>-] (0,2) -- (1,2);
       \draw[black, ->>>>-] (1,2) -- (2,2);
       \draw[black, ->>>>-] (2,1) -- (2,2);
       
       \draw[black, dotted] (-0.5, -0.5) -- (2.5, -0.5) -- (2.5,2.5) -- (-0.5, 2.5) -- (-0.5,-0.5);
       \node[red, above right] (a) at (0,1.25) {{$e^*$}};
       \node[blue, above left] (c) at (1.1,1) {{$e$}};
       \draw[black, ->-] (-0.5,0) -- (0,0);
        \draw[black, ->-] (-0.5,1) -- (0,1);
         \draw[black, ->-] (-0.5,2) -- (0,2);
       \draw[black, ->-] (0,-0.5) -- (0,0);
       \draw[black, ->-] (1,-0.5) -- (1,0);
       \draw[black, ->-] (2,-0.5) -- (2,0);
       \draw[black, ->>>-] (2,0) -- (2.5,0);
       \draw[black, ->>>-] (2,1) -- (2.5,1);
       \draw[black, ->>>-] (2,2) -- (2.5,2);
        \draw[black, ->>>-] (2,2) -- (2,2.5);
        \draw[black, ->>>-] (1,2) -- (1,2.5);
        \draw[black, ->>>-] (0,2) -- (0,2.5);
       \draw[black,fill=black] (0,1) circle (.1 cm);
        \draw[black,fill=black] (1,0) circle (.1 cm);
           \draw[black,fill=black] (2,1) circle (.1 cm);
        \draw[black,fill=black] (1,2) circle (.1 cm);
        \draw[black,fill=white] (0,0) circle (.1 cm);
        \draw[black,fill=white] (2,2) circle (.1 cm);
        \draw[black,fill=white] (2,0) circle (.1 cm);
        \draw[black,fill=white] (0,2) circle (.1 cm);
        \draw[black,fill=white] (1,1) circle (.1 cm);
        \draw[black, dotted] (0.5,-0.5)--(0.5, 2.5);
         \draw[black, dotted] (1.5,-0.5)--(1.5, 2.5);
          \draw[black, dotted] (-0.5,1.5)--(2.5, 1.5);
           \draw[black, dotted] (-0.5,0.5)--(2.5, 0.5);
           \draw[red, thick, line width = 0.04 cm, ->>>>>-] (0.5,0.5) -- (0.5,1.5);  
           \node[below] (e) at (1, -1) 
           {{$\mathcal{I}(e,e^*)=+1$}};
   \begin{scope}[xshift=5cm]
       \draw[black, ->>>>-] (0,0) -- (0,1);
       \draw[black, ->>>>-] (0,0) -- (1,0);
       \draw[black, ->>>>-] (1,0) -- (2,0);
       \draw[black, ->>>>-] (1,0) -- (1,1);
       \draw[black, ->>>>-] (2,0) -- (2,1);
       \draw[blue, ->>>>-, thick, line width = 0.04 cm] (0,1) -- (1,1);
       \draw[black, ->>>>-] (0,1) -- (0,2);
       \draw[black, ->>>>-] (1,1) -- (2,1);
       \draw[black, ->>>>-] (1,1) -- (1,2);
       \draw[black, ->>>>-] (0,2) -- (1,2);
       \draw[black, ->>>>-] (1,2) -- (2,2);
       \draw[black, ->>>>-] (2,1) -- (2,2);
       
       \draw[black, dotted,] (-0.5, -0.5) -- (2.5, -0.5) -- (2.5,2.5) -- (-0.5, 2.5) -- (-0.5,-0.5);
      \node[red, below right] (a) at (0,0.6) {{$e^*$}};
        \node[blue, above left] (c) at (1,1) {{$e$}};
       \draw[black, ->-] (-0.5,0) -- (0,0);
        \draw[black, ->-] (-0.5,1) -- (0,1);
         \draw[black, ->-] (-0.5,2) -- (0,2);
       \draw[black, ->-] (0,-0.5) -- (0,0);
       \draw[black, ->-] (1,-0.5) -- (1,0);
       \draw[black, ->-] (2,-0.5) -- (2,0);
       \draw[black, ->>>-] (2,0) -- (2.5,0);
       \draw[black, ->>>-] (2,1) -- (2.5,1);
       \draw[black, ->>>-] (2,2) -- (2.5,2);
        \draw[black, ->>>-] (2,2) -- (2,2.5);
        \draw[black, ->>>-] (1,2) -- (1,2.5);
        \draw[black, ->>>-] (0,2) -- (0,2.5);
       \draw[black,fill=black] (0,1) circle (.1 cm);
        \draw[black,fill=black] (1,0) circle (.1 cm);
           \draw[black,fill=black] (2,1) circle (.1 cm);
        \draw[black,fill=black] (1,2) circle (.1 cm);
        \draw[black,fill=white] (0,0) circle (.1 cm);
        \draw[black,fill=white] (2,2) circle (.1 cm);
        \draw[black,fill=white] (2,0) circle (.1 cm);
        \draw[black,fill=white] (0,2) circle (.1 cm);
        \draw[black,fill=white] (1,1) circle (.1 cm);
        \draw[black, dotted] (0.5,-0.5)--(0.5, 2.5);
         \draw[black, dotted] (1.5,-0.5)--(1.5, 2.5);
          \draw[black, dotted] (-0.5,1.5)--(2.5, 1.5);
           \draw[black, dotted] (-0.5,0.5)--(2.5, 0.5);
           \draw[red, thick, line width = 0.04 cm, ->>>>>-] (0.5,1.5) -- (0.5,0.5);  
           \node[below] (e) at (1, -1) 
           {{$\mathcal{I}(e,e^*)=-1$}};
           \end{scope}
            \begin{scope}[xshift=10cm]
       \draw[black, ->>>>-] (0,0) -- (0,1);
       \draw[black, ->>>>-] (0,0) -- (1,0);
       \draw[black, ->>>>-] (1,0) -- (2,0);
       \draw[black, ->>>>-] (1,0) -- (1,1);
       \draw[black, ->>>>-] (2,0) -- (2,1);
       \draw[blue, ->>>>-, thick, line width = 0.04 cm] (0,1) -- (1,1);
       \draw[black, ->>>>-] (0,1) -- (0,2);
       \draw[black, ->>>>-] (1,1) -- (2,1);
       \draw[black, ->>>>-] (1,1) -- (1,2);
       \draw[black, ->>>>-] (0,2) -- (1,2);
       \draw[black, ->>>>-] (1,2) -- (2,2);
       \draw[black, ->>>>-] (2,1) -- (2,2);
       
       \draw[black, dotted,] (-0.5, -0.5) -- (2.5, -0.5) -- (2.5,2.5) -- (-0.5, 2.5) -- (-0.5,-0.5);
        \node[red, above right] (a) at (1,1.45) {{$e^*$}};
       \node[blue, below left] (c) at (1,1) {{$e$}};
       \draw[black, ->-] (-0.5,0) -- (0,0);
        \draw[black, ->-] (-0.5,1) -- (0,1);
         \draw[black, ->-] (-0.5,2) -- (0,2);
       \draw[black, ->-] (0,-0.5) -- (0,0);
       \draw[black, ->-] (1,-0.5) -- (1,0);
       \draw[black, ->-] (2,-0.5) -- (2,0);
       \draw[black, ->>>-] (2,0) -- (2.5,0);
       \draw[black, ->>>-] (2,1) -- (2.5,1);
       \draw[black, ->>>-] (2,2) -- (2.5,2);
        \draw[black, ->>>-] (2,2) -- (2,2.5);
        \draw[black, ->>>-] (1,2) -- (1,2.5);
        \draw[black, ->>>-] (0,2) -- (0,2.5);
       \draw[black,fill=black] (0,1) circle (.1 cm);
        \draw[black,fill=black] (1,0) circle (.1 cm);
        \draw[black,fill=black] (2,1) circle (.1 cm);
        \draw[black,fill=black] (1,2) circle (.1 cm);
        \draw[black,fill=white] (0,0) circle (.1 cm);
        \draw[black,fill=white] (2,2) circle (.1 cm);
        \draw[black,fill=white] (2,0) circle (.1 cm);
        \draw[black,fill=white] (0,2) circle (.1 cm);
        \draw[black,fill=white] (1,1) circle (.1 cm);
        \draw[black, dotted] (0.5,-0.5)--(0.5, 2.5);
         \draw[black, dotted] (1.5,-0.5)--(1.5, 2.5);
          \draw[black, dotted] (-0.5,1.5)--(2.5, 1.5);
           \draw[black, dotted] (-0.5,0.5)--(2.5, 0.5);
           \draw[red, thick, line width = 0.04 cm, ->>>>>-] (0.5,1.5) -- (1.5,1.5);  
           \node[below] (e) at (1, -1) 
           {{$\mathcal{I}(e,e^*)=0$}};
           \end{scope}
   \end{tikzpicture}
   \caption{Convention for the definition of the intersection number.}
    \label{Intersectionform}
\end{figure}
The intersection number can then be extended to a map on $C_1(\Gamma_L)\times C_1(\Gamma^*_L)$ by adding up the contributions coming from every single edge.

Let now $\mathcal{C}^*\in Z_1(\Gamma_L^*)$. The twisted Hamiltonian is defined as (see \autoref{fig:twistedhami}):
\begin{align}
    H_{\mathcal{C}^*}(\phi) &= -t \sum_{(i,j) \in \text{E}(\Gamma_L)} \sum_{\eta = \uparrow,\downarrow} \left(e^{-i \phi \mathcal{I}[(i,j), \mathcal{C}^*]} a^+_{i,\eta} \hat{\sigma}^z_{ij} a^-_{j,\eta} + \mathrm{h.c.}\right)
    + m \sum_{i \in \Gamma_L}  (-1)^{\vert i_1+i_2 \vert} n_{i}  \\
    &\quad + U \sum_{i\in \Gamma_{L}} \Big( n_{i,\uparrow} - \frac{1}{2} \Big)\Big( n_{i,\downarrow} - \frac{1}{2} \Big).
    \label{twistedhami}
\end{align}
\begin{figure}
\centering
\begin{tikzpicture}[scale=0.7]
\draw[] (1,1) grid (6,6);
\draw[black] (0.5,1) -- (1,1);
\draw[black] (0.5,2) -- (1,2);
\draw[black] (0.5,3) -- (1,3);
\draw[black] (0.5,4) -- (1,4);
\draw[black] (0.5,5) -- (1,5);
\draw[black] (0.5,6) -- (1,6);
\draw[black] (1,6) -- (1,6.5);
\draw[black] (2,6) -- (2,6.5);
\draw[black] (3,6) -- (3,6.5);
\draw[black] (4,6) -- (4,6.5);
\draw[black] (5,6) -- (5,6.5);
\draw[black] (6,6) -- (6,6.5);
\draw[black] (6,6) -- (6.5,6);
\draw[black] (6,5) -- (6.5,5);

\draw[black] (1,1) -- (1,0.5);
\draw[black] (2,1) -- (2,0.5);
\draw[black] (3,1) -- (3,0.5);
\draw[black] (4,1) -- (4,0.5);
\draw[black] (5,1) -- (5,0.5);
\draw[black] (6,1) -- (6,0.5);
\draw[black] (6,1) -- (6.5,1);
\draw[black] (6,2) -- (6.5,2);
\draw[black] (6,3) -- (6.5,3);
\draw[black] (6,4) -- (6.5,4);

\draw[black, dotted] (0.5,0.5)--(6.5,0.5) --(6.5, 6.5) -- (0.5,6.5)--(0.5,0.5);
\draw[black, ->>>-, thick, line width = 0.04 cm] (2,2)--(2,1);
\draw[black, ->>>-, thick, line width = 0.04 cm] (3,2)--(3,1);
\draw[black, ->>>-, thick, line width = 0.04 cm] (3,2)--(4,2);
\draw[black, ->>>-, thick, line width = 0.04 cm] (4,3)--(4,2);
\draw[black, ->>>-, thick, line width = 0.04 cm] (4,3)--(5,3);
\draw[black, ->>>-, thick, line width = 0.04 cm] (5,4)--(5,5);
\draw[black, ->>>-, thick, line width = 0.04 cm] (5,4)--(5,5);
\draw[black, ->>>-, thick, line width = 0.04 cm] (5,4)--(6,4);
\draw[black, ->>>-, thick, line width = 0.04 cm] (4,4)--(4,5);
\draw[black, ->>>-, thick, line width = 0.04 cm] (3,5)--(4,5);
\draw[black, ->>>-, thick, line width = 0.04 cm] (3,5)--(3,6);
\draw[black, ->>>-, thick, line width = 0.04 cm] (3,5)--(2,5);
\draw[black, ->>>-, thick, line width = 0.04 cm] (2,4)--(2,5);
\draw[black, ->>>-, thick, line width = 0.04 cm] (2,4)--(1,4);
\draw[black, ->>>-, thick, line width = 0.04 cm] (2,4)--(2,3);
\draw[black, ->>>-, thick, line width = 0.04 cm] (3,3)--(2,3);
\draw[black, ->>>-, thick, line width = 0.04 cm] (2,2)--(2,3);
\draw[black, ->>>-, thick, line width = 0.04 cm] (2,2)--(1,2);
\draw[black, ->>>-, thick, line width = 0.04 cm] (5,4)--(5,3);
\draw[red, ->>>-, thick, line width = 0.04 cm] (1.5,1.5) -- (3.5,1.5) -- (3.5, 2.5) -- (4.5, 2.5) -- (4.5,3.5) -- (5.5, 3.5) -- (5.5,4.5) -- (3.5,4.5) -- (3.5,5.5) -- (2.5,5.5) -- (2.5,4.5) -- (1.5,4.5) -- (1.5, 3.5) -- (2.5, 3.5) -- (2.5,2.5) -- (1.5,2.5) -- (1.5,1.5);
\draw[black,fill=white] (1,1) circle (.1 cm);
\draw[black,fill=white] (5,1) circle (.1 cm);
\draw[black,fill=white] (4,2) circle (.1 cm);
\draw[black,fill=white] (6,2) circle (.1 cm);
\draw[black,fill=white] (5,3) circle (.1 cm);
\draw[black,fill=white] (4,4) circle (.1 cm);
\draw[black,fill=white] (6,4) circle (.1 cm);
\draw[black,fill=white] (2,4) circle (.1 cm);
\draw[black,fill=white] (1,5) circle (.1 cm);
\draw[black,fill=white] (3,5) circle (.1 cm);
\draw[black,fill=white] (5,5) circle (.1 cm);
\draw[black,fill=white] (2,6) circle (.1 cm);
\draw[black,fill=white] (4,6) circle (.1 cm);
\draw[black,fill=white] (6,6) circle (.1 cm);
\draw[black,fill=white] (3,1) circle (.1 cm);
\draw[black,fill=white] (3,3) circle (.1 cm);
\draw[black,fill=white] (2,2) circle (.1 cm);
\draw[black,fill=white] (1,3) circle (.1 cm);
\draw[black,fill=black] (1,2) circle (.1 cm);
\draw[black,fill=black] (2,1) circle (.1 cm);
\draw[black,fill=black] (3,2) circle (.1 cm);
\draw[black,fill=black] (2,3) circle (.1 cm);
\draw[black,fill=black] (1,4) circle (.1 cm);
\draw[black,fill=black] (1,6) circle (.1 cm);
\draw[black,fill=black] (2,5) circle (.1 cm);
\draw[black,fill=black] (3,4) circle (.1 cm);
\draw[black,fill=black] (3,6) circle (.1 cm);
\draw[black,fill=black] (5,6) circle (.1 cm);
\draw[black,fill=black] (4,5) circle (.1 cm);
\draw[black,fill=black] (6,5) circle (.1 cm);
\draw[black,fill=black] (5,4) circle (.1 cm);
\draw[black,fill=black] (4,3) circle (.1 cm);
\draw[black,fill=black] (6,3) circle (.1 cm);
\draw[black,fill=black] (5,2) circle (.1 cm);
\draw[black,fill=black] (4,1) circle (.1 cm);
\draw[black,fill=black] (6,1) circle (.1 cm);
\node[red, above] (c) at (3.8,5.3) {{$\mathcal{C}^*$}};
\end{tikzpicture}
\caption{Graphical representation of twisting the hamiltonian inserting a flux $\phi$ inside $\mathcal{C}^*$, oriented counterclockwise. Each bold edge is twisted with a phase of $\ep{-\iu \phi}$.}
    \label{fig:twistedhami}
\end{figure}
    As we shall see later, see \hyperref[claim1]{Lemma \ref*{claim1}}, the flow $H\mapsto H_{\caC^*}(\phi)$ is just a unitary transformation whenever $\caC^* = \partial\Lambda$ is a trivial cocycle, namely it is a coboundary. As such, it does not affect the spectrum. The case of a non-trivial cocycle $\mathcal{C}^*$ is different and it is described by \autoref{thm3}: the ground state manifold undergoes a non-trivial spectral flow, along which the spectral gap does not close. This `parallel transport' of the ground state space is given by a unitary propagator, see~(\ref{eq:QAFlow}, \ref{eq:parallel}) below, generated by the so-called quasi-adiabatic generator.

\begin{definition}[Quasi-adiabatic generator]\label{def:QAG}
Let $\mathcal{C}^*\in Z_1(\Gamma_L^*)$. The quasi-adiabatic generator $\mathcal{K}_{\mathcal{C}^*}$ is the family 
\begin{equation}
   \mathcal{K}_{\mathcal{C}^*}(\phi) 
   =\tau_f^\phi\left(\dot{H}_{\mathcal{C}^*}(\phi)\right)
   = \int_{\mathbb{R}}  f(s) \ep{\iu s H_{\mathcal{C}^*}(\phi)} \dot{H}_{\mathcal{C}^*}(\phi) \ep{-i s H_{\mathcal{C}^*}(\phi)} ds,
\end{equation}
where $f \in L^\infty(\mathbb{R})$ such that $\|f\|_{L^1}=1$, $\widehat{f}(\omega) = \frac{1}{i \omega}$ for $|\omega|>g$ and $|f(t)|\leq \frac{C_k}{1+|t|^k}$ for any $k \in \mathbb{N}$. Here $g$ is gap of $H$, see \eqref{eq:gap}.
\end{definition}

\begin{remark}\label{remexploc}
By definition, the operator $\dot{H}_{\mathcal{C}^*}(\phi)$ is supported on the edges intersecting $\mathcal{C}^*$. With this, it is a standard argument that $\mathcal{K}_{\mathcal{C}^*}(\phi)$ is almost supported on $\mathcal{C}^*$ in the sense that $\mathcal{K}_{\mathcal{C}^*}(\phi)$ can be approximated by an observable stricly supported on a ribbon of width $R$ around $\mathcal{C}^*$, with an error $O(R^{-\infty})$. Indeed, this is true for $\ep{\iu s H_{\mathcal{C}^*}(\phi)} \dot{H}_{\mathcal{C}^*}(\phi) \ep{-i s H_{\mathcal{C}^*}(\phi)}$ for times $s$ of order $1$ by the Lieb-Robinson bound, while the contribution to the integral for longer times is small by the decay of $f$. We refer to~\emph{\cite[Lemma~4.7]{Bachmann_2011}} and the review~\emph{\cite{Nachtergaele_2019}} for additional details on the Lieb-Robinson bound and the spectral flow and to~\emph{\cite{Bachmann_2019}} specifically for the locality of $\mathcal{K}_{\mathcal{C}^*}(\phi)$.
\end{remark}

Besides its locality, the operator $\mathcal{K}_{\mathcal{C^*}}(\phi)$ generates a parallel transport on the bundle of ground state projections $P_{\mathcal{C}^*}(\phi)$ of $H_{\mathcal{C}^*}(\phi)$:
\begin{equation}\label{gsode}
   \dot P_{\mathcal{C}^*}(\phi) = \iu [\mathcal{K}_{\mathcal{C}^*}(\phi), P_{\mathcal{C}^*}(\phi)]
\end{equation}
provided the spectral gap remains open, see~\cite[Proposition~2.4]{Bachmann_2011}. This is a consequence of the more general fact that the map $O\mapsto\tau^\phi_f(O)$ is an inverse of $-\iu[H_{\mathcal{C}^*}(\phi),O]$ on the set of off-diagonal operators $O = POP^\perp + P^\perp O P$, applied here to $O = \dot{P}_{\mathcal{C}^*}(\phi)$.

Let now $V_{\caC^*}(\phi)$ be the solution of 
\begin{equation}\label{eq:QAFlow}
    -\iu \dot V_{\caC^*}(\phi) = \mathcal{K}_{\mathcal{C}^*}(\phi) V_{\caC^*}(\phi)
\end{equation}
such that $V_{\caC^*}(0) = \mathbbm{1}$. It is the propagator of parallel transport, namely
\begin{equation}\label{eq:parallel}
    P_{\mathcal{C}^*}(\phi) = V_{\caC^*}(\phi)P_{\mathcal{C}^*}(0)V_{\caC^*}(\phi)^*.
\end{equation}

\begin{definition}[Loop Operators] \label{def:loop operators} Let $\mathcal{C}^*\in Z_1(\Gamma_L^*)$. We define 
    \begin{equation}
    W_{\mathcal{C}^*} = \hat X_{\mathcal{C}^*} V_{\caC^*}(\pi).
\end{equation}
\end{definition}

\begin{theorem}[Braiding]\label{thm3} Under the same assumptions of \autoref{thm2} the following is true. Let $\{\ket{\Omega_{ab}}:(a,b)\in\mathbb{Z}_2\times\mathbb{Z}_2\}$ be the four ground states of \autoref{thm2}, let $P_{ab}$ be the corresponding orthogonal projectors, and let $P = \sum_{(a,b)\in \mathbb{Z}_2\times\mathbb{Z}_2} P_{ab}$. 
\begin{enumerate}
\item \label{thm3.1} For any $\mathcal{C}^*\in Z_1(\Gamma_L^*)$
\begin{equation}
[W_{\mathcal{C}^*}, P] \overset{L}{=} 0.
\end{equation} 
where $\overset{L}{=}$ means that the equality holds up to terms that are smaller, in operator norm, than any power of the system size. 
\item \label{thm3.2} For any $\mathcal{C} \in Z_1(\Gamma_L), \mathcal{C}^*\in Z_1(\Gamma_L^*)$,
\begin{equation}\label{eq:ZW braiding}
\hat Z_{\mathcal{C}} W_{\mathcal{C}^*} = \ep{\iu \pi \mathcal{I}(\mathcal{C},\mathcal{C}^*)} W_{\mathcal{C}^*} \hat Z_{\mathcal{C}}.
\end{equation}
\item \label{thm3.3}  If $\{0,\caC_1^*,\caC_2^*, \caC_1^*+\caC_2^*\}$ are representatives of each of the cohomology classes, then
\begin{equation}
    P_{a'b'}(W_{\mathcal{C}_1^*})^{p_1}(W_{\mathcal{C}_2^*})^{p_2} \overset{L}{=}   (W_{\mathcal{C}_1^*})^{p_1}(W_{\mathcal{C}_2^*})^{p_2}P_{ab},
\end{equation}
where $a' = (-1)^{p_1}a, b' = (-1)^{p_2}b$, for any $p_1,p_2\in\mathbb{Z}$.
\item \label{thm3.3bis} For any $(a,b)$,
\begin{equation}\label{bosons}
    (W_{\mathcal{C}_1^*})^*(W_{\mathcal{C}_2^*})^*W_{\mathcal{C}_1^*}W_{\mathcal{C}_2^*} \ket{\Omega_{ab}} \stackrel{L}{=} \ket{\Omega_{ab}}.
\end{equation}
\item \label{thm3.4} Let $i,j\in\Lambda$ be such that $\mathrm{dist}(i,j)$ is of order $L$ and let $\caC_{i,j}\in C_1(\Gamma_L)$ be a $1$-chain such that $(i,j) = \partial\caC_{i,j}$. Denote $\ket{\xi_{ab}^{ij}} = \mathcal{N}^{-1} a^{+}_{i} \hat Z_{\caC_{i,j}} a^{+}_{j}\ket{\Omega_{ab}}$, where $\mathcal{N}^{-1}$ ensures that $\Vert \xi_{ab}^{ij}\Vert = \Vert \Omega_{ab}\Vert$. Then
\begin{equation}
\frac{\langle \xi_{ab}^{ij}\vert W_{\mathcal{C}^*} \xi_{ab}^{ij} \rangle}{\langle \Omega_{ab} \vert W_{\mathcal{C}^*} \Omega_{ab}\rangle } \overset{L}{=} -1
\end{equation}
where $\mathcal{C}^*$ is the circle of radius $\frac{1}{2}\mathrm{dist}(i,j)$ centred at $i$.
\end{enumerate}
\end{theorem}

With topological order, all local observables act trivially on the ground state space. The theorem \autoref{thm3.3} above exhibits the only non-trivial, necessarily global, observables, namely the large loop operators $W_{\mathcal{C}_j^*},j=1,2$. When the loops are opened up, a string operator $W_{\mathcal{C}^*}$ creates a pair of monopoles while a string operator $a^{+}_{i} \hat Z_{\caC_{i,j}} a^{+}_{j}$ creates a pair of fermions at their endpoints, and~\autoref{thm3.4} shows explicitly that they are mutual anyons, see again~\autoref{braiding1}. The commutation~(\ref{bosons}) is independent of any choice of phases in the definition of the loop operators or ground state vector: It proves that monopoles are, as expected, bosons.

For any even local observable $\mathcal{O}$ supported away from sites $i$ and $j$,
\begin{equation}\label{eq:localization}
    \langle \xi_{ab}^{ij} \vert \mathcal{O} \xi_{ab}^{ij}\rangle \simeq \langle \Omega_{ab}\vert \mathcal{O} \Omega_{ab}\rangle
\end{equation}
up to errors that decay superpolynomially in the distance between the support of $\mathcal{O}$ and the sites $i,j$, and in the system size $L$. This should be interpreted as the fact that $\vert \xi_{ab}^{ij} \rangle$ describes (fermionic) excitations that are localized at the sites $i,j$. In order to derive~(\ref{eq:localization}), we first note that for any path $\mathcal{\tilde C}_{ij}$ from $i$ to $j$ and such that $\mathcal{C}_{ij} - \mathcal{\tilde C}_{ij}$ is a boundary,
\begin{equation}
    Z_{\mathcal{C}_{ij}}\Omega_{ab} = z Z_{\mathcal{\tilde C}_{ij}}\Omega_{ab}
\end{equation}
where $\vert z\vert = 1$ and it depends on the paths. If $\mathcal{O}$ is a local observable, we can pick $\mathcal{\tilde C}_{ij}$ that lies away from its support. The corresponding string operator then commutes with $\mathcal{O}$ and so
\begin{equation}
    \langle \xi_{ab}^{ij}\vert \mathcal{O} \xi_{ab}^{ij}\rangle
    = \mathcal{N}^{-2}\langle \Omega_{ab}\vert a_ja_i \mathcal{O} a_i^+a_j^+\Omega_{ab}\rangle.
\end{equation}
If $\mathcal{O}$ is even, then $\langle \Omega_{ab}\vert a_ja_i \mathcal{O} a_i^+a_j^+\Omega_{ab}\rangle = \langle \Omega_{ab}\vert \mathcal{O} a_ja_i a_i^+a_j^+\Omega_{ab}\rangle$. Since the spectral gap implies fast decay of correlations~\cite[Appendix~A]{Bachmann_2021} and $\mathcal{O}$ is supported away from the sites $i,j$, 
\begin{equation}
    \langle \xi_{ab}^{ij}\vert \mathcal{O} \xi_{ab}^{ij}\rangle
    \simeq \mathcal{N}^{-2}\langle \Omega_{ab}\vert \mathcal{O} P a_ja_i a_i^+a_j^+\Omega_{ab}\rangle
\end{equation}
up to an error that decays with the distance between the support of $\mathcal{O}$ and $\{i,j\}$. Finally, local topological order implies that $P$ can be replaced by the one-dimensional projection onto the range of $\Omega_{ab}$ up to an error that decays fast in $L$, namely
\begin{equation}
    \langle \xi_{ab}^{ij}\vert \mathcal{O} \xi_{ab}^{ij}\rangle
    \simeq \mathcal{N}^{-2}\langle \Omega_{ab}\vert \mathcal{O} \Omega_{ab}\rangle \Vert a_i^+a_j^+\vert \Omega_{ab}\rangle\Vert^2.
\end{equation}
This is the claim since the last factor is equal to $\mathcal{N}^2$.

Overall, the model has the full toric code structure. If we let $m$ denote a monopole excitation and $\epsilon$ denote a fermionic excitation, then $m\times m = 1 = \epsilon\times\epsilon$, where $1$ denotes the vacuum. Moreover, the composite particle $e = m\times \epsilon$ is a boson, and the braiding of $e$ with $m$ yields again $-1$. Although our model is fundamentally fermionic, it therefore exhibits the exact toric code topological order, unlike the fermionic model proposed in~\cite{gu2014lattice}. It is a fermionic reflection positive representative of the toric code phase, see the discussion in~\cite{sopenkoRP}.

\section{Proof of \autoref{thm1}}\label{sec2}
\subsection{Proof of \autoref{thm1} -  \autoref{thm1.1}: non-interacting fermions}\label{specpiflux}

We start by diagonalizing the $\pi$-flux Hamiltonian at $U=0$ with $(1,1)$ holonomies. A particularly simple gauge field configuration associated with this phase is represented in \autoref{fig:pi1}. Clearly, the corresponding Hamiltonian is not translationally invariant with respect to all lattice translations. In order to recover translation invariance, we introduce the appropriate fundamental cell formed by four lattice sites, labelled by $A, B, C, D$, as in \autoref{fig:pi1}. The position of the cell is defined by the coordinate of the $B$-lattice site. We shall denote by $\Gamma_{L}^{\text{red}}$ the two-dimensional lattice formed by the positions of the fundamental cells. The lattice spacing between nearest-neighbour fundamental cells in $\Gamma_{L}^{\text{red}}$ is $2$ in both directions.

For vanishing Hubbard interaction, the spin does not play any role and we discuss the computation in the spinless case. If spin was included, all energy levels would simply be doubly degenerate.

Using these coordinates described in \autoref{fig:pi1}, the Hamiltonian can be rewritten as:
\begin{align}
    H(\bm{-1}; 1,1) 
&=-t\sum_{x \in \Gamma_{L}^{\text{red}}}
\left( a^{+}_{B,x} a^{-}_{A,x} + a^{+}_{B,x} a^{-}_{C,x} + a^{+}_{A,x} a^{-}_{B,x+2e_2} - a^{+}_{A,x} a^{-}_{D ,x} \right.\\ 
&\phantom{=\sum_{x \in \Gamma_{L}^{\text{red}}}}\left.\quad- a^{+}_{D,x} a^{-}_{A,x+2e_1} + a^{+}_{D,x} a^{-}_{C,x+2e_2} +a^{+}_{C,x} a^{-}_{D,x} + a^{+}_{C,x} a^{-}_{B,x+2e_1} + \mathrm{h.c.}\right) \\
&\quad +m \sum_{x \in \Gamma^{\text{red}}_L} \left(n_{A,x} + n_{C,x} - n_{B,x} - n_{D,x}\right).
\label{eq:H11}
\end{align}
\begin{figure}
\centering
\begin{tikzpicture}[scale=0.6]
\begin{scope}[xshift=-1cm, yshift = 2cm]
\draw (-0.5,-0.5) grid (7.5,7.5);
\draw[line width = 0.05 cm] (-0.5,0) -- (7.5,0);
\draw[line width = 0.05 cm] (-0.5,2) -- (7.5,2);
\draw[line width = 0.05 cm] (-0.5,4) -- (7.5,4);
\draw[line width = 0.05 cm] (-0.5,6) -- (7.5,6);

\draw[dotted, fill = gray, fill opacity = 0.2] (3.5,2.5)--(5.5,2.5) -- (5.5,4.5) -- (3.5,4.5) -- (3.5,2.5);
\draw[black,fill=white] (0,0) circle (.1 cm);
\draw[black,fill=white] (1,1) circle (.1 cm);
\draw[black,fill=white] (3,3) circle (.1 cm);
\draw[black,fill=white] (2,2) circle (.1 cm);
\draw[black,fill=white] (4,4) circle (.1 cm);
\draw[black,fill=white] (5,5) circle (.1 cm);
\draw[black,fill=white] (6,6) circle (.1 cm);
\draw[black,fill=white] (7,7) circle (.1 cm);

\draw[black,fill=white] (3,5) circle (.1 cm);

\draw[black,fill=white](2,0) circle (.1 cm);
\draw[black,fill=white] (0,2) circle (.1 cm);
\draw[black,fill=white] (4,0) circle (.1 cm);
\draw[black,fill=white] (0,4) circle (.1 cm);
\draw[black,fill=white] (6,0) circle (.1 cm);
\draw[black,fill=white] (0,6) circle (.1 cm);
\draw[black,fill=white] (1,3) circle (.1 cm);
\draw[black,fill=white] (1,5) circle (.1 cm);
\draw[black,fill=white] (1,7) circle (.1 cm);
\draw[black,fill=white] (7,1) circle (.1 cm);
\draw[black,fill=white] (5,1) circle (.1 cm);
\draw[black,fill=white] (3,1) circle (.1 cm);

\draw[black,fill=white] (4,2) circle (.1 cm);
\draw[black,fill=white] (6,2) circle (.1 cm);
\draw[black,fill=white] (2,6) circle (.1 cm);
\draw[black,fill=white] (2,4) circle (.1 cm);
\draw[black,fill=white] (6,4) circle (.1 cm);
\draw[black,fill=white] (5,3) circle (.1 cm);
\draw[black,fill=white] (7,3) circle (.1 cm);
\draw[black,fill=white] (3,5) circle (.1 cm);
\draw[black,fill=white] (3,7) circle (.1 cm);
\draw[black,fill=white] (7,5) circle (.1 cm);
\draw[black,fill=white] (4,6) circle (.1 cm);
\draw[black,fill=white] (5,7) circle (.1 cm);

\draw[black,fill=black] (1,0) circle (.1 cm);
\draw[black,fill=black] (0,1) circle (.1 cm);
\draw[black,fill=black] (3,0) circle (.1 cm);
\draw[black,fill=black] (0,3) circle (.1 cm);
\draw[black,fill=blue] (5,0) circle (.1 cm);
\draw[blue,fill=black] (0,5) circle (.1 cm);
\draw[black,fill=black] (7,0) circle (.1 cm);
\draw[black,fill=black] (0,7) circle (.1 cm);
\draw[black,fill=black] (1,0) circle (.1 cm);
\draw[black,fill=black] (0,1) circle (.1 cm);
\draw[black,fill=black] (3,0) circle (.1 cm);
\draw[black,fill=black] (0,3) circle (.1 cm);
\draw[black,fill=black] (5,0) circle (.1 cm);
\draw[black,fill=black] (0,5) circle (.1 cm);
\draw[black,fill=black] (7,0) circle (.1 cm);
\draw[black,fill=black] (0,7) circle (.1 cm);

\draw[black,fill=black] (1,2) circle (.1 cm);
\draw[black,fill=black] (2,1) circle (.1 cm);
\draw[black,fill=black] (3,2) circle (.1 cm);
\draw[black,fill=black] (2,3) circle (.1 cm);
\draw[black,fill=black] (5,2) circle (.1 cm);
\draw[black,fill=black] (2,5) circle (.1 cm);
\draw[black,fill=black] (7,2) circle (.1 cm);
\draw[black,fill=black] (2,7) circle (.1 cm);
\draw[black,fill=black] (1,4) circle (.1 cm);
\draw[black,fill=black] (4,1) circle (.1 cm);
\draw[black,fill=black] (3,4) circle (.1 cm);
\draw[black,fill=black] (4,3) circle (.1 cm);
\draw[black,fill=black] (5,4) circle (.1 cm);
\draw[black,fill=black] (4,5) circle (.1 cm);
\draw[black,fill=black] (7,4) circle (.1 cm);
\draw[black,fill=black] (4,7) circle (.1 cm);

\draw[black,fill=black] (1,6) circle (.1 cm);
\draw[black,fill=black] (6,1) circle (.1 cm);
\draw[black,fill=black] (3,6) circle (.1 cm);
\draw[black,fill=black] (6,3) circle (.1 cm);
\draw[black,fill=black] (5,6) circle (.1 cm);
\draw[black,fill=black] (6,5) circle (.1 cm);
\draw[black,fill=black] (7,6) circle (.1 cm);
\draw[black,fill=black] (6,7) circle (.1 cm);

\draw[dotted, ->-] (-0.5,-0.5) -- (7.5,-0.5);
\draw[dotted, ->-] (-0.5,7.5) -- (7.5,7.5);
\draw[dotted, ->>-] (-0.5,-0.5) -- (-0.5,7.5);
\draw[dotted, ->>-] (7.5,-0.5) -- (7.5,7.5);
\end{scope}
\begin{scope}[xshift =11.5 cm, yshift=3.5cm]
\draw[dotted,fill = gray, fill opacity = 0.2] (0,0) -- (0,4) -- (4,4)--(4,0)--(0,0);
\draw[] (1,0) -- (1,4);
\draw[] (3,0) -- (3,4);
\draw[] (0,1) -- (4,1);
\draw[thick, line width = 0.05 cm] (0,3) -- (4,3);
\node[below left] (a) at (1,1) {{$B$}};
\node[above left] (b) at (1,3) {{$A$}};
\node[below right] (a) at (3,1) {{$C$}};
\node[above right] (b) at (3,3) {{$D$}};
\draw[line width = 0.02 cm, ->] (0,0)--(0,4);
\draw[line width = 0.02 cm, ->] (0,0)--(4,0);
\draw[black,fill=black] (1,1) circle (.1 cm);
\draw[black,fill=black] (3,3) circle (.1 cm);
\draw[black,fill=white] (1,3) circle (.1 cm);
\draw[black,fill=white] (3,1) circle (.1 cm);
\end{scope}
\end{tikzpicture}
\caption{On the left, graphical representation of the gauge field configurations associated with the $(1,1)$ $\pi$ flux phase. The thin bonds correspond to $\sigma^z = +1$, while the solid bonds correspond to spin $\sigma^z = -1$. On the right, definition of the fundamental cell.}
\label{fig:pi1}
\end{figure}
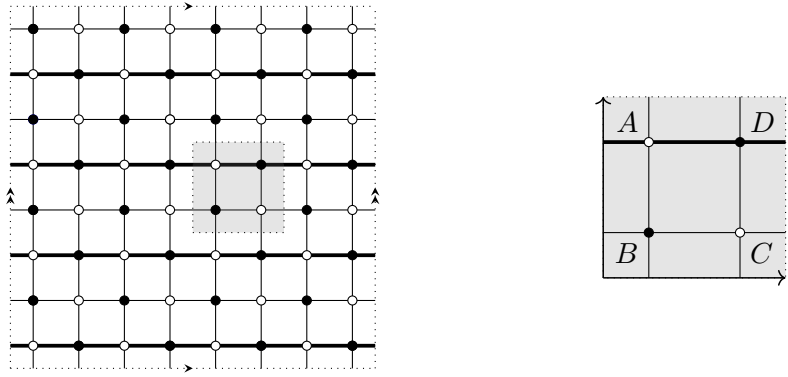
The Brillouin zone $B_{L}(1,1)$ is defined as
\begin{equation}\label{eq:BL11}
B_{L}(1,1) := \Big\{ k\in \frac{2\pi}{L}(n_{1}, n_{2}) \mid 0\leq n_{1} \leq L/2-1,\; 0\leq n_{2} \leq L/2 - 1 \Big\}\;,
\end{equation}
and the momentum-space creation/annihilation operators as:
\begin{equation}\label{eq:afou}
\hat a^{\pm}_{\alpha, k} = \sum_{x\in \Gamma_{L}^{\text{red}}} e^{\pm ik\cdot x} a^{\pm}_{\alpha, x} \iff a^{\pm}_{\alpha, x} = \frac{1}{|\Gamma^{\text{red}}_{L}|} \sum_{k \in B_{L}(1,1)} e^{\mp ik\cdot x} \hat a^{\pm}_{\alpha, k}\;,
\end{equation}
for $\alpha \in \{A,B,C,D\}$. The Hamiltonian \eqref{eq:H11} becomes:
\begin{equation}\label{eq:blochHam}
H(\bm{-1}; 1,1) = \frac{1}{|\Gamma^{\text{red}}_{L}|
} \sum_{k \in B_{L}(1,1)} (\hat a^{+}_{k}, h(k) \hat a^{-}_{k})\;,
\end{equation}
where $(f,g) = \sum_{\alpha} f_{\alpha} g_{\alpha}$, and the Bloch Hamiltonian $h(k)$ is given by
 \begin{equation}\label{eq:BlochHam}
     h(k) = \begin{pmatrix}
         m &-t(1 +  e^{2i k_2}) &0 &t(1+ e^{-2i k_1})\\
         -t(1 +  e^{-2i k_2}) &-m &-t(1+ e^{-2i k_1}) &0\\
         0 &-t(1+e^{2 i k_1}) &m &-t(1+ e^{-2 i k_2})\\
        t( 1+ e^{2i k_1}) &0 &-t(1+ e^{2ik_2}) &-m
     \end{pmatrix}\;.
 \end{equation}
The above Bloch Hamiltonian coincides with the one found in \cite{PhysRevB.50.7526}. Its (doubly degenerate) eigenvalues are
\begin{equation}\label{eq:epi}
e_{\pm}(k) = \pm 2t \sqrt{\Big(\frac{m}{2t}\Big)^2 + 1 + \frac{1}{2}\cos(2k_{1}) + \frac{1}{2} \cos(2k_{2})},
\end{equation}
and they vanish nowhere in the Brillouin zone whenever $m \neq 0$ (see \autoref{fig:disp}), leaving a gap equal to $2m$.
\begin{figure}
\centering
\includegraphics[trim={0cm 0cm 0cm 0cm}, clip, width=.7\textwidth]{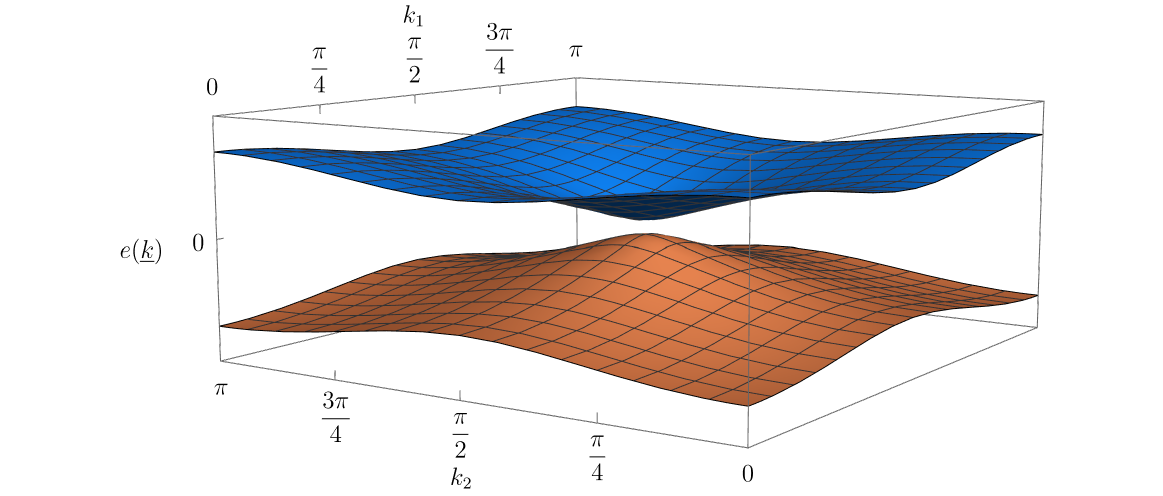}
\caption{Energy bands associated with the $\pi$-flux phase with a staggered mass \eqref{eq:epi}.}
    \label{fig:disp}
\end{figure}
The ground state of the system is the Slater determinant obtained by occupying all the Bloch states below energy zero. Since the cardinality of Brillouin zone is equal to $L^2/4$, see \eqref{eq:BL11}, the rank of the Fermi projection, namely the number of particles, is equal to $L^2/2$. If the two spin states are reintroduced, the number of particles equals $L^2$. In both cases, this corresponds to half-filling. 

The case of more general holonomies around non-contractible cycles can be analysed via a change of the boundary conditions, and thus of the allowed momenta in the Brillouin zone. Namely, the Brillouin zone is given by
\begin{equation}
    B_{L}(e^{i\theta},e^{i\phi}) = \Big\{ k \in \frac{2\pi}{L}\big(n_1+ \frac{\theta}{2\pi}, n_2 + \frac{\phi}{2\pi}\big) \mid 0 \leq n_1 \leq L/2-1,\,0 \leq n_2 \leq L/2-1\Big\}\;.
\end{equation}
From there, the discussion proceeds exactly as in the $(1,1)$ case. This concludes the proof of \autoref{thm1} - \autoref{thm1.1}, for non-interacting fermions.

\subsection{Proof of \autoref{thm1} - \autoref{thm1.2}: non-interacting fermions}\label{sec:expdeg}

The ground state energy of the system for general holonomies is, using the short-hand notation $B(\theta,\phi) \equiv B_{L}(e^{i\theta}, e^{i\phi})$, $E_{0}(\bm{-1},\theta,\phi) \equiv E_{0,L}(\bm{-1},e^{i\theta},e^{i\phi})$:
\begin{equation}\label{eq:gsgen}
    E_0({\bf -1}, \theta, \phi) = 2 \sum_{k \in B(\theta,\phi)} e_-(k)\;.
\end{equation}
In order to prove the bound \eqref{eq:expdeg}, we will rewrite \eqref{eq:gsgen} using Poisson's summation formula:
\begin{equation}\label{PSF}
    \frac{1}{L^2} \sum_{k \in B(\theta,\phi)} e_-(k) = \sum_{\ell_1, \ell_2 \in \mathbb{Z}} e^{i \theta \ell_1} e^{i \phi \ell_2} e_{\infty}(\ell_1 L, \ell_2 L)
\end{equation}
where
\begin{equation}
    e_{\infty}(\ell_1 L, \ell_2 L) = \int_{\mathbb{T}^2} \frac{d^2k}{(2\pi)^2} e^{i k_1 \ell_1 L + i k_2 \ell_2 L} e_-(k)
\end{equation}
is the infinite volume ground state energy density (which is independent of the boundary conditions). In this way, we can conveniently rewrite the energy difference as:
\begin{equation}
    E_0(\bm{-1},\theta_1, \phi_1)- E_0(\bm{-1},\theta_2, \phi_2) = 2L^{2}\sum_{(\ell_1, \ell_2) \in \mathbb{Z}^2\setminus(0,0)} (e^{i \ell_1 \theta_1 + i \ell_2 \phi_1} - e^{i \ell_1 \theta_2 + i \ell_2 \phi_2})  e_{\infty}(\ell_{1}L, \ell_{2}L).
\end{equation}
Note that the $\ell_1 = \ell_2 = 0$ term vanishes because the exponentials cancel out.

\begin{proposition}\label{prp:complex}
   There exist two constants $C,c>0$ depending only on $m$ and $t$, such that
\begin{equation}\label{eq:complex}
    \bigg|\int_{\mathbb{T}^2} \frac{d^2k}{(2\pi)^2} e^{i k x} e_-(k) \bigg| \leq C e^{-c(\vert x_1\vert + \vert x_2\vert)},
\end{equation}
\end{proposition}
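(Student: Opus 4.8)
The plan is to deduce the exponential decay of the Fourier coefficients from the real-analyticity of the dispersion $e_-$ on the torus, via a contour-shift (Paley--Wiener) argument. Real-analyticity is in turn a direct consequence of the fact that the staggered mass keeps the radicand bounded away from zero, so the crucial input is precisely $m\neq 0$.

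First I would record the positivity estimate that powers everything. Writing
\begin{equation*}
    R(k) = \Big(\frac{m}{2t}\Big)^2 + 1 + \frac{1}{2}\cos(2k_1) + \frac{1}{2}\cos(2k_2),
\end{equation*}
so that $e_-(k) = -2t\sqrt{R(k)}$, the elementary bound $\cos(2k_j)\geq -1$ gives $R(k)\geq (m/2t)^2>0$ uniformly on $\mathbb{T}^2$. Thus the radicand possesses a uniform spectral gap that I expect to survive a small complex deformation of the momenta.

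Next I would complexify, setting $z_j = k_j + \iu\kappa_j$ with $|\kappa_j|\leq \kappa_0$. From $\cos(2z_j) = \cos(2k_j)\cosh(2\kappa_j) - \iu\sin(2k_j)\sinh(2\kappa_j)$ one gets $|\cos(2z_j)-\cos(2k_j)|\leq (\cosh(2\kappa_0)-1)+\sinh(2\kappa_0)\leq C'\kappa_0$ for $\kappa_0$ small, hence $|R(z)-R(k)|\leq C'\kappa_0$. Choosing $\kappa_0$ small enough that $C'\kappa_0<\tfrac{1}{2}(m/2t)^2$ forces $\mathrm{Re}\,R(z)\geq \tfrac{1}{2}(m/2t)^2>0$, so $R(z)$ lies in the open right half-plane, on which the principal branch of the square root is holomorphic. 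Therefore $e_-$ extends to a holomorphic function on the strip $\{|\mathrm{Im}\,z_j|\leq\kappa_0\}$ with a uniform bound $|e_-(z)|\leq M$, where both $\kappa_0$ and $M$ depend only on $m/2t$.

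Finally I would shift the contour. For the relevant integer modes $x\in\mathbb{Z}^2$ (in the application $x=(\ell_1 L,\ell_2 L)$) the integrand $e^{\iu k\cdot x}e_-(k)$ is $2\pi$-periodic in each variable, so translating each contour by $k_j\mapsto k_j+\iu\kappa_0\,\mathrm{sgn}(x_j)$ leaves the integral unchanged—the two vertical segments cancel, and the deformation is legitimate by Cauchy's theorem on the strip of holomorphy. Since $\mathrm{sgn}(x_j)x_j=|x_j|$, the phase yields the factor $e^{-\kappa_0(|x_1|+|x_2|)}$, and bounding the remaining integrand by $M$ gives
\begin{equation*}
    \Big|\int_{\mathbb{T}^2}\frac{d^2k}{(2\pi)^2}\,e^{\iu k\cdot x}\,e_-(k)\Big|\leq M\,e^{-\kappa_0(|x_1|+|x_2|)}\leq M\,e^{-\kappa_0|x|},
\end{equation*}
using $|x_1|+|x_2|\geq|x|$; this is the claim with $C=M$ and $c=\kappa_0$. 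The main obstacle is the third step: one must confirm that the complexified radicand never crosses the branch cut of the square root, and it is exactly the nonvanishing of the mass that secures a strip of uniform width $\kappa_0\sim(m/2t)^2$. Once analyticity on such a strip is in hand, the contour deformation and the resulting exponential bound are routine.
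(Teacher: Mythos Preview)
Your argument is correct. The contour-shift/Paley--Wiener approach you outline is a clean and self-contained proof: the lower bound $R(k)\geq (m/2t)^2$ is exactly what guarantees a strip of analyticity of fixed width, and the periodicity makes the vertical boundary contributions cancel in the contour shift.

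The paper takes a different route. Instead of working with the explicit dispersion, it writes $e_-(k)=\tfrac{1}{2}\tr\big(h(k)p_-(k)\big)$, with $p_-$ the Fermi projection, and then Fourier-transforms back to real space to obtain the convolution $\sum_y \tr\big(H(x-y,0)P_-(y,0)\big)$. Since the real-space Hamiltonian is strictly local and the Fermi projector obeys a Combes--Thomas bound $|P_-(y,0)|\leq C e^{-c|y|}$ (because the Fermi level sits in the gap), the exponential decay follows. The two proofs are morally the same mechanism---analyticity in a strip versus Combes--Thomas are two faces of the same coin---but yours exploits the explicit formula for $e_-(k)$ directly, while the paper's version is formula-free and reusable: the same Combes--Thomas input for the propagator is invoked again in Section~\ref{sec:int} for the interacting case and in Section~\ref{sec:topord} via the Poisson representation of the Fermi projector. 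Your approach is arguably more elementary for this particular proposition; theirs is chosen to dovetail with the rest of the paper.
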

Hence, by Poisson summation formula \eqref{PSF},
\begin{equation}
    | E_0(\bm{-1},\theta_1, \phi_1)- E_0(\bm{-1},\theta_2, \phi_2)| \leq 4L^2 \sum_{(\ell_1, \ell_2) \in \mathbb{Z}^2\setminus(0,0)} C e^{-c(|\ell_1| + |\ell_2|) L } \leq K L^2 e^{-2 cL}
\end{equation}
uniformly in $\phi_1, \theta_1, \phi_2, \theta_2$. This concludes the proof of \autoref{thm1.2} of \autoref{thm1} for non-interacting fermions. It remains to prove Proposition~\ref{prp:complex}.
\begin{proof}
    We first note that
    \begin{equation}
        e_-(k) = \frac{1}{2}\tr(h(k)p_-(k))
    \end{equation}
    where the trace is over the fundamental cell and the factor $2$ stands for the rank of $p_-(k)$. The Bloch Hamiltonian $h(k)$, which is explicitly given in~(\ref{eq:BlochHam}), and the ground state projections $p(k)$ are by definition 
\begin{equation}
        h(k) = \sum_{x \in \Gamma_{L}^{\text{red}}} h(x,0)\ep{-\iu k\cdot x},
        \qquad
        p_{-}(k) = \sum_{x \in \Gamma_{L}^{\text{red}}} p_-(x,0)\ep{-\iu k\cdot x},
    \end{equation}
    where $h(x,y),p_-(x,y)$ are the $4\times4$-matrix-valued functions given by 
    \begin{equation}
        H(\bm{-1}; 1,1) = \sum_{x,y\in\Gamma_L^{\rm{red}}} (a^+_{x},h(x,y)a^-_y),
    \end{equation}
    see~(\ref{eq:H11}). With this, the integral~(\ref{eq:complex}) can be computed as
    \begin{equation}\label{eq:e-P-}
        \int_{\mathbb{T}^2} \frac{d^2k}{(2\pi)^2} \ep{\iu k x} e_-(k) 
        = \frac{1}{2}  \sum_{y \in \Gamma_{L}^{\text{red}}} \tr\big(h(x-y,0)p_-(y,0)\big).
    \end{equation}
    By translation invariance, $h(x-y,0) = h(x,y)$ and it vanishes whenever $\vert x-y\vert> 4$. Since the Fermi energy lies in a spectral gap, the Riesz formula for the spectral projector $p_- = \frac{1}{2\pi\iu}\int_{\gamma}(h-z)^{-1}dz$ together with the Combes-Thomas estimate for the resolvent ~\cite[Theorem~10.5]{AizenmanWarzel} imply:
    \begin{equation}
        \vert p_-(y,0)\vert \leq C\ep{-c\vert y\vert}.
    \end{equation}
This bound combined with the identity (\ref{eq:e-P-}) yield the claim (\ref{eq:complex}). Finally, the constant $C,c>0$ can be estimated explicitly in terms of the spectral gap, which is proportional to $m$ and the `hopping amplitude' $t$ of $h(x,y)$:
     \begin{equation}\label{eq:CT C and c}
         C \propto \frac{1}{m},\qquad c\propto \frac{m}{t},
     \end{equation}
    see again~\cite[Theorem~10.5]{AizenmanWarzel}.
\end{proof}

\subsection{Proof of \autoref{thm1} - \autoref{thm1.1} and \autoref{thm1.2}: weakly interacting fermions}\label{sec:int}

We now extend the results of \hyperref[specpiflux]{Subsection \ref*{specpiflux}} and \hyperref[sec:expdeg]{Subsection \ref*{sec:expdeg}} to the case of weakly interacting fermions. This will be done using fermionic cluster expansion. The method provides a convergent expansion for the ground state energy,
\begin{equation}
E_{0}({\bf -1}, \theta, \phi) = -\lim_{\beta \to \infty} \frac{1}{\beta} \log \tr_{\mathcal{F}_{L}} e^{-\beta H({\bf -1}, e^{i\theta}, e^{i\phi})}
\end{equation}
for $|U|$ small enough, uniformly in $L$. Inspection of the convergent expansion will allow us to prove items $1$ and $2$ of Theorem \ref{thm1} for weakly interacting fermions. In the context of topological phases of matter, these methods have been used in \cite{GMPhall} to prove the universality of the Hall conductivity. See also \cite{GJMP, GMPhald, FGR} for the analysis of Hall transitions, via cluster expansion and renormalization group methods. In our setting, inspection of the expansion will allow us to prove the exponential closeness of the approximate ground state energies (\ref{eq:expdeg}). Furthermore, the same technique allows to prove the stability of the fermionic spectral gap, following \cite[Theorem 1]{DRS}.

\paragraph{Analyticity of the free energy.} Here we will discuss the convergent expansion for the free energy, which will be the starting point for the proofs of item $1$ and item $2$ of Theorem~\ref{thm1} for interacting systems. From now on, we shall write $H({\bf -1}, e^{i\theta}, e^{i\phi}) = H_{0} + UV + (U/4)L^{2}$, where:
\begin{equation}
H_{0} = H({\bf -1}, e^{i\theta}, e^{i\phi})\Big|_{U=0} - \frac{U}{2} \sum_{i,\sigma} n_{i,\sigma}\;,\qquad
V = \sum_{i} n_{i,\uparrow} n_{i,\downarrow}\;;
\end{equation}
observe that we included the interaction-dependent quadratic term due to the presence of the $-1/2$ factors in the Hubbard interaction in the definition of the quadratic Hamiltonian $H_0$. The additive constant term in the Hamiltonian does not have any relevance for the Gibbs state of the model, and it will be omitted in what follows. The starting point is the Duhamel expansion for the free energy:
\begin{equation}\label{eq:duhamel}
\begin{split}
&\frac{\tr_{\mathcal{F}_{L}} e^{-\beta (H_{0} + U V)}}{\tr_{\mathcal{F}_{L}} e^{-\beta H_{0} }} \\
&\qquad = 1 + \sum_{n\geq 1} (-U)^{n} \int_{0}^{\beta} dt_{1} \int_{0}^{t_{1}} dt_{2} \cdots \int_{0}^{t_{n-1}} dt_{n} \langle \gamma_{t_{1}}(V) \gamma_{t_{2}}(V) \cdots \gamma_{t_{n}}(V) \rangle^{0}_{\beta,L}
\end{split}
\end{equation}
where $\langle \cdot \rangle_{\beta,L}^{0}$ is the Gibbs state of $H_{0}$ at half-filling, and $\gamma_{t}(\cdot)$ is the imaginary-time evolution,
\begin{equation}
\gamma_{t}(V) = e^{t H_{0}} V e^{-t H_{0}}\;.
\end{equation}
Eq. \eqref{eq:duhamel} can be rewritten as:
\begin{equation}\label{eq:duhamel2}
\begin{split}
&\frac{\tr_{\mathcal{F}_{L}} e^{-\beta (H_{0} + U V)}}{\tr_{\mathcal{F}_{L}} e^{-\beta H_{0} }} \\
&\qquad = 1 + \sum_{n\geq 1} \frac{(-U)^{n}}{n!} \int_{[0,\beta]^{n}} dt_{1}\cdots dt_{n} \langle {\bf T} \gamma_{t_{1}}(V) \gamma_{t_{2}}(V) \cdots \gamma_{t_{n}}(V) \rangle^{0}_{\beta,L}
\end{split}
\end{equation}
where ${\bf T}$ is the time-ordering operator. It acts on fermionic monomials as, at non-equal times:
\begin{equation}
{\bf T} \gamma_{t_{1}}(a^{\varepsilon_{1}}_{i_{1},\sigma_{1}}) \cdots \gamma_{t_{n}}(a^{\varepsilon_{n}}_{i_{n},\sigma_{n}}) = \text{sgn}(\pi) \gamma_{t_{1}}(a^{\varepsilon_{\pi(1)}}_{i_{\pi(1)},\sigma_{\pi(1)}}) \cdots \gamma_{t_{\pi(n)}}(a^{\varepsilon_{\pi(n)}}_{i_{\pi(n)},\sigma_{\pi(n)}})
\end{equation}
where $\pi$ is the permutation such that $t_{\pi(i)} > t_{\pi(i+1)}$. Whenever two times coincide, the time-ordering operator acts as normal-ordering. Taking the logarithm of the left-hand side and of the right-hand side of (\ref{eq:duhamel2}) one gets:
\begin{equation}\label{eq:cumu}
\log \frac{\tr_{\mathcal{F}_{L}} e^{-\beta (H_{0} + U V)}}{\tr_{\mathcal{F}_{L}} e^{-\beta H_{0} }} = \sum_{n\geq 1} \frac{(-U)^{n}}{n!} \int_{[0,\beta]^{n}} dt_{1}\cdots dt_{n} \langle {\bf T} \gamma_{t_{1}}(V)\,;\, \gamma_{t_{2}}(V)\,;\, \cdots\,;\, \gamma_{t_{n}}(V) \rangle^{0}_{\beta,L}
\end{equation}
where the argument of the integral is the $n$-th order cumulant. It is well-known that the cumulant expansion can be represented as an expansion over connected Feynman diagrams, by application of the fermionic Wick's rule (see {\it e.g.} \cite{Salmhofer} for a review of fermionic perturbation theory). The main ingredient of the expansion is the fermionic two-point function, defined as, for $0\leq t,t'<\beta$:
\begin{equation}\label{eq:2pt}
\begin{split}
g((t,i,\sigma), (t',i',\sigma')) &= \langle {\bf T} \gamma_{t}(a^{-}_{i,\sigma}) \gamma_{t'}(a^{+}_{i',\sigma'}) \rangle^{0}_{\beta,L} \\
&= \theta(t>t') \frac{e^{-(t-t') h_{0}}}{1 + e^{-\beta h_{0}}}(i,\sigma; i',\sigma') - \theta(t\leq t') \frac{e^{-(t-t')h_{0}}}{1 + e^{\beta h_{0}}}(i,\sigma; i',\sigma')
\end{split}
\end{equation}
with $h_{0}$ the single-particle Hamiltonian associated with $H_{0}$. It is not difficult to see that  $g((0,i), (t',j)) = -g((\beta,i), (t',j))$, which allows to extend antiperiodically the two-point function to all times $t$ and $t'$ in $\mathbb{R}$.

The problem with the naive diagrammatic expansion of (\ref{eq:cumu}) is that it involves too many addends. In fact, the number of diagrams contributing to the $n$-th order grows as $(n!)^{2}$, which beats the $1/n!$ in (\ref{eq:cumu}), and does not allow to prove absolute summability of the series. Observe that this apparent factorial divergence of the series is insensitive to the fermionic nature of the particles. 

This issue can be avoided using a different expansion, that allows to exploit the anticommutativity of the fermionic operators. Let us rewrite:
\begin{equation}
\gamma_{s}(V) = \sum_{i} \gamma_{t}(V_{i})\qquad \text{with $V_{i} = n_{i,\uparrow}n_{i,\downarrow}$.}
\end{equation}
The Battle-Brydges-Federbush (BBF) formula states that:
\begin{equation}\label{eq:BBF}
\langle {\bf T} \gamma_{t_{1}}(V_{i_{1}})\,;\, \gamma_{t_{2}}(V_{i_{2}})\,;\, \cdots\,;\, \gamma_{t_{n}}(V_{i_{n}}) \rangle^{0}_{\beta,L}  = \sum_{T} \alpha_{T} \Big[ \prod_{\ell\in T} g_{\ell}\Big] \int d\mu_{T}(\underline{s}) \det\big[ s_{b(f),b(f')}g_{(f,f')}\big]\;;
\end{equation}
see \cite[Theorem 6]{DRS}, \cite[Appendix D]{GMR}, \cite[Appendix A3.2]{GeMa} for reviews, proofs, and further references; see also \cite[Theorem 4.4]{chen2025} for a recent application in the context of computational quantum physics. Let us explain the meaning of the various objects involved in the identity (\ref{eq:BBF}). Every fermionic operator appearing in $\gamma_{t}(V_{i})$ is graphically represented by an oriented line, which exits or enters a vertex labelled by the space-time coordinates $(t,i)$, depending on whether the fermionic operator creates or destroys a particle, respectively. The lines are further decorated by the spin labels of the corresponding operators. In this way, every operator $\gamma_{t}(V_{i})$ is graphically represented by a vertex with four incident lines, two incoming and two outgoing. The sum in the right-hand side of (\ref{eq:BBF}) is over the spanning trees $T$ connecting the $n$ vertices associated with the interaction terms, obtained contracting lines with opposite orientations. The pairing of an incoming line associated with the vertex $(t,i)$ and of an outgoing line associated with the vertex $(t',i')$, labelled respectively by spin labels $\sigma$ and $\sigma'$, forms an edge $\ell = ((t,i,\sigma), (t',i',\sigma'))$. Associated with every edge, we have a propagator $g_{\ell}$, 
\begin{equation}
g_{\ell} \equiv g((t,i,\sigma),(t',i',\sigma'))
\end{equation}
as given by (\ref{eq:2pt}). The product in (\ref{eq:BBF}) involves the propagators associated with the edges of the tree. In the determinant, $g_{(f,f')}$ is the entry of a $(n+1)\times(n+1)$ dimensional matrix representing the contraction of lines that do not belong to $T$, where $f$ is the label for a generic incoming line, and $f'$ is the label for a generic outgoing line. From the point of view of Feynman diagrams, these are the loop lines. This matrix element is then multiplied by a number $s_{b(f),b(f')}$ between $0$ and $1$, an interpolation parameter, where $b(f), b(f')$ are integers in $[1,n]$, which return the labels of the vertices associated with the lines $f,f'$. The measure $d\mu_{T}(\underline{s})$ is a probability measure, supported on sequences $\underline{s}$ whose entries $s_{b(f),b(f')}$ can be written as the scalar product of two vectors in $\mathbb{R}^{n}$, $u_{b(f)}$ and $u_{b(f')}$, with unit norm. Finally, $\alpha_{T}$ is either $+1$ or $-1$, and it will not play any role in what follows.

Thus, thanks to the BBF formula (\ref{eq:BBF}), we have:
\begin{equation}\label{eq:BBFbd}
\Big|\langle {\bf T} \gamma_{t_{1}}(V_{i_{1}})\,;\, \gamma_{t_{2}}(V_{i_{2}})\,;\, \cdots\,;\, \gamma_{t_{n}}(V_{i_{n}}) \rangle^{0}_{\beta,L}\Big| \leq \sum_{T}  \Big[ \prod_{\ell\in T} | g_{\ell}| \Big] \int d\mu_{T}(\underline{s})\, \Big| \det\big[ s_{b(f),b(f')}g_{(f,f')}\big] \Big|\;;
\end{equation}
we will use this bound to show that
\begin{equation}\label{eq:bdBBF}
\frac{1}{\beta}\sum_{i_{1}, \ldots, i_{n}}\int_{[0,\beta]^{n}} d\underline{t}\, \Big|\langle {\bf T} \gamma_{t_{1}}(V_{i_{1}})\,;\, \gamma_{t_{2}}(V_{i_{2}})\,;\, \cdots\,;\, \gamma_{t_{n}}(V_{i_{n}}) \rangle^{0}_{\beta,L}\Big| \leq L^{2} C^{n}  n!
\end{equation}
an estimate which allows to prove analiticity of the specific free energy (\ref{eq:cumu}) for small $|U|$ (observe that the two-point function depends analytically on $U$), uniformly in $\beta$ and $L$. The existence of the limits as $\beta \to \infty, L \to \infty$ follows from the uniform convergence of the series, and from the existence of the pointwise limit of the two-point function, which can be proved easily.

The factorial growth of the bound comes from counting the number of trees connecting the $n$ vertices (observe that this number is much smaller than the number of $n$-th order Feynman diagrams). The proof of (\ref{eq:bdBBF}) is a consequence of two main ingredients: a good bound for the $\ell^{1}$ norm of the two-point function, and a good bound for the $\ell^{\infty}$ norm of the determinant. Let us first discuss the bound for the two-point function. For $L$ fixed, and for any $\theta, \phi$, the two-point function associated with $H_{0}$ can be represented as, using Poisson summation formula:
\begin{equation}\label{eq:poisson2pt}
g((t,i), (t',i')) = \sum_{m_{1}, m_{2} \in \mathbb{Z}} e^{i \theta m_{1}} e^{i \phi m_{2}} g_{\infty}((t,i + e_{1}m_{1}L + e_{2}m_{2}L), (t',i'))
\end{equation}
where $e_{1},e_{2}$ is the standard basis of $\mathbb{R}^{2}$ and $g_{\infty}$ is the two-point function on $\mathbb{Z}^{2}$. By the spectral gap of $h_{0}$, a standard Combes-Thomas estimate (see {\it e.g.} \cite[Appendix B]{DRS}) allows to prove that, for all $t,t'$ in $\mathbb{R}$ and for all $i,i'$ in $\mathbb{Z}^{2}$:
\begin{equation}
\| g_{\infty}((t,i); (t',i')) \| \leq C e^{-c\| (t,i) - (t',i') \|_{\beta}}
\end{equation}
with the time-periodic distance:
\begin{equation}
\| (t,i) - (t',i') \|_{\beta}^{2} = \min_{n\in \mathbb{Z}} | t - t' + n \beta |^{2} + \|i-i'\|^{2}\;.
\end{equation}
Thus, this bound, combined with Eq. (\ref{eq:poisson2pt}) allows to prove that the finite-volume two-point function satisfies the estimate (with different constants):
\begin{equation}\label{eq:decper}
\| g((t,i); (t',i')) \| \leq C e^{-c\| (t,i) - (t',i') \|_{\beta,L}}
\end{equation}
with the space-time periodic distance:
\begin{equation}
\| (t,i) - (t',i') \|_{\beta,L}^{2} = \min_{n\in \mathbb{Z}} | t - t' + n \beta |^{2} + \min_{n_{1},n_{2} \in \mathbb{Z}} \|i-i' + n_{1}e_{1} L + n_{2} e_{2} L \|^{2}\;.
\end{equation}
This decay estimate immediately implies the finiteness of the $\ell^{1}$ norm of the two-point function,
\begin{equation}\label{eq:l1bd}
\max_{t',i'} \int_{0}^{\beta} dt \sum_{i} \| g((t,i); (t',i')) \| \leq K
\end{equation}
uniformly in $\beta, L$. Let us now discuss the bound for the determinant in (\ref{eq:BBFbd}). The key tool we shall use is the Gram-Hadamard inequality: if $M = (m_{ij})_{1\leq i,j\leq K}$ is a Gram matrix, that is if all matrix entries are of the form $m_{ij} = (u_{i}, w_{j})$ for $u_{i}, w_{i}$ in a Hilbert space with scalar product $( \cdot, \cdot)$, then:
\begin{equation}
|\det[M]| \leq \prod_{i=1}^{K} \|u_{i}\| \|w_{i}\|
\end{equation}
with $\|\cdot\|$ the norm induced by the scalar product. This bound is useful if the vectors $u_{i}$ and $w_{j}$ have norm independent of $K$. It is well-known however that the two-point function does not have a good Gram representation, due to the discontinuity of the indicator functions in (\ref{eq:2pt}), introduced by the time-ordering. The easiest way to solve this issue is to observe that the two factors multiplying the theta functions in (\ref{eq:2pt}) do have a good Gram representation, separately; see {\it e.g.} \cite[Lemma 4.1]{PS} and \cite[Lemma 10]{DRS}. As shown in \cite[Theorem 2.4]{PS}, this fact allows to prove a bound on the determinant  in (\ref{eq:BBFbd}), of the form:
\begin{equation}\label{eq:detbd}
\Big| \det\big[ s_{b(f),b(f')}g_{(f,f')}\big] \Big| \leq C^{n}
\end{equation}
where the constant $C$ depends on the Hamiltonian $h_{0}$ (and it is finite uniformly in $\beta, L$, thanks to the spectral gap). Putting together (\ref{eq:BBFbd}), (\ref{eq:l1bd}), (\ref{eq:detbd}), the bound (\ref{eq:bdBBF}) easily follows.

\paragraph{Proof of item $1$ of Theorem \ref{thm1}.} Following \cite[Theorem 1]{DRS}, it is known that the convergence of the above expansion, combined with the spectral gap of the single-particle Hamiltonian $h_{0}$, implies that the many-body Hamiltonian $H$ has a spectral gap, bounded below by $2m - K|U|^{1/3}$ for some $K>0$, uniformly in $L$. The same methods allow to prove that, for small $U$, the many-body ground state is unique and it is at half-filling. These claims are true at $U=0$, as proved in \hyperref[specpiflux]{Subsection \ref*{specpiflux}}. The uniqueness of the ground state follows from the continuity of the eigenvalues of the Hamiltonian as a function of $U$, and from the stability of the spectral gap for $U$ small. Let us now prove that the ground state is at half-filling. Since the Hamiltonian commutes with the number operator, the unique ground state of $H$ must be an eigenstate of the number operator $N$. Furthermore, by the convergence of the cluster expansion, $\langle \psi_{0}, N \psi_{0}\rangle$ is continuous in $U$ for $|U|$ small (in fact, analytic). Thus, since $\langle \psi_{0}, N \psi_{0}\rangle$ is integer valued, it must be constant in $U$, and hence equal to its value at $U=0$, which is $L^{2}$. This concludes the proof of \autoref{thm1.1} of \autoref{thm1} for weakly interacting fermions.

\paragraph{Proof of item $2$ of Theorem \ref{thm1}} Let us now apply the previous strategy to prove the exponential closeness of the approximate ground state energies, in presence of many-body interactions, Eq. (\ref{eq:expdeg}). The proof is based on the BBF formula (\ref{eq:BBF}) combined with the Poisson formula for the two-point function (\ref{eq:poisson2pt}). Let us write:
\begin{equation}\label{eq:poissonbbf}
\begin{split}
&\frac{1}{\beta}\sum_{i_{1},\ldots, i_{n}} \int_{[0,\beta]^{n}} d\underline{t}\, \langle {\bf T} \gamma_{t_{1}}(V_{i_{1}})\,;\, \gamma_{t_{2}}(V_{i_{2}})\,;\, \cdots\,;\, \gamma_{t_{n}}(V_{i_{n}}) \rangle^{0}_{\beta,L} \\
&\quad = \frac{1}{\beta}\sum_{i_{1},\ldots, i_{n}} \int_{[0,\beta]^{n}} d\underline{t}\, \sum_{T} \alpha_{T} \Big[ \prod_{\ell\in T} g_{\ell}\Big] \int d\mu_{T}(\underline{s}) \det\big[ s_{b(f),b(f')}g_{(f,f')}\big] \\
&\quad = \sum_{T} \alpha_{T} \frac{1}{\beta}\sum_{i_{1},\ldots, i_{n}} \int_{[0,\beta]^{n}} d\underline{t}\, \Big[ \prod_{\ell\in T} g_{\ell}\Big] \int d\mu_{T}(\underline{s}) \det\big[ s_{b(f),b(f')}g_{(f,f')}\big]\;.
\end{split}
\end{equation}
Using the translation-invariance of the $\pi$-flux phase, we can rewrite (\ref{eq:poissonbbf}) as:
\begin{equation}
(L^{2} / |\mathcal{C}|)\sum_{T} \alpha_{T} \frac{1}{\beta}\sum_{\substack{i_{1},\ldots, i_{n} \\ i_{1} \in \mathcal{C}(L/2,L/2)}} \int_{[0,\beta]^{n}} d\underline{t}\,\Big[ \prod_{\ell\in T} g_{\ell}\Big] \int d\mu_{T}(\underline{s}) \det\big[ s_{b(f),b(f')}g_{(f,f')}\big]
\end{equation}
where $|\mathcal{C}|$ is the number of sites in the fundamental cell, and $\mathcal{C}(L/2,L/2)$ is the fundamental cell containing the site $(L/2,L/2)$. We then break the innermost sum as:
\begin{equation}\label{eq:splitexp}
\begin{split}
&\sum_{\substack{i_{1},\ldots, i_{n} \\ i_{1} \in \mathcal{C}(L/2,L/2)}} \int_{[0,\beta]^{n}} d\underline{t}\,\Big[ \prod_{\ell\in T} g_{\ell}\Big] \int d\mu_{T}(\underline{s}) \det\big[ s_{b(f),b(f')}g_{(f,f')}\big] \\
&\quad = \sum^{*}_{\substack{i_{1},\ldots, i_{n} \\ i_{1} \in \mathcal{C}(L/2,L/2)}}\int_{[0,\beta]^{n}} d\underline{t}\, \Big[ \prod_{\ell\in T} g_{\ell}\Big] \int d\mu_{T}(\underline{s}) \det\big[ s_{b(f),b(f')}g_{(f,f')}\big] \\
&\qquad + \sum^{**}_{\substack{i_{1},\ldots, i_{n} \\ i_{1} \in \mathcal{C}(L/2,L/2)}} \int_{[0,\beta]^{n}} d\underline{t}\,\Big[ \prod_{\ell\in T} g_{\ell}\Big] \int d\mu_{T}(\underline{s}) \det\big[ s_{b(f),b(f')}g_{(f,f')}\big]\;,
\end{split}
\end{equation}
where the first sum involves vertices with coordinates such that $\|i_{j} - i_{1}\| \leq L/3$ for all $j=2,\ldots, n$, while in the second sum at least one vertex is such that $\|i_{j} - i_{1}\|>L/3$. Observe that\footnote{This is due to the fact that, as vector in $\mathbb{R}^{2}$, $i_{j} - i_{1}$ has norm bounded by $L/\sqrt{2}$. The norm of this vector does not decrease after adding $n_{1} e_{1} L + n_{2} e_{2} L$ for integer $n_{1}, n_{2}$.} this also implies $\|i_{j} - i_{1}\|_{L} > L/3$. Consider the second term in (\ref{eq:splitexp}). Let $T'\subseteq T$ be the subtree of $T$ which connects $i_{1}$ to $i_{j}$. Consider the product of propagators associated with $T'$. Denoting by $(i_{1}, i_{f_{2}}, \ldots, i_{f_{r}}, i_{j})$ the path connecting $i_{1}$ to $i_{j}$ in the tree, we have the following:
\begin{equation}
\| i_{1} - i_{f_{2}} \|_{\beta,L} + \ldots + \| i_{f_{r}} - i_{f} \|_{\beta,L} \geq \|i_{j} - i_{1}\|_{L} \geq L/3\;.
\end{equation}
Combined with the exponential decay (\ref{eq:decper}) of the two-point function, this allows to extract an exponentially small factor after summing over all vertex coordinates compatible with the constraint in the sum. Repeating the argument used to prove the analiticity of the free energy discussed in the previous paragraph, we obtain:
\begin{equation}
\Big| \sum_{T} \alpha_{T} \sum^{**}_{\substack{i_{1},\ldots, i_{n} \\ i_{1} \in \mathcal{C}(L/2,L/2)}} \int_{[0,\beta]^{n}} d\underline{t}\,\Big[ \prod_{\ell\in T} g_{\ell}\Big] \int d\mu_{T}(\underline{s}) \det\big[ s_{b(f),b(f')}g_{(f,f')}\big] \Big| \leq C^{n} n! \beta  e^{-cL}\;.
\end{equation}
Consider now the first term in (\ref{eq:splitexp}). Observe that the constraint in the sum implies that $\|i_{f} - i_{f'}\| \leq 2L/3$ for all branches of the tree. The idea is to replace every propagator in the sum with its infinite volume limit, $g_{\infty}$, and controlling the error using the Poisson summation formula (\ref{eq:poisson2pt}). By (\ref{eq:poisson2pt}), we have:
\begin{equation}
g((t,i), (t',i')) = g_{\infty}((t,i), (t',i')) + r((t,i), (t',i'))
\end{equation}
where, if $\|i - i'\| \leq 2L/3$,
\begin{equation}\label{eq:rest}
\| r((t,i), (t',i')) \| \leq Ce^{-(c/6)L - (c/6)\| (t,i) - (t',i') \|_{\beta,L}}\;. 
\end{equation}
Next, let us define the interpolating propagator as, for $\lambda \in [0;1]$:
\begin{equation}
g_{\lambda}((t,i), (t',i')) = \lambda g((t,i), (t',i')) + (1-\lambda) g_{\infty}((t,i), (t',i'))\;.
\end{equation}
Let:
\begin{equation}
f_{T}(\lambda) := \sum^{**}_{\substack{i_{1},\ldots, i_{n} \\ i_{1} \in \mathcal{C}(L/2,L/2)}} \int_{[0,\beta]^{n}} d\underline{t}\,\Big[ \prod_{\ell\in T} g_{\lambda;\ell}\Big] \int d\mu_{T}(\underline{s}) \det\big[ s_{b(f),b(f')}g_{\lambda;(f,f')}\big]\;;
\end{equation}
for $\lambda = 1$, this is our starting point, computed with a certain choice of holonomies. For $\lambda = 0$, the function $f_{T}(\lambda)$ does not depend on the boundary conditions. We are thus interested in quantifying the difference:
\begin{equation}\label{eq:intf}
f_{T}(1) - f_{T}(0) = \int_{0}^{1}d\lambda\, \frac{d}{d\lambda} f_{T}(\lambda)\;.
\end{equation}
Whenever the derivative hits the product of propagators on the spanning tree, we get:
\begin{equation}
\sum_{\ell \in T} \Big[\prod_{\ell'<\ell} g_{\lambda;\ell}\Big] r_{\ell} \Big[ \prod_{\ell'>\ell} g_{\lambda;\ell} \Big]\;.
\end{equation}
The sum over all space-time coordinates of this expression can be performed as before, using (\ref{eq:decper}), (\ref{eq:rest}). The only difference is the presence of the extra factor $e^{-(c/6)L}$ in the final estimate, coming from (\ref{eq:rest}). Suppose now that the derivative hits the determinant. To begin, observe preliminarily that, as for $g$ and $g_{\infty}$, the interpolating propagator $g_{\lambda}$ can also be represented as the sum of two terms admitting a Gram representation. To see this, we write:
\begin{equation}
g_{\lambda}((t,i), (t',i')) = \theta(t>t') A^{+}_{\lambda}((t,i), (t',i')) - \theta(t\leq t') A^{-}_{\lambda}((t,i), (t',i'))
\end{equation}
where:
\begin{equation}\label{eq:intA}
A^{\pm}_{\lambda}((t,i), (t',i')) = \lambda A^{\pm}((t,i), (t',i')) + (1-\lambda) A_{\infty}^{\pm}((t,i), (t',i'))\;.
\end{equation}
Following \cite[Lemma 4.1]{PS} or \cite[Lemma 10]{DRS} (for a more general setting), we have:
\begin{equation}
A^{\pm}((t,i), (t',i')) = \big( u^{\pm}_{(t,i)}, w^{\pm}_{(t',i')} \big)_{\mathfrak{h}}\;,\qquad A_{\infty}^{\pm}((t,i), (t',i')) = \big( u^{\pm}_{\infty,(t,i)}, w^{\pm}_{\infty,(t',i')} \big)_{\mathfrak{h}_{\infty}}
\end{equation}
for vectors of unit norm in suitable Hilbert spaces $\mathfrak{h}$ and $\mathfrak{h}_{\infty}$, whose definitions will not be important in what follows. Then, the interpolating functions (\ref{eq:intA}) inherit the Gram representation:
\begin{equation}
A_{\lambda}^{\pm}((t,i), (t',i')) = \Big( \sqrt{\lambda} u^{\pm}_{(t,i)} \oplus \sqrt{1-\lambda} u^{\pm}_{\infty,(t,i)}\, , \sqrt{\lambda} w^{\pm}_{(t',i')} \oplus \sqrt{1-\lambda} w^{\pm}_{\infty,(t',i')}\Big)_{\mathfrak{h}\oplus \mathfrak{h}_{\infty}}\;.
\end{equation}
From now on, one can proceed as in \cite[Theorem 2.4]{PS} to show that:
\begin{equation}\label{eq:GHint}
\Big|\det\big[ s_{b(f),b(f')}g_{\lambda;(f,f')}\big]\Big|\leq C^{n}\;.
\end{equation}
Next, we have to estimate the derivative of the determinant.  Let us denote by $G_{\lambda}(\underline{s})$ the argument of the determinant. By Jacobi's formula:
\begin{equation}
\frac{d}{d\lambda} \det G_{\lambda}(\underline{s}) = \tr \Big( \text{adj}(G_{\lambda}(\underline{s})) \frac{d}{d\lambda} G_{\lambda}(\underline{s})\Big)\;,
\end{equation}
where $\text{adj}(\cdot)$ denotes the adjugate matrix, that is the transpose of the matrix of the cofactors. We have:
\begin{equation}
\Big(\frac{d}{d\lambda} G_{\lambda}(\underline{s})\Big)_{f,f'} = s_{b(f), b(f')} r_{f,f'}\;,
\end{equation}
with $r_{f,f'}$ satisfying the bound (\ref{eq:rest}). Therefore:
\begin{equation}
\Big|\Big(\frac{d}{d\lambda} G_{\lambda}(\underline{s})\Big)_{f,f'}\Big| \leq Ce^{-(c/6)L}\;.
\end{equation}
Concerning $\text{adj}(G_{\lambda}(\underline{s}))$, its matrix entries are, up to a sign, the determinants of the minors of the original matrix $G_{\lambda}(\underline{s})$ after deleting a column and a row. They can all be estimated as in (\ref{eq:GHint}). All in all, we have:
\begin{equation}
\Big|\frac{d}{d\lambda} \det G_{\lambda}(\underline{s})\Big|\leq n^{2}C^{n}e^{-(c/6)L}\;.
\end{equation}
We are now in the position to estimate (\ref{eq:intf}). From the above considerations we easily get, for suitable constants $C,c>0$:
\begin{equation}
\begin{split}
\Big|f_{T}(1) - f_{T}(0)\Big| &= \int_{0}^{1}d\lambda\, \Big|\frac{d}{d\lambda} f_{T}(\lambda)\Big| \\
&\leq n! \beta C^{n} e^{-cL}\;.
\end{split}
\end{equation}
Coming back to (\ref{eq:splitexp}), we proved that:
\begin{equation}\label{eq:infty}
\begin{split}
&\frac{1}{\beta}\sum_{i_{1},\ldots, i_{n}} \int_{[0,\beta]^{n}} d\underline{t}\, \langle {\bf T} \gamma_{t_{1}}(V_{i_{1}})\,;\, \gamma_{t_{2}}(V_{i_{2}})\,;\, \cdots\,;\, \gamma_{t_{n}}(V_{i_{n}}) \rangle^{0}_{\beta,L} \\
&\quad = (L^{2} / |\mathcal{C}|) \frac{1}{\beta}\sum^{*}_{\substack{i_{1},\ldots, i_{n} \\ i_{1} \in \mathcal{C}(L/2,L/2)}} \int_{[0,\beta]^{n}} d\underline{t}\, \langle {\bf T} \gamma_{t_{1}}(V_{i_{1}})\,;\, \gamma_{t_{2}}(V_{i_{2}})\,;\, \cdots\,;\, \gamma_{t_{n}}(V_{i_{n}}) \rangle^{0}_{\beta,\infty} + {\mathfrak e}_{\beta,L}(n)
\end{split}
\end{equation}
where the first term does not depend on the holonomies, while the error term is bounded as:
\begin{equation}\label{eq:errinfty}
|{\mathfrak e}_{\beta,L}(n)| \leq C^{n} n! L^{2} e^{-cL}\;.
\end{equation}
Thus, from (\ref{eq:duhamel2}):
\begin{equation}\label{eq:diffen}
\begin{split}
&\log \tr_{\mathcal{F}_{L}} e^{-\beta H({\bf -1}, e^{i\theta}, e^{i\phi})} - \log \tr_{\mathcal{F}_{L}} e^{-\beta H({\bf -1}, e^{i\theta'}, e^{i\phi'})} \\
&\quad = \log \tr_{\mathcal{F}_{L}} e^{-\beta H_{0}({\bf -1}, e^{i\theta}, e^{i\phi})} - \log \tr_{\mathcal{F}_{L}} e^{-\beta H_{0}({\bf -1}, e^{i\theta'}, e^{i\phi'})}\\
&\qquad + \sum_{n\geq 1} \frac{(-U)^{n}}{n!} \int_{[0,\beta]^{n}} dt_{1}\cdots dt_{n} \Big( \langle {\bf T} \gamma_{t_{1}}(V)\,;\, \gamma_{t_{2}}(V)\,;\, \cdots\,;\, \gamma_{t_{n}}(V) \rangle^{0;\theta,\phi}_{\beta,L}\\&\qquad\qquad - \langle {\bf T} \gamma_{t_{1}}(V)\,;\, \gamma_{t_{2}}(V)\,;\, \cdots\,;\, \gamma_{t_{n}}(V) \rangle^{0;\theta',\phi'}_{\beta,L}\Big)
\end{split}
\end{equation}
where $\langle \cdot \rangle_{\beta,L}^{0;\theta,\phi}$ is the Gibbs state of the quadratic Hamiltonian $H_{0}({\bf -1}, \theta, \phi)$. By (\ref{eq:infty}), (\ref{eq:errinfty}), the sum in (\ref{eq:diffen}) can be estimated as:
\begin{equation}\label{eq:diff}
\begin{split}&
\Big|\sum_{n\geq 1} \frac{(-U)^{n}}{n!} \int_{[0,\beta]^{n}} dt_{1}\cdots dt_{n} \Big( \langle {\bf T} \gamma_{t_{1}}(V)\,;\, \gamma_{t_{2}}(V)\,;\, \cdots\,;\, \gamma_{t_{n}}(V) \rangle^{0;\theta,\phi}_{\beta,L}\\&\qquad\qquad - \langle {\bf T} \gamma_{t_{1}}(V)\,;\, \gamma_{t_{2}}(V)\,;\, \cdots\,;\, \gamma_{t_{n}}(V) \rangle^{0;\theta',\phi'}_{\beta,L}\Big)\Big| \leq CU\beta L^{2}e^{-cL}\;,
\end{split}
\end{equation}
where we used that the main term in \eqref{eq:infty} cancels in the difference. Furthermore, the difference of the non-interacting free energies has been studied, in the limit $\beta \to \infty$, in \hyperref[sec:expdeg]{Subsection \ref*{sec:expdeg}}:
\begin{equation}\label{eq:nnint}
\Big|\lim_{\beta \to \infty} \frac{1}{\beta} \Big(\log \tr_{\mathcal{F}_{L}} e^{-\beta H_{0}({\bf -1}, \theta,\phi)} - \log \tr_{\mathcal{F}_{L}} e^{-\beta H_{0}({\bf -1}, \theta',\phi')}\Big)\Big| \leq CL^{2} e^{-cL}\;.
\end{equation}
In conclusion, the exponential closeness of the many-body ground state energies, Eq. (\ref{eq:expdeg}), follows from (\ref{eq:diff}) (in the $\beta \to \infty$ limit) and from (\ref{eq:nnint}). This concludes the proof of part of Theorem \ref{thm1} for weakly interacting fermions.
\subsection{Proof of \autoref{thm1} - \autoref{thm1.3}}
The proof of Eqs. (\ref{eq:chess})-(\ref{bound2}) relies on reflection positivity, following the original insight of Lieb \cite{Lieb}. The lower bound for the energetic cost of the monopoles' excitations follows from the chessboard estimate, adapting \cite[Proposition 3.10]{GP}.
\subsubsection{Reflection positivity}\label{sec:RP}
Let us denote by $P$ a hyperplane cutting perpendicularly the torus $\Gamma_{L}$ in two halves as in \autoref{fig:cut}.
\begin{figure}
\begin{tikzpicture}[scale=1.6]
\begin{axis}[axis equal image,
        axis line style={draw=none},
        tick style={draw=none},
        xmax=18,ymax=22,zmax=7,xmin=-15,ymin=-20,zmin=-7,
        ticks=none,
        clip bounding box=upper bound,
        colormap/blackwhite,
        samples=23,
        view={45}{45}]
        \begin{pgfonlayer}{background layer}
            \addplot3[domain=0:360,
                    y domain=90:270,
                    surf,
                    z buffer=sort,
                    color = white!30,
                    mesh/ordering=y varies,
                    shader=flat, 
                    draw=black,
                    line width=0.1pt
                ]
                ({(12 + 3 * cos(x)) * cos(y)} ,
                {(12 + 3 * cos(x)) * sin(y)},
                {3 * sin(x)});
        \end{pgfonlayer} 
        \begin{pgfonlayer}{main}
            \draw[fill=gray, fill opacity=0.4, draw=black, line width=0.1pt] (0,18,-7) -- (0,18,7) -- (0,-20,7)  -- (0,-20,-7) -- (0,18,-7);
        \end{pgfonlayer}

        \begin{pgfonlayer}{foreground layer}
            \addplot3[domain=0:360,y 
                domain=-90:90,
                surf,
                z buffer=sort,
                color=gray, 
                mesh/ordering=y varies,
                shader=flat, 
                draw=black,
                line width=0.1pt]
                ({(12 + 3 * cos(x)) * cos(y)} ,
                {(12 + 3 * cos(x)) * sin(y)},
                {3 * sin(x)});
            \addplot3[line width = 0.1pt,domain=0:360,surf, z buffer=sort, black,samples=71, samples y=1]
                (0,
                {(12 + 3 * cos(x)) * sin(-90)},
                {3 * sin(x)});
            \addplot3[line width=0.1pt,domain=0:360,surf, z buffer=sort, black,samples=71, samples y=1]
                (0,
                {(12 + 3 * cos(x)) * sin(90)},
                {3 * sin(x)});

            \node[] at (21,0,3) {$\Gamma_L^\text{r}$};
            \node[] at (-12,0,10) {$\Gamma_L^\text{l}$};
        \end{pgfonlayer}
    \end{axis}
\begin{scope}[xshift =6cm,yshift=1cm, scale =0.4]
\draw[black, ->>>>-] (1,1) -- (2,1);
\draw[black, ->>>>-] (0,0) -- (0,1);
\draw[black, ->>>>-] (0,0) -- (1,0);
\draw[black, ->>>>-] (1,0) -- (1,1);
\draw[black, ->>>>-] (1,0) -- (2,0);
\draw[black, ->>>>-] (2,0) -- (2,1);
\draw[black, ->>>>-] (2,0) -- (2,1);
\draw[black, ->>>>-] (3,0) -- (4,0);
\draw[black, ->>>>-] (3,0) -- (3,1);
\draw[black, ->>>>-] (4,0) -- (5,0);
\draw[black, ->>>>-] (2,0) -- (3,0); 
\draw[black, ->>>>-] (4,0) -- (4,1);
\draw[black, ->>>>-] (5,0) -- (6,0);
\draw[black, ->>>>-] (5,0) -- (5,1);
\draw[black, ->>>>-] (6,0) -- (7,0);
\draw[black, ->>>>-] (6,0) -- (6,1);
\draw[black, ->>>>-] (7,0) -- (7,1);

\draw[black, ->>>>-] (0,1) -- (1,1);
\draw[black, ->>>>-] (0,1) -- (0,2);
\draw[black, ->>>>-] (1,1) -- (1,2);
\draw[black, ->>>>-] (2,1) -- (2,2);

\draw[black, ->>>>-] (2,1) -- (3,1);
\draw[black, ->>>>-] (3,1) -- (3,2);
\draw[black, ->>>>-] (3,1) -- (4,1);
\draw[black, ->>>>-] (4,1) -- (4,2);

\draw[black, ->>>>-] (4,1) -- (5,1);
\draw[black, ->>>>-] (5,1) -- (5,2);
\draw[black, ->>>>-] (5,1) -- (6,1);
\draw[black, ->>>>-] (6,1) -- (6,2);

\draw[black, ->>>>-] (6,1) -- (7,1);
\draw[black, ->>>>-] (7,1) -- (7,2);
\draw[black, ->>>>-] (1,2) -- (2,2);
\draw[black, ->>>>-] (1,2) -- (1,3);

\draw[black, ->>>>-] (0,2) -- (1,2);
\draw[black, ->>>>-] (0,2) -- (0,3);

\draw[black, ->>>>-] (2,2) -- (3,2);
\draw[black, ->>>>-] (2,2) -- (2,3);
\draw[black, ->>>>-] (3,2) -- (4,2);
\draw[black, ->>>>-] (3,2) -- (3,3);
\draw[black, ->>>>-] (4,2) -- (5,2);
\draw[black, ->>>>-] (4,2) -- (4,3);
\draw[black, ->>>>-] (5,2) -- (6,2);
\draw[black, ->>>>-] (5,2) -- (5,3);
\draw[black, ->>>>-] (5,2) -- (6,2);
\draw[black, ->>>>-] (5,2) -- (5,3);
\draw[black, ->>>>-] (6,2) -- (7,2);
\draw[black, ->>>>-] (6,2) -- (6,3);
\draw[black, ->>>>-] (4,2) -- (5,2);
\draw[black, ->>>>-] (5,3) -- (6,3);
\draw[black, ->>>>-] (7,2) -- (7,3);
\draw[black, ->>>>-] (0,3) -- (1,3);
\draw[black, ->>>>-] (0,3) -- (0,4);
\draw[black, ->>>>-] (1,3) -- (2,3);
\draw[black, ->>>>-] (2,3) -- (3,3);
\draw[black, ->>>>-] (3,3) -- (4,3);
\draw[black, ->>>>-] (4,3) -- (5,3);
\draw[black, ->>>>-] (0,4) -- (0,5);
\draw[black, ->>>>-] (5,3) -- (6,3);
\draw[black, ->>>>-] (6,3) -- (7,3);
\draw[black, ->>>>-] (0,4) -- (0,5);
\draw[black, ->>>>-] (0,4) -- (1,4);
\draw[black, ->>>>-] (1,4) -- (2,4);
\draw[black, ->>>>-] (2,4) -- (3,4);
\draw[black, ->>>>-] (3,4) -- (4,4);
\draw[black, ->>>>-] (4,4) -- (5,4);
\draw[black, ->>>>-] (0,5) -- (0,6);
\draw[black, ->>>>-] (5,4) -- (6,4);
\draw[black, ->>>>-] (6,4) -- (7,4);
\draw[black, ->>>>-] (0,5) -- (1,5);
\draw[black, ->>>>-] (1,5) -- (2,5);
\draw[black, ->>>>-] (2,5) -- (3,5);
\draw[black, ->>>>-] (3,5) -- (4,5);
\draw[black, ->>>>-] (4,5) -- (5,5);
\draw[black, ->>>>-] (0,6) -- (0,7);
\draw[black, ->>>>-] (5,5) -- (6,5);
\draw[black, ->>>>-] (6,5) -- (7,5);
\draw[black, ->>>>-] (0,6) -- (1,6);
\draw[black, ->>>>-] (1,6) -- (2,6);
\draw[black, ->>>>-] (2,6) -- (3,6);
\draw[black, ->>>>-] (3,6) -- (4,6);
\draw[black, ->>>>-] (4,6) -- (5,6);
\draw[black, ->>>>-] (7,6) -- (7,7);
\draw[black, ->>>>-] (5,6) -- (6,6);
\draw[black, ->>>>-] (6,6) -- (7,6);
\draw[black, ->>>>-] (0,7) -- (1,7);
\draw[black, ->>>>-] (1,7) -- (2,7);
\draw[black, ->>>>-] (2,7) -- (3,7);
\draw[black, ->>>>-] (3,7) -- (4,7);
\draw[black, ->>>>-] (4,7) -- (5,7);
\draw[black, ->>>>-] (7,5) -- (7,6);
\draw[black, ->>>>-] (5,7) -- (6,7);
\draw[black, ->>>>-] (6,7) -- (7,7);
\draw[black, ->>>>-] (7,3) -- (7,4);
\draw[black, ->>>>-] (7,4) -- (7,5);
\draw[black, ->>>>-] (1,4) -- (1,5);
\draw[black, ->>>>-] (1,3) -- (1,4);
\draw[black, ->>>>-] (1,5) -- (1,6);
\draw[black, ->>>>-] (1,6) -- (1,7);
\draw[black, ->>>>-] (2,4) -- (2,5);
\draw[black, ->>>>-] (2,3) -- (2,4);
\draw[black, ->>>>-] (2,5) -- (2,6);
\draw[black, ->>>>-] (2,6) -- (2,7);
\draw[black, ->>>>-] (3,4) -- (3,5);
\draw[black, ->>>>-] (3,3) -- (3,4);
\draw[black, ->>>>-] (3,5) -- (3,6);
\draw[black, ->>>>-] (3,6) -- (3,7);
\draw[black, ->>>>-] (4,4) -- (4,5);
\draw[black, ->>>>-] (4,3) -- (4,4);
\draw[black, ->>>>-] (4,5) -- (4,6);
\draw[black, ->>>>-] (4,6) -- (4,7);
\draw[black, ->>>>-] (5,4) -- (5,5);
\draw[black, ->>>>-] (5,3) -- (5,4);
\draw[black, ->>>>-] (5,5) -- (5,6);
\draw[black, ->>>>-] (5,6) -- (5,7);
\draw[black, ->>>>-] (6,4) -- (6,5);
\draw[black, ->>>>-] (6,3) -- (6,4);
\draw[black, ->>>>-] (6,5) -- (6,6);
\draw[black, ->>>>-] (6,6) -- (6,7);
\draw[black, dotted, ->-] (-0.5,-0.5) -- (7.5,-0.5);
\draw[black, dotted, ->-] (-0.5,7.5) -- (7.5,7.5);
\draw[black, dotted, ->>-] (-0.5,-0.5) -- (-0.5,7.5);
\draw[black, dotted, ->>-] (7.5,-0.5) -- (7.5,7.5);
\draw[black, ->>>>-] (0, -0.5) --(0,0);
\draw[black, ->>>>-] (-0.5,0) -- (0,0);
\draw[black, ->>>>-] (1, -0.5) --(1,0);
\draw[black, ->>>>-] (-0.5,1) -- (0,1);
\draw[black, ->>>>-] (2, -0.5) --(2,0);
\draw[black, ->>>>-] (-0.5,2) -- (0,2);
\draw[black, ->>>>-] (3, -0.5) --(3,0);
\draw[black, ->>>>-] (-0.5,3) -- (0,3);
\draw[black, ->>>>-] (4, -0.5) --(4,0);
\draw[black, ->>>>-] (-0.5,4) -- (0,4);
\draw[black, ->>>>-] (5, -0.5) --(5,0);
\draw[black, ->>>>-] (-0.5,6) -- (0,6);
\draw[black, ->>>>-] (6,-0.5) -- (6,0);
\draw[black, ->>>>-] (-0.5,5) -- (0,5);
\draw[black, ->>>>-] (7, -0.5) --(7,0);
\draw[black, ->>>>-] (-0.5,7) -- (0,7);
\draw[fill = black,  fill opacity =0.6] (1.5,-0.5) -- (5.5,-0.5) -- (5.5,7.5) -- (1.5,7.5) -- (1.5, -0.5);
\draw[black, ->>>-] (0,7) -- (0,7.5);
\draw[black, ->>>-] (7,7) -- (7.5,7);
\draw[black, ->>>-] (1,7) -- (1,7.5);
\draw[black, ->>>-] (7,6) -- (7.5,6);
\draw[black, ->>>-] (2,7) -- (2,7.5);
\draw[black, ->>>-] (7,5) -- (7.5,5);
\draw[black, ->>>-] (3,7) -- (3,7.5);
\draw[black, ->>>-] (7,4) -- (7.5,4);
\draw[black, ->>>-] (4,7) -- (4,7.5);
\draw[black, ->>>-] (7,0) -- (7.5,0);
\draw[black, ->>>-] (5,7) -- (5,7.5);
\draw[black, ->>>-] (7,3) -- (7.5,3);
\draw[black, ->>>-] (6,7) -- (6,7.5);
\draw[black, ->>>-] (7,2) -- (7.5,2);
\draw[black, ->>>-] (7,7) -- (7,7.5);
\draw[black, ->>>-] (7,1) -- (7.5,1);

\draw[black,fill=white] (0,0) circle (.1 cm);
\draw[black,fill=white] (1,1) circle (.1 cm);
\draw[black,fill=white] (3,3) circle (.1 cm);
\draw[black,fill=white] (2,2) circle (.1 cm);
\draw[black,fill=white] (4,4) circle (.1 cm);
\draw[black,fill=white] (5,5) circle (.1 cm);
\draw[black,fill=white] (6,6) circle (.1 cm);
\draw[black,fill=white] (7,7) circle (.1 cm);
\draw[black,fill=white] (3,5) circle (.1 cm);
\draw[black,fill=white](2,0) circle (.1 cm);
\draw[black,fill=white] (0,2) circle (.1 cm);
\draw[black,fill=white] (4,0) circle (.1 cm);
\draw[black,fill=white] (0,4) circle (.1 cm);
\draw[black,fill=white] (6,0) circle (.1 cm);
\draw[black,fill=white] (0,6) circle (.1 cm);
\draw[black,fill=white] (1,3) circle (.1 cm);
\draw[black,fill=white] (1,5) circle (.1 cm);
\draw[black,fill=white] (1,7) circle (.1 cm);
\draw[black,fill=white] (7,1) circle (.1 cm);
\draw[black,fill=white] (5,1) circle (.1 cm);
\draw[black,fill=white] (3,1) circle (.1 cm);

\draw[black,fill=white] (4,2) circle (.1 cm);
\draw[black,fill=white] (6,2) circle (.1 cm);
\draw[black,fill=white] (2,6) circle (.1 cm);
\draw[black,fill=white] (2,4) circle (.1 cm);
\draw[black,fill=white] (6,4) circle (.1 cm);
\draw[black,fill=white] (5,3) circle (.1 cm);
\draw[black,fill=white] (7,3) circle (.1 cm);
\draw[black,fill=white] (3,5) circle (.1 cm);
\draw[black,fill=white] (3,7) circle (.1 cm);
\draw[black,fill=white] (7,5) circle (.1 cm);
\draw[black,fill=white] (4,6) circle (.1 cm);
\draw[black,fill=white] (5,7) circle (.1 cm);

\draw[black,fill=black] (1,0) circle (.1 cm);
\draw[black,fill=black] (0,1) circle (.1 cm);
\draw[black,fill=black] (3,0) circle (.1 cm);
\draw[black,fill=black] (0,3) circle (.1 cm);
\draw[black,fill=blue] (5,0) circle (.1 cm);
\draw[blue,fill=black] (0,5) circle (.1 cm);
\draw[black,fill=black] (7,0) circle (.1 cm);
\draw[black,fill=black] (0,7) circle (.1 cm);
\draw[black,fill=black] (1,0) circle (.1 cm);
\draw[black,fill=black] (0,1) circle (.1 cm);
\draw[black,fill=black] (3,0) circle (.1 cm);
\draw[black,fill=black] (0,3) circle (.1 cm);
\draw[black,fill=black] (5,0) circle (.1 cm);
\draw[black,fill=black] (0,5) circle (.1 cm);
\draw[black,fill=black] (7,0) circle (.1 cm);
\draw[black,fill=black] (0,7) circle (.1 cm);

\draw[black,fill=black] (1,2) circle (.1 cm);
\draw[black,fill=black] (2,1) circle (.1 cm);
\draw[black,fill=black] (3,2) circle (.1 cm);
\draw[black,fill=black] (2,3) circle (.1 cm);
\draw[black,fill=black] (5,2) circle (.1 cm);
\draw[black,fill=black] (2,5) circle (.1 cm);
\draw[black,fill=black] (7,2) circle (.1 cm);
\draw[black,fill=black] (2,7) circle (.1 cm);
\draw[black,fill=black] (1,4) circle (.1 cm);
\draw[black,fill=black] (4,1) circle (.1 cm);
\draw[black,fill=black] (3,4) circle (.1 cm);
\draw[black,fill=black] (4,3) circle (.1 cm);
\draw[black,fill=black] (5,4) circle (.1 cm);
\draw[black,fill=black] (4,5) circle (.1 cm);
\draw[black,fill=black] (7,4) circle (.1 cm);
\draw[black,fill=black] (4,7) circle (.1 cm);

\draw[black,fill=black] (1,6) circle (.1 cm);
\draw[black,fill=black] (6,1) circle (.1 cm);
\draw[black,fill=black] (3,6) circle (.1 cm);
\draw[black,fill=black] (6,3) circle (.1 cm);
\draw[black,fill=black] (5,6) circle (.1 cm);
\draw[black,fill=black] (6,5) circle (.1 cm);
\draw[black,fill=black] (7,6) circle (.1 cm);
\draw[black,fill=black] (6,7) circle (.1 cm);

\node[below] (a) at (3.5,-0.7) {$\Gamma_L^{\text{r}}$};
\node[below] (b) at (0.5,-0.5) {$\Gamma_L^\text{l}$};
\end{scope}
\end{tikzpicture}
 \caption{Graphical representation of the cut torus.}
    \label{fig:cut}
\end{figure}
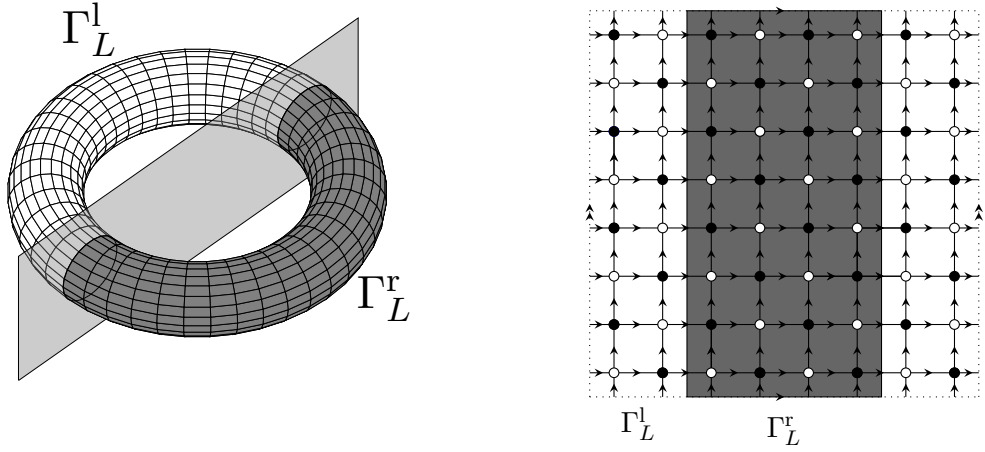
Let us denote by $\Gamma^{\text{l}}_{L}$ and $\Gamma_{L}^{\text{r}}$ the left and right portion of the lattice $\Gamma_{L}$, with respect to the cut introduced by the hyperplane. Let $\theta(i)$ be the geometrical reflection of $i$ across the hyperplane $P$.

We define the operator that implements the reflection across $P$. First of all, let $\mathcal{R}$ be the unitary operator implementing the geometric reflection on the fermionic algebra,
\begin{equation}
\mathcal{R}^{*} a^{\pm}_{i,\eta} \mathcal{R} = a^{\pm}_{\theta(i ),\eta},
\end{equation}
and let $\tau$ be the unitary operator implementing the particle-hole transformation,
\begin{equation}
\tau^{*} a^{\pm}_{i,\eta} \tau = a^{\mp}_{i,\eta}\;.
\end{equation}
\begin{definition}[Reflection operator]\label{refl}
The reflection operator $\Theta$ is the antilinear, unitary operator acting on the fermionic algebra as:
\begin{equation}
\Theta(\mathcal{O}) = \overline{\tau^{*} \mathcal{R}^{*} \mathcal{O} \mathcal{R} \tau}
\end{equation}
where the complex conjugation acts on the coefficients of the fermionic monomials.
\end{definition}
\begin{remark} This is the notion of reflection operator of \emph{\cite{Lieb}}. In the case of $\mathbb{Z}_{2}$ gauge theory, the complex conjugation could actually be dropped (all terms in the Hamiltonian are real in the sense that the first quantized hamiltonian is real).
\end{remark}
Let now
\begin{equation}
    M_{\text{l/r}} = m \sum_{i \in \Gamma_L^{\text{l/r}}} (-1)^{i_1+i_2} n_{i}
\end{equation}
be the left and right mass terms; we recall that $n_{i} = \sum_{\eta= \uparrow, \downarrow}  n_{i, \eta}$. The left and right mass terms are connected by the reflection operator, as the next lemma shows.
\begin{lemma}
    $\Theta(M_{\emph{\text{l/r}}}) = M_{\emph{\text{r/l}}}$.
\end{lemma}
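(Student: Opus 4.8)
The plan is to compute $\Theta(M_{\text{l}})$ directly from \autoref{refl}, tracking the action of its three ingredients—the geometric reflection $\mathcal{R}$, the particle-hole transformation $\tau$, and the complex conjugation—on the number operators, and to check that the two sign flips generated by $\mathcal{R}$ and $\tau$ cancel against each other. Throughout I write $n_{i,\eta}=a^+_{i,\eta}a^-_{i,\eta}$ and $n_i=\sum_\eta n_{i,\eta}$, and I use that the reflection $\theta$ is bond-centered (its axis lies between two columns of sites), so that it maps $\Gamma_L^{\text{l}}$ bijectively onto $\Gamma_L^{\text{r}}$. First I would apply $\mathcal{R}$: since $\mathcal{R}^* a^\pm_{i,\eta}\mathcal{R}=a^\pm_{\theta(i),\eta}$, conjugation gives $\mathcal{R}^* n_{i,\eta}\mathcal{R}=n_{\theta(i),\eta}$, hence
\begin{equation*}
  \mathcal{R}^* M_{\text{l}}\,\mathcal{R}=m\sum_{i\in\Gamma_L^{\text{l}}}(-1)^{i_1+i_2}\,n_{\theta(i)}.
\end{equation*}
Next I apply $\tau$: from $\tau^* a^\pm_{i,\eta}\tau=a^\mp_{i,\eta}$ and the CAR \eqref{CAR} one gets $\tau^* n_{i,\eta}\tau=a^-_{i,\eta}a^+_{i,\eta}=1-n_{i,\eta}$, so that
\begin{equation*}
  \tau^*\mathcal{R}^* M_{\text{l}}\,\mathcal{R}\,\tau
  =m\sum_{i\in\Gamma_L^{\text{l}}}(-1)^{i_1+i_2}\sum_{\eta}\big(1-n_{\theta(i),\eta}\big).
\end{equation*}

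The constant piece equals $2m\sum_{i\in\Gamma_L^{\text{l}}}(-1)^{i_1+i_2}$, and it vanishes because the left half consists of complete columns of even length $L$, each carrying equally many $A$- and $B$-sublattice sites (this is where $L$ even enters). I am thus left with $-m\sum_{i\in\Gamma_L^{\text{l}}}(-1)^{i_1+i_2}n_{\theta(i)}$, which carries one minus sign coming from the particle-hole step. I then change variables to $j=\theta(i)\in\Gamma_L^{\text{r}}$. The crucial point is that a bond-centered reflection interchanges the two sublattices: its axis sits at a half-integer coordinate, so $(-1)^{\theta(j)_1+\theta(j)_2}=-(-1)^{j_1+j_2}$, i.e. $(-1)^{i_1+i_2}=-(-1)^{j_1+j_2}$. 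This produces a second minus sign, and the two cancel, giving
\begin{equation*}
  -m\sum_{j\in\Gamma_L^{\text{r}}}\big(-(-1)^{j_1+j_2}\big)n_{j}
  =m\sum_{j\in\Gamma_L^{\text{r}}}(-1)^{j_1+j_2}n_{j}=M_{\text{r}}.
\end{equation*}
Since $M_{\text{r}}$ has real coefficients, the final complex conjugation acts trivially and $\Theta(M_{\text{l}})=M_{\text{r}}$; the reverse identity follows by exchanging the roles of left and right.

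I expect the only delicate step to be the bookkeeping of the two sign flips together with the vanishing of the constant term. The conceptual heart—and precisely the reason a staggered mass is compatible with reflection positivity—is that the sublattice swap induced by the bond-centered reflection is exactly compensated by the sign change produced by the particle-hole transformation built into $\Theta$: had either been absent, one would obtain $-M_{\text{r}}$ plus a nonvanishing constant, spoiling the reflection symmetry. Making the position of the hyperplane $P$ and the parity convention explicit (so that the half-integer axis and the even column length are correctly invoked) is the one place that requires care.
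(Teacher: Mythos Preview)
Your proof is correct and follows essentially the same route as the paper's: compute $\Theta(n_{i,\eta})=1-n_{\theta(i),\eta}$, observe that the constant term vanishes because $\Gamma_L^{\text{l}}$ contains equally many sites of each sublattice, and then use that the bond-centered reflection swaps sublattices so that the sign from particle-hole cancels against the sign from the staggering. The only cosmetic differences are that the paper combines the $\mathcal{R}$ and $\tau$ steps into a single computation of $\Theta(n_{i,\eta})$ and invokes $\Theta^2=\mathrm{id}$ (rather than left/right symmetry) for the reverse direction.
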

\begin{proof}
    Since $\Theta^2 = \mathrm{id}$, it is sufficient to prove that $\Theta(M_{\text{l}}) = M_{\text{r}}$. By construction, the reflection exchanges the two sublattice (white/black vertices) of the bipartition. Furthermore,
    \begin{equation}
        \Theta(n_{i,\eta}) =  a^-_{\theta(i),\eta} a^+_{\theta(i),\eta} = 1 - n_{\theta(i),\eta}.
    \end{equation}
Thus:
\begin{equation}
\begin{aligned}
    \Theta(M_{\text{l}}) &= m \sum_{i \in \Gamma_L^{\text{l}}} \sum_{\eta = \uparrow, \downarrow} (-1)^{i_1+i_2} (1-n_{\theta(i), \eta})\\
    &=2m \sum_{i \in \Gamma_L^{\text{l}}} (-1)^{i_1+i_2} + m \sum_{i \in \Gamma_L^{\text{l}}} \sum_{\eta=\uparrow, \downarrow} (-1)^{i_1+i_2+1} n_{\theta(i), \eta}.
    \end{aligned}
\end{equation}
The first term vanishes because $\Gamma_L^{\text{l}}$ has an even number of vertices equally partitioned between the two sublattices. The second term equals $M_{\text{r}}$ since $(-1)^{\theta(i)_1} = -(-1)^{i_1}$ while $(-1)^{\theta(i)_2} =(-1)^{i_2}$ and $\theta(\Gamma_L^{\text{l}}) = \Gamma_L^{\text{r}}$.
\end{proof}
From here on, we proceed as in \cite[Section 3]{GP}.  We rewrite the Hamiltonian as:
\begin{equation}\label{eq:matterLR}
H(\bm{\sigma}) = H^{\text{r}}(\bm{\sigma}) + H^{\text{l}}(\bm{\sigma}) + V_{\text{int}}(\bm{\sigma})\;,
\end{equation}
where $H^{\text{r},\text{l}}$ collects fermionic monomials that are parametrized by space coordinates fully contained in $\Gamma_{L}^{\text{r},\text{l}}$, while $V_{\text{int}}$ takes into account the hopping terms that connect the two halves of the cut torus. In particular, the mass terms and the Hubbard interactions are all in $H^{\text{r},\text{l}}$. Without loss of generality, we can assume that $\sigma^z_{ij} = 1$ for all bonds intersecting the cut, namely
\begin{equation}\label{eq:Vint}
V_{\text{int}}(\bm{\sigma}) = -t \sum_{\substack{i
\in \Gamma^{\text{l}}_{L}:\\ i + e_{1} \in \Gamma_{L}^{\text{r}}}} (a^{+}_{i} a^{-}_{i + e_{1}} + a^{+}_{i+e_{1}} a^{-}_{i}).
\end{equation} 
If not, the Hamiltonian can be brought in this form after a gauge transformation.

The next result states a key estimate for the partition function. It is a direct adaptation of the argument of \cite{Lieb} to our setting, and we refer the reader to \cite[Section 3.1]{GP}, for the proof. In what follows, we denote by $Z_{\beta, L}(H^{\text{l}}, H^{\text{r}})$ the partition function corresponding to the Hamiltonian $H = H^{\text{l}} + H^{\text{r}} + V_{\text{int}}$, with $H^{\text{l}}, H^{\text{r}}$ in the left, resp. right fermionic algebras, and $V_{\text{int}}$ as in (\ref{eq:Vint}).
\begin{lemma}[Lieb]\label{lem:lieb} For any $\beta \geq 0$ and for $L = 2\ell$, the following inequality holds true:
\begin{equation}\label{eq:RP}
Z_{\beta,L}(H^{\textnormal{l}}, H^{\textnormal{r}})^{2} \leq Z_{\beta,L}(H^{\textnormal{l}}, \Theta(H^{\textnormal{l}})) Z_{\beta,L}(H^{\textnormal{r}}, \Theta(H^{\textnormal{r}}))\;. 
\end{equation}
\end{lemma}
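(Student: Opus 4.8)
The plan is to realize this inequality as an instance of reflection positivity in the sense of Dyson--Lieb--Simon, made concrete through a Trotter expansion of the Gibbs weight followed by a Cauchy--Schwarz inequality in the Hermitian form determined by $\Theta$. First I would apply the Lie--Trotter product formula to isolate the cross-cut coupling, writing
\[
e^{-\beta H} = \lim_{N\to\infty}\Big(e^{-\frac{\beta}{N}H^{\text{l}}}\,e^{-\frac{\beta}{N}H^{\text{r}}}\,e^{-\frac{\beta}{N}V_{\text{int}}}\Big)^{N},
\]
so that, after expanding each factor $e^{-\frac{\beta}{N}V_{\text{int}}}$, the trace defining $Z_{\beta,L}(H^{\text{l}},H^{\text{r}})$ becomes a convergent sum of monomials in which every operator is either strictly supported in $\Gamma^{\text{l}}_L$, strictly supported in $\Gamma^{\text{r}}_L$, or one of the cross-bond factors appearing in $V_{\text{int}}$.

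The second step is to put $V_{\text{int}}$ into reflection-covariant form. For a left site $i$ adjacent to the cut one computes directly from \autoref{refl}, together with $\mathcal{R}^* a^{\pm}_i\mathcal{R}=a^{\pm}_{\theta(i)}$, $\tau^* a^{\pm}_i\tau=a^{\mp}_i$ and the reality of the coefficients, that $\Theta(a^+_i)=a^-_{i+e_1}$ and $\Theta(a^-_i)=a^+_{i+e_1}$. Hence \eqref{eq:Vint} can be rewritten as
\[
V_{\text{int}} = -t\sum_{i}\big(a^+_i\,\Theta(a^+_i) + \Theta(a^-_i)\,a^-_i\big),
\]
a self-adjoint sum in which each cross-bond pairs a left operator with the $\Theta$-image of a left operator. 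It is precisely here that the particle--hole transformation $\tau$ built into $\Theta$ plays its role: it exchanges the creation/annihilation labels so that the hopping across the cut acquires the sign structure required for positivity, exactly as in Lieb's treatment of hopping fermions on a bipartite lattice.

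With this form in hand, the Trotterized partition function can be presented as a pairing $\langle F_{\text{l}},G_{\text{r}}\rangle$, where $F_{\text{l}}$ is assembled from $H^{\text{l}}$ and the left halves of the cross-bonds, $G_{\text{r}}$ from $H^{\text{r}}$ and their right halves, and the pairing is the reflection functional $A\mapsto \operatorname{Tr}(\cdots\Theta(A)\cdots A\cdots)$ generated by $V_{\text{int}}$. Reflection positivity is the statement that this Hermitian form is positive semidefinite, so that Cauchy--Schwarz yields
\[
|\langle F_{\text{l}},G_{\text{r}}\rangle|^2 \le \langle F_{\text{l}},F_{\text{l}}\rangle\,\langle G_{\text{r}},G_{\text{r}}\rangle .
\]
I would then identify $\langle F_{\text{l}},F_{\text{l}}\rangle$ with the partition function in which the right half-system is replaced by the reflected left half-system, namely $Z_{\beta,L}(H^{\text{l}},\Theta(H^{\text{l}}))$, and symmetrically $\langle G_{\text{r}},G_{\text{r}}\rangle=Z_{\beta,L}(H^{\text{r}},\Theta(H^{\text{r}}))$, which gives \eqref{eq:RP}.

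The main obstacle is the fermionic sign: left and right odd operators anticommute rather than commute across the cut, so the positivity of the reflection form is not the elementary spin-system statement. The clean way around this is either to restrict to the even subalgebra $\mathcal{A}^+_{\text{Fer}}$, on which reflected observables genuinely commute, or to represent the traces as Grassmann/Gaussian integrals in which $\Theta$ acts as complex conjugation composed with the reflection-and-particle-hole map, so that $\langle F_{\text{l}},F_{\text{l}}\rangle$ is manifestly of the form $\int|\,\cdot\,|^2\ge 0$. Checking that the cross-bond insertions respect this structure --- equivalently, that $V_{\text{int}}$ is genuinely of the reflection-covariant form above with the correct signs --- is the technical heart of the argument, and it is exactly the point established in \cite{Lieb} and adapted in \cite[Section 3.1]{GP}.
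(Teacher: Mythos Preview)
Your proposal is correct and follows the same route the paper points to: the paper does not give its own proof of this lemma but refers to \cite{Lieb} and \cite[Section 3.1]{GP}, and your sketch---Trotter expansion, rewriting the cross-cut hopping in the reflection-covariant form $a^+_i\Theta(a^+_i)+\Theta(a^-_i)a^-_i$ via the particle--hole piece of $\Theta$, then Cauchy--Schwarz in the reflection pairing---is exactly that argument. Your identification of the fermionic sign as the one nontrivial point, and the fact that it is resolved by the even-parity/particle-number-conservation structure of the nonvanishing terms (so that the reordering of left and right monomials produces no net sign), matches the mechanism in \cite{Lieb}.
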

Using the above lemma, one can infer with a standard argument \cite{Lieb} that the partition function of \eqref{hami} is maximized in a uniform $\pi$-flux background even in presence of a staggered mass term.

Reflection positivity can also be used to quantify the energetic excitations above the $\pi$-flux phase \cite{GP}. For holonomies $(1,1)$, $(1,-1)$, $(-1,1)$, $(-1,-1)$, Eqs. (\ref{eq:chess}), (\ref{eq:Delta}) are proven exactly as in \cite[Proposition 3.10]{GP}. It remains to compute $\Delta_{\beta,L}$ in the presence of the Hubbard interaction. This will be done in the next section. Later, we will comment about the case of more general holonomies.

\subsubsection{Lower bound on the monopole mass}\label{mmono}
As proven in \cite[Proposition 3.10]{GP}, the monopoles' mass can be bounded below in terms of the free energy of a suitable staggered flux configuration; this is achieved using the chessboard estimate, see \cite{Tasaki} for a review, and \cite[Lemma 3.9]{GP} for the application to our context. Let $\Phi^{*}$ denote the chessboard flux configuration depicted in \autoref{fig:chessboard}. Picking the gauge field ${\bm\sigma^*}$ with holonomies $(1,1)$ as in \autoref{fig:conf2}, we again chose a unit cell that makes the system translation invariant (observe that this is possible thanks to the fact that $L\in 4\mathbb{N}$). If the corresponding lattice is denoted $ \widetilde{\Gamma}^{\text{red}}_{L}$, the Hamiltonian reads
\begin{figure}
\centering
\begin{tikzpicture}[scale=0.6]
\draw (-0.5,-0.5) grid (7.5,7.5);
\draw[dotted, ->-] (-0.5,-0.5) -- (7.5,-0.5);
\draw[dotted, ->-] (-0.5,7.5) -- (7.5,7.5);
\draw[dotted, ->>-] (-0.5,-0.5) -- (-0.5,7.5);
\draw[dotted, ->>-] (7.5,-0.5) -- (7.5,7.5);
\draw[line width = 0.07 cm] (-0.5,0) -- (7.5,0);
\draw[line width = 0.07 cm] (-0.5,2) -- (7.5,2);
\draw[line width = 0.07 cm] (-0.5,4) -- (7.5,4);
\draw[line width = 0.07 cm] (-0.5,6) -- (7.5,6);

\draw[line width = 0.07 cm] (1,0) -- (1,-0.5);
\draw[line width = 0.07 cm] (2,0) -- (2,-0.5);
\draw[line width = 0.07 cm] (5,0) -- (5,-0.5);
\draw[line width = 0.07 cm] (6,0) -- (6,-0.5);

\draw[line width = 0.07 cm] (1,7.5) -- (1,7);
\draw[line width = 0.07 cm] (2,7.5) -- (2,7);
\draw[line width = 0.07 cm] (5,7.5) -- (5,7);
\draw[line width = 0.07 cm] (6,7.5) -- (6,7);
\draw[line width = 0.07 cm] (1,6) -- (1,5);
\draw[line width = 0.07 cm] (2,6) -- (2,5);
\draw[line width = 0.07 cm] (5,6) -- (5,5);
\draw[line width = 0.07 cm] (6,6) -- (6,5);
\draw[line width = 0.07 cm] (1,4) -- (1,3);
\draw[line width = 0.07 cm] (2,4) -- (2,3);
\draw[line width = 0.07 cm] (5,4) -- (5,3);
\draw[line width = 0.07 cm] (6,4) -- (6,3);
\draw[line width = 0.07 cm] (1,2) -- (1,1);
\draw[line width = 0.07 cm] (2,2) -- (2,1);
\draw[line width = 0.07 cm] (5,2) -- (5,1);
\draw[line width = 0.07 cm] (6,2) -- (6,1);
\draw[line width = 0.07 cm] (5,7.5) -- (5,7);
\draw[line width = 0.07 cm] (6,7.5) -- (6,7);

\draw[dotted, fill = gray, fill opacity = 0.1] (3.5,2.5)--(7.5,2.5) -- (7.5,4.5) -- (3.5,4.5) -- (3.5,2.5);
\draw[black,fill=white] (0,0) circle (.1 cm);
\draw[black,fill=white] (1,1) circle (.1 cm);
\draw[black,fill=white] (3,3) circle (.1 cm);
\draw[black,fill=white] (2,2) circle (.1 cm);
\draw[black,fill=white] (4,4) circle (.1 cm);
\draw[black,fill=white] (5,5) circle (.1 cm);
\draw[black,fill=white] (6,6) circle (.1 cm);
\draw[black,fill=white] (7,7) circle (.1 cm);
\draw[black,fill=white] (3,5) circle (.1 cm);
\draw[black,fill=white](2,0) circle (.1 cm);
\draw[black,fill=white] (0,2) circle (.1 cm);
\draw[black,fill=white] (4,0) circle (.1 cm);
\draw[black,fill=white] (0,4) circle (.1 cm);
\draw[black,fill=white] (6,0) circle (.1 cm);
\draw[black,fill=white] (0,6) circle (.1 cm);
\draw[black,fill=white] (1,3) circle (.1 cm);
\draw[black,fill=white] (1,5) circle (.1 cm);
\draw[black,fill=white] (1,7) circle (.1 cm);
\draw[black,fill=white] (7,1) circle (.1 cm);
\draw[black,fill=white] (5,1) circle (.1 cm);
\draw[black,fill=white] (3,1) circle (.1 cm);

\draw[black,fill=white] (4,2) circle (.1 cm);
\draw[black,fill=white] (6,2) circle (.1 cm);
\draw[black,fill=white] (2,6) circle (.1 cm);
\draw[black,fill=white] (2,4) circle (.1 cm);
\draw[black,fill=white] (6,4) circle (.1 cm);
\draw[black,fill=white] (5,3) circle (.1 cm);
\draw[black,fill=white] (7,3) circle (.1 cm);
\draw[black,fill=white] (3,5) circle (.1 cm);
\draw[black,fill=white] (3,7) circle (.1 cm);
\draw[black,fill=white] (7,5) circle (.1 cm);
\draw[black,fill=white] (4,6) circle (.1 cm);
\draw[black,fill=white] (5,7) circle (.1 cm);

\draw[black,fill=black] (1,0) circle (.1 cm);
\draw[black,fill=black] (0,1) circle (.1 cm);
\draw[black,fill=black] (3,0) circle (.1 cm);
\draw[black,fill=black] (0,3) circle (.1 cm);
\draw[black,fill=blue] (5,0) circle (.1 cm);
\draw[blue,fill=black] (0,5) circle (.1 cm);
\draw[black,fill=black] (7,0) circle (.1 cm);
\draw[black,fill=black] (0,7) circle (.1 cm);
\draw[black,fill=black] (1,0) circle (.1 cm);
\draw[black,fill=black] (0,1) circle (.1 cm);
\draw[black,fill=black] (3,0) circle (.1 cm);
\draw[black,fill=black] (0,3) circle (.1 cm);
\draw[black,fill=black] (5,0) circle (.1 cm);
\draw[black,fill=black] (0,5) circle (.1 cm);
\draw[black,fill=black] (7,0) circle (.1 cm);
\draw[black,fill=black] (0,7) circle (.1 cm);

\draw[black,fill=black] (1,2) circle (.1 cm);
\draw[black,fill=black] (2,1) circle (.1 cm);
\draw[black,fill=black] (3,2) circle (.1 cm);
\draw[black,fill=black] (2,3) circle (.1 cm);
\draw[black,fill=black] (5,2) circle (.1 cm);
\draw[black,fill=black] (2,5) circle (.1 cm);
\draw[black,fill=black] (7,2) circle (.1 cm);
\draw[black,fill=black] (2,7) circle (.1 cm);
\draw[black,fill=black] (1,4) circle (.1 cm);
\draw[black,fill=black] (4,1) circle (.1 cm);
\draw[black,fill=black] (3,4) circle (.1 cm);
\draw[black,fill=black] (4,3) circle (.1 cm);
\draw[black,fill=black] (5,4) circle (.1 cm);
\draw[black,fill=black] (4,5) circle (.1 cm);
\draw[black,fill=black] (7,4) circle (.1 cm);
\draw[black,fill=black] (4,7) circle (.1 cm);

\draw[black,fill=black] (1,6) circle (.1 cm);
\draw[black,fill=black] (6,1) circle (.1 cm);
\draw[black,fill=black] (3,6) circle (.1 cm);
\draw[black,fill=black] (6,3) circle (.1 cm);
\draw[black,fill=black] (5,6) circle (.1 cm);
\draw[black,fill=black] (6,5) circle (.1 cm);
\draw[black,fill=black] (7,6) circle (.1 cm);
\draw[black,fill=black] (6,7) circle (.1 cm);

\begin{scope}[xshift =10 cm, yshift=1.5cm]
\draw[dotted,fill = gray, fill opacity = 0.1] (0,0) -- (8,0) -- (8,4)--(0,4)--(0,0);
\draw[thick] (1,0) -- (1,4);
\draw[thick] (3,0) -- (3,4);
\draw[thick] (5,0) -- (5,4);
\draw[thick] (7,0) -- (7,4);
\draw[thick] (0,1) -- (8,1);
\draw[thick, line width= 0.07 cm] (3,1) -- (3,3);
\draw[thick, line width= 0.07 cm] (5,1) -- (5,3);
\draw[thick, line width = 0.07 cm] (0,3) -- (8,3);
\node[below left] (a) at (1,1) {{$a$}};
\node[above left] (b) at (1,3) {{$A$}};
\node[below left] (a) at (3,1) {{$b$}};
\node[above left] (b) at (3,3) {{$B$}};
\node[below right] (a) at (5,1) {{$c$}};
\node[above right] (b) at (5,3) {{$C$}};
\node[below right] (a) at (7,1) {{$d$}};
\node[above right] (b) at (7,3) {{$D$}};
\draw[line width = 0.04 cm, ->] (0,0)--(0,4);
\draw[line width = 0.04 cm, ->] (0,0)--(8,0);
\draw[black,fill=black] (1,1) circle (.1 cm);
\draw[black,fill=black] (3,3) circle (.1 cm);
\draw[black,fill=black] (5,1) circle (.1 cm);
\draw[black,fill=black] (7,3) circle (.1 cm);
\draw[black,fill=white] (3,1) circle (.1 cm);
\draw[black,fill=white] (1,3) circle (.1 cm);
\draw[black,fill=white] (5,3) circle (.1 cm);
\draw[black,fill=white] (7,1) circle (.1 cm);
\end{scope}
\end{tikzpicture}
\caption{Left: gauge field configuration ${\bm\sigma^*}$ associated with the chessboard flux arrangement $\Phi^{*}$. Solid bonds correspond to $\sigma^z=-1$, while light bonds correspond to $\sigma^z=+1$. Right: fundamental cell associated with a translation-invariant configuration.}
\label{fig:conf2}
\end{figure}
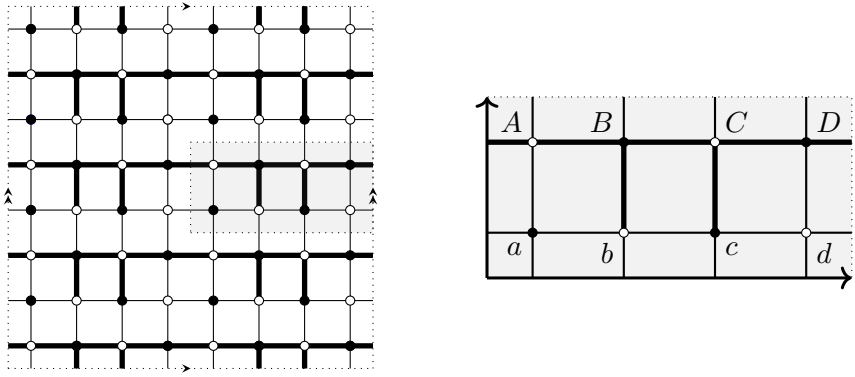
\begin{align}
H({\bm \sigma^{*}}; 1,1) &= -t \sum_{{\bf i} \in \widetilde{\Gamma}^{\text{red}}_{L} \times \{\uparrow,\downarrow\}}  \Big(a^+_{a, {\bf i}} a^-_{b, {\bf i}} + a^+_{a, {\bf i}} a^-_{A, {\bf i}} + a^+_{b, {\bf i}} a^-_{c, {\bf i}} - a^+_{b, {\bf i}} a^-_{B, {\bf i}} + a^+_{c, {\bf i}} a^-_{d, {\bf i}} - a^+_{c, {\bf i}} a^-_{C, {\bf i}} \\
&\phantom{=-t} 
+ a^+_{d, {\bf i}} a^-_{D, {\bf i}}  + a^+_{d, {\bf i}} a^-_{a, {\bf i} + 4 e_{1} }-a^+_{A, {\bf i}} a^-_{B, {\bf i}} + a^+_{A, {\bf i}} a^-_{a, {\bf i}+2 e_{2}} - a^+_{B, {\bf i}} a^-_{C, {\bf i}}  \\
&\phantom{=-t}
+ a^+_{B, {\bf i}} a^-_{b, {\bf i}+2 e_{2}}  - a^+_{C, {\bf i}} a^-_{D, {\bf i}} + a^+_{C, {\bf i}} a^-_{c, {\bf i}+2 e_{2}} - a^+_{D, {\bf i}} a^-_{A, {\bf i}+4 e_{1}} + a^+_{D, {\bf i}} a^-_{d, {\bf i}+2 e_{2}} + \text{h.c.} \Big)  \\
&\quad +m \sum_{{\bf i} \in \widetilde{\Gamma}^{\text{red}}_L} \big(n_{A,{\bf i}} - n_{B, {\bf i}} + n_{C, {\bf i}} - n_{D,{\bf i}}
- n_{a, {\bf i}} + n_{b, {\bf i}} - n_{c, {\bf i}} + n_{d, {\bf i}} \big) \\
&\quad + U \sum_{i \in \widetilde{\Gamma}^{\text{red}}_{L}} \sum_{\rho \in I} \Big(n_{\rho,i,\uparrow} - \frac{1}{2}\Big)\Big( n_{\rho,i,\uparrow} - \frac{1}{2} \Big) 
\end{align}
where $I$ collects the labels of the fundamental cell and we introduced the notation ${\bf i} = (i,\eta)$. The Brillouin zone is:
\begin{equation}
\widetilde{B}_{L}(1,1) := \Big\{ k\in \frac{2\pi}{L}(n_{1}, n_{2}) \mid 0\leq n_{1} \leq L/4-1,\; 0\leq n_{2} \leq L/2-1 \Big\}\;.
\end{equation}

For $U=0$, the quantity $\Delta_{\beta,L}$ can be computed by exact diagonalization. Let us consider the spinless case; the presence of the spin will eventually amount to a factor $2$ in the final expression. The Bloch Hamiltonian is:
\begin{equation}
\begin{split}
&h(k) = \\
&\quad  \scriptsize{\begin{pmatrix}
           -m &1 &0 & e^{-4ik_1} &1+e^{-2ik_2} &0 &0 &0\\
           1 &m &1 &0 &0 &-1+e^{-2ik_2} &0 &0\\
           0 &1 &-m &1 &0 &0 &-1+e^{-2ik_2} &0\\
           e^{4 i k_1} &0 &1 &
           m&0 &0 &0 &1+e^{-2ik_2} \\
           1+e^{2ik_2} &0 &0 &0 &m &-1 &0 &-e^{-4ik_1}\\
           0 &-1+e^{2ik_2} &0 &0 &-1 &-m &-1 &0\\
           0 &0 &-1 + e^{2ik_2} &0 &0 &-1 &m &-1\\
           0 &0 &0 &1+e^{2ik_2} &-e^{4ik_1} &0 &-1 &-m
       \end{pmatrix}} \;.
       \end{split}
\end{equation}
Its eigenvalues are doubly degenerate, and are given by:
\begin{equation}\label{eq:deltabands}
\begin{split}
       e_{1;\pm}(k) &= \pm 2t \sqrt{\bigg( \frac{m}{2t}\bigg)^2+1+\frac{1}{2} \sqrt{1 + \cos(2k_1)^2 +\cos(2k_2)^2}}\\
        e_{2;\pm}(k) &= \pm 2t \sqrt{\bigg( \frac{m}{2t}\bigg)^2+1-\frac{1}{2}\sqrt{1 + \cos(2k_1)^2 +\cos(2k_2)^2}}\;.
        \end{split}
\end{equation}
We are now ready to compute $\Delta_{\beta,L}$, as given by (\ref{eq:Delta}). Let us denote by $e^{\pi}_{\pm}(k)$ the eigenvalues of the Bloch Hamiltonian associated with the $\pi$-flux phase, as given by (\ref{eq:epi}). Proceeding as in the proof of \cite[Proposition 3.13]{GP}, we have:
\begin{equation}\label{boundint}
\begin{split}
&-\frac{1}{\beta L^{2}} \log \frac{Z_{\beta,L}(\Phi^{*}; a,b)}{Z_{\beta,L}({\bf -1}; a,b)} \\
&\qquad = \frac{t}{4\pi^2} \int_0^{2\pi} \int_0^{2 \pi} dk_{1} dk_{2}\, \sqrt{\bigg( \frac{m}{2t}\bigg)^2+1+\frac{1}{2} \cos(k_{1}) +\frac{1}{2}\cos(k_{2})} \\&\qquad \quad - \frac{t}{8\pi^2} \int_0^{2\pi} \int_0^{2 \pi} dk_{1} dk_{2}\, \sqrt{\bigg( \frac{m}{2t}\bigg)^2+1+\frac{1}{2} \sqrt{1+\cos^2(k_{1}) +\cos^2(k_{2})}}\\
        &\qquad\quad - \frac{t}{8\pi^2} \int_0^{2\pi} \int_0^{2 \pi} dk_{1} dk_{2}\, \sqrt{\bigg( \frac{m}{2t}\bigg)^2+1-\frac{1}{2} \sqrt{1+\cos^2(k_{1}) +\cos^2(k_{2})}} + o(1)\;.
\end{split}
\end{equation}
Numerical evaluation shows that the sum of these three dominant terms is positive (see \autoref{delta}).

\begin{figure}
    \centering
    \includegraphics[width=0.5\linewidth]{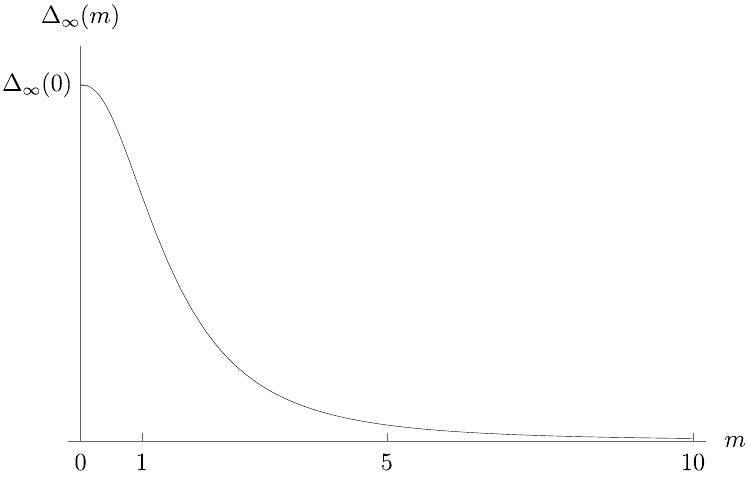}
    \caption{Numerical plot of  the monopole's mass $\Delta_{\infty}(m)$ as a function of $m$ ($t=1$, $U=0$).}
    \label{delta}
\end{figure}

Finally, let us discuss the effect of many-body interactions. By cluster expansion, one could actually prove that $\Delta_{\beta,L}$ is analytic in $U$ for $|U|$ small. This is actually not needed to prove the stability of the monopoles' gap. It is enough to use the general estimate:
\begin{equation}\label{eq:partfcnbd}
 e^{-\beta \|V\| |U| } Z^{0}_{\beta, L}(\Phi; a,b) \leq  Z_{\beta, L}(\Phi;a,b) \leq e^{\beta \|V\| |U| } Z^{0}_{\beta, L}(\Phi;a,b)
\end{equation}
where $Z^{0}_{\beta, L}$ is the partition function for $U=0$ and $UV$ is the Hubbard interaction. This bound follows from the operator inequalities $H_0({\bm \sigma}) - |U| \|V\| \leq H({\bm \sigma}) \leq H_{0}({\bm \sigma}) + |U| \|V\|$ and from the fact that, for any two self-adjoint operators $A, B$
\begin{equation}
\Tr e^{A} \leq \Tr e^{B}\qquad \text{if $A\leq B$,}
\end{equation}
a bound that can be proved using the min-max characterization of the eigenvalues; see {\it e.g.} \cite[Corollary A.8]{Tasaki}. The bound (\ref{eq:partfcnbd}), combined with the expression (\ref{eq:Delta}), immediately implies that:
\begin{equation}\label{intmono}
\Delta_{\beta,L} \geq \Delta_{\beta,L}^{0} - C|U|\;.
\end{equation}
This concludes the proof of \autoref{thm1.3} of \autoref{thm1}, for holonomies $a,b$ equal to $\pm 1$.

\begin{remark}
    The bound \eqref{intmono} is useful for small $U$ only. For $U \to \infty$, we expect that  $\Delta_{\beta, L}(U) \to 0$, as the dominant part of the hamiltonian is background independent (similarly to what happens in $m \to \infty$ limit, where the computation can be done exactly).
\end{remark}
\subsubsection{Extension to twisted Hamiltonians}\label{esgt}
To conclude the proof of Theorem \ref{thm1}, it remains to prove \autoref{thm1.3} for general holonomies. We shall adapt the reflection positivity argument in presence of a general holonomy around the non-contractible cycles, and $\mathbb{Z}_2$ fluxes in the plaquettes\footnote{Notice that such condition are compatible as they correspond to independent generators of $C_1(\Gamma_L)$}. Let $H(\bm{\sigma}, e^{i\phi},e^{i\theta})$ be the Hamiltonian with holonomies $e^{i \phi}$ on $\mathcal{C}_1$ and $e^{i \theta}$ on $\mathcal{C}_2$ added on a $\mathbb{Z}_{2}$ valued $\bm{\sigma}$ background; see \autoref{background}. Up to unitary equivalences, such a Hamiltonian can be explicitly constructed in the following way (recall that Hamiltonians with the same fluxes across the lattice plaquettes and the same holonomies are isospectral \cite{LL}). Given the background $\bm{\sigma}$, we can compute the holonomy of the background around $\mathcal{C}_1$ and $\mathcal{C}_2$ and it is either $\pm 1$. If the holonomy around a cycle $\mathcal{C}_1$ or $\mathcal{C}_2$ is $+1$, we multiply the hoppings on the edges crossed respectively by $\mathcal{C}_2^*$ and $\mathcal{C}_1^{*}$ (see \autoref{background}) by $e^{i \phi}$ and $e^{i \theta}$. Observe that this procedure does not change the fluxes through the plaquettes (which is computed taking into account the orientation of the edges). Similarly, if one of such holonomy is $-1$, we will twist by $-e^{i \theta}$ or $-e^{i \phi}$. We will apply chessboard estimates on such Hamiltonian following the same procedure as in \cite[Section 3.1]{GP}, tracking the fate of holonomies after reflections. The key observation is that after an horizontal reflection (across a plane which does not intersect $\Gamma_L$ in $\mathcal{C}_2^*$), the holonomy along $\mathcal{C}_1$ becomes $-1$ leaving the holonomy around $\mathcal{C}_2$ unchanged (due to the complex conjugation in the definition of the reflection operator see  \hyperref[refl]{Definition \ref*{refl}}). If we then perform a vertical reflection (across a plane which does not intersect $\Gamma_L$ in $\mathcal{C}_1^*$), the holonomy along $\mathcal{C}_1$ becomes $-1$, leaving the holonomy around $\mathcal{C}_2$ to $-1$. After this pair of reflections, we are left with $4$ possible partition functions with all holonomies valued in $\mathbb{Z}_2$ so that they can be represented with choice of a $\bm{\sigma}$ background only. From this point on, the proof follows as in \cite[Section 3.1]{GP}. This concludes the proof of \autoref{thm1}. \qed

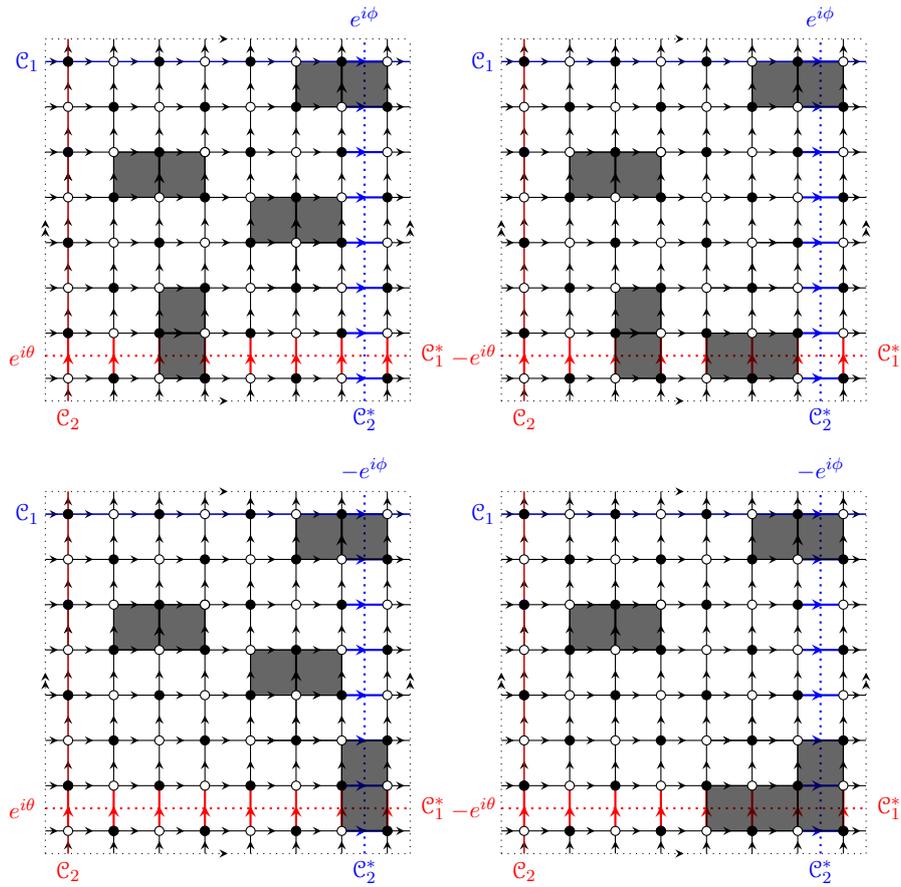
\begin{figure}
    \centering
   \begin{tikzpicture}[scale=0.6]

\draw[red, thick, dotted] (-0.5,0.5) -- (7.5,0.5);
\draw[blue, thick, dotted] (6.5,-0.5) -- (6.5,7.5);
\draw[red, thick,->>>>-] (0,0) -- (0,1);
\draw[red, thick,->>>>-] (1,0) -- (1,1);
\draw[red, thick,->>>>-] (2,0) -- (2,1);
\draw[red, thick,->>>>-] (3,0) -- (3,1);
\draw[red, thick,->>>>-] (4,0) -- (4,1);
\draw[red, thick, ->>>>-] (5,0) -- (5,1);
\draw[red, thick, ->>>>-] (6,0) -- (6,1);
\draw[red, thick, ->>>>-] (7,0) -- (7,1);
 \node[red] (a) at (-1, 0.5) 
           {\scalebox{0.8}{$e^{i \theta}$}};
\draw[blue, thick,->>>>-] (6,0) -- (7,0);
\draw[blue, thick,->>>>-] (6,1) -- (7,1);
\draw[blue, thick,->>>>-] (6,2) -- (7,2);
\draw[blue, thick,->>>>-] (6,3) -- (7,3);
\draw[blue, thick, ->>>>-] (6,4) -- (7,4);
\draw[blue, thick, ->>>>-] (6,5) -- (7,5);
\draw[blue, thick, ->>>>-] (6,6) -- (7,6);
\draw[blue, thick, ->>>>-] (6,7) -- (7,7);

\node[blue] (a) at (6.5,8) 
           {\scalebox{0.8}{$e^{i \phi}$}};

            \node[red] (a) at (0, -0.9) 
           {\scalebox{0.8}{$\mathcal{C}_2$}};
            \node[blue] (a) at (-0.9, 7) 
           {\scalebox{0.8}{$\mathcal{C}_1$}};
           \node[blue] (a) at (6.5, -0.9) 
           {\scalebox{0.8}{$\mathcal{C}_2^*$}};
            \node[red] (a) at (8, 0.5) 
           {\scalebox{0.8}{$\mathcal{C}_1^*$}};
   \draw[fill = black, fill opacity = 0.6] 
(2,0) -- (3, 0) -- (3, 2) -- (2, 2) -- (2,0);
\draw[fill = black, fill opacity = 0.6] 
(1,4) -- (3, 4) -- (3, 5) -- (1, 5) -- (1,4);
\draw[fill = black, fill opacity = 0.6] 
(4,3) -- (6, 3) -- (6, 4) -- (4, 4) -- (4,3);
\draw[fill = black, fill opacity = 0.6] 
(5,6) -- (7, 6) -- (7, 7) -- (5, 7) -- (5,6);
\draw[black, ->>>>-] (1,1) -- (2,1);
\draw[black, ->>>>-] (0,0) -- (1,0);
\draw[black, ->>>>-] (1,0) -- (2,0);
\draw[black, ->>>>-] (3,0) -- (4,0);
\draw[black, ->>>>-] (4,0) -- (5,0);
\draw[black, ->>>>-,] (2,0) -- (3,0); 
\draw[black, ->>>>-] (5,0) -- (6,0);

\draw[black, ->>>>-] (0,1) -- (1,1);
\draw[black, ->>>>-] (0,1) -- (0,2);
\draw[black, ->>>>-] (1,1) -- (1,2);
\draw[black, ->>>>-] (2,1) -- (2,2);

\draw[black, ->>>>-, thick] (2,1) -- (3,1);
\draw[black, ->>>>-] (3,1) -- (3,2);
\draw[black, ->>>>-] (3,1) -- (4,1);
\draw[black, ->>>>-] (4,1) -- (4,2);

\draw[black, ->>>>-] (4,1) -- (5,1);
\draw[black, ->>>>-] (5,1) -- (5,2);
\draw[black, ->>>>-] (5,1) -- (6,1);
\draw[black, ->>>>-] (6,1) -- (6,2);

\draw[black, ->>>>-] (7,1) -- (7,2);
\draw[black, ->>>>-] (1,2) -- (2,2);
\draw[black, ->>>>-] (1,2) -- (1,3);

\draw[black, ->>>>-] (0,2) -- (1,2);
\draw[black, ->>>>-] (0,2) -- (0,3);

\draw[black, ->>>>-] (2,2) -- (3,2);
\draw[black, ->>>>-] (2,2) -- (2,3);
\draw[black, ->>>>-] (3,2) -- (4,2);
\draw[black, ->>>>-] (3,2) -- (3,3);
\draw[black, ->>>>-] (4,2) -- (5,2);
\draw[black, ->>>>-] (4,2) -- (4,3);
\draw[black, ->>>>-] (5,2) -- (6,2);
\draw[black, ->>>>-] (5,2) -- (5,3);
\draw[black, ->>>>-] (5,2) -- (6,2);
\draw[black, ->>>>-] (5,2) -- (5,3);
\draw[black, ->>>>-] (6,2) -- (6,3);
\draw[black, ->>>>-] (4,2) -- (5,2);
\draw[black, ->>>>-] (5,3) -- (6,3);
\draw[black, ->>>>-] (7,2) -- (7,3);
\draw[black, ->>>>-] (0,3) -- (1,3);
\draw[black, ->>>>-] (0,3) -- (0,4);
\draw[black, ->>>>-] (1,3) -- (2,3);
\draw[black, ->>>>-] (2,3) -- (3,3);
\draw[black, ->>>>-] (3,3) -- (4,3);
\draw[black, ->>>>-] (4,3) -- (5,3);
\draw[black, ->>>>-] (0,4) -- (0,5);
\draw[black, ->>>>-] (5,3) -- (6,3);
\draw[black, ->>>>-] (0,4) -- (0,5);
\draw[black, ->>>>-] (0,4) -- (1,4);
\draw[black, ->>>>-] (1,4) -- (2,4);
\draw[black, ->>>>-] (2,4) -- (3,4);
\draw[black, ->>>>-] (3,4) -- (4,4);
\draw[black, ->>>>-] (4,4) -- (5,4);
\draw[black, ->>>>-] (0,5) -- (0,6);
\draw[black, ->>>>-] (5,4) -- (6,4);
\draw[black, ->>>>-] (0,5) -- (1,5);
\draw[black, ->>>>-] (1,5) -- (2,5);
\draw[black, ->>>>-] (2,5) -- (3,5);
\draw[black, ->>>>-] (3,5) -- (4,5);
\draw[black, ->>>>-] (4,5) -- (5,5);
\draw[black, ->>>>-] (0,6) -- (0,7);
\draw[black, ->>>>-] (5,5) -- (6,5);
\draw[black, ->>>>-] (0,6) -- (1,6);
\draw[black, ->>>>-] (1,6) -- (2,6);
\draw[black, ->>>>-] (2,6) -- (3,6);
\draw[black, ->>>>-] (3,6) -- (4,6);
\draw[black, ->>>>-] (4,6) -- (5,6);
\draw[black, ->>>>-] (7,6) -- (7,7);
\draw[black, ->>>>-] (5,6) -- (6,6);
\draw[black, ->>>>-] (0,7) -- (1,7);
\draw[black, ->>>>-] (1,7) -- (2,7);
\draw[black, ->>>>-] (2,7) -- (3,7);
\draw[black, ->>>>-] (3,7) -- (4,7);
\draw[black, ->>>>-] (4,7) -- (5,7);
\draw[black, ->>>>-] (7,5) -- (7,6);
\draw[black, ->>>>-] (5,7) -- (6,7);
\draw[black, ->>>>-] (7,3) -- (7,4);
\draw[black, ->>>>-] (7,4) -- (7,5);
\draw[black, ->>>>-] (1,4) -- (1,5);
\draw[black, ->>>>-] (1,3) -- (1,4);
\draw[black, ->>>>-] (1,5) -- (1,6);
\draw[black, ->>>>-] (1,6) -- (1,7);
\draw[black, ->>>>-, thick] (2,4) -- (2,5);
\draw[black, ->>>>-] (2,3) -- (2,4);
\draw[black, ->>>>-] (2,5) -- (2,6);
\draw[black, ->>>>-] (2,6) -- (2,7);
\draw[black, ->>>>-] (3,4) -- (3,5);
\draw[black, ->>>>-] (3,3) -- (3,4);
\draw[black, ->>>>-] (3,5) -- (3,6);
\draw[black, ->>>>-] (3,6) -- (3,7);
\draw[black, ->>>>-] (4,4) -- (4,5);
\draw[black, ->>>>-] (4,3) -- (4,4);
\draw[black, ->>>>-] (4,5) -- (4,6);
\draw[black, ->>>>-] (4,6) -- (4,7);
\draw[black, ->>>>-] (5,4) -- (5,5);
\draw[black, ->>>>-, thick] (5,3) -- (5,4);
\draw[black, ->>>>-] (5,5) -- (5,6);
\draw[black, ->>>>-] (5,6) -- (5,7);
\draw[black, ->>>>-] (6,4) -- (6,5);
\draw[black, ->>>>-] (6,3) -- (6,4);
\draw[black, ->>>>-] (6,5) -- (6,6);
\draw[black, ->>>>-, thick] (6,6) -- (6,7);
\draw[black, dotted, ->-] (-0.5,-0.5) -- (7.5,-0.5);
\draw[black, dotted, ->-] (-0.5,7.5) -- (7.5,7.5);
\draw[black, dotted, ->>-] (-0.5,-0.5) -- (-0.5,7.5);
\draw[black, dotted, ->>-] (7.5,-0.5) -- (7.5,7.5);
\draw[black, ->>>>-] (0, -0.5) --(0,0);
\draw[black, ->>>>-] (-0.5,0) -- (0,0);
\draw[black, ->>>>-] (1, -0.5) --(1,0);
\draw[black, ->>>>-] (-0.5,1) -- (0,1);
\draw[black, ->>>>-] (2, -0.5) --(2,0);
\draw[black, ->>>>-] (-0.5,2) -- (0,2);
\draw[black, ->>>>-] (3, -0.5) --(3,0);
\draw[black, ->>>>-] (-0.5,3) -- (0,3);
\draw[black, ->>>>-] (4, -0.5) --(4,0);
\draw[black, ->>>>-] (-0.5,4) -- (0,4);
\draw[black, ->>>>-] (5, -0.5) --(5,0);
\draw[black, ->>>>-] (-0.5,6) -- (0,6);
\draw[black, ->>>>-] (6,-0.5) -- (6,0);
\draw[black, ->>>>-] (-0.5,5) -- (0,5);
\draw[black, ->>>>-] (7, -0.5) --(7,0);
\draw[black, ->>>>-] (-0.5,7) -- (0,7);

\draw[black, ->>>-] (0,7) -- (0,7.5);
\draw[black, ->>>-] (7,7) -- (7.5,7);
\draw[black, ->>>-] (1,7) -- (1,7.5);
\draw[black, ->>>-] (7,6) -- (7.5,6);
\draw[black, ->>>-] (2,7) -- (2,7.5);
\draw[black, ->>>-] (7,5) -- (7.5,5);
\draw[black, ->>>-] (3,7) -- (3,7.5);
\draw[black, ->>>-] (7,4) -- (7.5,4);
\draw[black, ->>>-] (4,7) -- (4,7.5);
\draw[black, ->>>-] (7,0) -- (7.5,0);
\draw[black, ->>>-] (5,7) -- (5,7.5);
\draw[black, ->>>-] (7,3) -- (7.5,3);
\draw[black, ->>>-] (6,7) -- (6,7.5);
\draw[black, ->>>-] (7,2) -- (7.5,2);
\draw[black, ->>>-] (7,7) -- (7,7.5);
\draw[black, ->>>-] (7,1) -- (7.5,1);
\draw[red] (0,-0.5) -- (0,7.5);
\draw[blue] (-0.5,7) -- (7.5,7);
\draw[black,fill=white] (0,0) circle (.1 cm);
\draw[black,fill=white] (1,1) circle (.1 cm);
\draw[black,fill=white] (3,3) circle (.1 cm);
\draw[black,fill=white] (2,2) circle (.1 cm);
\draw[black,fill=white] (4,4) circle (.1 cm);
\draw[black,fill=white] (5,5) circle (.1 cm);
\draw[black,fill=white] (6,6) circle (.1 cm);
\draw[black,fill=white] (7,7) circle (.1 cm);
\draw[black,fill=white] (3,5) circle (.1 cm);
\draw[black,fill=white](2,0) circle (.1 cm);
\draw[black,fill=white] (0,2) circle (.1 cm);
\draw[black,fill=white] (4,0) circle (.1 cm);
\draw[black,fill=white] (0,4) circle (.1 cm);
\draw[black,fill=white] (6,0) circle (.1 cm);
\draw[black,fill=white] (0,6) circle (.1 cm);
\draw[black,fill=white] (1,3) circle (.1 cm);
\draw[black,fill=white] (1,5) circle (.1 cm);
\draw[black,fill=white] (1,7) circle (.1 cm);
\draw[black,fill=white] (7,1) circle (.1 cm);
\draw[black,fill=white] (5,1) circle (.1 cm);
\draw[black,fill=white] (3,1) circle (.1 cm);

\draw[black,fill=white] (4,2) circle (.1 cm);
\draw[black,fill=white] (6,2) circle (.1 cm);
\draw[black,fill=white] (2,6) circle (.1 cm);
\draw[black,fill=white] (2,4) circle (.1 cm);
\draw[black,fill=white] (6,4) circle (.1 cm);
\draw[black,fill=white] (5,3) circle (.1 cm);
\draw[black,fill=white] (7,3) circle (.1 cm);
\draw[black,fill=white] (3,5) circle (.1 cm);
\draw[black,fill=white] (3,7) circle (.1 cm);
\draw[black,fill=white] (7,5) circle (.1 cm);
\draw[black,fill=white] (4,6) circle (.1 cm);
\draw[black,fill=white] (5,7) circle (.1 cm);

\draw[black,fill=black] (1,0) circle (.1 cm);
\draw[black,fill=black] (0,1) circle (.1 cm);
\draw[black,fill=black] (3,0) circle (.1 cm);
\draw[black,fill=black] (0,3) circle (.1 cm);
\draw[black,fill=blue] (5,0) circle (.1 cm);
\draw[blue,fill=black] (0,5) circle (.1 cm);
\draw[black,fill=black] (7,0) circle (.1 cm);
\draw[black,fill=black] (0,7) circle (.1 cm);
\draw[black,fill=black] (1,0) circle (.1 cm);
\draw[black,fill=black] (0,1) circle (.1 cm);
\draw[black,fill=black] (3,0) circle (.1 cm);
\draw[black,fill=black] (0,3) circle (.1 cm);
\draw[black,fill=black] (5,0) circle (.1 cm);
\draw[black,fill=black] (0,5) circle (.1 cm);
\draw[black,fill=black] (7,0) circle (.1 cm);
\draw[black,fill=black] (0,7) circle (.1 cm);

\draw[black,fill=black] (1,2) circle (.1 cm);
\draw[black,fill=black] (2,1) circle (.1 cm);
\draw[black,fill=black] (3,2) circle (.1 cm);
\draw[black,fill=black] (2,3) circle (.1 cm);
\draw[black,fill=black] (5,2) circle (.1 cm);
\draw[black,fill=black] (2,5) circle (.1 cm);
\draw[black,fill=black] (7,2) circle (.1 cm);
\draw[black,fill=black] (2,7) circle (.1 cm);
\draw[black,fill=black] (1,4) circle (.1 cm);
\draw[black,fill=black] (4,1) circle (.1 cm);
\draw[black,fill=black] (3,4) circle (.1 cm);
\draw[black,fill=black] (4,3) circle (.1 cm);
\draw[black,fill=black] (5,4) circle (.1 cm);
\draw[black,fill=black] (4,5) circle (.1 cm);
\draw[black,fill=black] (7,4) circle (.1 cm);
\draw[black,fill=black] (4,7) circle (.1 cm);

\draw[black,fill=black] (1,6) circle (.1 cm);
\draw[black,fill=black] (6,1) circle (.1 cm);
\draw[black,fill=black] (3,6) circle (.1 cm);
\draw[black,fill=black] (6,3) circle (.1 cm);
\draw[black,fill=black] (5,6) circle (.1 cm);
\draw[black,fill=black] (6,5) circle (.1 cm);
\draw[black,fill=black] (7,6) circle (.1 cm);
\draw[black,fill=black] (6,7) circle (.1 cm);

   \begin{scope}[xshift=10cm]
      \draw[red, thick, dotted] (-0.5,0.5) -- (7.5,0.5);
\draw[blue, thick, dotted] (6.5,-0.5) -- (6.5,7.5);
\draw[red, thick,->>>>-] (0,0) -- (0,1);
\draw[red, thick,->>>>-] (1,0) -- (1,1);
\draw[red, thick,->>>>-] (2,0) -- (2,1);
\draw[red, thick,->>>>-] (3,0) -- (3,1);
\draw[red, thick,->>>>-] (4,0) -- (4,1);
\draw[red, thick, ->>>>-] (5,0) -- (5,1);
\draw[red, thick, ->>>>-] (6,0) -- (6,1);
\draw[red, thick, ->>>>-] (7,0) -- (7,1);
 \node[red] (a) at (-1.1, 0.5) 
           {\scalebox{0.8}{$-e^{i \theta}$}};
\draw[blue, thick,->>>>-] (6,0) -- (7,0);
\draw[blue, thick,->>>>-] (6,1) -- (7,1);
\draw[blue, thick,->>>>-] (6,2) -- (7,2);
\draw[blue, thick,->>>>-] (6,3) -- (7,3);
\draw[blue, thick, ->>>>-] (6,4) -- (7,4);
\draw[blue, thick, ->>>>-] (6,5) -- (7,5);
\draw[blue, thick, ->>>>-] (6,6) -- (7,6);
\draw[blue, thick, ->>>>-] (6,7) -- (7,7);

\node[blue] (a) at (6.5,8) 
           {\scalebox{0.8}{$e^{i \phi}$}};
             \node[red] (a) at (0, -0.9) 
           {\scalebox{0.8}{$\mathcal{C}_2$}};
            \node[blue] (a) at (-0.9, 7) 
           {\scalebox{0.8}{$\mathcal{C}_1$}};
           \node[blue] (a) at (6.5, -0.9) 
           {\scalebox{0.8}{$\mathcal{C}_2^*$}};
            \node[red] (a) at (8, 0.5) 
           {\scalebox{0.8}{$\mathcal{C}_1^*$}};
   \draw[fill = black, fill opacity = 0.6] 
(2,0) -- (3, 0) -- (3, 2) -- (2, 2) -- (2,0);
\draw[fill = black, fill opacity = 0.6] 
(1,4) -- (3, 4) -- (3, 5) -- (1, 5) -- (1,4);
\draw[fill = black, fill opacity = 0.6] 
(4,0) -- (6, 0) -- (6, 1) -- (4, 1) -- (4,0);
\draw[fill = black, fill opacity = 0.6] 
(5,6) -- (7, 6) -- (7, 7) -- (5, 7) -- (5,6);
\draw[black, ->>>>-] (1,1) -- (2,1);
\draw[black, ->>>>-] (0,0) -- (1,0);
\draw[black, ->>>>-] (1,0) -- (2,0);
\draw[black, ->>>>-] (3,0) -- (4,0);
\draw[black, ->>>>-] (4,0) -- (5,0);
\draw[black, ->>>>-,] (2,0) -- (3,0); 
\draw[black, ->>>>-] (5,0) -- (6,0);

\draw[black, ->>>>-] (0,1) -- (1,1);
\draw[black, ->>>>-] (0,1) -- (0,2);
\draw[black, ->>>>-] (1,1) -- (1,2);
\draw[black, ->>>>-] (2,1) -- (2,2);

\draw[black, ->>>>-, thick] (2,1) -- (3,1);
\draw[black, ->>>>-] (3,1) -- (3,2);
\draw[black, ->>>>-] (3,1) -- (4,1);
\draw[black, ->>>>-] (4,1) -- (4,2);

\draw[black, ->>>>-] (4,1) -- (5,1);
\draw[black, ->>>>-] (5,1) -- (5,2);
\draw[black, ->>>>-] (5,1) -- (6,1);
\draw[black, ->>>>-] (6,1) -- (6,2);

\draw[black, ->>>>-] (7,1) -- (7,2);
\draw[black, ->>>>-] (1,2) -- (2,2);
\draw[black, ->>>>-] (1,2) -- (1,3);

\draw[black, ->>>>-] (0,2) -- (1,2);
\draw[black, ->>>>-] (0,2) -- (0,3);

\draw[black, ->>>>-] (2,2) -- (3,2);
\draw[black, ->>>>-] (2,2) -- (2,3);
\draw[black, ->>>>-] (3,2) -- (4,2);
\draw[black, ->>>>-] (3,2) -- (3,3);
\draw[black, ->>>>-] (4,2) -- (5,2);
\draw[black, ->>>>-] (4,2) -- (4,3);
\draw[black, ->>>>-] (5,2) -- (6,2);
\draw[black, ->>>>-] (5,2) -- (5,3);
\draw[black, ->>>>-] (5,2) -- (6,2);
\draw[black, ->>>>-] (5,2) -- (5,3);
\draw[black, ->>>>-] (6,2) -- (6,3);
\draw[black, ->>>>-] (4,2) -- (5,2);
\draw[black, ->>>>-] (5,3) -- (6,3);
\draw[black, ->>>>-] (7,2) -- (7,3);
\draw[black, ->>>>-] (0,3) -- (1,3);
\draw[black, ->>>>-] (0,3) -- (0,4);
\draw[black, ->>>>-] (1,3) -- (2,3);
\draw[black, ->>>>-] (2,3) -- (3,3);
\draw[black, ->>>>-] (3,3) -- (4,3);
\draw[black, ->>>>-] (4,3) -- (5,3);
\draw[black, ->>>>-] (0,4) -- (0,5);
\draw[black, ->>>>-] (5,3) -- (6,3);
\draw[black, ->>>>-] (0,4) -- (0,5);
\draw[black, ->>>>-] (0,4) -- (1,4);
\draw[black, ->>>>-] (1,4) -- (2,4);
\draw[black, ->>>>-] (2,4) -- (3,4);
\draw[black, ->>>>-] (3,4) -- (4,4);
\draw[black, ->>>>-] (4,4) -- (5,4);
\draw[black, ->>>>-] (0,5) -- (0,6);
\draw[black, ->>>>-] (5,4) -- (6,4);
\draw[black, ->>>>-] (0,5) -- (1,5);
\draw[black, ->>>>-] (1,5) -- (2,5);
\draw[black, ->>>>-] (2,5) -- (3,5);
\draw[black, ->>>>-] (3,5) -- (4,5);
\draw[black, ->>>>-] (4,5) -- (5,5);
\draw[black, ->>>>-] (0,6) -- (0,7);
\draw[black, ->>>>-] (5,5) -- (6,5);
\draw[black, ->>>>-] (0,6) -- (1,6);
\draw[black, ->>>>-] (1,6) -- (2,6);
\draw[black, ->>>>-] (2,6) -- (3,6);
\draw[black, ->>>>-] (3,6) -- (4,6);
\draw[black, ->>>>-] (4,6) -- (5,6);
\draw[black, ->>>>-] (7,6) -- (7,7);
\draw[black, ->>>>-] (5,6) -- (6,6);
\draw[black, ->>>>-] (0,7) -- (1,7);
\draw[black, ->>>>-] (1,7) -- (2,7);
\draw[black, ->>>>-] (2,7) -- (3,7);
\draw[black, ->>>>-] (3,7) -- (4,7);
\draw[black, ->>>>-] (4,7) -- (5,7);
\draw[black, ->>>>-] (7,5) -- (7,6);
\draw[black, ->>>>-] (5,7) -- (6,7);
\draw[black, ->>>>-] (7,3) -- (7,4);
\draw[black, ->>>>-] (7,4) -- (7,5);
\draw[black, ->>>>-] (1,4) -- (1,5);
\draw[black, ->>>>-] (1,3) -- (1,4);
\draw[black, ->>>>-] (1,5) -- (1,6);
\draw[black, ->>>>-] (1,6) -- (1,7);
\draw[black, ->>>>-, thick] (2,4) -- (2,5);
\draw[black, ->>>>-] (2,3) -- (2,4);
\draw[black, ->>>>-] (2,5) -- (2,6);
\draw[black, ->>>>-] (2,6) -- (2,7);
\draw[black, ->>>>-] (3,4) -- (3,5);
\draw[black, ->>>>-] (3,3) -- (3,4);
\draw[black, ->>>>-] (3,5) -- (3,6);
\draw[black, ->>>>-] (3,6) -- (3,7);
\draw[black, ->>>>-] (4,4) -- (4,5);
\draw[black, ->>>>-] (4,3) -- (4,4);
\draw[black, ->>>>-] (4,5) -- (4,6);
\draw[black, ->>>>-] (4,6) -- (4,7);
\draw[black, ->>>>-] (5,4) -- (5,5);
\draw[black, ->>>>-] (5,3) -- (5,4);
\draw[black, ->>>>-] (5,5) -- (5,6);
\draw[black, ->>>>-] (5,6) -- (5,7);
\draw[black, ->>>>-] (6,4) -- (6,5);
\draw[black, ->>>>-] (6,3) -- (6,4);
\draw[black, ->>>>-] (6,5) -- (6,6);
\draw[black, ->>>>-, thick] (6,6) -- (6,7);
\draw[black, dotted, ->-] (-0.5,-0.5) -- (7.5,-0.5);
\draw[black, dotted, ->-] (-0.5,7.5) -- (7.5,7.5);
\draw[black, dotted, ->>-] (-0.5,-0.5) -- (-0.5,7.5);
\draw[black, dotted, ->>-] (7.5,-0.5) -- (7.5,7.5);
\draw[black, ->>>>-] (0, -0.5) --(0,0);
\draw[black, ->>>>-] (-0.5,0) -- (0,0);
\draw[black, ->>>>-] (1, -0.5) --(1,0);
\draw[black, ->>>>-] (-0.5,1) -- (0,1);
\draw[black, ->>>>-] (2, -0.5) --(2,0);
\draw[black, ->>>>-] (-0.5,2) -- (0,2);
\draw[black, ->>>>-] (3, -0.5) --(3,0);
\draw[black, ->>>>-] (-0.5,3) -- (0,3);
\draw[black, ->>>>-] (4, -0.5) --(4,0);
\draw[black, ->>>>-] (-0.5,4) -- (0,4);
\draw[black, ->>>>-] (5, -0.5) --(5,0);
\draw[black, ->>>>-] (-0.5,6) -- (0,6);
\draw[black, ->>>>-] (6,-0.5) -- (6,0);
\draw[black, ->>>>-] (-0.5,5) -- (0,5);
\draw[black, ->>>>-] (7, -0.5) --(7,0);
\draw[black, ->>>>-] (-0.5,7) -- (0,7);

\draw[black, ->>>-] (0,7) -- (0,7.5);
\draw[black, ->>>-] (7,7) -- (7.5,7);
\draw[black, ->>>-] (1,7) -- (1,7.5);
\draw[black, ->>>-] (7,6) -- (7.5,6);
\draw[black, ->>>-] (2,7) -- (2,7.5);
\draw[black, ->>>-] (7,5) -- (7.5,5);
\draw[black, ->>>-] (3,7) -- (3,7.5);
\draw[black, ->>>-] (7,4) -- (7.5,4);
\draw[black, ->>>-] (4,7) -- (4,7.5);
\draw[black, ->>>-] (7,0) -- (7.5,0);
\draw[black, ->>>-] (5,7) -- (5,7.5);
\draw[black, ->>>-] (7,3) -- (7.5,3);
\draw[black, ->>>-] (6,7) -- (6,7.5);
\draw[black, ->>>-] (7,2) -- (7.5,2);
\draw[black, ->>>-] (7,7) -- (7,7.5);
\draw[black, ->>>-] (7,1) -- (7.5,1);
\draw[red] (0,-0.5) -- (0,7.5);
\draw[blue] (-0.5,7) -- (7.5,7);
\draw[black,fill=white] (0,0) circle (.1 cm);
\draw[black,fill=white] (1,1) circle (.1 cm);
\draw[black,fill=white] (3,3) circle (.1 cm);
\draw[black,fill=white] (2,2) circle (.1 cm);
\draw[black,fill=white] (4,4) circle (.1 cm);
\draw[black,fill=white] (5,5) circle (.1 cm);
\draw[black,fill=white] (6,6) circle (.1 cm);
\draw[black,fill=white] (7,7) circle (.1 cm);
\draw[black,fill=white] (3,5) circle (.1 cm);
\draw[black,fill=white](2,0) circle (.1 cm);
\draw[black,fill=white] (0,2) circle (.1 cm);
\draw[black,fill=white] (4,0) circle (.1 cm);
\draw[black,fill=white] (0,4) circle (.1 cm);
\draw[black,fill=white] (6,0) circle (.1 cm);
\draw[black,fill=white] (0,6) circle (.1 cm);
\draw[black,fill=white] (1,3) circle (.1 cm);
\draw[black,fill=white] (1,5) circle (.1 cm);
\draw[black,fill=white] (1,7) circle (.1 cm);
\draw[black,fill=white] (7,1) circle (.1 cm);
\draw[black,fill=white] (5,1) circle (.1 cm);
\draw[black,fill=white] (3,1) circle (.1 cm);

\draw[black,fill=white] (4,2) circle (.1 cm);
\draw[black,fill=white] (6,2) circle (.1 cm);
\draw[black,fill=white] (2,6) circle (.1 cm);
\draw[black,fill=white] (2,4) circle (.1 cm);
\draw[black,fill=white] (6,4) circle (.1 cm);
\draw[black,fill=white] (5,3) circle (.1 cm);
\draw[black,fill=white] (7,3) circle (.1 cm);
\draw[black,fill=white] (3,5) circle (.1 cm);
\draw[black,fill=white] (3,7) circle (.1 cm);
\draw[black,fill=white] (7,5) circle (.1 cm);
\draw[black,fill=white] (4,6) circle (.1 cm);
\draw[black,fill=white] (5,7) circle (.1 cm);

\draw[black,fill=black] (1,0) circle (.1 cm);
\draw[black,fill=black] (0,1) circle (.1 cm);
\draw[black,fill=black] (3,0) circle (.1 cm);
\draw[black,fill=black] (0,3) circle (.1 cm);
\draw[black,fill=blue] (5,0) circle (.1 cm);
\draw[blue,fill=black] (0,5) circle (.1 cm);
\draw[black,fill=black] (7,0) circle (.1 cm);
\draw[black,fill=black] (0,7) circle (.1 cm);
\draw[black,fill=black] (1,0) circle (.1 cm);
\draw[black,fill=black] (0,1) circle (.1 cm);
\draw[black,fill=black] (3,0) circle (.1 cm);
\draw[black,fill=black] (0,3) circle (.1 cm);
\draw[black,fill=black] (5,0) circle (.1 cm);
\draw[black,fill=black] (0,5) circle (.1 cm);
\draw[black,fill=black] (7,0) circle (.1 cm);
\draw[black,fill=black] (0,7) circle (.1 cm);

\draw[black,fill=black] (1,2) circle (.1 cm);
\draw[black,fill=black] (2,1) circle (.1 cm);
\draw[black,fill=black] (3,2) circle (.1 cm);
\draw[black,fill=black] (2,3) circle (.1 cm);
\draw[black,fill=black] (5,2) circle (.1 cm);
\draw[black,fill=black] (2,5) circle (.1 cm);
\draw[black,fill=black] (7,2) circle (.1 cm);
\draw[black,fill=black] (2,7) circle (.1 cm);
\draw[black,fill=black] (1,4) circle (.1 cm);
\draw[black,fill=black] (4,1) circle (.1 cm);
\draw[black,fill=black] (3,4) circle (.1 cm);
\draw[black,fill=black] (4,3) circle (.1 cm);
\draw[black,fill=black] (5,4) circle (.1 cm);
\draw[black,fill=black] (4,5) circle (.1 cm);
\draw[black,fill=black] (7,4) circle (.1 cm);
\draw[black,fill=black] (4,7) circle (.1 cm);

\draw[black,fill=black] (1,6) circle (.1 cm);
\draw[black,fill=black] (6,1) circle (.1 cm);
\draw[black,fill=black] (3,6) circle (.1 cm);
\draw[black,fill=black] (6,3) circle (.1 cm);
\draw[black,fill=black] (5,6) circle (.1 cm);
\draw[black,fill=black] (6,5) circle (.1 cm);
\draw[black,fill=black] (7,6) circle (.1 cm);
\draw[black,fill=black] (6,7) circle (.1 cm);

   \end{scope}

\begin{scope}[yshift=-10cm]
      \draw[red, thick, dotted] (-0.5,0.5) -- (7.5,0.5);
\draw[blue, thick, dotted] (6.5,-0.5) -- (6.5,7.5);
\draw[red, thick,->>>>-] (0,0) -- (0,1);
\draw[red, thick,->>>>-] (1,0) -- (1,1);
\draw[red, thick,->>>>-] (2,0) -- (2,1);
\draw[red, thick,->>>>-] (3,0) -- (3,1);
\draw[red, thick,->>>>-] (4,0) -- (4,1);
\draw[red, thick, ->>>>-] (5,0) -- (5,1);
\draw[red, thick, ->>>>-] (6,0) -- (6,1);
\draw[red, thick, ->>>>-] (7,0) -- (7,1);
 \node[red] (a) at (-1, 0.5) 
           {\scalebox{0.8}{$e^{i \theta}$}};
\draw[blue, thick,->>>>-] (6,0) -- (7,0);
\draw[blue, thick,->>>>-] (6,1) -- (7,1);
\draw[blue, thick,->>>>-] (6,2) -- (7,2);
\draw[blue, thick,->>>>-] (6,3) -- (7,3);
\draw[blue, thick, ->>>>-] (6,4) -- (7,4);
\draw[blue, thick, ->>>>-] (6,5) -- (7,5);
\draw[blue, thick, ->>>>-] (6,6) -- (7,6);
\draw[blue, thick, ->>>>-] (6,7) -- (7,7);

\node[blue] (a) at (6.5,8) 
           {\scalebox{0.8}{$-e^{i \phi}$}};
             \node[red] (a) at (0, -0.9) 
           {\scalebox{0.8}{$\mathcal{C}_2$}};
            \node[blue] (a) at (-0.9, 7) 
           {\scalebox{0.8}{$\mathcal{C}_1$}};
           \node[blue] (a) at (6.5, -0.9) 
           {\scalebox{0.8}{$\mathcal{C}_2^*$}};
            \node[red] (a) at (8, 0.5) 
           {\scalebox{0.8}{$\mathcal{C}_1^*$}};
\draw[fill = black, fill opacity = 0.6] 
(6,0) -- (7, 0) -- (7, 2) -- (6, 2) -- (6,0);
\draw[fill = black, fill opacity = 0.6] 
(1,4) -- (3, 4) -- (3, 5) -- (1, 5) -- (1,4);
\draw[fill = black, fill opacity = 0.6] 
(4,3) -- (6, 3) -- (6, 4) -- (4, 4) -- (4,3);
\draw[fill = black, fill opacity = 0.6] 
(5,6) -- (7, 6) -- (7, 7) -- (5, 7) -- (5,6);
\draw[black, ->>>>-] (1,1) -- (2,1);
\draw[black, ->>>>-] (0,0) -- (1,0);
\draw[black, ->>>>-] (1,0) -- (2,0);
\draw[black, ->>>>-] (3,0) -- (4,0);
\draw[black, ->>>>-] (4,0) -- (5,0);
\draw[black, ->>>>-,] (2,0) -- (3,0); 
\draw[black, ->>>>-] (5,0) -- (6,0);

\draw[black, ->>>>-] (0,1) -- (1,1);
\draw[black, ->>>>-] (0,1) -- (0,2);
\draw[black, ->>>>-] (1,1) -- (1,2);
\draw[black, ->>>>-] (2,1) -- (2,2);

\draw[black, ->>>>-] (2,1) -- (3,1);
\draw[black, ->>>>-] (3,1) -- (3,2);
\draw[black, ->>>>-] (3,1) -- (4,1);
\draw[black, ->>>>-] (4,1) -- (4,2);

\draw[black, ->>>>-] (4,1) -- (5,1);
\draw[black, ->>>>-] (5,1) -- (5,2);
\draw[black, ->>>>-] (5,1) -- (6,1);
\draw[black, ->>>>-] (6,1) -- (6,2);

\draw[black, ->>>>-] (7,1) -- (7,2);
\draw[black, ->>>>-] (1,2) -- (2,2);
\draw[black, ->>>>-] (1,2) -- (1,3);

\draw[black, ->>>>-] (0,2) -- (1,2);
\draw[black, ->>>>-] (0,2) -- (0,3);

\draw[black, ->>>>-] (2,2) -- (3,2);
\draw[black, ->>>>-] (2,2) -- (2,3);
\draw[black, ->>>>-] (3,2) -- (4,2);
\draw[black, ->>>>-] (3,2) -- (3,3);
\draw[black, ->>>>-] (4,2) -- (5,2);
\draw[black, ->>>>-] (4,2) -- (4,3);
\draw[black, ->>>>-] (5,2) -- (6,2);
\draw[black, ->>>>-] (5,2) -- (5,3);
\draw[black, ->>>>-] (5,2) -- (6,2);
\draw[black, ->>>>-] (5,2) -- (5,3);
\draw[black, ->>>>-] (6,2) -- (6,3);
\draw[black, ->>>>-] (4,2) -- (5,2);
\draw[black, ->>>>-] (5,3) -- (6,3);
\draw[black, ->>>>-] (7,2) -- (7,3);
\draw[black, ->>>>-] (0,3) -- (1,3);
\draw[black, ->>>>-] (0,3) -- (0,4);
\draw[black, ->>>>-] (1,3) -- (2,3);
\draw[black, ->>>>-] (2,3) -- (3,3);
\draw[black, ->>>>-] (3,3) -- (4,3);
\draw[black, ->>>>-] (4,3) -- (5,3);
\draw[black, ->>>>-] (0,4) -- (0,5);
\draw[black, ->>>>-] (5,3) -- (6,3);
\draw[black, ->>>>-] (0,4) -- (0,5);
\draw[black, ->>>>-] (0,4) -- (1,4);
\draw[black, ->>>>-] (1,4) -- (2,4);
\draw[black, ->>>>-] (2,4) -- (3,4);
\draw[black, ->>>>-] (3,4) -- (4,4);
\draw[black, ->>>>-] (4,4) -- (5,4);
\draw[black, ->>>>-] (0,5) -- (0,6);
\draw[black, ->>>>-] (5,4) -- (6,4);
\draw[black, ->>>>-] (0,5) -- (1,5);
\draw[black, ->>>>-] (1,5) -- (2,5);
\draw[black, ->>>>-] (2,5) -- (3,5);
\draw[black, ->>>>-] (3,5) -- (4,5);
\draw[black, ->>>>-] (4,5) -- (5,5);
\draw[black, ->>>>-] (0,6) -- (0,7);
\draw[black, ->>>>-] (5,5) -- (6,5);
\draw[black, ->>>>-] (0,6) -- (1,6);
\draw[black, ->>>>-] (1,6) -- (2,6);
\draw[black, ->>>>-] (2,6) -- (3,6);
\draw[black, ->>>>-] (3,6) -- (4,6);
\draw[black, ->>>>-] (4,6) -- (5,6);
\draw[black, ->>>>-] (7,6) -- (7,7);
\draw[black, ->>>>-] (5,6) -- (6,6);
\draw[black, ->>>>-] (0,7) -- (1,7);
\draw[black, ->>>>-] (1,7) -- (2,7);
\draw[black, ->>>>-] (2,7) -- (3,7);
\draw[black, ->>>>-] (3,7) -- (4,7);
\draw[black, ->>>>-] (4,7) -- (5,7);
\draw[black, ->>>>-] (7,5) -- (7,6);
\draw[black, ->>>>-] (5,7) -- (6,7);
\draw[black, ->>>>-] (7,3) -- (7,4);
\draw[black, ->>>>-] (7,4) -- (7,5);
\draw[black, ->>>>-] (1,4) -- (1,5);
\draw[black, ->>>>-] (1,3) -- (1,4);
\draw[black, ->>>>-] (1,5) -- (1,6);
\draw[black, ->>>>-] (1,6) -- (1,7);
\draw[black, ->>>>-, thick] (2,4) -- (2,5);
\draw[black, ->>>>-] (2,3) -- (2,4);
\draw[black, ->>>>-] (2,5) -- (2,6);
\draw[black, ->>>>-] (2,6) -- (2,7);
\draw[black, ->>>>-] (3,4) -- (3,5);
\draw[black, ->>>>-] (3,3) -- (3,4);
\draw[black, ->>>>-] (3,5) -- (3,6);
\draw[black, ->>>>-] (3,6) -- (3,7);
\draw[black, ->>>>-] (4,4) -- (4,5);
\draw[black, ->>>>-] (4,3) -- (4,4);
\draw[black, ->>>>-] (4,5) -- (4,6);
\draw[black, ->>>>-] (4,6) -- (4,7);
\draw[black, ->>>>-] (5,4) -- (5,5);
\draw[black, ->>>>-, thick] (5,3) -- (5,4);
\draw[black, ->>>>-] (5,5) -- (5,6);
\draw[black, ->>>>-] (5,6) -- (5,7);
\draw[black, ->>>>-] (6,4) -- (6,5);
\draw[black, ->>>>-] (6,3) -- (6,4);
\draw[black, ->>>>-] (6,5) -- (6,6);
\draw[black, ->>>>-, thick] (6,6) -- (6,7);
\draw[black, dotted, ->-] (-0.5,-0.5) -- (7.5,-0.5);
\draw[black, dotted, ->-] (-0.5,7.5) -- (7.5,7.5);
\draw[black, dotted, ->>-] (-0.5,-0.5) -- (-0.5,7.5);
\draw[black, dotted, ->>-] (7.5,-0.5) -- (7.5,7.5);
\draw[black, ->>>>-] (0, -0.5) --(0,0);
\draw[black, ->>>>-] (-0.5,0) -- (0,0);
\draw[black, ->>>>-] (1, -0.5) --(1,0);
\draw[black, ->>>>-] (-0.5,1) -- (0,1);
\draw[black, ->>>>-] (2, -0.5) --(2,0);
\draw[black, ->>>>-] (-0.5,2) -- (0,2);
\draw[black, ->>>>-] (3, -0.5) --(3,0);
\draw[black, ->>>>-] (-0.5,3) -- (0,3);
\draw[black, ->>>>-] (4, -0.5) --(4,0);
\draw[black, ->>>>-] (-0.5,4) -- (0,4);
\draw[black, ->>>>-] (5, -0.5) --(5,0);
\draw[black, ->>>>-] (-0.5,6) -- (0,6);
\draw[black, ->>>>-] (6,-0.5) -- (6,0);
\draw[black, ->>>>-] (-0.5,5) -- (0,5);
\draw[black, ->>>>-] (7, -0.5) --(7,0);
\draw[black, ->>>>-] (-0.5,7) -- (0,7);

\draw[black, ->>>-] (0,7) -- (0,7.5);
\draw[black, ->>>-] (7,7) -- (7.5,7);
\draw[black, ->>>-] (1,7) -- (1,7.5);
\draw[black, ->>>-] (7,6) -- (7.5,6);
\draw[black, ->>>-] (2,7) -- (2,7.5);
\draw[black, ->>>-] (7,5) -- (7.5,5);
\draw[black, ->>>-] (3,7) -- (3,7.5);
\draw[black, ->>>-] (7,4) -- (7.5,4);
\draw[black, ->>>-] (4,7) -- (4,7.5);
\draw[black, ->>>-] (7,0) -- (7.5,0);
\draw[black, ->>>-] (5,7) -- (5,7.5);
\draw[black, ->>>-] (7,3) -- (7.5,3);
\draw[black, ->>>-] (6,7) -- (6,7.5);
\draw[black, ->>>-] (7,2) -- (7.5,2);
\draw[black, ->>>-] (7,7) -- (7,7.5);
\draw[black, ->>>-] (7,1) -- (7.5,1);
\draw[red] (0,-0.5) -- (0,7.5);
\draw[blue] (-0.5,7) -- (7.5,7);
\draw[black,fill=white] (0,0) circle (.1 cm);
\draw[black,fill=white] (1,1) circle (.1 cm);
\draw[black,fill=white] (3,3) circle (.1 cm);
\draw[black,fill=white] (2,2) circle (.1 cm);
\draw[black,fill=white] (4,4) circle (.1 cm);
\draw[black,fill=white] (5,5) circle (.1 cm);
\draw[black,fill=white] (6,6) circle (.1 cm);
\draw[black,fill=white] (7,7) circle (.1 cm);
\draw[black,fill=white] (3,5) circle (.1 cm);
\draw[black,fill=white](2,0) circle (.1 cm);
\draw[black,fill=white] (0,2) circle (.1 cm);
\draw[black,fill=white] (4,0) circle (.1 cm);
\draw[black,fill=white] (0,4) circle (.1 cm);
\draw[black,fill=white] (6,0) circle (.1 cm);
\draw[black,fill=white] (0,6) circle (.1 cm);
\draw[black,fill=white] (1,3) circle (.1 cm);
\draw[black,fill=white] (1,5) circle (.1 cm);
\draw[black,fill=white] (1,7) circle (.1 cm);
\draw[black,fill=white] (7,1) circle (.1 cm);
\draw[black,fill=white] (5,1) circle (.1 cm);
\draw[black,fill=white] (3,1) circle (.1 cm);

\draw[black,fill=white] (4,2) circle (.1 cm);
\draw[black,fill=white] (6,2) circle (.1 cm);
\draw[black,fill=white] (2,6) circle (.1 cm);
\draw[black,fill=white] (2,4) circle (.1 cm);
\draw[black,fill=white] (6,4) circle (.1 cm);
\draw[black,fill=white] (5,3) circle (.1 cm);
\draw[black,fill=white] (7,3) circle (.1 cm);
\draw[black,fill=white] (3,5) circle (.1 cm);
\draw[black,fill=white] (3,7) circle (.1 cm);
\draw[black,fill=white] (7,5) circle (.1 cm);
\draw[black,fill=white] (4,6) circle (.1 cm);
\draw[black,fill=white] (5,7) circle (.1 cm);

\draw[black,fill=black] (1,0) circle (.1 cm);
\draw[black,fill=black] (0,1) circle (.1 cm);
\draw[black,fill=black] (3,0) circle (.1 cm);
\draw[black,fill=black] (0,3) circle (.1 cm);
\draw[black,fill=blue] (5,0) circle (.1 cm);
\draw[blue,fill=black] (0,5) circle (.1 cm);
\draw[black,fill=black] (7,0) circle (.1 cm);
\draw[black,fill=black] (0,7) circle (.1 cm);
\draw[black,fill=black] (1,0) circle (.1 cm);
\draw[black,fill=black] (0,1) circle (.1 cm);
\draw[black,fill=black] (3,0) circle (.1 cm);
\draw[black,fill=black] (0,3) circle (.1 cm);
\draw[black,fill=black] (5,0) circle (.1 cm);
\draw[black,fill=black] (0,5) circle (.1 cm);
\draw[black,fill=black] (7,0) circle (.1 cm);
\draw[black,fill=black] (0,7) circle (.1 cm);

\draw[black,fill=black] (1,2) circle (.1 cm);
\draw[black,fill=black] (2,1) circle (.1 cm);
\draw[black,fill=black] (3,2) circle (.1 cm);
\draw[black,fill=black] (2,3) circle (.1 cm);
\draw[black,fill=black] (5,2) circle (.1 cm);
\draw[black,fill=black] (2,5) circle (.1 cm);
\draw[black,fill=black] (7,2) circle (.1 cm);
\draw[black,fill=black] (2,7) circle (.1 cm);
\draw[black,fill=black] (1,4) circle (.1 cm);
\draw[black,fill=black] (4,1) circle (.1 cm);
\draw[black,fill=black] (3,4) circle (.1 cm);
\draw[black,fill=black] (4,3) circle (.1 cm);
\draw[black,fill=black] (5,4) circle (.1 cm);
\draw[black,fill=black] (4,5) circle (.1 cm);
\draw[black,fill=black] (7,4) circle (.1 cm);
\draw[black,fill=black] (4,7) circle (.1 cm);

\draw[black,fill=black] (1,6) circle (.1 cm);
\draw[black,fill=black] (6,1) circle (.1 cm);
\draw[black,fill=black] (3,6) circle (.1 cm);
\draw[black,fill=black] (6,3) circle (.1 cm);
\draw[black,fill=black] (5,6) circle (.1 cm);
\draw[black,fill=black] (6,5) circle (.1 cm);
\draw[black,fill=black] (7,6) circle (.1 cm);
\draw[black,fill=black] (6,7) circle (.1 cm);

   \end{scope}

   \begin{scope}[xshift=10cm, yshift= -10cm]
      \draw[red, thick, dotted] (-0.5,0.5) -- (7.5,0.5);
\draw[blue, thick, dotted] (6.5,-0.5) -- (6.5,7.5);
\draw[red, thick,->>>>-] (0,0) -- (0,1);
\draw[red, thick,->>>>-] (1,0) -- (1,1);
\draw[red, thick,->>>>-] (2,0) -- (2,1);
\draw[red, thick,->>>>-] (3,0) -- (3,1);
\draw[red, thick,->>>>-] (4,0) -- (4,1);
\draw[red, thick, ->>>>-] (5,0) -- (5,1);
\draw[red, thick, ->>>>-] (6,0) -- (6,1);
\draw[red, thick, ->>>>-] (7,0) -- (7,1);
 \node[red] (a) at (-1.1, 0.5) 
           {\scalebox{0.8}{$-e^{i \theta}$}};
\draw[blue, thick,->>>>-] (6,0) -- (7,0);
\draw[blue, thick,->>>>-] (6,1) -- (7,1);
\draw[blue, thick,->>>>-] (6,2) -- (7,2);
\draw[blue, thick,->>>>-] (6,3) -- (7,3);
\draw[blue, thick, ->>>>-] (6,4) -- (7,4);
\draw[blue, thick, ->>>>-] (6,5) -- (7,5);
\draw[blue, thick, ->>>>-] (6,6) -- (7,6);
\draw[blue, thick, ->>>>-] (6,7) -- (7,7);

\node[blue] (a) at (6.5,8) 
           {\scalebox{0.8}{$-e^{i \phi}$}};
             \node[red] (a) at (0, -0.9) 
           {\scalebox{0.8}{$\mathcal{C}_2$}};
            \node[blue] (a) at (-0.9, 7) 
           {\scalebox{0.8}{$\mathcal{C}_1$}};
           \node[blue] (a) at (6.5, -0.9) 
           {\scalebox{0.8}{$\mathcal{C}_2^*$}};
            \node[red] (a) at (8, 0.5) 
           {\scalebox{0.8}{$\mathcal{C}_1^*$}};
  \draw[fill = black, fill opacity = 0.6] 
(6,0) -- (7, 0) -- (7, 2) -- (6, 2) -- (6,0);
\draw[fill = black, fill opacity = 0.6] 
(1,4) -- (3, 4) -- (3, 5) -- (1, 5) -- (1,4);
\draw[fill = black, fill opacity = 0.6] 
(4,0) -- (6, 0) -- (6, 1) -- (4, 1) -- (4,0);
\draw[fill = black, fill opacity = 0.6] 
(5,6) -- (7, 6) -- (7, 7) -- (5, 7) -- (5,6);
\draw[black, ->>>>-] (1,1) -- (2,1);
\draw[black, ->>>>-] (0,0) -- (1,0);
\draw[black, ->>>>-] (1,0) -- (2,0);
\draw[black, ->>>>-] (3,0) -- (4,0);
\draw[black, ->>>>-] (4,0) -- (5,0);
\draw[black, ->>>>-,] (2,0) -- (3,0); 
\draw[black, ->>>>-] (5,0) -- (6,0);

\draw[black, ->>>>-] (0,1) -- (1,1);
\draw[black, ->>>>-] (0,1) -- (0,2);
\draw[black, ->>>>-] (1,1) -- (1,2);
\draw[black, ->>>>-] (2,1) -- (2,2);

\draw[black, ->>>>-] (2,1) -- (3,1);
\draw[black, ->>>>-] (3,1) -- (3,2);
\draw[black, ->>>>-] (3,1) -- (4,1);
\draw[black, ->>>>-] (4,1) -- (4,2);

\draw[black, ->>>>-] (4,1) -- (5,1);
\draw[black, ->>>>-] (5,1) -- (5,2);
\draw[black, ->>>>-] (5,1) -- (6,1);
\draw[black, ->>>>-] (6,1) -- (6,2);

\draw[black, ->>>>-] (7,1) -- (7,2);
\draw[black, ->>>>-] (1,2) -- (2,2);
\draw[black, ->>>>-] (1,2) -- (1,3);

\draw[black, ->>>>-] (0,2) -- (1,2);
\draw[black, ->>>>-] (0,2) -- (0,3);

\draw[black, ->>>>-] (2,2) -- (3,2);
\draw[black, ->>>>-] (2,2) -- (2,3);
\draw[black, ->>>>-] (3,2) -- (4,2);
\draw[black, ->>>>-] (3,2) -- (3,3);
\draw[black, ->>>>-] (4,2) -- (5,2);
\draw[black, ->>>>-] (4,2) -- (4,3);
\draw[black, ->>>>-] (5,2) -- (6,2);
\draw[black, ->>>>-] (5,2) -- (5,3);
\draw[black, ->>>>-] (5,2) -- (6,2);
\draw[black, ->>>>-] (5,2) -- (5,3);
\draw[black, ->>>>-] (6,2) -- (6,3);
\draw[black, ->>>>-] (4,2) -- (5,2);
\draw[black, ->>>>-] (5,3) -- (6,3);
\draw[black, ->>>>-] (7,2) -- (7,3);
\draw[black, ->>>>-] (0,3) -- (1,3);
\draw[black, ->>>>-] (0,3) -- (0,4);
\draw[black, ->>>>-] (1,3) -- (2,3);
\draw[black, ->>>>-] (2,3) -- (3,3);
\draw[black, ->>>>-] (3,3) -- (4,3);
\draw[black, ->>>>-] (4,3) -- (5,3);
\draw[black, ->>>>-] (0,4) -- (0,5);
\draw[black, ->>>>-] (5,3) -- (6,3);
\draw[black, ->>>>-] (0,4) -- (0,5);
\draw[black, ->>>>-] (0,4) -- (1,4);
\draw[black, ->>>>-] (1,4) -- (2,4);
\draw[black, ->>>>-] (2,4) -- (3,4);
\draw[black, ->>>>-] (3,4) -- (4,4);
\draw[black, ->>>>-] (4,4) -- (5,4);
\draw[black, ->>>>-] (0,5) -- (0,6);
\draw[black, ->>>>-] (5,4) -- (6,4);
\draw[black, ->>>>-] (0,5) -- (1,5);
\draw[black, ->>>>-] (1,5) -- (2,5);
\draw[black, ->>>>-] (2,5) -- (3,5);
\draw[black, ->>>>-] (3,5) -- (4,5);
\draw[black, ->>>>-] (4,5) -- (5,5);
\draw[black, ->>>>-] (0,6) -- (0,7);
\draw[black, ->>>>-] (5,5) -- (6,5);
\draw[black, ->>>>-] (0,6) -- (1,6);
\draw[black, ->>>>-] (1,6) -- (2,6);
\draw[black, ->>>>-] (2,6) -- (3,6);
\draw[black, ->>>>-] (3,6) -- (4,6);
\draw[black, ->>>>-] (4,6) -- (5,6);
\draw[black, ->>>>-] (7,6) -- (7,7);
\draw[black, ->>>>-] (5,6) -- (6,6);
\draw[black, ->>>>-] (0,7) -- (1,7);
\draw[black, ->>>>-] (1,7) -- (2,7);
\draw[black, ->>>>-] (2,7) -- (3,7);
\draw[black, ->>>>-] (3,7) -- (4,7);
\draw[black, ->>>>-] (4,7) -- (5,7);
\draw[black, ->>>>-] (7,5) -- (7,6);
\draw[black, ->>>>-] (5,7) -- (6,7);
\draw[black, ->>>>-] (7,3) -- (7,4);
\draw[black, ->>>>-] (7,4) -- (7,5);
\draw[black, ->>>>-] (1,4) -- (1,5);
\draw[black, ->>>>-] (1,3) -- (1,4);
\draw[black, ->>>>-] (1,5) -- (1,6);
\draw[black, ->>>>-] (1,6) -- (1,7);
\draw[black, ->>>>-, thick] (2,4) -- (2,5);
\draw[black, ->>>>-] (2,3) -- (2,4);
\draw[black, ->>>>-] (2,5) -- (2,6);
\draw[black, ->>>>-] (2,6) -- (2,7);
\draw[black, ->>>>-] (3,4) -- (3,5);
\draw[black, ->>>>-] (3,3) -- (3,4);
\draw[black, ->>>>-] (3,5) -- (3,6);
\draw[black, ->>>>-] (3,6) -- (3,7);
\draw[black, ->>>>-] (4,4) -- (4,5);
\draw[black, ->>>>-] (4,3) -- (4,4);
\draw[black, ->>>>-] (4,5) -- (4,6);
\draw[black, ->>>>-] (4,6) -- (4,7);
\draw[black, ->>>>-] (5,4) -- (5,5);
\draw[black, ->>>>-] (5,3) -- (5,4);
\draw[black, ->>>>-] (5,5) -- (5,6);
\draw[black, ->>>>-] (5,6) -- (5,7);
\draw[black, ->>>>-] (6,4) -- (6,5);
\draw[black, ->>>>-] (6,3) -- (6,4);
\draw[black, ->>>>-] (6,5) -- (6,6);
\draw[black, ->>>>-, thick] (6,6) -- (6,7);
\draw[black, dotted, ->-] (-0.5,-0.5) -- (7.5,-0.5);
\draw[black, dotted, ->-] (-0.5,7.5) -- (7.5,7.5);
\draw[black, dotted, ->>-] (-0.5,-0.5) -- (-0.5,7.5);
\draw[black, dotted, ->>-] (7.5,-0.5) -- (7.5,7.5);
\draw[black, ->>>>-] (0, -0.5) --(0,0);
\draw[black, ->>>>-] (-0.5,0) -- (0,0);
\draw[black, ->>>>-] (1, -0.5) --(1,0);
\draw[black, ->>>>-] (-0.5,1) -- (0,1);
\draw[black, ->>>>-] (2, -0.5) --(2,0);
\draw[black, ->>>>-] (-0.5,2) -- (0,2);
\draw[black, ->>>>-] (3, -0.5) --(3,0);
\draw[black, ->>>>-] (-0.5,3) -- (0,3);
\draw[black, ->>>>-] (4, -0.5) --(4,0);
\draw[black, ->>>>-] (-0.5,4) -- (0,4);
\draw[black, ->>>>-] (5, -0.5) --(5,0);
\draw[black, ->>>>-] (-0.5,6) -- (0,6);
\draw[black, ->>>>-] (6,-0.5) -- (6,0);
\draw[black, ->>>>-] (-0.5,5) -- (0,5);
\draw[black, ->>>>-] (7, -0.5) --(7,0);
\draw[black, ->>>>-] (-0.5,7) -- (0,7);

\draw[black, ->>>-] (0,7) -- (0,7.5);
\draw[black, ->>>-] (7,7) -- (7.5,7);
\draw[black, ->>>-] (1,7) -- (1,7.5);
\draw[black, ->>>-] (7,6) -- (7.5,6);
\draw[black, ->>>-] (2,7) -- (2,7.5);
\draw[black, ->>>-] (7,5) -- (7.5,5);
\draw[black, ->>>-] (3,7) -- (3,7.5);
\draw[black, ->>>-] (7,4) -- (7.5,4);
\draw[black, ->>>-] (4,7) -- (4,7.5);
\draw[black, ->>>-] (7,0) -- (7.5,0);
\draw[black, ->>>-] (5,7) -- (5,7.5);
\draw[black, ->>>-] (7,3) -- (7.5,3);
\draw[black, ->>>-] (6,7) -- (6,7.5);
\draw[black, ->>>-] (7,2) -- (7.5,2);
\draw[black, ->>>-] (7,7) -- (7,7.5);
\draw[black, ->>>-] (7,1) -- (7.5,1);
\draw[red] (0,-0.5) -- (0,7.5);
\draw[blue] (-0.5,7) -- (7.5,7);
\draw[black,fill=white] (0,0) circle (.1 cm);
\draw[black,fill=white] (1,1) circle (.1 cm);
\draw[black,fill=white] (3,3) circle (.1 cm);
\draw[black,fill=white] (2,2) circle (.1 cm);
\draw[black,fill=white] (4,4) circle (.1 cm);
\draw[black,fill=white] (5,5) circle (.1 cm);
\draw[black,fill=white] (6,6) circle (.1 cm);
\draw[black,fill=white] (7,7) circle (.1 cm);
\draw[black,fill=white] (3,5) circle (.1 cm);
\draw[black,fill=white](2,0) circle (.1 cm);
\draw[black,fill=white] (0,2) circle (.1 cm);
\draw[black,fill=white] (4,0) circle (.1 cm);
\draw[black,fill=white] (0,4) circle (.1 cm);
\draw[black,fill=white] (6,0) circle (.1 cm);
\draw[black,fill=white] (0,6) circle (.1 cm);
\draw[black,fill=white] (1,3) circle (.1 cm);
\draw[black,fill=white] (1,5) circle (.1 cm);
\draw[black,fill=white] (1,7) circle (.1 cm);
\draw[black,fill=white] (7,1) circle (.1 cm);
\draw[black,fill=white] (5,1) circle (.1 cm);
\draw[black,fill=white] (3,1) circle (.1 cm);

\draw[black,fill=white] (4,2) circle (.1 cm);
\draw[black,fill=white] (6,2) circle (.1 cm);
\draw[black,fill=white] (2,6) circle (.1 cm);
\draw[black,fill=white] (2,4) circle (.1 cm);
\draw[black,fill=white] (6,4) circle (.1 cm);
\draw[black,fill=white] (5,3) circle (.1 cm);
\draw[black,fill=white] (7,3) circle (.1 cm);
\draw[black,fill=white] (3,5) circle (.1 cm);
\draw[black,fill=white] (3,7) circle (.1 cm);
\draw[black,fill=white] (7,5) circle (.1 cm);
\draw[black,fill=white] (4,6) circle (.1 cm);
\draw[black,fill=white] (5,7) circle (.1 cm);

\draw[black,fill=black] (1,0) circle (.1 cm);
\draw[black,fill=black] (0,1) circle (.1 cm);
\draw[black,fill=black] (3,0) circle (.1 cm);
\draw[black,fill=black] (0,3) circle (.1 cm);
\draw[black,fill=blue] (5,0) circle (.1 cm);
\draw[blue,fill=black] (0,5) circle (.1 cm);
\draw[black,fill=black] (7,0) circle (.1 cm);
\draw[black,fill=black] (0,7) circle (.1 cm);
\draw[black,fill=black] (1,0) circle (.1 cm);
\draw[black,fill=black] (0,1) circle (.1 cm);
\draw[black,fill=black] (3,0) circle (.1 cm);
\draw[black,fill=black] (0,3) circle (.1 cm);
\draw[black,fill=black] (5,0) circle (.1 cm);
\draw[black,fill=black] (0,5) circle (.1 cm);
\draw[black,fill=black] (7,0) circle (.1 cm);
\draw[black,fill=black] (0,7) circle (.1 cm);

\draw[black,fill=black] (1,2) circle (.1 cm);
\draw[black,fill=black] (2,1) circle (.1 cm);
\draw[black,fill=black] (3,2) circle (.1 cm);
\draw[black,fill=black] (2,3) circle (.1 cm);
\draw[black,fill=black] (5,2) circle (.1 cm);
\draw[black,fill=black] (2,5) circle (.1 cm);
\draw[black,fill=black] (7,2) circle (.1 cm);
\draw[black,fill=black] (2,7) circle (.1 cm);
\draw[black,fill=black] (1,4) circle (.1 cm);
\draw[black,fill=black] (4,1) circle (.1 cm);
\draw[black,fill=black] (3,4) circle (.1 cm);
\draw[black,fill=black] (4,3) circle (.1 cm);
\draw[black,fill=black] (5,4) circle (.1 cm);
\draw[black,fill=black] (4,5) circle (.1 cm);
\draw[black,fill=black] (7,4) circle (.1 cm);
\draw[black,fill=black] (4,7) circle (.1 cm);

\draw[black,fill=black] (1,6) circle (.1 cm);
\draw[black,fill=black] (6,1) circle (.1 cm);
\draw[black,fill=black] (3,6) circle (.1 cm);
\draw[black,fill=black] (6,3) circle (.1 cm);
\draw[black,fill=black] (5,6) circle (.1 cm);
\draw[black,fill=black] (6,5) circle (.1 cm);
\draw[black,fill=black] (7,6) circle (.1 cm);
\draw[black,fill=black] (6,7) circle (.1 cm);

   \end{scope}
   \end{tikzpicture}
   \caption{Possible $\bm{\sigma}$ background with $e^{i \theta}$ and $e^{i \phi}$ holonomies around $a$ and $b$ cycles}
   \label{background}
\end{figure}
\section{Proof of \autoref{thm2}}\label{sec:topord}
\paragraph{Ground states in the physical Hilbert space.}
The Hamiltonian \eqref{hami} has four-fold almost degenerate ground states characterized by the four possible $\pi$-flux backgrounds of the torus. Each ground state is obtained by acting with the projection that imposes the Gauss' law on the ground state of the many-body Hamiltonian in the $\pi$-flux phase, for a given choice of holonomies:
\begin{equation}\label{eq:GS two expressions}
    \ket{\Omega_{ab}} = 
    \prod_{i \in \text{V}(\Gamma_L)} \bigg(\frac{1+Q_i}{2}\bigg) \ket{\psi_{\bm{-1},a,b}}\otimes \ket{\bm{-1},a,b}
    =\frac{1}{2^{|\Gamma|}} \sum_{\Lambda \subset \Gamma} A_{\Lambda} (-1)^{N_{\Lambda}}\ket{\psi_{\bm{-1},a,b}}\otimes \ket{\bm{-1},a,b}
\end{equation}
where $\ket{\bm{-1},a,b}$ are a set of common eigenstates of all $\hat{\sigma}^z_{ij}$ satisfying: $B_p=-1$ for any plaquette $p \in \text{F}(\Gamma_L)$, $\hat Z_{\caC_1}=a$ and $\hat Z_{\caC_2} = b$ and $\psi_{\bm{-1},a,b}$ is the ground state of $H(\bm{-1},a,b)$acting on the fermionic Fock space.

The following proposition states some basic properties about these states. Recall that $\mathcal{H}^{\rm phys}$ is the physical Hilbert space, where Gauss' law is satisfied on each vertex, see \hyperref[def:gauss]{Definition \ref*{def:gauss}}.

\begin{proposition}\label{gs}
    The states $\ket{\Omega_{ab}}$ satisfy the following properties:
    \begin{enumerate}
        \item $\ket{\Omega_{ab}}\in\mathcal{H}^{\rm phys}$
        \item $\ket{\Omega_{ab}}$ depends only on the choice of the background holonomies
        \item $\ket{\Omega_{ab}}\neq 0$
    \end{enumerate}
\end{proposition}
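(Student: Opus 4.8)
The plan is to read all three claims off the two equivalent expressions for $\ket{\Omega_{ab}}$ in \eqref{eq:GS two expressions}, using only the algebraic properties of the charges $Q_i$ recorded above together with the input from \autoref{thm1}. For the first claim, I would exploit that the $Q_i$ mutually commute and are unitary involutions, so each $\frac{1+Q_i}{2}$ is an orthogonal projection and $Q_j\frac{1+Q_j}{2}=\frac{Q_j+Q_j^2}{2}=\frac{1+Q_j}{2}$. Combined with commutativity this gives $Q_j\prod_{i}\frac{1+Q_i}{2}=\prod_i\frac{1+Q_i}{2}$ for every $j$, and applying this to the defining expression yields $Q_j\ket{\Omega_{ab}}=\ket{\Omega_{ab}}$ for all $j$, i.e. Gauss' law holds and $\ket{\Omega_{ab}}\in\mathcal{H}^{\rm phys}$.

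For the second claim, the content is that the physical projection erases the choice of gauge representative. Any two backgrounds carrying $\pi$-flux on every plaquette with holonomies $(a,b)$ are gauge equivalent, so a second representative has the form $\ket{\bm\sigma'}=A_{\Lambda_0}\ket{\bm{-1},a,b}$. By the gauge covariance $H(\bm\sigma')=(-1)^{N_{\Lambda_0}}H(\bm{-1},a,b)(-1)^{N_{\Lambda_0}}$ from \eqref{eq:H gauge invariance 1} and the uniqueness of the fermionic ground state (\autoref{thm1.1}), the corresponding seed vector equals $Q_{\Lambda_0}\big(\ket{\psi_{\bm{-1},a,b}}\otimes\ket{\bm{-1},a,b}\big)$ up to a phase, where $Q_{\Lambda_0}=A_{\Lambda_0}(-1)^{N_{\Lambda_0}}$ acts as $(-1)^{N_{\Lambda_0}}$ on the fermionic factor and as $A_{\Lambda_0}$ on the gauge factor. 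Since $\prod_i\frac{1+Q_i}{2}\,Q_{\Lambda_0}=\prod_i\frac{1+Q_i}{2}$ by the same absorption identity, the two seeds produce the same physical state up to a phase, so $\ket{\Omega_{ab}}$ depends only on the gauge class, which in the $\pi$-flux sector is labelled precisely by $(a,b)$.

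The third claim is where care is genuinely needed, and it is the step I expect to be the crux. I would work from the second expression in \eqref{eq:GS two expressions} and compute $\|\Omega_{ab}\|^2$ directly. The gauge vectors $A_\Lambda\ket{\bm{-1},a,b}$ lie in the $\hat\sigma^z$-eigenbasis, hence are orthonormal, and two subsets yield the same configuration exactly when $\Lambda\triangle\Lambda'\in\{\emptyset,\Gamma\}$: indeed $A_M\ket{\bm\sigma}=\ket{\bm\sigma}$ forces $M$ to contain no edge with a single endpoint in $M$, so by connectedness of the torus $M\in\{\emptyset,\text{V}(\Gamma_L)\}$, using $A_{\text{V}(\Gamma_L)}=\mathbbm{1}$ from \eqref{stokesb}. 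The subsets thus pair up as $\Lambda$ and its complement, contributing gauge-orthogonal terms whose fermionic weights are $(-1)^{N_\Lambda}\ket{\psi_{\bm{-1},a,b}}$ and $(-1)^{N_{\Gamma\setminus\Lambda}}\ket{\psi_{\bm{-1},a,b}}=(-1)^{N}(-1)^{N_\Lambda}\ket{\psi_{\bm{-1},a,b}}$. Here the half-filling statement of \autoref{thm1.1} enters decisively: the ground state carries $N=L^2$ particles with $L\in 4\mathbb{N}$, so $(-1)^N\ket{\psi_{\bm{-1},a,b}}=\ket{\psi_{\bm{-1},a,b}}$ and the paired contributions add rather than cancel. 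Summing over the $2^{|\Gamma|-1}$ distinct configurations then gives $\|\Omega_{ab}\|^2=2^{-(|\Gamma|-1)}>0$.

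The main obstacle is exactly this parity bookkeeping. Had the ground state carried an odd fermion number, the complementary pairs would cancel and the projection would annihilate the seed, since $\mathcal{H}^{\rm phys}$ contains only even-parity vectors (as $Q_{\text{V}(\Gamma_L)}=(-1)^N$ acts as the identity there by $\prod_i Q_i=(-1)^N$). Everything else is routine manipulation of the projector and the gauge action; the one genuine input from the rest of the paper is the uniqueness and half-filling of the $\pi$-flux ground state furnished by \autoref{thm1}.
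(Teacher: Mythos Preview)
Your proposal is correct and follows essentially the same route as the paper. For item~3 the paper takes a small shortcut you might adopt: since $\prod_i\frac{1+Q_i}{2}$ is an orthogonal projection, $\|\Omega_{ab}\|^2$ equals the overlap of the seed with its projection, i.e.\ $\frac{1}{2^{|\Gamma|}}\sum_\Lambda\langle\bm{-1},a,b|A_\Lambda|\bm{-1},a,b\rangle\,\langle\psi|(-1)^{N_\Lambda}|\psi\rangle$, and only $\Lambda\in\{\emptyset,\Gamma\}$ survive the gauge factor --- this is exactly your pairing argument collapsed to two terms, and the even-parity input from half-filling is used identically.
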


\begin{proof}
    \begin{enumerate}
        \item This is immediate since $\prod_{i \in \text{V}(\Gamma_L)} \frac{1+Q_i}{2}$ projects onto the eigenspace $+1$ of each~$Q_i$.
        \item Let $\bm{\sigma}, \bm{\sigma'}$ be two different gauge equivalent backgrounds. There exist a subset $\Lambda \subset \Gamma$ such that $A_{\Lambda} \ket{\bm{\sigma}} = \ket{\bm{\sigma'}}$, and the Hamiltonian in the new background is given by $H(\bm{\sigma'}) = (-1)^{N_{\Lambda}} H(\bm{\sigma}) (-1)^{N_{\Lambda}}$. Thus the ground state of $H(\bm{\sigma'})$ is
        \begin{equation}
            \ket{\psi_{{\bm\sigma'}}} =(-1)^{N_{\Lambda}} \ket{\psi_{{\bm\sigma}}}
        \end{equation}
Hence,
\begin{align}
\ket{\Omega'_{ab}} &= \frac{1}{2^{|\Gamma|}} \sum_{\Upsilon \subset \Gamma} (-1)^{N_{\Upsilon}} A_{\Upsilon} \ket{\psi_{{\bm\sigma'}}}\otimes \ket{\bm{\sigma'},a,b}\\
&= \frac{1}{2^{|\Gamma|}} \sum_{\Upsilon \subset \Gamma} (-1)^{N_{\Upsilon}+N_\Lambda} A_{\Upsilon} \ket{\psi_{{\bm\sigma}}}\otimes A_\Lambda\ket{\bm{\sigma},a,b}
=\ket{\Omega_{ab}}
\end{align}
        since $A_{\Upsilon}  A_\Lambda = A_{\Upsilon + \Lambda}$, $N_{\Upsilon}+N_\Lambda = N_{\Upsilon + \Lambda}$ (where the sum of subsets is understood in the $\mathbb{Z}_2$-valued chain group) and the sum is translation invariant. 
\item Using again the right expression of the ground state in~(\ref{eq:GS two expressions}), we have
\begin{align}
\Vert\Omega_{ab}\Vert^2 
    & = \frac{1}{2^{|\Gamma|}} \sum_{\Lambda \subset \Gamma} \braket{\bm{-1},a,b} {A_{\Lambda}\big| \bm{-1},a,b } \braket{\psi_{\bm{-1},a,b}}{(-1)^{N_{\Lambda}}\bigg| \psi_{\bm{-1},a,b} } \\
    &=\frac{1}{2^{|\Gamma|}} \braket{\psi_{\bm{-1},a,b}}{\Big(1+(-1)^{N_{\Gamma}}\Big)\bigg|\psi_{\bm{-1},a,b}}
\end{align}
where we have used that the diagonal matrix elements of $A_\Lambda$ in the $\hat\sigma_{ij}^z$-basis vanish unless $\Lambda = \emptyset$ or $\Lambda=\Gamma$. Since the ground states are at half-filling by Theorem~\ref{thm1} and $L\in 4\mathbb{N}$, $\ket{\psi_{\bm{-1},a,b}}$ has an even number of particles and so the matrix element is $2$. This shows that $\ket{\Omega_{ab}}\neq 0$.
    \end{enumerate}
\end{proof}
\begin{remark} The argument used in the computation of the norm of $\Omega_{ab}$ can be repeated to rewrite in a convenient way the energy of $\Omega_{ab}$. We have:
\begin{equation}\label{eq:en}
\begin{split}
\frac{\braket{\Omega_{ab}}{H \big|\Omega_{ab}}}{\|\Omega_{ab}\|^{2}} &= \braket{ \psi_{\bm{-1},a,b}}{H(\bm{-1},a,b) \big|\psi_{\bm{-1},a,b}} \\
&= E_{0,L}(\bm{-1},a,b)\;.
\end{split}
\end{equation}
\end{remark}

\paragraph{Spectral gap estimate.} Let us prove the spectral gap estimate \eqref{eq:gap}. In view of (\ref{eq:en}),
\begin{equation}\label{eq:max}
\sup \text{Spec}(PHP) = \max_{a,b \in \mathbb{Z}_{2} \times \mathbb{Z}_{2}} E_{0,L}({\bf -1}, a,b)\;.
\end{equation}
Then, we write, for any normalized vector $\ket{\xi}$ in the total Hilbert space:
\begin{equation}\label{eq:xi}\braket{ \xi}{P^{\perp} H P^{\perp}\bigg|\xi } = \braket{\xi_{{\bf -1}}}{ P^{\perp} H P^{\perp}\bigg|\xi_{{\bf -1}}} + \braket{\tilde \xi}{P^{\perp} H P^{\perp}\bigg| \tilde\xi}
\end{equation}
where $\ket{\xi_{\bf -1}}$ is the orthogonal projection of $\ket{\xi}$ on the subspace associated with flux $\pi$ in every plaquette, and $\ket{\tilde \xi} = \ket{\xi} - \ket{\xi_{{\bf -1}}}$. Consider the second term. By \eqref{bound2}, we have:
\begin{equation}
\braket{ \tilde \xi}{ P^{\perp} H P^{\perp}\bigg| \tilde\xi} \geq \|\tilde \xi\|^{2} \Big(2\Delta_{L} + \min_{a,b\in \mathbb{Z}_{2} \times \mathbb{Z}_{2}} E_{0,L}(\bm{-1}, a,b)\Big)\;.
\end{equation}
Consider now the first term in \eqref{eq:xi}. We have:
\begin{equation}\label{eq:xi2}
\braket{\xi_{{\bf -1}}}{P^{\perp} H P^{\perp}\bigg|\xi_{{\bf -1}}} \geq \|\xi_{{\bf -1}}\|^{2} \min_{a,b \in \mathbb{Z}_{2} \times \mathbb{Z}_{2}} E_{1,L}({\bf -1},a,b)\;,
\end{equation}
where $E_{1,L}({\bf -1},a,b)$ is the first eigenvalue above the ground state for the Hamiltonian $H({\bf -1}, a,b)$. For $U=0$, the Hamiltonian is gapped, and the spectral gap is $4m$, recall \eqref{eq:epi}. For $U\neq 0$, the stability of the spectral gap follows from the convergence of the cluster expansion method reviewed in \hyperref[sec:int]{Subsection \ref*{sec:int}}, see discussion after Eq. (\ref{eq:detbd}). One has:
\begin{equation}\label{eq:xi3}
E_{1,L}({\bf -1},a,b) \geq E_{0,L}({\bf -1},a,b) + 2\delta_{L}\;,\qquad \delta_{L} \geq m - C|U|^{1/3}\;.
\end{equation}
Therefore, putting together (\ref{eq:max}), (\ref{eq:xi}), (\ref{eq:xi2}), (\ref{eq:xi3}), we have, using that $\|\xi\|=1$:
\begin{equation}
\begin{split}
&\braket{\xi}{ P^{\perp} H P^{\perp}\bigg|\xi} - \sup \text{Spec}(PHP) \\
&\quad \geq 2\min(\delta_{L}, \Delta_{L}) + \min_{a,b} E_{0,L}({\bf -1},a,b) - \max_{a,b} E_{0,L}({\bf -1},a,b)\\
&\quad \geq 2\min(\delta_{L}, \Delta_{L}) - CL^{2}e^{-cL}\;,
\end{split}
\end{equation}
where in the last step we used the exponential closeness of the approximate ground state energies, \eqref{eq:expdeg}. This concludes the proof of \eqref{eq:gap}. 

\paragraph{Local topological order.} Let us now prove topological order defined by \eqref{eq:Exp LTQO}. Recalling \hyperref[def:gi]{Definition \ref*{def:gi}}, the algebra of gauge-invariant observables is the centralizer~$\cal C_{\cal A}(\cal Q)$ of $\cal Q$. It is generated by
\begin{equation}
    \{\hat\sigma^x_{ij}:(i,j)\in \text{E}(\Gamma_L)\}\cup\{\hat Z_{\caC}:\caC\in Z_1(\Gamma_L)\}\cup\{a^\pm_{i,\eta} \hat Z_{\caC_{i,j}} a^\pm_{j,\eta'}:i,j\in\Gamma_{L},\, \eta,\eta'\in \{\uparrow,\downarrow\}\},
\end{equation}
where $\caC_{i,j}$ is any cycle such that $\partial \caC_{i,j} = \{i,j\}$, with the convention that $\caC_{i,i} = 0$ for all $i\in\Gamma$. We will now show that all these local gauge-invariant observables are first of all diagonal on the ground state space, and secondly that they have the same expectation value in all of the ground states, up to exponentially small errors. But then,
\begin{equation}
 \langle\Omega_{ab}\vert \mathcal{O}\Omega_{ab}\rangle \stackrel{L}{=} \frac{1}{4}\sum_{(a,b)\in\mathbb{Z}_2\times\mathbb{Z}_2}\langle\Omega_{ab}\vert \mathcal{O}\Omega_{ab}\rangle = \frac{1}{\Tr P }\Tr(P\mathcal{O})   
\end{equation}
since $\{\vert \Omega_{ab}\rangle:(a,b)\in\mathbb{Z}_2\times\mathbb{Z}_2\}$ form an orthonormal basis of the range of $P$.

A monomial in $\hat\sigma^x_{ij}$ is naturally identified with $\hat X_{\caC^*}$ where $\caC^*$ is the sum of all edges in the monomial (seen in the $\mathbb{Z}_2$-valued cohomology group). If $\partial \caC^* = 0$, then $\caC^*$ is a coboundary $\caC^* = \partial\Lambda$ since otherwise it would not be a local observable, and so
\begin{equation}
    \hat X_{\caC^*} \ket{\Omega_{ab}} = (-1)^{N_\Lambda}\ket{\Omega_{ab}}
\end{equation}
namely, it acts as a purely fermionic observable, and it is therefore diagonal, on the ground state space. As proved below, local observables that are diagonal in the $\hat \sigma^{z}_{ij}$ basis have diagonal elements in the ground-state space that are approximately independent by $(a,b)$. 
If $\partial\caC^*$ is not empty, then the observable changes the flux from $\pi$ to $0$ at each boundary plaquette and so $\langle \Omega_{a'b'}\vert \hat X_{\caC^*} \Omega_{ab}\rangle = 0$
for all $(a',b'),(a,b)\in\mathbb{Z}_2\times\mathbb{Z}_2$.

We are left with discussing gauge invariant observables that are diagonal in the $\hat\sigma^z_{ij}$ basis. These observables do not change the background, and hence they are diagonal in the ground state space. We must only prove that the expectation values in all four ground states are equal as $L\to\infty$. For the $\hat Z_\caC$ observables with $\caC$ being a boundary, we write $\caC = \partial\sum_{i}p_i$ for a set of plaquettes $\{p_i:i\in\{1,\ldots,M\}\}$ and note that $Z_\caC\ket{\Omega_{ab}} = \prod_i B_p \ket{\Omega_{ab}} = (-1)^{M}\ket{\Omega_{ab}}$ for all $(a,b)\in\mathbb{Z}_2$. 

It remains to consider the algebra generated by the `open lines' $a^\pm_{i,\eta} \hat Z_{\caC_{i,j}} a^\pm_{j,\eta'}$. Since the states $\Omega_{ab}$ have a definite number of fermions, the operators $a^+_{i,\eta} \hat Z_{\caC_{i,j}} a^+_{j,\eta'}$ vanish on the ground state space, and so do their adjoints. So we turn to $a^+_{i,\eta} \hat Z_{\caC_{i,j}} a^-_{j,\eta'}$. Commuting the operator through the Gauss' law projection, the operator $\hat Z_{\caC_{i,j}}$ then becomes a phase $\pm 1$ that depends on the background, see \autoref{fig:pifluxes}. Hamiltonians in the $\pi$-flux phase, differing by the value of the holonomies, can be viewed as being the same Hamiltonian but endowed with different boundary conditions (periodic or antiperiodic). The $\pm 1$ phase produced by the observable is compensated by the change of boundary conditions.

We are left with the evaluation of fermionic monomials in the $\pi$-flux state, with $(a,b)$ holonomies. For $U=0$, this can be done via the application of the fermionic Wick's rule, whose outcome is completely specified by the two-point function (\ref{eq:2pt}) as $\beta \to \infty$, which is the Fermi projector. The exponential closeness of the expectation values for different $(a,b)$ follows the representation of the finite-volume Fermi projector via Poisson summation formula:
\begin{equation}\label{eq:PoissonF}
P^{(a,b)}_{L}(x;y) = \sum_{m_{1}, m_{2}\in \mathbb{Z}^{2}} a^{m_{1}} b^{m_{2}} P_{\infty}(x+m_{1} e_{1} L + m_{2} e_{2} L;y)\;,
\end{equation}
where $P^{(a,b)}_{L}$ is the Fermi projector for the model in a finite volume and with $(a,b)$ holonomies, while $P_{\infty}$ is the Fermi projector for the model on $\mathbb{Z}^{2}$. As in the proof for the exponential closeness of the ground state energies, expectation values for different holonomies are then compared by inspection of the corresponding Wick's rules, using that the term with $m_{1} = m_{2} = 0$ in (\ref{eq:PoissonF}) is independent of $(a,b)$, and using that the terms with $m_{i}\neq 0$ are exponentially suppressed, thanks to the exponential decay of the Fermi projector.

Let us now discuss the case of weakly interacting fermions, $U\neq 0$. The starting point is the Duhamel expansion for the expectation value of the observable,
\begin{equation}\label{eq:Oexp}
\langle \mathcal{O} \rangle_{\beta,L} = \langle \mathcal{O} \rangle^{0}_{\beta,L} + \sum_{n\geq 1} \frac{(-U)^{n}}{n!} \int_{[0,\beta]^{n}} dt_{1}\cdots dt_{n} \langle {\bf T} \gamma_{t_{1}}(V)\,;\, \gamma_{t_{2}}(V)\,;\, \cdots\,;\, \gamma_{t_{n}}(V)\,; \mathcal{O} \rangle^{0}_{\beta,L}
\end{equation}
The ground state expectation value is the $\beta \to \infty$ limit of the above expression. The convergence of the series for small $|U|$, and uniformly in $L$, follows from fermionic cluster expansion, analogously to what has been reviewed in \hyperref[sec:int]{Subsection \ref*{sec:int}} for the free energy. Furthermore, similarly to what has been done in \hyperref[sec:int]{Subsection \ref*{sec:int}} to compare energies associated with different boundary conditions, the term-by-term comparison of the cumulants in \eqref{eq:Oexp}, computed using the BBF formula \eqref{eq:BBF}, combined with the Poisson formula for the two-point function \eqref{eq:poisson2pt}, allows to compare Gibbs states of Hamiltonians with different boundary conditions:
\begin{equation}\label{eq:bdtopo}
| \langle \mathcal{O} \rangle^{(a,b)}_{\beta,L} - \langle \mathcal{O} \rangle^{(a',b')}_{\beta,L} | \leq C_{\mathcal{O}} e^{-cL}
\end{equation}
for small $|U|$ uniformly in $\beta,L$. Taking the $\beta \to \infty$ limit, the claim (\ref{eq:Exp LTQO}) follows. This concludes the proof of \autoref{thm2}.\qed

\begin{remark} The bound (\ref{eq:bdtopo}) is effective for observables that are $L$-independent, or more generally for which the constant $C_{\mathcal{O}}$ depends on $L$ sub-exponentially; {\it e.g.} extensive sums of local operators. It does not allow us to consider observables $\mathcal{O}$ that are exponentials of extensive observables (not surprisingly).
\end{remark}

\begin{remark}
Excited eigenstates of $H$ can be constructed by replacing $\ket{\psi_{\bm{-1},a,b}}$ in (\ref{eq:GS two expressions}) with an excited state of $H(\bm{-1},a,b)$. If $U=0$, these excited states are explicit. For instance,
 \begin{equation}
        \ket{\zeta^{k,k'}_{ab}} = \prod_{i \in \Gamma_L} \bigg(\frac{1+Q_i}{2} \bigg) a^+_{k,\eta} a^+_{k',\eta'} \ket{\psi_{\bm{-1},a,b}}\otimes \ket{\bm{-1},a,b}
    \end{equation}
are exact many-body excitations, where $a^+_{k,\eta}$ creates a particle in an eigenstate of the Bloch Hamiltonian (\ref{eq:blochHam}) with quasi-momentum $k$, with appropriate boundary conditions. Due to the fact that the vacuum state is obtained filling all the negative energy eigenstates of the Bloch Hamiltonian, the Pauli principle implies that the operator $a^+_{k,\eta}$ creates a particle with positive energy. More generally, one can generate a family of eigenstates of $H$ by acting on $\ket{\psi_{\bm{-1},a,b}}$ with an even number of momentum-space fermionic operators. These states are completely delocalized in configuration space. Instead, if one acts with an odd number of fermionic operators on $\ket{\psi_{\bm{-1},a,b}}$, the resulting vector is annihilated by the action of the projector imposing the Gauss' law.
\end{remark}

\section{Proof of \autoref{thm3}}\label{sec:TQO}

\subsection{Adiabatic flux insertion and braiding}\label{picontr}

We turn to \autoref{thm3}, starting with the properties of the loop operators $W_{\caC^*}$, see \hyperref[def:loop operators]{Definition \ref*{def:loop operators}}.

\subsubsection{Threading a $\pi$-flux through a contactible cycle}

We consider the twisted Hamiltonians~(\ref{twistedhami}), first in the case of $\mathcal{C}^* = \partial\Lambda$ being a $1$-coboundary.
\begin{lemma}\label{claim1}
    Let $\mathcal{C}^* = \partial\Lambda$ be a $1$-coboundary. Then
\begin{equation}\label{twist}
 H_{\mathcal{C}^*}(\phi) =\ep{\iu \phi N_{\Lambda}} H \ep{-\iu \phi N_{\Lambda}}.
\end{equation}
\end{lemma}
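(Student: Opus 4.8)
The plan is to conjugate $H$ term by term by the unitary $\ep{\iu\phi N_\Lambda}$, with $N_\Lambda = \sum_{i\in\Lambda}\sum_\eta n_{i,\eta}$, and to match the outcome against the definition \eqref{twistedhami} of $H_{\caC^*}(\phi)$. Since $N_\Lambda$ is diagonal in the occupation-number basis, it commutes with both the staggered mass term and the Hubbard term of $H$, each being a polynomial in the $n_{i,\eta}$; these two contributions are therefore untouched by the conjugation, as required, since they appear identically in $H$ and in $H_{\caC^*}(\phi)$. The entire content of the lemma is thus concentrated in the hopping term.

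For the hopping term I would first record the elementary consequence of the CAR \eqref{CAR} that $[N_\Lambda, a^\pm_{k,\eta}] = \pm\mathbbm{1}[k\in\Lambda]\,a^\pm_{k,\eta}$, whence $\ep{\iu\phi N_\Lambda} a^\pm_{k,\eta}\ep{-\iu\phi N_\Lambda} = \ep{\pm\iu\phi\,\mathbbm{1}[k\in\Lambda]}a^\pm_{k,\eta}$. Because $N_\Lambda$ acts trivially on the gauge factor $\hat\sigma^z_{ij}$, conjugating a single oriented hopping gives
\begin{equation*}
\ep{\iu\phi N_\Lambda}\, a^+_{i,\eta}\hat\sigma^z_{ij}a^-_{j,\eta}\,\ep{-\iu\phi N_\Lambda} = \ep{\iu\phi(\mathbbm{1}[i\in\Lambda]-\mathbbm{1}[j\in\Lambda])}\, a^+_{i,\eta}\hat\sigma^z_{ij}a^-_{j,\eta},
\end{equation*}
while the Hermitian conjugate transforms with the opposite phase, consistently. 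Comparing with the phase $\ep{-\iu\phi\,\mathcal{I}[(i,j),\caC^*]}$ in \eqref{twistedhami}, the lemma reduces to the purely combinatorial identity
\begin{equation*}
\mathcal{I}[(i,j),\partial\Lambda] = \mathbbm{1}[j\in\Lambda]-\mathbbm{1}[i\in\Lambda],\qquad (i,j)\in\text{E}(\Gamma_L).
\end{equation*}

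To establish this identity I would use that $\Lambda$, regarded as a dual $2$-chain, has boundary $\partial\Lambda$ consisting exactly of those dual edges dual to primal edges with one endpoint in $\Lambda$ and the other outside; an edge with both endpoints in $\Lambda$, or both outside, does not meet $\partial\Lambda$, so both sides vanish. For a crossing edge the intersection number is $\pm1$, and the sign is fixed by the convention of \autoref{Intersectionform} together with the rule that $(i,j)$ is oriented from its starting point $j$ to its arrival point $i$: I would verify on the two elementary local pictures that the value is $+1$ when $j\in\Lambda,\ i\notin\Lambda$ and $-1$ in the reverse case, which is precisely $\mathbbm{1}[j\in\Lambda]-\mathbbm{1}[i\in\Lambda]$. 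Substituting back into the conjugated hopping term then yields $H_{\caC^*}(\phi)=\ep{\iu\phi N_\Lambda}H\ep{-\iu\phi N_\Lambda}$.

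The only delicate point is this last sign: ensuring that the intersection number matches the phase produced by the conjugation. Everything else is routine bookkeeping with the CAR and the fact that $N_\Lambda$ is diagonal. I expect the sign check to be the main (though minor) obstacle; it is settled once and for all by comparing the two oriented-crossing cases of \autoref{Intersectionform} against the edge-orientation convention, with the global orientation of $\partial\Lambda$ chosen so that the two conventions agree.
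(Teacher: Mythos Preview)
Your proof is correct and rests on the same elementary observation as the paper's: conjugation by $\ep{\iu\phi N_\Lambda}$ leaves the mass and Hubbard terms invariant and multiplies each hopping $a^+_{i,\eta}\hat\sigma^z_{ij}a^-_{j,\eta}$ by a phase determined by which of $i,j$ lie in $\Lambda$, and this phase coincides with $\ep{-\iu\phi\,\mathcal{I}[(i,j),\partial\Lambda]}$.

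The only difference is packaging. You compute the conjugation directly and reduce everything to the combinatorial identity $\mathcal{I}[(i,j),\partial\Lambda]=\mathbbm{1}[j\in\Lambda]-\mathbbm{1}[i\in\Lambda]$, checking the sign against \autoref{Intersectionform}. The paper instead notes that both sides of \eqref{twist} agree at $\phi=0$ and shows they satisfy the same first-order ODE, by computing $-\iu[N_\Lambda,H]$ and matching it to $\dot H_{\caC^*}(0)$. Both routes require the same sign verification; your direct computation is marginally more explicit, while the ODE argument makes the structure (same generator, same initial condition) more visible. There is no substantive gap.
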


\begin{proof}
Since both left and right hand sides equal $H$ at $\phi = 0$, it suffices to prove that they satisfy the same differential equation. On the right hand side, we observe that for any $(j,\ell)\in \text{E}(\Gamma_L)$,
\begin{equation}
    -\iu\partial_\phi\ep{\iu\phi n_{i,\eta}}a_{j,\eta}^+a_{\ell,\eta}\ep{\iu\phi n_{i,\eta}} = \epsilon(i;j,k)\ep{\iu\phi n_{i,\eta}}a_{j,\eta}^+a_{\ell,\eta}\ep{\iu\phi n_{i,\eta}},\qquad
    \epsilon(i;j,k) = 
    \begin{cases} 1 & \text{if }i = j, \\ -1& \text{if }i = \ell, \\ 0 &\text{otherwise.}
    \end{cases}
\end{equation}
Summing these up, we conclude that the only non-vanishing terms in the derivative of the right hand side are those such that $\mathcal{I}[(i,j), \mathcal{C}^*] = \pm 1$, which is exactly what arises from $\partial_\phi H_{\mathcal{C}^*}(\phi)$ as defined by~(\ref{twistedhami}). 
\end{proof}

The observation~(\ref{twist}) implies that $\dot P_{\mathcal{C}^*}(\phi) = \iu[N_\Lambda , P_{\mathcal{C}^*}(\phi)]$ and therefore
    \begin{equation}\label{commut}
        [\mathcal{K}_{\mathcal{C}^*}(\phi) - N_\Lambda, P_{\mathcal{C^*}}(\phi)] = 0
    \end{equation}
by~(\ref{gsode}).

\begin{lemma}\label{linecontractible}
 Let $\mathcal{C}^* = \partial\Lambda$ be a $1$-coboundary. Then $W_{\mathcal{C}^*} $ preserves the ground state manifold of $H$ exactly:
    \begin{equation}
    W_{\mathcal{C}^*}  P  W_{\mathcal{C}^*}^* = P.
\end{equation}
\end{lemma}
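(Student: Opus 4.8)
The plan is to rewrite the conjugation $W_{\mathcal{C}^*} P W_{\mathcal{C}^*}^*$ explicitly, collapse it to $Q_\Lambda P Q_\Lambda$, and then invoke Gauss' law, which fixes $P$ under every charge $Q_i$. First I would record the two structural facts available for a coboundary $\mathcal{C}^* = \partial\Lambda$: by \hyperref[claim1]{Lemma \ref*{claim1}} the twisted Hamiltonian is the gauge rotation \eqref{twist}, and under the identification of the dual $2$-chain with the primal set $\Lambda$ one has $\hat X_{\mathcal{C}^*} = A_\Lambda$. Combining these with $W_{\mathcal{C}^*} = \hat X_{\mathcal{C}^*} V_{\caC^*}(\pi)$ and the self-adjointness of $\hat X_{\mathcal{C}^*}$ yields $W_{\mathcal{C}^*} P W_{\mathcal{C}^*}^* = A_\Lambda\big(V_{\caC^*}(\pi) P V_{\caC^*}(\pi)^*\big) A_\Lambda$.

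Next I would evaluate the parallel-transported projection at $\phi = \pi$. Using the explicit expression \eqref{ptgs} together with \eqref{commut}, the fact that $N_\Lambda - \mathcal{K}_{\mathcal{C}^*}$ commutes with $P = P_{\mathcal{C}^*}(0)$ makes the inner conjugation in \eqref{ptgs} act trivially, leaving $P_{\mathcal{C}^*}(\phi) = \ep{\iu \phi N_\Lambda} P \ep{-\iu \phi N_\Lambda}$. At $\phi = \pi$ this is $(-1)^{N_\Lambda} P (-1)^{N_\Lambda}$, and since $V_{\caC^*}(\pi) P V_{\caC^*}(\pi)^* = P_{\mathcal{C}^*}(\pi)$ by the definition of $V_{\caC^*}$ as the propagator of parallel transport, we obtain $W_{\mathcal{C}^*} P W_{\mathcal{C}^*}^* = A_\Lambda (-1)^{N_\Lambda} P (-1)^{N_\Lambda} A_\Lambda$.

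Finally I would recognize that $A_\Lambda (-1)^{N_\Lambda} = Q_\Lambda = \prod_{i\in\Lambda} Q_i$, the gauge and fermionic factors commuting, so the expression collapses to $Q_\Lambda P Q_\Lambda$. Because each $\ket{\Omega_{ab}}$ satisfies Gauss' law, every $Q_i$ acts as the identity on the range of $P$, whence $Q_\Lambda P = P$ and, by self-adjointness, $P Q_\Lambda = P$; therefore $Q_\Lambda P Q_\Lambda = P$, which is the claim.

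The main conceptual content is the exact offsetting at $\phi = \pi$ of the $U(1)$ factor $(-1)^{N_\Lambda}$ against the $\mathbb{Z}_2$ gauge operator $A_\Lambda$ into the single charge operator $Q_\Lambda$ — a cancellation special to $\phi = \pi$, which is precisely why the $\mathbb{Z}_2$ gauge transformation $\hat X_{\mathcal{C}^*}$ is built into \hyperref[def:loop operators]{Definition \ref*{def:loop operators}}. The only step requiring care is justifying that the middle factor in \eqref{ptgs} drops out, i.e.\ that $\mathcal{K}_{\mathcal{C}^*}(0) - N_\Lambda$ genuinely commutes with $P$; this is exactly \eqref{commut} at $\phi = 0$, which rests on $P$ being an honest spectral projection of $H$ — valid here since the four $\ket{\Omega_{ab}}$ are exact eigenvectors of $H$ separated from the remaining spectrum by the gap established in \autoref{thm2}.
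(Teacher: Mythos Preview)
Your proof is correct and follows essentially the same route as the paper's: both reduce $W_{\mathcal{C}^*} P W_{\mathcal{C}^*}^*$ to $Q_\Lambda P Q_\Lambda$ by using \eqref{commut} to kill the $\ep{-\iu\pi(N_\Lambda - \mathcal{K}_{\mathcal{C}^*})}$ factor, then conclude. The only cosmetic differences are that the paper identifies $V_{\caC^*}(\pi) = \ep{\iu\pi N_\Lambda}\ep{-\iu\pi(N_\Lambda - \mathcal{K}_{\mathcal{C}^*})}$ directly (rather than going via $P_{\mathcal{C}^*}(\pi)$ and \eqref{ptgs} as you do), and that it finishes by invoking gauge invariance of $H$, \eqref{eq:H gauge invariance}, while you invoke Gauss' law on the $\ket{\Omega_{ab}}$; both yield $Q_\Lambda P Q_\Lambda = P$.
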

\begin{proof}
Definition~\ref{def:QAG} and Lemma~\ref{claim1} imply that 
\begin{equation}
    \mathcal{K}_{\mathcal{C}^*}(\phi) = \ep{\iu \phi N_{\Lambda}} \mathcal{K}_{\mathcal{C}^*}(0) \ep{-\iu \phi N_{\Lambda}}.
\end{equation}
From now on, we will use the notation $\mathcal{K}_{\mathcal{C}^*} \equiv \mathcal{K}_{\mathcal{C}^*}(0)$. It follows that the unitary family $\tilde V_{\caC^*}(\phi)= \ep{\iu \phi N_{\Lambda}}\ep{-\iu \phi (N_{\Lambda}-\mathcal{K}_{\mathcal{C}^*})}$ solves the initial value problem
\begin{equation}
    -\iu \partial_\phi \tilde V_{\caC^*}(\phi) = \mathcal{K}_{\mathcal{C}^*}(\phi) \tilde V_{\caC^*}(\phi),\qquad \tilde V_{\caC^*}(0) = \mathbbm{1},
\end{equation}
and therefore $\tilde V_{\caC^*}(\phi) = V_{\caC^*}(\phi)$ by uniqueness, see~(\ref{eq:QAFlow}). Hence $W_{\mathcal{C}^*} = A_\Lambda \ep{\iu \pi N_{\Lambda}}\ep{-\iu \pi (N_{\Lambda}-\mathcal{K}_{\mathcal{C}^*})}$. Moreover, $[N_{\Lambda}-\mathcal{K}_{\mathcal{C}^*}, P]=0$, see \eqref{commut}, implies that
\begin{equation}
    [\ep{-\iu \pi (N_{\Lambda}-\mathcal{K}_{\mathcal{C}^*})}, P]=0.
\end{equation}
With this, the identity $\ep{\iu \pi N_{\Lambda}} = (-1)^{N_\Lambda}$ and the definition~(\ref{eq:Def of charge}) of charge yield
\begin{equation}
     W_{\mathcal{C}^*}  P  W_{\mathcal{C}^*}^* = Q_{\Lambda} P Q_{\Lambda} = P
\end{equation}
since $P$ is a spectral projector of a gauge invariant Hamiltonian, see~(\ref{eq:H gauge invariance}).
\end{proof}

While the usefulness of the identity $V_{\caC^*}(\phi) = \ep{\iu \phi N_{\Lambda}}\ep{-\iu \phi (N_{\Lambda}-\mathcal{K}_{\mathcal{C}^*})}$ was pointed out in~\cite{PhysRevB.101.085138} because the unitary simplifies at $\phi = 2\pi$ by integrality of the spectrum of $N_\Lambda$, we are interested in applying it here up to only $\phi= \pi$. Physically, this corresponds to a process in which we insert a $\pi$-flux with an external $U(1)$ gauge field which is absorbed by the system as a $\mathbb{Z}_2$ gauge transformation, leaving thus the ground state space invariant. This justifies the inclusion of a $\mathbb{Z}_2$ pure gauge transformation $A_{\Lambda} = \hat X_{\caC^*}$ to $V_{\caC^*}(\phi)$ in \hyperref[def:loop operators]{Definition \ref*{def:loop operators}}. Of course, this cancellation of the two types of `gauge transformation' is possible only at $\phi = \pi$, since the gauge field is $\mathbb{Z}_2$-valued, see also the proof just below.

\noindent This concludes the proof of \autoref{thm3.1} of \autoref{thm3} in this case. 

 Since the ground state manifold is degenerate, an adiabatic process preserving such subspace may in principle shuffle its basis in a non universal way. However this does not happen as a consequence of the protection given by topology. As \autoref{thm2} shows, see Remark~\ref{rem: after thm2}(2), ground states of $H$ are labelled by pairs $(a,b)\in\mathbb{Z}_2\times\mathbb{Z}_2$ of holonomies around two fixed cycles $\mathcal{C}_1,\mathcal{C}_2 \in Z_1(\Gamma_L)$ that are not boundaries:
 \begin{equation}\label{eq:Characterize GSS}
     \hat Z_{\mathcal{C}_1} \ket{\Omega_{ab}} = a \ket{\Omega_{ab}},\qquad
     \hat Z_{\mathcal{C}_2} \ket{\Omega_{ab}} = b \ket{\Omega_{ab}}.
 \end{equation}
With this, we first recall that $\hat Z_{\mathcal{C}_j}$ is gauge invariant because $\mathcal{C}_j$ are cycles, see~(\ref{eq:Z cycle}), and note further that they commute with $\ep{-\iu \pi (N_{\Lambda}-\mathcal{K}_{\mathcal{C}^*})}$ since this term is diagonal in $\hat{\sigma}^z_{ij}$ basis. Hence they commute with $W_{\mathcal{C}^*}$. This proves \autoref{thm3.2} of \autoref{thm3} in this case of `contractible cycles' since the intersection number between $\caC$ and $\caC^*$ must be even, see \autoref{fig:intersection number}. 
     \begin{figure}
\centering
\begin{tikzpicture}[scale=0.7]
\draw (-0.5,-0.5) grid (7.5,7.5);
\draw[blue, very thick] (1,3) -- (0.5,3);
\draw[blue, very thick] (1,3) -- (2,3);
\draw[blue, very thick] (2,3) -- (2,5);
\draw[blue, very thick] (2,5) -- (4,5);
\draw[blue, very thick] (4,5) -- (4,4);
\draw[blue, very thick] (4,4) -- (6.5,4);
\draw[blue, very thick] (-0.5,3) -- (0.5,3);
\draw[blue, very thick] (6.5,4) -- (7,4);
\draw[blue, very thick] (7,4) -- (7,3);
\draw[blue, very thick] (7.5,3) -- (7,3);
\draw[red, very thick] (1.5,1.5) -- (3.5,1.5) -- (3.5, 2.5) -- (4.5, 2.5) -- (4.5,3.5) -- (5.5, 3.5) -- (5.5,4.5) -- (3.5,4.5) -- (3.5,5.5) -- (2.5,5.5) -- (2.5,4.5) -- (1.5,4.5) -- (1.5, 3.5) -- (2.5, 3.5) -- (2.5,2.5) -- (1.5,2.5) -- (1.5,1.5);
\draw[black,fill=white] (0,0) circle (.1 cm);
\draw[black,fill=white] (1,1) circle (.1 cm);
\draw[black,fill=white] (3,3) circle (.1 cm);
\draw[black,fill=white] (2,2) circle (.1 cm);
\draw[black,fill=white] (4,4) circle (.1 cm);
\draw[black,fill=white] (5,5) circle (.1 cm);
\draw[black,fill=white] (6,6) circle (.1 cm);
\draw[black,fill=white] (7,7) circle (.1 cm);

\draw[black,fill=white] (3,5) circle (.1 cm);

\draw[black,fill=white](2,0) circle (.1 cm);
\draw[black,fill=white] (0,2) circle (.1 cm);
\draw[black,fill=white] (4,0) circle (.1 cm);
\draw[black,fill=white] (0,4) circle (.1 cm);
\draw[black,fill=white] (6,0) circle (.1 cm);
\draw[black,fill=white] (0,6) circle (.1 cm);
\draw[black,fill=white] (1,3) circle (.1 cm);
\draw[black,fill=white] (1,5) circle (.1 cm);
\draw[black,fill=white] (1,7) circle (.1 cm);
\draw[black,fill=white] (7,1) circle (.1 cm);
\draw[black,fill=white] (5,1) circle (.1 cm);
\draw[black,fill=white] (3,1) circle (.1 cm);

\draw[black,fill=white] (4,2) circle (.1 cm);
\draw[black,fill=white] (6,2) circle (.1 cm);
\draw[black,fill=white] (2,6) circle (.1 cm);
\draw[black,fill=white] (2,4) circle (.1 cm);
\draw[black,fill=white] (6,4) circle (.1 cm);
\draw[black,fill=white] (5,3) circle (.1 cm);
\draw[black,fill=white] (7,3) circle (.1 cm);
\draw[black,fill=white] (3,5) circle (.1 cm);
\draw[black,fill=white] (3,7) circle (.1 cm);
\draw[black,fill=white] (7,5) circle (.1 cm);
\draw[black,fill=white] (4,6) circle (.1 cm);
\draw[black,fill=white] (5,7) circle (.1 cm);

\draw[black,fill=black] (1,0) circle (.1 cm);
\draw[black,fill=black] (0,1) circle (.1 cm);
\draw[black,fill=black] (3,0) circle (.1 cm);
\draw[black,fill=black] (0,3) circle (.1 cm);
\draw[black,fill=blue] (5,0) circle (.1 cm);
\draw[blue,fill=black] (0,5) circle (.1 cm);
\draw[black,fill=black] (7,0) circle (.1 cm);
\draw[black,fill=black] (0,7) circle (.1 cm);
\draw[black,fill=black] (1,0) circle (.1 cm);
\draw[black,fill=black] (0,1) circle (.1 cm);
\draw[black,fill=black] (3,0) circle (.1 cm);
\draw[black,fill=black] (0,3) circle (.1 cm);
\draw[black,fill=black] (5,0) circle (.1 cm);
\draw[black,fill=black] (0,5) circle (.1 cm);
\draw[black,fill=black] (7,0) circle (.1 cm);
\draw[black,fill=black] (0,7) circle (.1 cm);

\draw[black,fill=black] (1,2) circle (.1 cm);
\draw[black,fill=black] (2,1) circle (.1 cm);
\draw[black,fill=black] (3,2) circle (.1 cm);
\draw[black,fill=black] (2,3) circle (.1 cm);
\draw[black,fill=black] (5,2) circle (.1 cm);
\draw[black,fill=black] (2,5) circle (.1 cm);
\draw[black,fill=black] (7,2) circle (.1 cm);
\draw[black,fill=black] (2,7) circle (.1 cm);
\draw[black,fill=black] (1,4) circle (.1 cm);
\draw[black,fill=black] (4,1) circle (.1 cm);
\draw[black,fill=black] (3,4) circle (.1 cm);
\draw[black,fill=black] (4,3) circle (.1 cm);
\draw[black,fill=black] (5,4) circle (.1 cm);
\draw[black,fill=black] (4,5) circle (.1 cm);
\draw[black,fill=black] (7,4) circle (.1 cm);
\draw[black,fill=black] (4,7) circle (.1 cm);

\draw[black,fill=black] (1,6) circle (.1 cm);
\draw[black,fill=black] (6,1) circle (.1 cm);
\draw[black,fill=black] (3,6) circle (.1 cm);
\draw[black,fill=black] (6,3) circle (.1 cm);
\draw[black,fill=black] (5,6) circle (.1 cm);
\draw[black,fill=black] (6,5) circle (.1 cm);
\draw[black,fill=black] (7,6) circle (.1 cm);
\draw[black,fill=black] (6,7) circle (.1 cm);

\draw[dotted, ->-] (-0.5,-0.5) -- (7.5,-0.5);
\draw[dotted, ->-] (-0.5,7.5) -- (7.5,7.5);
\draw[dotted, ->>-] (-0.5,-0.5) -- (-0.5,7.5);
\draw[dotted, ->>-] (7.5,-0.5) -- (7.5,7.5);

\node[red,above] (c) at (3.8,5.3) {{$\mathcal{C}^*$}};
\node[blue,above] (d) at (0.5,2.9) {{$\mathcal{C}$}};

\begin{scope}[xshift = 12cm]
\draw (-0.5,-0.5) grid (7.5,7.5);
\draw[blue, very thick] (1,3) -- (0.5,3);
\draw[blue, very thick] (1,3) -- (2,3);
\draw[blue, very thick] (2,3) -- (2,5);
\draw[blue, very thick] (2,5) -- (4,5);
\draw[blue, very thick] (4,5) -- (4,4);
\draw[blue, very thick] (4,4) -- (6.5,4);
\draw[blue, very thick] (-0.5,3) -- (0.5,3);
\draw[blue, very thick] (6.5,4) -- (7,4);
\draw[blue, very thick] (7,4) -- (7,3);
\draw[blue, very thick] (7.5,3) -- (7,3);
 
\draw[red, very thick] (4.5,-0.5) -- (4.5,0.5)--(3.5,0.5) -- (3.5, 2.5) -- (4.5, 2.5) -- (4.5,3.5) -- (5.5, 3.5) -- (5.5,4.5) -- (3.5,4.5) -- (3.5,5.5) -- (2.5,5.5) -- (2.5,6.5) -- (4.5,6.5) -- (4.5,7.5);
 
\node[red,above] (c) at (3.8,5.3) {{$\mathcal{C}^*$}};
\node[blue,above] (d) at (1.5,2.9) {{$\mathcal{C}$}};

\draw[black,fill=white] (0,0) circle (.1 cm);
\draw[black,fill=white] (1,1) circle (.1 cm);
\draw[black,fill=white] (3,3) circle (.1 cm);
\draw[black,fill=white] (2,2) circle (.1 cm);
\draw[black,fill=white] (4,4) circle (.1 cm);
\draw[black,fill=white] (5,5) circle (.1 cm);
\draw[black,fill=white] (6,6) circle (.1 cm);
\draw[black,fill=white] (7,7) circle (.1 cm);

\draw[black,fill=white] (3,5) circle (.1 cm);

\draw[black,fill=white](2,0) circle (.1 cm);
\draw[black,fill=white] (0,2) circle (.1 cm);
\draw[black,fill=white] (4,0) circle (.1 cm);
\draw[black,fill=white] (0,4) circle (.1 cm);
\draw[black,fill=white] (6,0) circle (.1 cm);
\draw[black,fill=white] (0,6) circle (.1 cm);
\draw[black,fill=white] (1,3) circle (.1 cm);
\draw[black,fill=white] (1,5) circle (.1 cm);
\draw[black,fill=white] (1,7) circle (.1 cm);
\draw[black,fill=white] (7,1) circle (.1 cm);
\draw[black,fill=white] (5,1) circle (.1 cm);
\draw[black,fill=white] (3,1) circle (.1 cm);

\draw[black,fill=white] (4,2) circle (.1 cm);
\draw[black,fill=white] (6,2) circle (.1 cm);
\draw[black,fill=white] (2,6) circle (.1 cm);
\draw[black,fill=white] (2,4) circle (.1 cm);
\draw[black,fill=white] (6,4) circle (.1 cm);
\draw[black,fill=white] (5,3) circle (.1 cm);
\draw[black,fill=white] (7,3) circle (.1 cm);
\draw[black,fill=white] (3,5) circle (.1 cm);
\draw[black,fill=white] (3,7) circle (.1 cm);
\draw[black,fill=white] (7,5) circle (.1 cm);
\draw[black,fill=white] (4,6) circle (.1 cm);
\draw[black,fill=white] (5,7) circle (.1 cm);

\draw[black,fill=black] (1,0) circle (.1 cm);
\draw[black,fill=black] (0,1) circle (.1 cm);
\draw[black,fill=black] (3,0) circle (.1 cm);
\draw[black,fill=black] (0,3) circle (.1 cm);
\draw[black,fill=blue] (5,0) circle (.1 cm);
\draw[blue,fill=black] (0,5) circle (.1 cm);
\draw[black,fill=black] (7,0) circle (.1 cm);
\draw[black,fill=black] (0,7) circle (.1 cm);
\draw[black,fill=black] (1,0) circle (.1 cm);
\draw[black,fill=black] (0,1) circle (.1 cm);
\draw[black,fill=black] (3,0) circle (.1 cm);
\draw[black,fill=black] (0,3) circle (.1 cm);
\draw[black,fill=black] (5,0) circle (.1 cm);
\draw[black,fill=black] (0,5) circle (.1 cm);
\draw[black,fill=black] (7,0) circle (.1 cm);
\draw[black,fill=black] (0,7) circle (.1 cm);

\draw[black,fill=black] (1,2) circle (.1 cm);
\draw[black,fill=black] (2,1) circle (.1 cm);
\draw[black,fill=black] (3,2) circle (.1 cm);
\draw[black,fill=black] (2,3) circle (.1 cm);
\draw[black,fill=black] (5,2) circle (.1 cm);
\draw[black,fill=black] (2,5) circle (.1 cm);
\draw[black,fill=black] (7,2) circle (.1 cm);
\draw[black,fill=black] (2,7) circle (.1 cm);
\draw[black,fill=black] (1,4) circle (.1 cm);
\draw[black,fill=black] (4,1) circle (.1 cm);
\draw[black,fill=black] (3,4) circle (.1 cm);
\draw[black,fill=black] (4,3) circle (.1 cm);
\draw[black,fill=black] (5,4) circle (.1 cm);
\draw[black,fill=black] (4,5) circle (.1 cm);
\draw[black,fill=black] (7,4) circle (.1 cm);
\draw[black,fill=black] (4,7) circle (.1 cm);

\draw[black,fill=black] (1,6) circle (.1 cm);
\draw[black,fill=black] (6,1) circle (.1 cm);
\draw[black,fill=black] (3,6) circle (.1 cm);
\draw[black,fill=black] (6,3) circle (.1 cm);
\draw[black,fill=black] (5,6) circle (.1 cm);
\draw[black,fill=black] (6,5) circle (.1 cm);
\draw[black,fill=black] (7,6) circle (.1 cm);
\draw[black,fill=black] (6,7) circle (.1 cm);

\draw[dotted, ->-] (-0.5,-0.5) -- (7.5,-0.5);
\draw[dotted, ->-] (-0.5,7.5) -- (7.5,7.5);
\draw[dotted, ->>-] (-0.5,-0.5) -- (-0.5,7.5);
\draw[dotted, ->>-] (7.5,-0.5) -- (7.5,7.5);
\end{scope}

\end{tikzpicture}
\caption{If $\caC$ is a non-trivial cycle, the  number of intersections between $\mathcal{C}^*$ and $\mathcal{C}$ is even if $\caC^*$ is a boundary (left) but odd if it is a non-trivial cocycle (right). Notice that the curves' orientation is irrelevant here since $\ep{\iu \pi} = \ep{-\iu \pi}$ so positive and negative intersections give the same contribution$\mod 2\pi$}\label{fig:intersection number}
\end{figure}
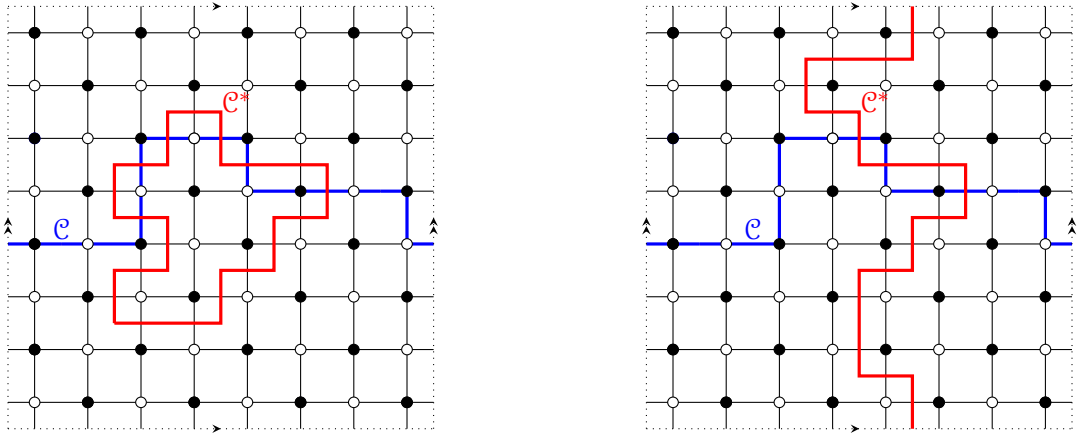
Moreover,
 \begin{equation}
    \hat Z_{\mathcal{C}_1} W_{\mathcal{C}^*} \ket{\Omega_{ab}} 
    = a W_{\mathcal{C}^*} \ket{\Omega_{ab}},\qquad
    \hat Z_{\mathcal{C}_2} W_{\mathcal{C}^*} \ket{\Omega_{ab}} 
    = b W_{\mathcal{C}^*} \ket{\Omega_{ab}}.
 \end{equation}
Since $W_{\mathcal{C}^*}$ preserves the ground state manifold and the eigenvalues $(a,b)$ uniquely determine the state $\ket{\Omega_{ab}}$, we conclude that 
 \begin{equation}\label{eq:trivial loops on GS}
     W_{\mathcal{C}^*} \ket{\Omega_{ab}} = \ep{\iu \omega_{\mathcal{C}^*}^{ab}} \ket{\Omega_{ab}}
 \end{equation}
for some phase $\omega_{\mathcal{C}^*}^{ab}$ which is in principle dependent on the curve and specific ground state we are acting on, and on the system's size. This concludes the proof of \autoref{thm3.3} of the \autoref{thm3} for the trivial cohomology class.

 With the triviality of $W_{\mathcal{C}^*}$ associated with $1$-coboundaries in hand, we now turn to the case of where $\mathcal{C}^*$ are cocycles that are not coboundaries. This will allow us to derive braiding relations.
 
\subsubsection{Threading a $\pi$-holonomy through a non-contractible cycle}

In differential geometric terms, the equation~\eqref{gsode} along non-trivial cocycles describes parallel transport of the Fermi projector around the Jacobian manifold of the (twisted) boundary conditions on the torus since inserting an holonomy around a non-contractible cycle corresponds to changing boundary conditions to the fermions (see \autoref{fig:pifluxes}).

In this case, \hyperref[claim1]{Lemma \ref*{claim1}} cannot be used and the proof of Lemma~\ref{linecontractible} does not hold. We claim nonetheless that $W_{\mathcal{C}^*}$ still preserve the ground state space: This is the content of the lemma just below. Although its proof is a little technical, the idea is simple: We consider another ${\mathcal{C}'}^*\in C_1(\Gamma_L^*)$ such that $\mathcal{C}^* - {\mathcal{C}'}^*$ is a coboundary and such that $\mathrm{dist}(\mathcal{C}^*, {\mathcal{C}'}^*)$ is of order $L$. Then, by locality, we prove that (see Eq. (\ref{eq:WWW}) below)$W_{\mathcal{C}^* - \bar{\mathcal{C}'}^*} \overset{L}{=} W_{\mathcal{C}^*}W_{{\mathcal{C}'}^*}$. Since the left hand side preserves the ground state space, and all operators are unitary, decay of correlations yields the claim, which is that of \autoref{thm3.1} of \autoref{thm3}:
 \begin{lemma}\label{lemmatattico}
    For any $\mathcal{C}^* \in C_1(\Gamma_L^*)$, the operator $W_{\mathcal{C}^*}$ satisfies $\|[W_{\mathcal{C}^*}, P] \| \overset{L}{=}0$. 
\end{lemma}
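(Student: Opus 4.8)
The plan is to prove the stronger, essentially \emph{exact} identity $W_{\mathcal{C}^*} P W_{\mathcal{C}^*}^* = P$, from which $\|[W_{\mathcal{C}^*},P]\|\overset{L}{=}0$ follows a fortiori. The starting observation is that, although \hyperref[claim1]{Lemma \ref*{claim1}} fails for a non-trivial cocycle, the twisted Hamiltonian still trivialises at the endpoint $\phi=\pi$, not by a $U(1)$ rotation but by the $\mathbb{Z}_2$ gauge transformation $\hat X_{\mathcal{C}^*}$. Indeed, conjugation by $\hat X_{\mathcal{C}^*}$ flips $\hat\sigma^z_{ij}\mapsto-\hat\sigma^z_{ij}$ precisely on the edges crossing $\mathcal{C}^*$, which is exactly the effect of the phase $\ep{-\iu\pi\mathcal{I}[(i,j),\mathcal{C}^*]}=(-1)^{\mathcal{I}[(i,j),\mathcal{C}^*]}$ in \eqref{twistedhami}, while the mass and Hubbard terms are untouched. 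Hence I would first record that
\begin{equation*}
H_{\mathcal{C}^*}(\pi)=\hat X_{\mathcal{C}^*}\,H\,\hat X_{\mathcal{C}^*}.
\end{equation*}

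Given this, the argument is short. Since $H_{\mathcal{C}^*}(\pi)$ and $H$ are unitarily conjugate through $\hat X_{\mathcal{C}^*}$, their low-energy spectral projections coincide up to the same conjugation, so $P_{\mathcal{C}^*}(\pi)=\hat X_{\mathcal{C}^*}\,P\,\hat X_{\mathcal{C}^*}$; here one uses that the four states $\ket{\Omega_{ab}}$ are genuine eigenvectors of $H$ (eigenvalues $E_{0,L}(\bm{-1},a,b)$), so $P$ is a true spectral projection. On the other hand, the quasi-adiabatic propagator transports the ground state projection bundle exactly, $P_{\mathcal{C}^*}(\pi)=V_{\caC^*}(\pi)\,P\,V_{\caC^*}(\pi)^*$, by the parallel transport property stated after \eqref{eq:QAFlow}. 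Comparing the two expressions and using $\hat X_{\mathcal{C}^*}^2=\mathbbm{1}$ together with $W_{\mathcal{C}^*}=\hat X_{\mathcal{C}^*}V_{\caC^*}(\pi)$ gives
\begin{equation*}
W_{\mathcal{C}^*}\,P\,W_{\mathcal{C}^*}^* = \hat X_{\mathcal{C}^*}V_{\caC^*}(\pi)\,P\,V_{\caC^*}(\pi)^*\hat X_{\mathcal{C}^*} = \hat X_{\mathcal{C}^*}P_{\mathcal{C}^*}(\pi)\hat X_{\mathcal{C}^*}=P.
\end{equation*}
This is uniform in the cohomology class: for a coboundary $\mathcal{C}^*=\partial\Lambda$ one has $\hat X_{\mathcal{C}^*}=A_\Lambda$ and $H_{\mathcal{C}^*}(\pi)=(-1)^{N_\Lambda}H(-1)^{N_\Lambda}$, so the computation reduces to \hyperref[linecontractible]{Lemma \ref*{linecontractible}}.

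The step I expect to be the crux is the validity of the exact parallel transport $P_{\mathcal{C}^*}(\pi)=V_{\caC^*}(\pi)P V_{\caC^*}(\pi)^*$, which presupposes that $H_{\mathcal{C}^*}(\phi)$ carries a four-dimensional, gapped ground state patch, uniformly in $L$, for \emph{every} $\phi\in[0,\pi]$ and not only at the $\mathbb{Z}_2$ endpoints. The fermionic part is harmless: twisting by $\phi$ merely shifts the Brillouin-zone sampling and the dispersion \eqref{eq:epi} stays bounded away from $0$ for $m\neq0$, so $\delta_L$ is stable, while the exponential smallness of the splitting among the four holonomy sectors is \autoref{thm1.2}, stated for all $U(1)$ holonomies. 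The delicate ingredient is the persistence of the monopole gap $\Delta_L$ at \emph{continuous} holonomy, where reflection positivity does not apply verbatim; this is supplied by \autoref{thm1.3} and the reduction to $\mathbb{Z}_2$-valued holonomies after a pair of reflections carried out in \hyperref[esgt]{Subsection \ref*{esgt}}. Once a uniform gap along the path is secured, \eqref{gsode} holds exactly and the conclusion follows.

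An alternative, matching the locality argument sketched before the statement, avoids comparing spectral projections and is what I would fall back on if the uniform gap at intermediate $\phi$ proved awkward. One picks $\bar{\mathcal{C}}^*$ homologous to $\mathcal{C}^*$ with $\mathrm{dist}(\mathcal{C}^*,\bar{\mathcal{C}}^*)$ of order $L$, so $\mathcal{C}^*+\bar{\mathcal{C}}^*$ is a coboundary; quasi-locality of $\mathcal{K}_{\mathcal{C}^*}(\phi)$ (\hyperref[remexploc]{Remark \ref*{remexploc}}) gives $\mathcal{K}_{\mathcal{C}^*+\bar{\mathcal{C}}^*}(\phi)\overset{L}{=}\mathcal{K}_{\mathcal{C}^*}(\phi)+\mathcal{K}_{\bar{\mathcal{C}}^*}(\phi)$ with far-apart, essentially commuting summands, whence $W_{\mathcal{C}^*+\bar{\mathcal{C}}^*}\overset{L}{=}W_{\mathcal{C}^*}W_{\bar{\mathcal{C}}^*}$, and the left-hand side preserves $P$ exactly by \hyperref[linecontractible]{Lemma \ref*{linecontractible}}. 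On this route the main obstacle is the deconvolution: passing from ``$W_{\mathcal{C}^*}W_{\bar{\mathcal{C}}^*}$ preserves $P$'' to ``$W_{\mathcal{C}^*}$ preserves $P$''. Pure unitarity and the approximate commutation of the two factors are insufficient, since they control only products of pairs of loops, so one must invoke the clustering of the ground states (\autoref{thm2.2}): because the two unitaries are supported on disjoint ribbons separated by a distance of order $L$, any excitation created near $\mathcal{C}^*$ by $W_{\mathcal{C}^*}$ cannot be healed by $W_{\bar{\mathcal{C}}^*}$ acting far away, so the absence of excitations in $W_{\mathcal{C}^*}W_{\bar{\mathcal{C}}^*}\ket{\Omega_{ab}}$ forces the absence of excitations in $W_{\mathcal{C}^*}\ket{\Omega_{ab}}$ up to $O(L^{-\infty})$.
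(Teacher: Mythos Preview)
Your primary argument is correct and in fact yields the exact identity $W_{\mathcal{C}^*}PW_{\mathcal{C}^*}^*=P$, stronger than the $\overset{L}{=}$ asserted. The key identity $H_{\mathcal{C}^*}(\pi)=\hat X_{\mathcal{C}^*}H\hat X_{\mathcal{C}^*}$ is right (conjugation by $\hat\sigma^x$ on the crossing edges reproduces the factor $(-1)^{\mathcal{I}[(i,j),\mathcal{C}^*]}$, and the on-site terms are untouched), and once the gap along the whole path $\phi\in[0,\pi]$ is secured by \autoref{thm1} and \hyperref[esgt]{Subsection~\ref*{esgt}}, the parallel transport~\eqref{gsode} is exact and the two-line computation goes through.

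The paper does not take this route; it uses precisely your ``alternative''. It picks a homologous $\mathcal{C}'^*$ at distance of order $L$, invokes \hyperref[linecontractible]{Lemma~\ref*{linecontractible}} for the coboundary $\mathcal{C}^*-\mathcal{C}'^*$, splits $W_{\mathcal{C}^*-\mathcal{C}'^*}\overset{L}{=}W_{\mathcal{C}^*}W_{-\mathcal{C}'^*}$ via Lieb--Robinson locality of the generators, and then carries out the deconvolution by the clustering argument you anticipate (citing~\cite[Appendix~A]{Bachmann_2021}): for any ground state $\ket{\Omega}$ one chains $\|PW_{\mathcal{C}^*}\ket{\Omega}\|\geq\|PW_{-\mathcal{C}'^*}PW_{\mathcal{C}^*}\ket{\Omega}\|\overset{L}{=}\|PW_{-\mathcal{C}'^*}W_{\mathcal{C}^*}\ket{\Omega}\|\overset{L}{=}1$.

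Your spectral route is cleaner and exact, exploiting the model-specific fact that the $\pi$-twist is an honest $\mathbb{Z}_2$ gauge operation; the cost is that it leans on the uniform gap for the single-cocycle family $H_{\mathcal{C}^*}(\phi)$ at every $\phi$. The paper's route trades exactness for $O(L^{-\infty})$ errors but only needs the gap for the coboundary family (trivial by \hyperref[claim1]{Lemma~\ref*{claim1}}) together with Lieb--Robinson and the clustering of \autoref{thm2}; it also foregrounds the topological-order mechanism that a quasi-local operator supported on one ribbon cannot leave $P$ because its distant partner cannot repair any damage it causes.
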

\begin{proof}
    Let $\mathcal{C}'^*$ be another cycle such that 
    \begin{enumerate}
        \item $\mathrm{dist}(\mathcal{C}^*, \mathcal{C}'^*) = \rho L$ for some $0<\rho<1$,
        \item $\mathcal{C}^*-\mathcal{C}'^*$ is a coboundary, namely $\mathcal{C}^*-\mathcal{C}'^* = \partial \Lambda$ for some $\Lambda\subset\Gamma_L$.
    \end{enumerate}
see \autoref{fig:linegs2}. \hyperref[linecontractible]{Lemma \ref*{linecontractible}} implies that 
\begin{equation}
[W_{\mathcal{C}^*-\mathcal{C}'^*}, P]=0.
\end{equation}
Since $\dot H_{\mathcal{C}^*-\mathcal{C}'^*}(\phi) = \dot H_{\mathcal{C}^*}(\phi) + \dot H_{-\mathcal{C}'^*}(\phi)$, \hyperref[def:QAG]{Definition~\ref*{def:QAG}} of the quasi-adiabatic generator implies that
\begin{equation}\label{eq:splitting of Ks}
    \caK_{\mathcal{C}^*-\mathcal{C}'^*}(\phi)
    =\tau_f^\phi(\dot H_{\mathcal{C}^*}(\phi)) + \tau_f^\phi(\dot H_{-\mathcal{C}'^*})(\phi)
    \stackrel{L}{=} \caK_{\mathcal{C}^*}(\phi) + \caK_{-\mathcal{C}'^*}(\phi).
\end{equation}
The second identity amounts to replacing $\tau_s^{H_{\mathcal{C}^*-\mathcal{C}'^*}(\phi)} (\dot{H}_{\mathcal{C}^*}(\phi))$ with $\tau_s^{H_{\mathcal{C}^*}(\phi)}(\dot{H}_{\mathcal{C}^*}(\phi))$ and similarly with the second term. This is a consequence of the fact that $\dot{H}_{\mathcal{C}^*}$ is supported on the sites adjacent to $\mathcal{C}^*$ and of the following corollary of the Lieb-Robinson bound:
\begin{equation}
    \Vert \tau_t^{H_{X^r}}(A) - \tau_t^{H_{\Gamma_L}}(A) \Vert\leq 
    \int_0^t\Big \Vert [(H_{\Gamma_L}-H_{X^r}),\tau^{H_{\Gamma_L}}_s(A)]\Big\Vert ds 
    \leq C(A) \ep{-c(r-vt)},
\end{equation}
where $C(A)$ depends on the norm of $A$ and the size of its support $X\subset\Gamma_L$, for any $A$ and any $r\geq 0$, where $X^r = \{x\in\Gamma_L:\mathrm{dist}(x,X)\leq r\}$. In the current application, $A = \dot{H}_{\mathcal{C}^*}$ whose support has volume $2L$. This linear growth is however controlled by the exponential decay obtained by choosing $r$ proportional to $L$, which is the reason for Property~1 of $\mathcal{C}'^*$. By the Lieb-Robinson bound again, the operator $\caK_{\mathcal{C}^*}$ is almost localized on $\caC^*$ which implies that
\begin{equation}\label{eq:KCKC}
    [\caK_{\mathcal{C}^*},\caK_{-\mathcal{C}'^*}]\stackrel{L}{=}0.
\end{equation}
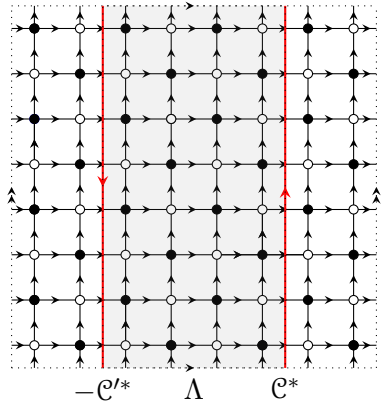
\begin{figure}
    \centering
   \begin{tikzpicture}[scale=0.6]
\draw[black, ->>>>-] (1,1) -- (2,1);
\draw[black, ->>>>-] (0,0) -- (0,1);
\draw[black, ->>>>-] (0,0) -- (1,0);
\draw[black, ->>>>-] (1,0) -- (1,1);
\draw[black, ->>>>-] (1,0) -- (2,0);
\draw[black, ->>>>-] (2,0) -- (2,1);
\draw[black, ->>>>-] (2,0) -- (2,1);
\draw[black, ->>>>-] (3,0) -- (4,0);
\draw[black, ->>>>-] (3,0) -- (3,1);
\draw[black, ->>>>-] (4,0) -- (5,0);
\draw[black, ->>>>-] (2,0) -- (3,0); 
\draw[black, ->>>>-] (4,0) -- (4,1);
\draw[black, ->>>>-] (5,0) -- (6,0);
\draw[black, ->>>>-] (5,0) -- (5,1);
\draw[black, ->>>>-] (6,0) -- (7,0);
\draw[black, ->>>>-] (6,0) -- (6,1);
\draw[black, ->>>>-] (7,0) -- (7,1);

\draw[black, ->>>>-] (0,1) -- (1,1);
\draw[black, ->>>>-] (0,1) -- (0,2);
\draw[black, ->>>>-] (1,1) -- (1,2);
\draw[black, ->>>>-] (2,1) -- (2,2);

\draw[black, ->>>>-] (2,1) -- (3,1);
\draw[black, ->>>>-] (3,1) -- (3,2);
\draw[black, ->>>>-] (3,1) -- (4,1);
\draw[black, ->>>>-] (4,1) -- (4,2);

\draw[black, ->>>>-] (4,1) -- (5,1);
\draw[black, ->>>>-] (5,1) -- (5,2);
\draw[black, ->>>>-] (5,1) -- (6,1);
\draw[black, ->>>>-] (6,1) -- (6,2);

\draw[black, ->>>>-] (6,1) -- (7,1);
\draw[black, ->>>>-] (7,1) -- (7,2);
\draw[black, ->>>>-] (1,2) -- (2,2);
\draw[black, ->>>>-] (1,2) -- (1,3);

\draw[black, ->>>>-] (0,2) -- (1,2);
\draw[black, ->>>>-] (0,2) -- (0,3);

\draw[black, ->>>>-] (2,2) -- (3,2);
\draw[black, ->>>>-] (2,2) -- (2,3);
\draw[black, ->>>>-] (3,2) -- (4,2);
\draw[black, ->>>>-] (3,2) -- (3,3);
\draw[black, ->>>>-] (4,2) -- (5,2);
\draw[black, ->>>>-] (4,2) -- (4,3);
\draw[black, ->>>>-] (5,2) -- (6,2);
\draw[black, ->>>>-] (5,2) -- (5,3);
\draw[black, ->>>>-] (5,2) -- (6,2);
\draw[black, ->>>>-] (5,2) -- (5,3);
\draw[black, ->>>>-] (6,2) -- (7,2);
\draw[black, ->>>>-] (6,2) -- (6,3);
\draw[black, ->>>>-] (4,2) -- (5,2);
\draw[black, ->>>>-] (5,3) -- (6,3);
\draw[black, ->>>>-] (7,2) -- (7,3);
\draw[black, ->>>>-] (0,3) -- (1,3);
\draw[black, ->>>>-] (0,3) -- (0,4);
\draw[black, ->>>>-] (1,3) -- (2,3);
\draw[black, ->>>>-] (2,3) -- (3,3);
\draw[black, ->>>>-] (3,3) -- (4,3);
\draw[black, ->>>>-] (4,3) -- (5,3);
\draw[black, ->>>>-] (0,4) -- (0,5);
\draw[black, ->>>>-] (5,3) -- (6,3);
\draw[black, ->>>>-] (6,3) -- (7,3);
\draw[black, ->>>>-] (0,4) -- (0,5);
\draw[black, ->>>>-] (0,4) -- (1,4);
\draw[black, ->>>>-] (1,4) -- (2,4);
\draw[black, ->>>>-] (2,4) -- (3,4);
\draw[black, ->>>>-] (3,4) -- (4,4);
\draw[black, ->>>>-] (4,4) -- (5,4);
\draw[black, ->>>>-] (0,5) -- (0,6);
\draw[black, ->>>>-] (5,4) -- (6,4);
\draw[black, ->>>>-] (6,4) -- (7,4);
\draw[black, ->>>>-] (0,5) -- (1,5);
\draw[black, ->>>>-] (1,5) -- (2,5);
\draw[black, ->>>>-] (2,5) -- (3,5);
\draw[black, ->>>>-] (3,5) -- (4,5);
\draw[black, ->>>>-] (4,5) -- (5,5);
\draw[black, ->>>>-] (0,6) -- (0,7);
\draw[black, ->>>>-] (5,5) -- (6,5);
\draw[black, ->>>>-] (6,5) -- (7,5);
\draw[black, ->>>>-] (0,6) -- (1,6);
\draw[black, ->>>>-] (1,6) -- (2,6);
\draw[black, ->>>>-] (2,6) -- (3,6);
\draw[black, ->>>>-] (3,6) -- (4,6);
\draw[black, ->>>>-] (4,6) -- (5,6);
\draw[black, ->>>>-] (7,6) -- (7,7);
\draw[black, ->>>>-] (5,6) -- (6,6);
\draw[black, ->>>>-] (6,6) -- (7,6);
\draw[black, ->>>>-] (0,7) -- (1,7);
\draw[black, ->>>>-] (1,7) -- (2,7);
\draw[black, ->>>>-] (2,7) -- (3,7);
\draw[black, ->>>>-] (3,7) -- (4,7);
\draw[black, ->>>>-] (4,7) -- (5,7);
\draw[black, ->>>>-] (7,5) -- (7,6);
\draw[black, ->>>>-] (5,7) -- (6,7);
\draw[black, ->>>>-] (6,7) -- (7,7);
\draw[black, ->>>>-] (7,3) -- (7,4);
\draw[black, ->>>>-] (7,4) -- (7,5);
\draw[black, ->>>>-] (1,4) -- (1,5);
\draw[black, ->>>>-] (1,3) -- (1,4);
\draw[black, ->>>>-] (1,5) -- (1,6);
\draw[black, ->>>>-] (1,6) -- (1,7);
\draw[black, ->>>>-] (2,4) -- (2,5);
\draw[black, ->>>>-] (2,3) -- (2,4);
\draw[black, ->>>>-] (2,5) -- (2,6);
\draw[black, ->>>>-] (2,6) -- (2,7);
\draw[black, ->>>>-] (3,4) -- (3,5);
\draw[black, ->>>>-] (3,3) -- (3,4);
\draw[black, ->>>>-] (3,5) -- (3,6);
\draw[black, ->>>>-] (3,6) -- (3,7);
\draw[black, ->>>>-] (4,4) -- (4,5);
\draw[black, ->>>>-] (4,3) -- (4,4);
\draw[black, ->>>>-] (4,5) -- (4,6);
\draw[black, ->>>>-] (4,6) -- (4,7);
\draw[black, ->>>>-] (5,4) -- (5,5);
\draw[black, ->>>>-] (5,3) -- (5,4);
\draw[black, ->>>>-] (5,5) -- (5,6);
\draw[black, ->>>>-] (5,6) -- (5,7);
\draw[black, ->>>>-] (6,4) -- (6,5);
\draw[black, ->>>>-] (6,3) -- (6,4);
\draw[black, ->>>>-] (6,5) -- (6,6);
\draw[black, ->>>>-] (6,6) -- (6,7);
\draw[black, dotted, ->-] (-0.5,-0.5) -- (7.5,-0.5);
\draw[black, dotted, ->-] (-0.5,7.5) -- (7.5,7.5);
\draw[black, dotted, ->>-] (-0.5,-0.5) -- (-0.5,7.5);
\draw[black, dotted, ->>-] (7.5,-0.5) -- (7.5,7.5);
\draw[black, ->>>>-] (0, -0.5) --(0,0);
\draw[black, ->>>>-] (-0.5,0) -- (0,0);
\draw[black, ->>>>-] (1, -0.5) --(1,0);
\draw[black, ->>>>-] (-0.5,1) -- (0,1);
\draw[black, ->>>>-] (2, -0.5) --(2,0);
\draw[black, ->>>>-] (-0.5,2) -- (0,2);
\draw[black, ->>>>-] (3, -0.5) --(3,0);
\draw[black, ->>>>-] (-0.5,3) -- (0,3);
\draw[black, ->>>>-] (4, -0.5) --(4,0);
\draw[black, ->>>>-] (-0.5,4) -- (0,4);
\draw[black, ->>>>-] (5, -0.5) --(5,0);
\draw[black, ->>>>-] (-0.5,6) -- (0,6);
\draw[black, ->>>>-] (6,-0.5) -- (6,0);
\draw[black, ->>>>-] (-0.5,5) -- (0,5);
\draw[black, ->>>>-] (7, -0.5) --(7,0);
\draw[black, ->>>>-] (-0.5,7) -- (0,7);

\draw[black, ->>>-] (0,7) -- (0,7.5);
\draw[black, ->>>-] (7,7) -- (7.5,7);
\draw[black, ->>>-] (1,7) -- (1,7.5);
\draw[black, ->>>-] (7,6) -- (7.5,6);
\draw[black, ->>>-] (2,7) -- (2,7.5);
\draw[black, ->>>-] (7,5) -- (7.5,5);
\draw[black, ->>>-] (3,7) -- (3,7.5);
\draw[black, ->>>-] (7,4) -- (7.5,4);
\draw[black, ->>>-] (4,7) -- (4,7.5);
\draw[black, ->>>-] (7,0) -- (7.5,0);
\draw[black, ->>>-] (5,7) -- (5,7.5);
\draw[black, ->>>-] (7,3) -- (7.5,3);
\draw[black, ->>>-] (6,7) -- (6,7.5);
\draw[black, ->>>-] (7,2) -- (7.5,2);
\draw[black, ->>>-] (7,7) -- (7,7.5);
\draw[black, ->>>-] (7,1) -- (7.5,1);

\draw[black,fill=white] (0,0) circle (.1 cm);
\draw[black,fill=white] (1,1) circle (.1 cm);
\draw[black,fill=white] (3,3) circle (.1 cm);
\draw[black,fill=white] (2,2) circle (.1 cm);
\draw[black,fill=white] (4,4) circle (.1 cm);
\draw[black,fill=white] (5,5) circle (.1 cm);
\draw[black,fill=white] (6,6) circle (.1 cm);
\draw[black,fill=white] (7,7) circle (.1 cm);
\draw[black,fill=white] (3,5) circle (.1 cm);
\draw[black,fill=white](2,0) circle (.1 cm);
\draw[black,fill=white] (0,2) circle (.1 cm);
\draw[black,fill=white] (4,0) circle (.1 cm);
\draw[black,fill=white] (0,4) circle (.1 cm);
\draw[black,fill=white] (6,0) circle (.1 cm);
\draw[black,fill=white] (0,6) circle (.1 cm);
\draw[black,fill=white] (1,3) circle (.1 cm);
\draw[black,fill=white] (1,5) circle (.1 cm);
\draw[black,fill=white] (1,7) circle (.1 cm);
\draw[black,fill=white] (7,1) circle (.1 cm);
\draw[black,fill=white] (5,1) circle (.1 cm);
\draw[black,fill=white] (3,1) circle (.1 cm);

\draw[black,fill=white] (4,2) circle (.1 cm);
\draw[black,fill=white] (6,2) circle (.1 cm);
\draw[black,fill=white] (2,6) circle (.1 cm);
\draw[black,fill=white] (2,4) circle (.1 cm);
\draw[black,fill=white] (6,4) circle (.1 cm);
\draw[black,fill=white] (5,3) circle (.1 cm);
\draw[black,fill=white] (7,3) circle (.1 cm);
\draw[black,fill=white] (3,5) circle (.1 cm);
\draw[black,fill=white] (3,7) circle (.1 cm);
\draw[black,fill=white] (7,5) circle (.1 cm);
\draw[black,fill=white] (4,6) circle (.1 cm);
\draw[black,fill=white] (5,7) circle (.1 cm);

\draw[black,fill=black] (1,0) circle (.1 cm);
\draw[black,fill=black] (0,1) circle (.1 cm);
\draw[black,fill=black] (3,0) circle (.1 cm);
\draw[black,fill=black] (0,3) circle (.1 cm);
\draw[black,fill=blue] (5,0) circle (.1 cm);
\draw[blue,fill=black] (0,5) circle (.1 cm);
\draw[black,fill=black] (7,0) circle (.1 cm);
\draw[black,fill=black] (0,7) circle (.1 cm);
\draw[black,fill=black] (1,0) circle (.1 cm);
\draw[black,fill=black] (0,1) circle (.1 cm);
\draw[black,fill=black] (3,0) circle (.1 cm);
\draw[black,fill=black] (0,3) circle (.1 cm);
\draw[black,fill=black] (5,0) circle (.1 cm);
\draw[black,fill=black] (0,5) circle (.1 cm);
\draw[black,fill=black] (7,0) circle (.1 cm);
\draw[black,fill=black] (0,7) circle (.1 cm);

\draw[black,fill=black] (1,2) circle (.1 cm);
\draw[black,fill=black] (2,1) circle (.1 cm);
\draw[black,fill=black] (3,2) circle (.1 cm);
\draw[black,fill=black] (2,3) circle (.1 cm);
\draw[black,fill=black] (5,2) circle (.1 cm);
\draw[black,fill=black] (2,5) circle (.1 cm);
\draw[black,fill=black] (7,2) circle (.1 cm);
\draw[black,fill=black] (2,7) circle (.1 cm);
\draw[black,fill=black] (1,4) circle (.1 cm);
\draw[black,fill=black] (4,1) circle (.1 cm);
\draw[black,fill=black] (3,4) circle (.1 cm);
\draw[black,fill=black] (4,3) circle (.1 cm);
\draw[black,fill=black] (5,4) circle (.1 cm);
\draw[black,fill=black] (4,5) circle (.1 cm);
\draw[black,fill=black] (7,4) circle (.1 cm);
\draw[black,fill=black] (4,7) circle (.1 cm);

\draw[black,fill=black] (1,6) circle (.1 cm);
\draw[black,fill=black] (6,1) circle (.1 cm);
\draw[black,fill=black] (3,6) circle (.1 cm);
\draw[black,fill=black] (6,3) circle (.1 cm);
\draw[black,fill=black] (5,6) circle (.1 cm);
\draw[black,fill=black] (6,5) circle (.1 cm);
\draw[black,fill=black] (7,6) circle (.1 cm);
\draw[black,fill=black] (6,7) circle (.1 cm);

\draw[red, ->-, thick] (1.5,7.5) -- (1.5,-0.5);
\draw[red, ->-, thick] (5.5,-0.5) -- (5.5,7.5);

\draw[dotted,fill = gray, fill opacity = 0.1] (1.5,-0.5) -- (5.5,-0.5) -- (5.5,7.5)--(1.5,7.5)--(1.5,-0.5);
\node[below] (a) at (1.5,-0.5) {\scalebox{1}{$-\caC'^*$}};
\node[below] (b) at (5.5,-0.5) {\scalebox{1}{$\caC^*$}};
\node[below] (c) at (3.5,-0.5) {\scalebox{1}{$\Lambda$}};
   \end{tikzpicture}
   \caption{The coboundary $\partial\Lambda = \caC^* - \caC'^*$. The sign is given by the orientation of the dual lattice $\Gamma_L^*$.}
   \label{fig:linegs2}
\end{figure}
Recalling~\eqref{eq:QAFlow} that $V_{\mathcal{C}^*}(\phi)$ is the propagator generated by $\mathcal{K}_{\mathcal{C}^*}(\phi)$, Eq. (\ref{eq:KCKC}) yields that $[V_{\mathcal{C}^*},\caK_{-\mathcal{C}'^*}]\stackrel{L}{=}0$ and in turn that
\begin{equation}
    -\iu \frac{d}{d\phi}V^{*}_{\mathcal{C}^*-\mathcal{C}'^*}V_{\mathcal{C}^*} V_{-\mathcal{C}'^*}
    \stackrel{L}{=}V^{*}_{\mathcal{C}^*-\mathcal{C}'^*}\big(\mathcal{K}_{\mathcal{C}^*-\mathcal{C}'^*} - \mathcal{K}_{\mathcal{C}^*} - \mathcal{K}_{-\mathcal{C}'^*}\big)V_{\mathcal{C}^*} V_{-\mathcal{C}'^*}.
\end{equation}
The identity~(\ref{eq:splitting of Ks}) and Duhamel's principle then imply
\begin{equation}
    V_{\mathcal{C}^*-\mathcal{C}'^*}(\phi)
    \stackrel{L}{=} V_{\mathcal{C}^*}(\phi) V_{-\mathcal{C}'^*}(\phi).
\end{equation}
Since, moreover, $\hat X_{\mathcal{C}^*-\mathcal{C}'^*} = \hat X_{\mathcal{C}^*}\hat X_{-\mathcal{C}'^*}$,we immediately conclude that
\begin{equation}\label{eq:WWW}
    W_{\mathcal{C}^*-\mathcal{C}'^*} 
    \stackrel{L}{=} \hat X_{\mathcal{C}^*}\hat X_{-\mathcal{C}'^*} V_{\mathcal{C}^*}(\pi) V_{-\mathcal{C}'^*}(\pi)
    \stackrel{L}{=} W_{\mathcal{C}^*} W_{-\mathcal{C}'^*}.
\end{equation}
In the second equality, we used that $\hat X_{-\mathcal{C}'^*}^*$ and $V_{\mathcal{C}^*}(\pi)$ almost commutes because of their almost disjoint supports. 

It remains to prove that this and $[W_{\mathcal{C}^*-\mathcal{C}'^*} ,P]=0$ implies that each of the unitary factor $ W_{\mathcal{C}^*}, W_{-\mathcal{C}^*}$ almost commutes with $P$. As we shall explain just below, this follows from the fact that $\mathcal{C}^*,\mathcal{C}^*$ are separated by distance of order $L$ and by clustering:
\begin{equation}\label{eq:clustering}
\Vert P AB\vert \Omega\rangle - P APB\vert \Omega\rangle\Vert 
\leq C(A,B) f(d(X,Y))
\end{equation}
which holds for any normalized ground state $\Omega$ in the range of $P$ and any observables $A$ and $B$ supported on $X$ and $Y$, respectively. Here, $d(X,Y)$ is the distance between the two sets and $f$ is a superpolynomially decaying function
see~\cite[Proposition~3.1]{Bachmann_2021}. Since $[P,W_{\mathcal{C}^*}] = \left((1-P)W_{\mathcal{C}^*}^* P\right)^* - (1-P)W_{\mathcal{C}^*} P$, it suffices to show that $\Vert (1-P)W_{\mathcal{C}^*} P\Vert \stackrel{L}{=} 0$ and similarly for the adjoint. For this, we note that
\begin{align}
1 &\geq \|P W_{\mathcal{C}^*}\ket{\Omega}\| \geq \|P  W_{-\mathcal{C}'^*}\| \|P W_{\mathcal{C}^*}\ket{\Omega}\| \geq \|P W_{-\mathcal{C}'^*} P W_{\mathcal{C}^*} \ket{\Omega}\| \\
&\overset{L}{=} \|P  W_{-\mathcal{C}'^*} W_{\mathcal{C}^*} \ket{\Omega}\| \overset{L}{=} 1,
\end{align}
where we used clustering, Eq. (\ref{eq:clustering}), in the second line. More precisely, we applied (\ref{eq:clustering}) for  $A = W_{-\mathcal{C}'^*}$, $B = W_{\mathcal{C}^*}$, approximately localized in $\mathcal{C}'^*$, $\mathcal{C}^*$. This implies that (\ref{eq:clustering}) holds with a constant $C(A,B)$ that grows quadratically with $L$, which is however controlled by the superpolynomial decay of $f(\cdot)$ and by the order $L$ distance of the two sets. Since this holds for all $\ket{\Omega} = P\ket{\Omega}$, we conclude that  $\Vert (1-P)W_{\mathcal{C}^*} P\Vert \stackrel{L}{=} 0$. Repeating the argument with the adjoint yields the claim. 
\end{proof}

We turn to the commutation relations. Let $\caC\in Z_1(\Gamma_L)$. Since
\begin{equation}\label{eq:ZX braiding}
    \hat Z_{\mathcal{C}}\hat X_{\caC^*} 
    = \ep{\iu\pi\mathcal{I}(\mathcal{C},\caC^*)}\hat X_{\caC^*} \hat Z_{\mathcal{C}},
\end{equation}
we have that the operator $B_{\mathcal{C},\caC^*}(\phi) = \hat Z_{\mathcal{C}} W_{\mathcal{C}^*}(\phi)^* \hat Z_{\mathcal{C}}W_{\mathcal{C}^*}(\phi)$ satisfies the equation
\begin{equation}
   -\iu\frac{d}{d\phi} B_{\mathcal{C},\caC^*}(\phi) = 0
\end{equation}
since $\hat Z_{\mathcal{C}}$ commutes with $\caK_{\caC^*}$ because the latter contains only $\hat\sigma^z$ operators. Moreover, $B_{\mathcal{C},\caC^*}(0) = \hat Z_{\mathcal{C}}\hat X_{\caC^*} \hat Z_{\mathcal{C}}\hat X_{\caC^*} = \ep{\iu\pi\mathcal{I}(\mathcal{C},\caC^*)}$, so that
\begin{equation}
    \hat Z_{\mathcal{C}} W_{\caC^*}(\phi) 
    = \ep{\iu\pi\mathcal{I}(\mathcal{C},\caC^*)} W_{\caC^*}(\phi)\hat Z_{\mathcal{C}}
\end{equation}
for all $\phi$, a fortiori at $\phi = \pi$ (which is the only case of interest here since $ W_{\caC^*}(\phi)$ preserves the ground state space only at that particular value). This concludes the proof of \autoref{thm3.2} of \autoref{thm3}. The remaining cases of \autoref{thm3.3} of \autoref{thm3} follow immediately from this and from the fact that the intersection number of a non-trivial cycle and a non-trivial cocycle is necessarily odd, see \autoref{fig:intersection number}. Therefore
\begin{equation}
    \hat Z_{\mathcal{C}_1} W_{\caC_2^*}\ket{\Omega_{ab}} = -a  W_{\caC_2^*} \ket{\Omega_{ab}}
\end{equation}
from which we conclude that 
\begin{equation}
    W_{\caC_2^*}\ket{\Omega_{ab}} \stackrel{L}{=} \ep{\iu\omega^{ab}_{\caC_2^*}}\vert\Omega_{(-a)b}\rangle.
\end{equation}
The argument is similar with the other possible combinations of cycles and cocycles.

\subsubsection{Monopoles are bosons}

For this section, we denote $H(\phi_1,\phi_2)$ the Hamiltonian twisted by $\phi_1$ along $\caC^*_1$ and by $\phi_2$ along $\caC^*_2$. This family is uniformly gapped for $(\phi_1,\phi_2)\in\mathbb{T}^2$, with ground state projection $P(\phi_1,\phi_2)$. We let
\begin{equation}
\mathcal{K}_j(\phi_1,\phi_2) = \int f(t)\ep{\iu t H(\phi_1,\phi_2)}\partial_{\phi_j}H(\phi_1,\phi_2)\ep{-\iu t H(\phi_1,\phi_2)}\,dt.
\end{equation}
where $f$ is the function introduced in Definition~\ref*{def:QAG}.
We denote $\hat X_j \equiv \hat X_{\caC_j^*}$. Then $\hat X_1 H(\phi_1,\phi_2) \hat X_1 = H(\phi_1+\pi,\phi_2)$ so that 
\begin{equation}\label{eq: XK commutation}
\hat X_1 \mathcal{K}_j(\phi_1,\phi_2) \hat X_1 = \mathcal{K}_j(\phi_1+\pi,\phi_2)
\end{equation}
and similarly for $\hat  X_2$ with respect to $\phi_2$. Let $\gamma:[0,1]\to\mathbb{T}^2$, $s\mapsto(\gamma_1(s),\gamma_2(s))$ be a path and let $V_\gamma(s)$ be the unitary solution of
\begin{equation}
V_\gamma(s) = \mathbbm{1} + \iu \int_0^s \mathcal{K}(\gamma(r)) \cdot \gamma'(r) V_\gamma(r)\,dr
\end{equation}
where
\begin{equation}
\mathcal{K}(\gamma(s))\cdot\gamma'(s) = \mathcal{K}_1(\gamma(s))\gamma_1'(s) + \mathcal{K}_2(\gamma(s))\gamma_2'(s).
\end{equation}
Moreover, (\ref{eq: XK commutation}) implies that 
\begin{equation}\label{eq: XV CR}
\hat X_j V_\gamma(s) = V_{\pi e_j + \gamma}(s)\hat X_j
\end{equation}
where $(\pi e_1 + \gamma)(s) = (\pi,0) + \gamma(s)$ and similarly for $j=2$.

For any path $\gamma$, we shall denote $V_\gamma = V_\gamma(1)$. With this, we recall the definition~\ref{def:loop operators} of loop operators, observe that $W_{\mathcal{C}_j^*} = \hat X_j V_{\alpha_j}$, where $\alpha_1(s) = (s\pi,0)$ and $\alpha_2(s) = (0,s\pi)$ and conclude that
\begin{align}
    (W_{\mathcal{C}_2^*})^*(W_{\mathcal{C}_1^*})^*W_{\mathcal{C}_2^*}W_{\mathcal{C}_1^*}
    &= V_{\alpha_2}^* \hat X_2 V_{\alpha_1}^*\hat X_1\hat X_2 V_{\alpha_2}\hat X_1V_{\alpha_1} \\
    &= V_{\alpha_2}^* V_{\pi e_2 + \alpha_1}^* V_{\pi e_1 + \alpha_2}V_{\alpha_1} 
    = V_{\partial\Gamma}
\end{align}
where we used~(\ref{eq: XV CR}) and observed that $V^*_\gamma = V_{-\gamma}$ as well as $V_\gamma V_{\gamma'} = V_{\gamma + \gamma'}$ to conclude that the product reduces to the propagator along the boundary of the square $\Gamma$ drawn in~\autoref{fig:path}.

\begin{figure}
\centering
  \includegraphics[width=0.3\linewidth]{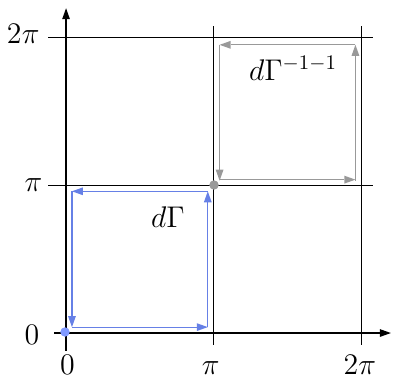}
  \caption{In blue, the path $\partial\Gamma$ corresponding to the braiding of monopoles on the ground state space~(\ref{eq:braiding monopoles}). Changing the holonomies amounts to changing the initial condition with the grey path corresponding to the vector $\Omega_{-1,-1}$, starting at $(\phi_1,\phi_2) = (\pi,\pi)$.}
  \label{fig:path}
\end{figure}

By the gauge invariance of the Hamiltonian,
\begin{equation}
    V_\gamma(s) \vert \Omega_{ab}\rangle = \prod_{i \in \Gamma_L} \bigg(\frac{\mathbbm{1} + Q_i}{2}\bigg) V_{\gamma^{ab}}^{\bm\sigma}(s)|\psi_{{\bm \sigma},a,b}\rangle\otimes |{\bm\sigma},a,b\rangle,
\end{equation}
where $\gamma^{ab}(s) = (\frac{1}{2}(1-a)\pi,\frac{1}{2}(1-b)\pi) + \gamma(s)$ is the shifted path, and $\bm \sigma$ is any configuration in the class $[{\bf-1}]$. Hence
\begin{equation}\label{eq:braiding monopoles}
    \ep{\iu\omega_{ab}} \stackrel{L}{=} \langle \Omega_{ab}\vert (W_{\mathcal{C}_2^*})^*(W_{\mathcal{C}_1^*})^*W_{\mathcal{C}_2^*}W_{\mathcal{C}_1^*} \Omega_{ab}\rangle
    = \langle \psi_{{\bm \sigma},a,b}\vert V^{\bm \sigma}_{\partial\Gamma^{ab}}\psi_{{\bm \sigma},a,b}\rangle
\end{equation}

Since $V_\gamma(s)$ is the spectral flow and $|\psi_{{\bm \sigma},a,b}\rangle$ is the unique, gapped, fermionic ground state, the vector $V_{\gamma^{ab}}^{\bm\sigma}(s)|\psi_{{\bm \sigma},a,b}\rangle$ is a fermionic ground state at fluxes $\gamma^{ab}(s)$. 

On the cut torus, there is a smooth family $\vert \psi^{\bm \sigma}(\phi_1,\phi_2)\rangle$ of vectors such that $P^{{\bm \sigma}}(\phi_1,\phi_2) = |\psi^{{\bm \sigma}}(\phi_1,\phi_2)\rangle\langle\psi^{{\bm \sigma}}(\phi_1,\phi_2)|$. Upon possibly redefining their phases, we can assume that $\vert \psi_{{\bm \sigma},a,b}\rangle  = \vert \psi^{{\bm \sigma}}(\frac{1}{2}(1-a)\pi,\frac{1}{2}(1-b)\pi)\rangle$. Let now
\begin{equation}
    \vert \varphi^{{\bm \sigma}}(s)\rangle  = V^{\bm \sigma}_{\partial\Gamma}(s) \vert \psi^{{\bm \sigma}}(0)\rangle ,
\end{equation}
where we have now dropped the index $ab$ since it amounts to a choice of one of $(0,0)$, $(\pi,0)$ $(0,\pi)$ or $(\pi,\pi)$ as initial condition $\partial \Gamma(0)$. The phase of~(\ref{eq:braiding monopoles}) is now given by
\begin{equation}
\ep{\iu\omega} = \langle\psi^{{\bm \sigma}}(0) \vert \varphi^{{\bm \sigma}}(1)\rangle.
\end{equation}

For any $s\in[0,1]$, there is a $c(s)\in\mathbb{C},\vert c(s) \vert =1$ such that 
\begin{equation}
\vert \varphi^{{\bm \sigma}}(s)\rangle = c(s) \vert \psi^{{\bm \sigma}}(s)\rangle.
\end{equation}
Taking the derivative along the path yields
\begin{equation}
\iu \mathcal{K}^{{\bm \sigma}}(s)\cdot \partial\Gamma(s)\vert \varphi^{{\bm \sigma}}(s)\rangle = c'(s)\vert \psi^{{\bm \sigma}}(s)\rangle + c(s)\frac{d}{ds}\vert \psi^{{\bm \sigma}}(s)\rangle.
\end{equation}
We now take $\langle \psi^{{\bm \sigma}}(s)\vert\cdot\rangle$ to get
\begin{equation}
0 = c'(s) + c(s)\left \langle \psi^{{\bm \sigma}}(s)\Bigg\vert\frac{d}{ds}\psi^{{\bm \sigma}}(s)\right\rangle,
\end{equation}
where we used $\langle \psi^{{\bm \sigma}}(s) \vert \mathcal{K}^{{\bm \sigma}}(s)\psi^{{\bm \sigma}}(s)\rangle = 0$. Equivalently,
\begin{equation}
\frac{d}{ds} \log c(s) = -\left \langle \psi^{{\bm \sigma}}(s)\Bigg\vert\frac{d}{ds}\psi^{{\bm \sigma}}(s)\right\rangle.
\end{equation}
Hence, since $\omega$ is the total change of argument of the function $c(s)$ along the path $\partial\Gamma$,
\begin{equation}
\iu \omega = -\int_{\partial\Gamma}\langle\psi^{{\bm \sigma}}\vert d\psi^{{\bm \sigma}}\rangle.
\end{equation}
We are now in the setting of~\cite{avron1983homotopy} and so an application of Stokes' theorem yields
\begin{align}
\int_{\partial \Gamma}\langle\psi^{{\bm \sigma}}\vert d\psi^{{\bm \sigma}}\rangle &= \int_{\Gamma} d\langle\psi^{{\bm \sigma}}\vert d\psi^{{\bm \sigma}}\rangle
= \int_{\Gamma} \langle d\psi^{{\bm \sigma}}\vert d\psi^{{\bm \sigma}}\rangle
= \int_{\Gamma} \Tr(dP^{{\bm \sigma}} P^{{\bm \sigma}} dP^{{\bm \sigma}}) \\
&= \int_{\Gamma} \Tr(P^{{\bm \sigma}} dP^{{\bm \sigma}}\wedge dP^{{\bm \sigma}})
\end{align}
namely the braiding phase is the integral of the adiabatic curvature, or equivalently the Hall conductance, over the quarter torus. Since $dP^{{\bm \sigma}} = \iu [\mathcal{K}^{{\bm \sigma}},P^{{\bm \sigma}}]$, we conclude that
\begin{equation}
\omega = -\iu \int_{\Gamma}\Tr(P^{{\bm \sigma}} [\mathcal{K}^{{\bm \sigma}}_1,\mathcal{K}^{{\bm \sigma}}_2])
\end{equation}
By~\cite{hastings2015quantization,Quantization}, the curvature is constant. Since the initial fermionic ground states are all time-reversal invariant, the Hall conductance vanishes and so $\omega = 0$, concluding the proof of \autoref{thm3}, \autoref{thm3.3bis}.

\begin{remark}
    The calculation above is in fact completely general under the assumption of a spectral gap. Interestingly, it relates the braiding phase of the dressed monopoles to the Hall conductance of the fermion sector, see also~\cite{PhysRevB.101.085138, kapustin2020hall}.
\end{remark}

\subsubsection{Braiding statistics between monopoles and fermions}

Let
\begin{equation}
    B = \frac{\langle  \xi_{ab}^{ij} \vert W_{\mathcal{C}^*} \xi_{ab}^{ij}\rangle }{ \langle\Omega_{ab}\vert W_{\mathcal{C}^*} \Omega_{ab}\rangle} .
\end{equation}
As discussed above $\vert \xi_{ab}^{ij}\rangle$ is obtained creating two fermions on $\Omega_{ab}$, and $B$ is the braiding between one of the fermions and a monopole, see also \autoref{fig:link}. 
    \begin{figure}[h]
    \centering
   \begin{tikzpicture}[scale=0.5]
   \draw[red, thick] (0,0) circle (3cm);
   \draw[blue, thick, shorten <= 0.25cm, shorten >= 0.25cm] (0,0) to [out=0,in=150] (6,2);
\draw[blue,fill=white] (6,2) circle (.1 cm);
\draw[blue,fill=white] (0,0) circle (.1 cm);
\node[left] (c) at (-3,0) {\scalebox{1}{$\mathcal{C}^*$}};
\node[below] (c) at (0,0) {\scalebox{1}{$a^+_i$}};
\node[below] (c) at (6,2) {\scalebox{1}{$a^+_j$}};
   \end{tikzpicture}
   \caption{A representation of the braiding $B$. Any choice of the curve connecting $i$ and $j$ crosses $\mathcal{C}^*$ an odd number of times giving a $-$ sign.}
   \label{fig:link}
\end{figure}
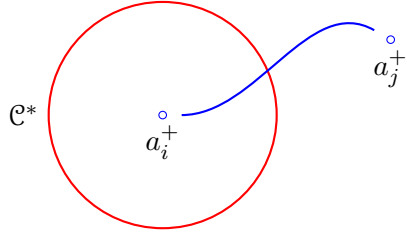

Since $\caC^*$ is a coboundary, the denominator has already been shown~(\ref{eq:trivial loops on GS}) to be a phase:
\begin{equation}
   \langle\Omega_{ab}\vert W_{\mathcal{C}^*}\Omega_{ab}\rangle = e^{i \omega^{ab}_{\mathcal{C}^*}}.
\end{equation}
We claim that
\begin{equation}\label{eq:em braiding}
   \left\Vert \left\{W_{\mathcal{C}^*},a^{+}_{i,\eta} \hat Z_{\caC_{i,j}} a^{+}_{j,\eta'}\right\}\right\Vert \overset{L}{=} 0.
\end{equation}
With this and $ W_{\mathcal{C}^*}\vert \Omega_{ab}\rangle = e^{i \omega^{ab}_{\mathcal{C}^*}}\vert \Omega_{ab}\rangle$, we have that 
\begin{equation}
    B \overset{L}{=} - e^{i \omega^{ab}_{\mathcal{C}^*}} \frac{\langle \xi_{ab}^{ij} \vert\xi_{ab}^{ij} \rangle }{\langle \Omega_{ab} \vert W_{\mathcal{C}^*}\Omega_{ab}\rangle} = -1
\end{equation}
indeed. To prove~(\ref{eq:em braiding}), we recall that $W_{\mathcal{C}^*}$ is a fermion-even operator and almost localized along $\mathcal{C}^*$, so that 
\begin{equation}
   \Vert  [W_{\mathcal{C}^*},a^{+}_{i,j}] \Vert \overset{L}{=} 0.
\end{equation}
Hence,
\begin{equation}
    W_{\mathcal{C}^*} a^{+}_{i,\eta} \hat Z_{\caC_{i,j}} a^{+}_{j,\eta'} 
    \overset{L}{=} a^{+}_{i,\eta} W_{\mathcal{C}^*}\hat Z_{\caC_{i,j}} a^{+}_{j,\eta'} 
    = - a^{+}_{i,\eta} \hat Z_{\caC_{i,j}} W_{\mathcal{C}^*} a^{+}_{j,\eta'} 
    \overset{L}{=} -a^{+}_{i,\eta} \hat Z_{\caC_{i,j}} a^{+}_{j,\eta'}W_{\mathcal{C}^*}
\end{equation}
where we used \eqref{eq:ZX braiding} and the fact that $Z_{\caC_{i,j}}$ commutes with $\caK_{\caC^*}(\phi)$ in the second equality. This concludes the proof \autoref{thm3}-\autoref{thm3.4}. \qed

\begin{remark}
\begin{enumerate}
\item Up to a phase, the vector $\vert\xi_{ab}^{ij}\rangle$ can be represented as:
\begin{equation}
    \bigg(\prod_{i \in \text{V}(\Gamma_L)} \frac{1+Q_i}{2}\bigg) a^{+}_{i,\eta} a^{+}_{j,\eta'}  \ket{\psi_{\bm{-1},a,b}}\otimes \ket{\bm{-1},a,b}.
\end{equation}
Indeed, $ \hat Z_{\caC_{i,j}} \ket{\psi_{\bm{-1},a,b}}\otimes \ket{\bm{-1},a,b} = Z_{\caC_{i,j}} \ket{\psi_{\bm{-1},a,b}}\otimes \ket{\bm{-1},a,b}$, where $Z_{\caC_{i,j}}$ is just a phase. While neither $a^{+}_{i} a^{+}_{j}$ nor $\hat Z_{\caC_{i,j}}$ commute with the $\mathbb{Z}_2$-charges $Q_i$, their combination does. Pulling $\hat Z_{\caC_{i,j}}$ through the projection onto the physical Hilbert space yields the claim. 
\item The braiding is the same if one or both of the creation operators are replaced by annihilation operators, because the braiding only takes into account the fermion parity of the excitation, not its $U(1)$ charge.
\item The proof shows explicitly that the braiding is independent of the choice of $(a,b)$ as one would have expected from the topological order condition and the locality of the braiding operation.
\end{enumerate}
\end{remark}

\appendix

 \section{Homology of the torus: basic definitions and notations}\label{homo}

 We recall that $\Gamma = \mathbb{Z}^2_L$ is the square lattice with periodic boundary conditions, and that $\text{V}(\Gamma),\text{E}(\Gamma),\text{F}(\Gamma)$ are the sets of all vertices, oriented edges and oriented faces of $\Gamma$. By construction $|\text{V}(\Gamma)| = |\text{F}(\Gamma)|$ and $|\text{E}(\Gamma)| = 2 |\text{V}(\Gamma)|$.

 This structure is naturally understood as a CW-complex, and we briefly recall the basic constructions of cellular homology. Since orientation does not play a role in the analysis of this paper, we don't insist on it here. 

 \begin{definition}[chains]
     For $i=0,1,2$, the $i$-chain group $C_i(\Gamma)$ is a free abelian group with coefficients in $\mathbb{Z}_2$ generated respectively by $\emph{V}(\Gamma)$, $\emph{E}(\Gamma)$ and $\emph{F}(\Gamma)$.
 \end{definition} 

 Geometrically, a $1$-chain $\caC$ is a string along the edges of the lattice, or a collection thereof. The fact that the coefficients are $\mathbb{Z}_2$ simply means (in the example of $i=1$) that an edge may appear at most once. 

 The boundary maps $\partial_i: C_{i}(\Gamma) \to C_{i-1}(\Gamma)$ associate to a $i$-chain the $(i-1)$-chain that forms its boundary. For example, if $\Lambda$ is a $2$-chain, namely a collection of plaquettes, then $\partial\Lambda$ is the $1$-chain made of all the boundary edges. As usual, $\partial_i\circ\partial_{i+1} = 0$, namely $\mathrm{Im} (\partial_{i+1}) \subset \mathrm{Ker} (\partial_{i})$. 

     \begin{definition}[cycles and boundaries]
The $i$-cycle group is $Z_i(\Gamma) = \mathrm{Ker}(\partial_i)$. The $i$-boundary group is $B_i(\Gamma) = \mathrm{Im}(\partial_{i+1})$. 
    \end{definition}
In other words, a $1$-cycle is geometrically a loop or a collection thereof. A $1$-boundary is the geometric boundary of a collection of plaquettes. As pointed out above, $B_i(\Gamma)\subset Z_i(\Gamma)$: For the case $i=1$, any boundary is indeed a collection of loops.

\begin{definition}[homology groups]
    The homology groups are defined as:
    \begin{equation}
        H_i(\Gamma) = \frac{Z_i(\Gamma)}{B_i(\Gamma)}.
    \end{equation}
\end{definition}

 The group $H_1(\Gamma)$ is made up of equivalence classes of loops, where two loops are equivalent if they differ by a boundary (one could say: they can be `deformed' into each other). On the torus, $H_1(\Gamma)$ has four elements: It is generated by the two types of loops that wind once around the torus in either direction. In this paper, they are associated with the holonomies, or large gauge transformations.

 \paragraph{Cohomology.} The regular square lattice has a natural dual obtained by identifying each face with a dual vertex, each edge with a dual edge (which is geometrically perpendicular to it) and each vertex with a dual face. The construction above can be repeated with this dual complex, yielding cochain groups $C_i(\Gamma^*)$, cocycles $Z_i(\Gamma^*)$ and coboundaries $B_i(\Gamma^*)$, and cohomology groups. For example, a $1$-cochain $\caC^*$ is a string in the dual lattice or a collection thereof.

\bigskip
\bigskip

 \paragraph{Data availability statement.} There is no data associated with this work.

\bibliographystyle{plain}
\bibliography{Bibliography}

\end{document}